\documentclass[12pt]{article}
\pdfoutput=1
\usepackage[nosort]{cite}
\usepackage[height=8.85in,width=6.45in]{geometry}
\usepackage{pifont}
\usepackage{amsthm}
\usepackage[mathscr]{euscript}

\usepackage{times}
\usepackage{color}
\usepackage{pdflscape}
\usepackage{tikz}
\usetikzlibrary{arrows.meta, positioning, calc}
\usepackage{xfrac}
\usepackage{booktabs}
\usepackage[utf8]{inputenc}
\usepackage{amsmath}
\usepackage{amssymb}
\usepackage{mathtools}
\numberwithin{equation}{section}
\usepackage{slashed}
\usepackage{braket}
\usepackage[svgnames]{xcolor}
\usepackage{url}
\usepackage[colorlinks,citecolor=DarkGreen,linkcolor=FireBrick,linktocpage]{hyperref}
\usepackage{cite}
\usepackage{graphicx}
\usepackage{float}
\usepackage{tikz-cd}
\usepackage{stmaryrd}
\usepackage{courier}
\usepackage{dashbox}
\usepackage{caption}
\usepackage{subcaption}
\usepackage{enumitem}
\usepackage{footmisc}
\usepackage{tcolorbox}
\usepackage{fix-cm}

\tcbuselibrary{theorems, skins, breakable}

\newcounter{mainresultcounter}

\newtcolorbox{mainresult}[1][]{
  enhanced,
  colback=blue!5!white,
  colframe=blue!60!black,
  fonttitle=\bfseries,
  title=Main Result~\refstepcounter{mainresultcounter}\themainresultcounter,
  attach boxed title to top left={yshift=-2mm, xshift=4mm},
  boxed title style={
    colback=blue!60!black,
    colframe=blue!60!black,
    rounded corners
  },
  rounded corners,
  drop shadow,
  #1
}

\newtcbtheorem[number within=section]{claim}{Claim}{
  enhanced,
  colback=violet!5!white,
  colframe=violet!50!black,
  fonttitle=\bfseries,
  attach boxed title to top left={yshift=-2mm, xshift=4mm},
  boxed title style={
    colback=violet!50!black,
    colframe=violet!50!black,
    rounded corners
  },
  rounded corners,
}{cl}

\newtcbtheorem[use counter from=claim]{example}{Example}{
  enhanced,
  colback=teal!5!white,
  colframe=teal!50!black,
  fonttitle=\bfseries\itshape,
  attach boxed title to top left={yshift=-2mm, xshift=4mm},
  boxed title style={
    colback=teal!50!black,
    colframe=teal!50!black,
    rounded corners
  },
  rounded corners,
}{ex}

\usepackage{dsfont}
\usepackage{array}
\usepackage{multirow}
\usepackage{rotating}

\newcolumntype{H}{>{\setbox0=\hbox\bgroup}c<{\egroup}@{}}

\renewcommand{\title}[1]{\vbox{\center\LARGE{#1}}\vspace{5mm}}
\renewcommand{\author}[1]{\vbox{\center#1}\vspace{5mm}}
\newcommand{\address}[1]{\vbox{\center\em#1}}

\makeatletter
\newsavebox{\@brx}
\newcommand{\llangle}[1][]{\savebox{\@brx}{\(\m@th{#1\langle}\)}%
  \mathopen{\copy\@brx\kern-0.5\wd\@brx\usebox{\@brx}}}
\newcommand{\rrangle}[1][]{\savebox{\@brx}{\(\m@th{#1\rangle}\)}%
  \mathclose{\copy\@brx\kern-0.5\wd\@brx\usebox{\@brx}}}
\makeatother

\def\d{\mathsf{d}}
\def\CC{{\mathcal C}}

\def\Vec{\mathop{\mathrm{Vec}}\nolimits}

\newcommand{\Rep}{\mathrm{Rep}}
\newcommand{\Ver}{\mathrm{Ver}}

\newtheorem{theorem}{Theorem}[section]
\newtheorem{definition}[theorem]{Definition}

\newtheorem{conjecture}[theorem]{Conjecture}

\begin{document}

\begin{titlepage}

\title{Hypergroup Symmetry in Relative Quantum Field Theories and Chiral Algebras}

\author{Terry Gannon${}^1$ and Brandon C.\ Rayhaun${}^{2}$}

        \address{${}^{1}$Department of Mathematics, University of Alberta, Edmonton, Alberta, Canada\\
        ${}^{2}$School of Natural Sciences, Institute for Advanced Study, Princeton, NJ, USA}

\abstract

A QFT is said to be \emph{relative} if it lives at the boundary of a topological QFT in one higher dimension. We develop a general framework for working with noninvertible symmetries of relative theories in two spacetime dimensions, extending several well-known results for absolute QFTs. We emphasize various new features which arise in the relative setting, including the role of topological surfaces of the bulk, and the appearance of hypergroups and certain generalizations of tube algebras known as \emph{dome algebras}.  Our formalism is particularly well-suited for studying rational chiral algebras, where it predicts that finite symmetries are in explicit one-to-one correspondence with conformal embeddings of finite index. 

We describe several implications of our framework for absolute theories. First, we explain how to  ``glue'' together symmetries of the left- and right-moving chiral algebras of a 2D CFT to produce topological line defects of the full theory. Second, we derive a precise correspondence between boundary conditions of a 2D CFT and symmetries of its chiral algebra. This correspondence has several structural corollaries: in diagonal rational CFTs, we demonstrate that the topological line defects of the theory act transitively on its boundary conditions, and further that the identity Cardy state has the smallest $g$-function amongst all boundary conditions, including those which only preserve Virasoro symmetry.

We conclude by illustrating our results in a variety of examples. For instance, we show that, if there exists a rational chiral algebra with central charge $c=8$ whose modular tensor category is the Drinfeld center of the Haagerup fusion category, then it must arise as the fixed points of a rank-2 hypergroup acting on the $SU(3)_1\otimes (E_{6})_1$ chiral algebra.

\end{titlepage}

\eject

\setcounter{tocdepth}{3}
\tableofcontents

\clearpage 

\section{Introduction}

A relative quantum field theory (QFT) in $d$ spacetime dimensions is by definition a theory which lives at the boundary of a nontrivial topological QFT (TQFT) in one dimension higher \cite{Freed:2012bs}. A well-known family of examples is the 6D $\mathcal{N}=(2,0)$ superconformal field theories \cite{Witten:1995zh,Strominger:1995ac}. Two dimensional (rational) chiral algebras \cite{Zamolodchikov:1985wn} --- a.k.a.\ (strongly rational) vertex operator algebras (VOAs), the structure formed by the sector of holomorphic local operators in a 2D (rational) conformal field theory --- are another rich source of relative theories \cite{Moore:1988qv,Witten:1988hf}. 

The goal of this paper is to develop a physicist-friendly theory of generalized symmetries for chiral algebras and other relative QFTs in $d=2$ spacetime dimensions, extending the beautiful edifice already erected for absolute 2D QFTs \cite{Gaiotto:2014kfa,Bhardwaj:2017xup,Chang:2018iay}.\footnote{See \cite{Petkova:2000ip,Fuchs:2002cm,Frohlich:2004ef,Frohlich:2006ch} for earlier appearances of generalized symmetries in absolute 2D QFTs and \cite{Shao:2023gho,Schafer-Nameki:2023jdn} for reviews. See also \cite{Kong:2017etd,Kong:2019byq,Kong:2019cuu,Bhardwaj:2021mzl,LawrieYuZhang2024IntermediateDefectGroups,FrancoYu2024GeneralizedSym2D} for an incomplete collection of previous studies of generalized symmetries in relative QFTs.} We will see that many qualitatively new features arise in the relative setting. Indeed, we will provide physical interpretations of  algebraic structures which have hitherto not made prominent appearances in the physics literature: namely, hypergroups \cite{Bischoff:2016jmy,Bischoff:2022fxf,Rie22,Dong:2025ttr}, which are spiritually fusion rings with non-integer structure constants, and certain generalizations of tube algebras \cite{ocneanu1994chirality} known as \emph{dome algebras} \cite{Green:2023ork}.\footnote{See however \cite{Kaidi:2024wio} for a recent physics paper where hypergroups arise.}

In a companion paper \cite{grmath}, we will dive deeper into various mathematical aspects whose details are suppressed here. Applications to the classification of symmetries and boundary conditions of $c=1$ conformal field theories will be presented elsewhere.

\subsection{Towards a noninvertible symmetry-enriched Moore-Seiberg theory}

One of our motivations is to take some small steps towards realizing a noninvertible symmetry-enriched generalization of Moore-Seiberg theory \cite{Moore:1988qv,Witten:1988hf}. 

To set the stage, recall the Chern-Simons/Wess-Zumino-Witten (CS/WZW) correspondence, which asserts in particular that the chiral algebra of a 2D WZW model can be thought of as living on the boundary of a corresponding bulk 3D topological CS theory \cite{Witten:1988hf}. It is now appreciated that this is just a particular example of a more general phenomenon in which chiral algebras of arbitrary 2D rational CFTs describe gapless edge modes of 3D TQFTs. Although a given TQFT generally supports several inequivalent chiral algebra boundaries (see e.g.\ \cite[Table 1]{Rayhaun:2023pgc}), it turns out that any one of these chiral boundaries completely determines the bulk. We will call this the \emph{bulk-boundary correspondence}.

\begin{figure}
\centering
    \input{Figures/bulkboundary}
    \caption{The Hilbert space $V_a$ consists of boundary local operators $\mathcal{O}(z)$ on which the bulk anyon $a$ can terminate. By the state/operator correspondence, it can also be viewed as the Hilbert space that the bulk TQFT assigns to a disk which is punctured at the center by the anyon $a$.}\label{fig:bulkboundary}
\end{figure}
There is a mathematical shadow of the bulk-boundary correspondence which goes through tensor categories. Indeed, starting from the boundary chiral algebra $V$, one can construct the category $\Rep(V)$, whose objects are representations of $V$ and whose morphisms are intertwiners between representations. This category carries a lot of structure: in particular, \cite{Moore:1988qv,Huang:2005gs} showed that it forms what is known today as a modular tensor category (MTC). This MTC is sometimes called the Moore-Seiberg data of $V$.

One can also extract a category $\mathcal{B}$ from the bulk TQFT. Its objects by definition are anyons, or topological line operators, and its morphisms are topological point junctions. Again, this category admits the structure of an MTC \cite{turaev1992modular,Kitaev:2005hzj,turaev2016quantum}. Furthermore, the category of anyons in the bulk TQFT is equivalent to the representation category of the boundary chiral algebra, 
\begin{align}\label{eqn:bulkboundary}
    \mathcal{B}\cong \Rep(V).
\end{align}
In fact, the equivalence can be made explicit, 
\begin{align}
    a\mapsto V_a,
\end{align}
by assigning to a bulk topological line $a\in\mathcal{B}$ the Hilbert space $V_a\in\Rep(V)$ of boundary local operators on which the line $a$ can end. Equivalently, by the state/operator correspondence, $V_a$ can be identified with the Hilbert space of states which the TQFT assigns to a punctured disk, see Figure \ref{fig:bulkboundary}. Equation \eqref{eqn:bulkboundary} is the claimed mathematical shadow of the bulk-boundary correspondence. 

The discussion so far is summarized by the following diagram:

\begin{center}
\begin{tikzpicture}[
    node distance = 2cm,
    box/.style = {
        rectangle,
        rounded corners,
        draw=black,
        thick,
        minimum width=3cm,
        minimum height=1cm,
        text centered,
        align=center
    },
    arrow/.style = {
        ->,
        -{Stealth},
        thick,
        align=center
    }
]

\node[box] (A) {Rational chiral algebra};
\node[box, right=3cm of A] (B) {3D topological QFT};
\node[box, below=2.5cm of $(A)!0.5!(B)$] (C) {Modular tensor category 
$\mathcal{B}\cong \Rep(V)$};

\draw[arrow, bend left=20]  (A) to node[above] {bulk} (B);
\draw[arrow, bend left=20]  (B) to node[below] {choice of boundary} (A);

\draw[arrow] (A) to node[left]  {Representation \\ category $\Rep(V)$} (C);
\draw[arrow] (B) to node[right] {Anyon category $\mathcal{B}$} (C);

\end{tikzpicture}
\end{center}

\begin{figure}
    \centering
     \input{Figures/autV}
     \caption{Left: the representation of an  automorphism $g$ of a chiral algebra. Right: the action of an automorphism $g$ on a (non-genuine) local operator $\mathcal{O}(z)$ in a module $V_a$ produces a local operator $(g\cdot \mathcal{O})(z)$ in $V_{g_\ast (a)}$.}\label{fig:autV}
 \end{figure}

Suppose that $V$ further admits an action by a (say, finite) group $G$ of automorphisms. It turns out that the $G$-symmetry of $V$ induces a $G$-symmetry enrichment of the bulk in the sense of \cite{Barkeshli:2014cna}, and hence $V$ can be thought of as living on the boundary of a $G$ symmetry-enriched topological order. We call this the \emph{symmetry-enriched bulk/boundary correspondence}.

Let us briefly give some intuition for why the bulk admits a symmetry enrichment once one specifies a $G$-symmetry of the boundary. Symmetries are implemented by topological defects \cite{Gaiotto:2014kfa}, and in our setup, we can model the $G$-symmetry of $V$ via bulk topological surface operators $\mathcal{S}_g$ terminating on boundary topological line operators $L_g$ for each $g\in G$, as in the left of Figure \ref{fig:autV}. The necessity of incorporating the (possibly trivial) bulk surface $\mathcal{S}_g$ can be motivated by the fact that automorphisms of $V$ generally permute around its representations, $V_a \mapsto V_{g_\ast(a)}$. This permutation is precisely what $\mathcal{S}_g$ implements (see the right of Figure \ref{fig:autV}). The data of a topological surface operator for each $g\in G$ is part of the data of a $G$ symmetry-enrichment of the bulk 3D TQFT, and it turns out that the rest of the data needed to determine a $G$-enrichment is also fixed by the boundary. This is a kind of ``relative'' generalization of the familiar fact that a $G$-symmetry of an absolute QFT fixes a $G$-SPT in one dimension higher via its anomaly.

Again, there is a mathematical shadow of the symmetry-enriched bulk/boundary correspondence which goes through tensor categories. Starting from the boundary side, one can construct the category
\begin{align}
    \Rep_G(V):= \bigoplus_{g\in G} \Rep_g(V),
\end{align}
where $\Rep_g(V)$ is the category of $g$-twisted modules. The formal definition of a $g$-twisted module can be found in e.g.\ \cite{dong1998twisted}, but the intuitive idea is that genuine boundary local operators in $V$ incur a $g$-action when they are transported around a (non-genuine) local operator in a $g$-twisted module, as we will see in a moment. Clearly, $\Rep_G(V)$ contains the Moore-Seiberg data $\Rep(V)\subset \Rep_G(V)$ as a subcategory, and in fact, it is known that it forms what is called a $G$-crossed braided extension of $\Rep(V)$ \cite{Mcrae:2019pol}. This is the sense in which we think of $\Rep_G(V)$ as a symmetry-enrichent of the Moore-Seiberg data of $V$.

\begin{figure}
     \centering
     \input{Figures/gtwistedmodule}
     \caption{The twisted module $V_a$ consists of (non-genuine) boundary local operators $\mathcal{O}(z)$ which live at the junction of a $g$-twisted anyon $a$ and the boundary topological line junction $L_g$. By the state/operator correspondence, $V_a$ can also be thought of as the Hilbert space that the bulk TQFT assigns to a disk with boundary condition described by $V$ and with insertions of topological defects as shown on the right.}\label{fig:gtwistedmodule}
 \end{figure}

One can similarly extract a symmetry-enriched category from the bulk. Indeed, instead of considering just the category of genuine anyons, one can form the category $\mathcal{B}_g$ of $g$-twisted anyons (i.e.\ topological lines which live at the boundary of the surface $\mathcal{S}_g$) and build the category 
\begin{align}
    \mathcal{B}_G := \bigoplus _{g\in G} \mathcal{B}_g,
\end{align}
which again forms a $G$-crossed braided extension of $\mathcal{B}$ \cite{Barkeshli:2014cna}.
Again, $\mathcal{B}_G$ is equivalent to the category of twisted representations of the boundary chiral algebra 
\begin{align}\begin{split}
    \mathcal{B}_G&\cong \Rep_G(V) \\
    a&\mapsto V_a,
\end{split}
\end{align}
via an explicit equivalence which assigns to a $g$-twisted anyon $a$ a corresponding $g$-twisted module $V_{a}$, defined as the Hilbert space of local operators which sit at the junction of $a$ and $L_g$ as in Figure \ref{fig:gtwistedmodule} (cf.\ Figure \ref{fig:bulkboundary}). The line junction $L_g$ is responsible for implementing the monodromy that a genuine boundary local operator experiences when it is dragged around an insertion $\mathcal{O}(z)\in V_a$.

We therefore have the following diagram in the $G$-enriched setting:

\begin{center}
\begin{tikzpicture}[
    node distance = 2cm,
    box/.style = {
        rectangle,
        rounded corners,
        draw=black,
        thick,
        minimum width=3cm,
        minimum height=1cm,
        text centered,
        align=center
    },
    arrow/.style = {
        ->,
        -{Stealth},
        thick,
        align=center
    }
]

\node[box] (A) {Rational chiral algebra \\ 
with $G$-action};
\node[box, right=3cm of A] (B) {3D topological QFT \\ with $G$-enrichment};
\node[box, below=2.5cm of $(A)!0.5!(B)$] (C) {$G$-crossed braided extension 
$\mathcal{B}_G\cong \Rep_G(V)$ \\ of $\mathcal{B}\cong \Rep(V)$};

\draw[arrow, bend left=20]  (A) to node[above] {bulk} (B);
\draw[arrow, bend left=20]  (B) to node[below] {choice of symmetric \\ boundary} (A);

\draw[arrow] (A) to node[left]  {Twisted representation~~~ \\ category $\Rep_G(V)$~~~} (C);
\draw[arrow] (B) to node[right] {~~~Twisted anyon \\ ~~~category $\mathcal{B}_G$} (C);

\end{tikzpicture}
\end{center}

The question we take up in this paper is what replaces these concepts when the boundary (and hence also the bulk) is enriched with a \emph{noninvertible} symmetry. This is what we mean when we say we are after a noninvertible symmetry-enriched Moore-Seiberg theory. 

As anticipated earlier, we advocate that the correct generalization arises when one replaces groups with hypergroups, which we will define shortly. (See also \cite{Bischoff:2016jmy,Rie22} for closely related ideas.) In particular, in the next subsection, we will define a \emph{hypergroup-graded extension} of the Moore-Seiberg data of a rational chiral algebra $V$ whenever $V$ possesses a noninvertible symmetry. We expect, in analogy with the group-like case, that this category admits additional structure coming from something resembling a generalized braiding, but we leave the determination of what exactly a ``hypergroup-crossed braided tensor category'' is to future work.

\subsection{Hypergroups and boundary topological line operators}

To lay the groundwork for a noninvertible generalization of the discussion from the previous subsection, let us give a reinterpretation of the categories $\mathcal{B}_G$ and $\Rep_G(V)$ in the case that $V$ admits an invertible $G$-action. 

\begin{figure}
\centering 
\input{Figures/FBtopartial}
\caption{The bulk to boundary functor $F_{B\to \partial}$.}\label{fig:FBtopartial}
\end{figure}

Starting with $\mathcal{B}_G$, note that there is a bulk-to-boundary functor $F_{\mathrm{B}\to\partial}$ which takes in a twisted-sector anyon $a$ (i.e.\ an anyon attached to some surface $\mathcal{S}_g$) and produces a boundary topological line operator $F_{\mathrm{B}\to\partial}(a)$, as in Figure \ref{fig:FBtopartial}.\footnote{This functor turns out to be fully faithful, meaning in particular that if $a$ is an elementary twisted anyon, then $F_{\mathrm{B}\to\partial}(a)$ is also an elementary boundary topological line operator. Note that this is not typically the case for gapped boundary conditions, where one often finds that bulk topological lines split into a direct sum when they are brought to the boundary.} Using $F_{\mathrm{B}\to\partial}$, we can equivalently think of $\mathcal{B}_G$ as a fusion category of topological line operators supported on the $V$-boundary condition of the bulk TQFT. We will abusively conflate $\mathcal{B}_G$ with its image under $F_{\mathrm{B}\to\partial}$ in what follows.

Similarly, instead of thinking of the $g$-twisted module $V_a \in \Rep_G(V)$ as the Hilbert space of local operators which sit at the junction of $a$ and $L_g$, as in the left of Figure \ref{fig:Vareinterp}, one can instead think of $V_a$ as the Hilbert space of boundary local operators on which the line $F_{\mathrm{B}\to\partial}(a)$ can end, as in the right of Figure \ref{fig:Vareinterp}.

\begin{figure}
    \centering
    \input{Figures/Vareinterp}
    \caption{A (non-genuine) local operator $\mathcal{O}(z)\in V_a$ can be thought of either as living at the junction of $a$ and $L_g$ as in the left, or at the end of the boundary line $F_{\mathrm{B}\to\partial}(a)$ as on the right. }\label{fig:Vareinterp}
\end{figure}

Thus, we can rephrase the previous subsection as follows. Given a group $G$ of automorphisms of a chiral algebra $V$, thought of as a gapless chiral boundary of a bulk 3D TQFT, we can produce a category of \emph{boundary} topological line operators $\mathcal{B}_G$ which enriches the usual MTC $\mathcal{B}$ associated to $V$ in the sense that $\mathcal{B}_G$ forms a $G$-crossed braided extension of $\mathcal{B}$. 

It is then natural to ask when the reverse holds. That is, suppose one hands you some fusion category $\mathcal{C}$ of topological line operators supported on a rational chiral algebra boundary condition $V$ of a bulk TQFT. When can one think of $\mathcal{C}$ as a symmetry-enrichment of the Moore-Seiberg data of $V$? Our answer is \emph{always}, provided one is alright with replacing groups with hypergroups in general.

\begin{mainresult}
 Suppose that $V$ is a rational chiral algebra boundary condition of a semi-simple bulk 3D TQFT, and $\mathcal{C}$ is a fusion category of boundary topological line operators which contains $\mathcal{B}$ as a subcategory.\footnote{That is, we assume that $\mathcal{C}$ contains every line obtained by pushing a bulk anyon $a\in\mathcal{B}$ to the boundary.} Then $\mathcal{C}$ is a $K$-graded extension of $\mathcal{B}$,
\begin{align}\label{eqn:directsumdecomp}
    \mathcal{C} \cong \bigoplus_{r_i\in K}\mathcal{B}_{r_i}, \ \ \ \ \ \ \mathcal{B}_{r_0}\cong \mathcal{B},
\end{align}
where $K:=K_{\mathcal{C}}\sslash K_{\mathcal{B}}=\{r_0,\dots,r_{n-1}\}$. Here, $K_{\mathcal{C}}$ (resp.\ $K_{\mathcal{B}}$) is the hypergroup induced by the fusion ring of $\mathcal{C}$ (resp.\ $\mathcal{B}$), and $K_{\mathcal{C}}\sslash K_{\mathcal{B}}$ denotes the finite hypergroup of double cosets of $K_{\mathcal{B}}$ in $K_{\mathcal{C}}$ (see Appendix \ref{app:hypergroups} for background on hypergroups). Furthermore, $K$ acts on the chiral algebra $V$ via  bulk topological surfaces $\mathcal{S}_{r_i}$ terminating on boundary topological line junctions $L_{r_i}$, analogously to Figure \ref{fig:autV}.

\end{mainresult}

In short, this result says that any fusion category of boundary topological line operators defines a hypergroup symmetry-enrichment of the Moore-Seiberg data of $V$. Let us unpack the various ingredients which we have not yet defined. 

\paragraph{The definition of a hypergroup} First of all, we must say what a hypergroup is. A hypergroup is essentially a generalization of a fusion ring where the structure constants are allowed to be non-integers. In more detail, a finite hypergroup $K=\{r_0,\dots,r_{n-1}\}$ is a finte set equipped with an associative multiplication,
\begin{align}
    r_i\star r_j = \sum_{k}P_{ij}^k \cdot r_k, \ \ \ \ P_{ij}^k \in \mathbb{R}_{\geq 0},
\end{align}
which is further assumed to be ``stochastically normalized'',
\begin{align}
    \sum_k P_{ij}^k=1.
\end{align}
It is required to have a unit $r_0$, and every element $r_i$ must have a dual $r_{i^\ast}$. (See Appendix \ref{app:hypergroups} for the more precise definition.) When every structure constant $P_{ij}^k$ is either $0$ or $1$, the hypergroup describes an ordinary group-like symmetry. Otherwise, we say that the hypergroup is \emph{noninvertible}.

The prototypical example is the double cosets $G\sslash H$ of one finite group inside another, in which case the multiplication is defined to be 
\begin{align}\label{eqn:doublecosethypgpintro}
    [g]\star [g'] = \frac{1}{|H|}\sum_{h\in H}[ghg'].
\end{align}
When $H$ is normal in $G$, this construction reproduces the usual quotient group $G/H$. When $H$ is not normal, the quotient group doesn't exist, but the quotient hypergroup does. In a similar way, while the quotient of a fusion ring by a subring does not always produce another fusion ring, it does always produce a hypergroup. (Again, see Appendix \ref{app:hypergroups} for the detailed construction.) The hypergroup $K_{\mathcal{C}}\sslash K_{\mathcal{B}}$ appearing in the main result is constructed in this way from the fusion rings of $\mathcal{C}$ and $\mathcal{B}$.

\paragraph{The hypergroup-grading} Next, we must say what it means for a fusion category to be \emph{graded} by a hypergroup. In addition to requiring that $\mathcal{C}$ split into a direct sum over graded components $\mathcal{B}_{r_i}$ for each $r_i\in K$, as in Equation \eqref{eqn:directsumdecomp}, it is further required that, if $X\in\mathcal{B}_{r_i}$ and $Y\in \mathcal{B}_{r_j}$, then 
\begin{align}\label{eqn:hypgraded}
    \frac{\dim(\pi_k(X\otimes Y))}{\dim (X\otimes Y)} = P_{ij}^k,
\end{align}
where $\dim(Z)$ is the quantum dimension of $Z\in\mathcal{C}$ and $\pi_k:\mathcal{C}\to \mathcal{B}_{r_k}$ is the natural projection from $\mathcal{C}$ onto its $r_k$-graded component. In words, Equation \eqref{eqn:hypgraded} says that the fraction of the quantum dimension of $X\otimes Y$ which resides in the $r_k$-graded component is controlled by the structure constant $P_{ij}^k$ of the hypergroup. 

\paragraph{The hypergroup action} The condition that $\mathcal{C}$ is hypergroup graded in particular implies that each $\mathcal{B}_{r_i}$ transforms as a module category over $\mathcal{B}_{r_0}\cong \mathcal{B}$ (see e.g.\ \cite{Choi:2023xjw} for a description of module categories aimed at physicists). This is just a fancy way of saying that the fusion of a bulk anyon in $\mathcal{B}$ onto a boundary line in $\mathcal{B}_{r_i}$ produces another boundary line in $\mathcal{B}_{r_i}$. It is known that $\mathcal{B}$-module categories are in one-to-one correspondence with topological surfaces in the bulk TQFT. Hence, for each $r_i\in K$, we obtain an abstractly defined surface operator $\mathcal{S}_{r_i}$. 

The physical interpretation of this surface operator is as follows. When the boundary line $X\in \mathcal{B}_{r_i}\subset \mathcal{C}$ is dragged into the bulk, it is trailed by precisely the surface operator $\mathcal{S}_{r_i}$ which anchors to $V$ on some topological line junction $L_{r_i}$, similarly to the left of Figure \ref{fig:FBtopartial}. (We will define the line junction $L_{r_i}$ using a SymTFT construction in Section \ref{subsec:surfaces}, see Figure \ref{fig:pinch}.) In particular, $X$ can equivalently be thought of as a twisted sector bulk anyon $a$ whose worldline lives on the boundary of the surface $\mathcal{S}_{r_i}$.

These surfaces $\mathcal{S}_{r_i}$, as well as the corresponding boundary line junctions $L_{r_i}$, can be used to define the action of a hypergroup on $V$. Indeed,  we obtain for each $r_i\in K$ a corresponding map $\widehat{r_i}:V\to V$ from genuine boundary local operators to genuine boundary local operators. The operator $\widehat{r_i}(\mathcal{O})$ is obtained by collapsing onto $\mathcal{O}$ a ``dome'' built out of the surface $\mathcal{S}_{r_i}$, anchored on the boundary using $L_{r_i}$, as in the left of Figure \ref{fig:domehypergroup}. It turns out that the $\widehat{r_i}$ compose according to the multiplication law of the hypergroup $K$, up to overall normalization constants, as we explain in Section \ref{subsec:surfaces}. This explains why the hypergroup $K$ acts on the chiral algebra $V$.

\bigskip 

Our main result above illustrates the rich structure which follows once one has identified a fusion category of boundary topological line operators. A natural question is then what methods are available for finding such categories in practice. Our next main result, which builds on a number of previous papers \cite{Bischoff:2016jmy,Burbano:2021loy,Rie22,Rayhaun:2023pgc,Moller:2024xtt,Dong:2025ttr}, says that such fusion categories are in one-to-one correspondence with rational conformal subalgebras.

\begin{mainresult}
    Let $V$ be a rational chiral algebra, thought of as a gapless boundary condition of a 3D bulk topological order $\mathrm{TQFT}_V$. Fusion categories $\mathcal{C}\supset \mathcal{B}$ of boundary topological line operators are in one-to-one correspondence with rational conformal subalgebras $W\subset V$.\footnote{Of course, there will be many fusion categories $\mathcal{C}\supset \mathcal{B}$ which do not act on $V$ (i.e.\ which do not form categories of boundary topological line operators) and so do not correspond to conformal subalgebras of $V$.} The correspondence assigns 
    \begin{align}
    \begin{split}
        \mathcal{C}&\mapsto V^{\mathcal{C}}, \\
        W&\mapsto \Ver(V/W),
    \end{split}
    \end{align}
    where $V^{\mathcal{C}}$ is the conformal subalgebra of $V$ consisting of boundary local operators which are neutral under all the lines in $\mathcal{C}$ (see Figure \ref{fig:VCdef}), and $\Ver(V/W)$ is the subcategory of $\mathcal{C}$ consisting of all lines with respect to which the operators in $W$ are neutral. 
    
    \hspace{.3in}Given a conformal embedding $W\subset V$, one can more explicitly construct $\Ver(V/W)$ as the category of topological line operators supported on a particular half-space gauging interface between $\mathrm{TQFT}_W$ and $\mathrm{TQFT}_V$, where $\mathrm{TQFT}_W$ is the 3D topological order which supports $W$ on its boundary, see Figure \ref{fig:SymTFTconformalembedding}.
\end{mainresult}

The basic observation which underpins this result is that a conformal subalgebra $W\subset V$ can always be thought of as furnishing a ``physical'' boundary condition for a SymTFT-style construction, Figure \ref{fig:SymTFTconformalembedding}. Its utility is that it allows for a reinterpretation of a vast literature on conformal embeddings in the language of generalized global symmetries. Indeed, this principle will be behind most of the examples we consider in Section \ref{sec:examples}.

\begin{figure}
    
\begin{center}
\input{Figures/SymTFTconformalembedding}
\caption{The SymTFT picture of a conformal embedding $W\subset V$.}\label{fig:SymTFTconformalembedding}
\end{center}
\end{figure}

\subsection{Implications for absolute CFTs}

So far, the discussion has been about generalized symmetries of relative QFTs, especially chiral algebras. Do our results have any utility for someone who is mainly interested in absolute QFTs? Indeed, we will now sketch how symmetries of chiral algebras distill invaluable information about symmetries, and even boundary conditions, of absolute CFTs which are built on top of those chiral algebras.

To understand the idea, recall that one of the basic philosophies which governs the study of rational conformal field theory is that any exercise can be divided into two parts: a ``holomorphic'' part in which one works at the level of the chiral algebra, and a ``topological'' part in which one works at the level of the bulk 3D TQFT. 

For instance, suppose one is interested in constructing a rational CFT. In the ``holomorphic'' step, one selects a rational chiral algebra $V$ to serve as a subsector of the holomorphic local operators in the theory. In the ``topological'' step, one selects a topological surface operator $\mathcal{S}$ in the bulk 3D TQFT which supports $V$ on its boundary. This data $(V,\mathcal{S})$ is then fed into the Fuchs-Runkel-Schweigert/Kapustin-Saulina construction \cite{Fuchs:2002cm,Kapustin:2010if} to produce an absolute theory we call $\mathrm{CFT}_{V,\mathcal{S}}$. Specifically, one compactifies the 3D TQFT on an interval with $V$ and $\overline{V}$ imposed at the two boundaries,  and with the topological surface $\mathcal{S}$ placed at the midpoint of the interval, see Figure \ref{fig:KS}. This construction produces every rational CFT whose maximal chiral algebra contains $V$ as a conformal subalgebra. The different ways of gluing left-movers to right-movers are parametrized by the choice of surface operator $\mathcal{S}$.

\begin{figure}
    \begin{center}
        \input{Figures/KS}
        \caption{The definition of the 2D theory $\mathrm{CFT}_{V,\mathcal{S}}$.}\label{fig:KS}
    \end{center}
\end{figure}

\paragraph{Symmetries of full CFTs}
Suppose now that one wants to construct a category of topological line operators of the theory $\mathrm{CFT}_{V,\mathcal{S}}$. Our claim is that this can again be achieved in two steps. In the ``holomorphic'' step, one first selects a category $\mathcal{C}_L$ of topological line operators supported on the left-moving  chiral algebra. (For example, this category $\mathcal{C}_L$ could come from a conformal subalgebra $W\subset V$, as in Figure \ref{fig:SymTFTconformalembedding}.) Similarly, one selects a category $\mathcal{C}_R$ of topological line operators supported on the right-moving chiral algebra. In the ``topological'' step, one glues together these two categories to produce a symmetry of the full theory. 

Our next main result describes how precisely to achieve this gluing. To this end, let $\mathbf{L}(\mathcal{S})$ denote the category of topological line operators supported on the surface $\mathcal{S}$. The category $\mathbf{L}(\mathcal{S})$ consists precisely of the symmetries of $\mathrm{CFT}_{V,\mathcal{S}}$ which preserve the left- and right-moving chiral algebras $V\otimes \overline{V}$. It admits the structure of a $(\mathcal{B},\mathcal{B})$-bimodule category due to the fact that one can fuse bulk anyons from both the left and the right onto any line in $\mathbf{L}(\mathcal{S})$ (cf.\ Figure \ref{fig:CDbimodule}). For a similar reason, $\mathcal{C}_L$ and $\mathcal{C}_R$ are $\mathcal{B}$-module categories.

\begin{mainresult}{}{}
Suppose $\mathcal{C}_L$ and $\mathcal{C}_R$ are categories of topological line operators supported on the left- and right-moving chiral algebras of $\mathrm{CFT}_{V,\mathcal{S}}$, respectively.\footnote{We assume as always that $\mathcal{C}_L$ contains the bulk anyons $\mathcal{B}$ as a subcategory, and similarly for $\mathcal{C}_R$.} Then 
\begin{align}\label{eqn:gluingformula}
    \mathcal{C}_L\boxtimes_{\mathcal{B}}\mathbf{L}(\mathcal{S})\boxtimes_{\mathcal{B}}\mathcal{C}_R^{\vee}
\end{align}
furnishes a category of topological line operators of the full theory $\mathrm{CFT}_{V,\mathcal{S}}$, where $\boxtimes_{\mathcal{B}}$ denotes the relative Deligne product over $\mathcal{B}$, and $\mathcal{C}_R^{\vee}$ is the dual category to $\mathcal{C}_R$. In particular, if $\mathcal{C}_L$ (resp.\ $\mathcal{C}_R$) is the \emph{complete} category of topological line operators supported on the left-moving (resp.\ right-moving) chiral algebra, then Equation \eqref{eqn:gluingformula} captures the complete category of topological line operators of the full theory $\mathrm{CFT}_{V,\mathcal{S}}$.
\end{mainresult}

We give an extensive mathematical and physical discussion of the relative Deligne product in Section \ref{sec:gluing}. Readers unfamiliar with the precise definition can blackbox Equation \eqref{eqn:gluingformula} as providing a formula for gluing together symmetries of the left- and right-moving chiral algebras to produce symmetries of the full rational CFT. We emphasize that the \emph{relative} Deligne product is a more subtle construction than the standard Deligne product: in particular, the symmetries of a CFT are \emph{not} simply a ``direct product'' of the symmetries supported on its left- and right-moving chiral algebras.

As a simple example, assume that $\mathcal{S}$ is the trivial topological surface, in which case $\mathrm{CFT}_V\equiv \mathrm{CFT}_{V,\mathcal{S}=\mathds{1}}$ is the canonical diagonal rational CFT built on $V$, and $\mathbf{L}(\mathcal{S}=\mathds{1})\cong \mathcal{B}$ is simply the category of anyons in the bulk. Further assume that $\mathcal{C}_L\cong\mathcal{C}_R\cong  \mathcal{B}$, i.e.\ we take $\mathcal{C}_L$ and $\mathcal{C}_R$ to consist just of the boundary topological lines obtained by pushing bulk anyons onto the left- and right-moving chiral algebras, respectively. Then, because $\mathcal{B}$ behaves as the identity with respect to $\boxtimes_{\mathcal{B}}$, the category in Equation \eqref{eqn:gluingformula} reduces to 
\begin{align}
    \mathcal{B}\boxtimes_{\mathcal{B}}\mathcal{B}\boxtimes_{\mathcal{B}}\mathcal{B}^{\vee}\cong\mathcal{B}\cong\Rep(V),
\end{align}
which recovers the standard category of Verlinde lines \cite{Verlinde:1988sn} of $\mathrm{CFT}_V$. We emphasize that even the diagonal theory $\mathrm{CFT}_V$ has symmetries beyond the Verlinde lines $\mathcal{B}\cong \Rep(V)$, simply because the chiral algebra $V$ generically possesses topological line operators beyond those coming from the bulk anyons. (For example, this happens whenever $V$ possesses a non-trivial conformal subalgebra.) We will see this explicitly in Section \ref{subsec:SU(3)1} in the context of the $SU(3)_1$ Wess-Zumino-Witten model.

\paragraph{Boundaries of full CFTs}

Topological line operators of a chiral algebra $V$ are also closely related to boundary conditions of $\mathrm{CFT}_{V,\mathcal{S}}$. Let us explain how this works when $\mathcal{S}=\mathds{1}$, i.e.\ when we are working with the canonical diagonal theory $\mathrm{CFT}_V$ built on $V$. The more general case will be treated in Section \ref{subsec:boundaries}.

\begin{mainresult}{}{}
Unitary topological line operators $X \in \Ver(V/W)$ (i.e.\ boundary topological line operators supported on the chiral algebra $V$ which preserve a conformal subalgebra $W$) are in one-to-one correspondence with unitary boundary conditions of the diagonal theory $\mathrm{CFT}_V$ which preserve the subalgebra $W$. In particular, taking $W$ to be the Virasoro algebra, we learn that every unitary boundary condition of $\mathrm{CFT}_V$ corresponds to some unitary topological line operator supported on the chiral algebra $V$, and vice versa.
\end{mainresult}

While this result can be proved by abstract nonsense, the physical intuition comes from Figure \ref{fig:squashboundary}. The basic idea is that one can place the 3D TQFT on a solid ball $B^{(3)}$ with the chiral algebra boundary condition $V$ imposed on $\partial B^{(2)}=S^2$. If one wraps a boundary topological line operator $X$ along the equator of $\partial B^{(2)}=S^2$, then, after squashing the ball into a disk, one engineers a boundary condition $\partial X$ of the diagonal theory $\mathrm{CFT}_V$. For example, in the special case that $X=\mathds{1}$ is taken to be the identity topological line, $\partial X$ engineers the  Cardy boundary condition corresponding to the identity primary \cite{Cardy:1989ir}. The squashing map $X\mapsto \partial X$ turns out to be a bijection. While Figure \ref{fig:squashboundary} already appeared in \cite{Kapustin:2010if}, op.\ cit.\ only considered topological lines and boundary conditions which preserve the chiral algebra $V$. The novelty of our result is that it extends this correspondence to cover \emph{all} topological lines of $V$, and \emph{all} boundary conditions of $\mathrm{CFT}_V$, including those which break the chiral algebra down to just Virasoro.

\begin{figure}
    \centering
    \input{Figures/squashboundary}
    \caption{Left: the bulk 3D TQFT on a solid ball $B^{(3)}$ with a chiral algebra $V$ imposed as a boundary condition on $\partial B^{(2)}=S^2$ and a boundary topological line operator $X$ wrapped on the equator. Right: after squashing, this produces a boundary condition $\partial X$ of the diagonal theory $\mathrm{CFT}_V$ built on $V$.}\label{fig:squashboundary}
\end{figure}

This result has several interesting structural corollaries. For example, it naturally leads to the conclusion that, given any boundary condition $b$ of $\mathrm{CFT}_V$, any other boundary condition $b'$ occurs as a summand of $\mathcal{L}\otimes b$ for some topological line defect $\mathcal{L}$ of $\mathrm{CFT}_V$, 
\begin{align}\label{eqn:Lbbp}
   \mathcal{L}\otimes b = b'\oplus \cdots ,
\end{align}
where $\mathcal{L}\otimes b$ denotes the boundary condition obtained by fusing $\mathcal{L}$ onto $b$. One might say that the topological lines of $\mathrm{CFT}_V$ act transitively on its boundary conditions. In particular, this shows that the ``phantom symmetry'' mechanism described in \cite{Antinucci:2025uvj} can always be used to explain the existence of an interface conformal manifold, at least in diagonal rational CFTs.

It is not too hard to convince oneself that, when one takes $b=\partial\mathds{1}$ to be the identity Cardy boundary condition, then one can choose $\mathcal{L}$ in Equation \eqref{eqn:Lbbp} so that the $\cdots$ disappear, i.e.\ so that $b'=\mathcal{L}\otimes \partial \mathds{1}$. Since the quantum dimension of a unitary topological line is bounded from below by $1$, this implies a bound on the $g$-function of any boundary condition in a diagonal rational CFT, 
\begin{align}\label{eqn:gfunctionbound}
    g_{b'} = \dim(\mathcal{L}) g_{\partial\mathds{1}} \geq \sqrt{S_{11}},
\end{align}
where $S_{ab}$ is the modular S-matrix of the chiral algebra $V$. In particular, the identity Cardy boundary condition has the lowest $g$-function amongst all boundary conditions of $\mathrm{CFT}_V$. This bound, Equation \eqref{eqn:gfunctionbound}, was observed numerically for stable\footnote{A boundary condition of a 2D CFT is stable if it does not support any boundary local operators with dimension $\Delta<1$, aside from the identity.} boundary conditions in several diagonal rational CFTs in \cite{Collier:2021ngi}. Our results, in addition to showing that the word ``stable'' can be dropped, provides a structural explanation for its existence.

\subsection{Outline}

The remainder of this article is organized as follows. 

In Section \ref{sec:relativesymmetries}, we develop our approach to generalized symmetries of chiral algebras. In Section \ref{subsec:gaplessboundaries}, we summarize some basic facts about how one should think about rational chiral algebras as gapless boundary conditions of 3D TQFTs. In Section \ref{subsec:boundarylines}, we describe a SymTFT approach to studying categories of topological line operators supported on these gapless boundaries. In particular, we explain symmetry/subalgebra duality, a correspondence between the symmetries of a chiral algebra and its conformal subalgebras. In Section \ref{subsec:twisteddome}, we describe the twisted representations of a chiral algebra enriched with a noninvertible symmetry, and observe that they are acted on by a generalization of the tube algebra known as the dome algebra. In Section \ref{subsec:hypergroup}, we isolate the effective hypergroup from the dome algebra, and in Section \ref{subsec:surfaces}, we explain how it is related to topological surfaces in the bulk. In Section \ref{subsec:topman}, we survey the various topological manipulations available in the relative setting, and derive how symmetries mutate upon performing them. In Section \ref{subsec:symmetryresolvedpf}, we sketch how twisted genus-1 partition functions are defined. In Section \ref{subsec:Galois}, we explain how noninvertible symmetries round out an analog of Galois theory for chiral algebras. Finally, in Section \ref{subsec:computationalaids}, we provide various practical tools for calculating noninvertible symmetries of chiral algebras.

Section \ref{sec:gluing} is dedicated to spelling out the implications of the results of Section \ref{sec:relativesymmetries} for absolute CFTs. In particular, in Sections \ref{subsec:settingup}--\ref{subsec:relativedeligne}, we describe how to glue together symmetries of chiral algebras to produce symmetries of full CFTs with both left- and right-movers. In Section \ref{subsec:boundaries}, we explain the relationship between the boundary conditions of a full CFT and the symmetries of its chiral algebra.

Section \ref{sec:examples} illustrates our results in examples. We orient ourselves in Section \ref{subsec:cleft} by treating invertible symmetries of $c=1$ chiral boson theories. In Section \ref{subsec:nonchiralexample}, we show that our formalism applies equally well to non-chiral relative theories by treating the example of the $\mathbb{Z}_2$-even sector of the Ising CFT. In Section \ref{subsec:haagerup}, we provide a family of noninvertible hypergroup actions which are closely related to the Haagerup fusion categories. In Section \ref{subsec:G21}, we derive the boundary conditions of the $(G_2)_1$ WZW model which preserve its $\widehat{\mathfrak{su}}(2)_{28}$ chiral subalgebra. In Section \ref{subsec:SU(3)1}, we glue together symmetries of the $\widehat{\mathfrak{su}}(3)_1$ chiral algebra to produce symmetries of the full $SU(3)_1$ WZW model. Finally, in Section \ref{subsec:Heisenberg}, we calculate the full, infinite category of topological line operators of the $\widehat{\mathfrak{u}}(1)$ Kac--Moody algebra, foreshadowing an extension of our machinery to the irrational/infinite setting.

We conclude in Section \ref{sec:future} with suggestions for future research. Appendix \ref{app:hypergroups} contains a review of hypergroups, and Appendix \ref{app:E8} contains some data related to the example in Section \ref{subsec:G21}.

\paragraph{Note added:} While this manuscript was in its final stages of preparation, the two papers \cite{Bottini:2026edf,Eck:2026aiz} appeared on the arXiv, which have some overlap with our work.

\section{Symmetries of Relative CFTs and Chiral Algebras}\label{sec:relativesymmetries}

In this section, we describe a number of ideas related to chiral algebras and, more generally, relative 2D CFTs. Specifically, we focus on developing an effective theory for working with their noninvertible symmetries. 

For ease of exposition, our presentation is geared mainly towards chiral algebras --- known to mathematicians as vertex operator algebras --- which can be thought of as relative CFTs with vanishing right-moving central charge. However, most of what we say extends straightforwardly to relative CFTs with both left- and right-movers, as we will demonstrate by way of an example in Section \ref{subsec:nonchiralexample}.\footnote{We expect that relative CFTs with both left- and right-movers can be incorporated into the theory in a mathematically rigorous way by invoking full field algebras or full vertex algebras, see e.g.\ \cite{Huang:2005gz,Moriwaki:2020cxf,Moriwaki:2020dlj,Adamo:2024etu}.}

Throughout this section, we assume that the symmetries we study are finite and, relatedly, that the chiral algebras we work with are rational. This is mainly so that we can invoke powerful theorems about fusion categories and semisimple modular tensor categories. However, we expect that many of the statements we make will generalize when these assumptions are dropped. 
As a teaser, we present the computation of the (infinitely many) topological lines of the (irrational) $\widehat{\mathfrak{u}}(1)$ Kac-Moody algebra (a.k.a.\ the Heisenberg VOA) in Section \ref{subsec:Heisenberg}.

\subsection{Chiral algebras as gapless boundaries}\label{subsec:gaplessboundaries}

We begin by recalling a physicist's definition of a chiral algebra \cite{Zamolodchikov:1985wn,Moore:1988qv} and briefly giving familiar examples.

\begin{definition}
    A chiral algebra $V$ is the structure formed by the space of holomorphic local operators in a 2D conformal field theory $\mathcal{H}$,
\begin{align}
    V=\{\mathcal{O}(z,\bar z)\in\mathcal{H} \mid \partial_{\bar z}\mathcal{O} = 0\}.
\end{align} 
\begin{example}{}{}
    The chiral algebra of the Ising CFT is the (simple quotient of the) Virasoro vertex operator algebra at central charge $c=\sfrac12$. 
\end{example}
\begin{example}{}{latticeVOA}
    The chiral algebra of the $c=1$ compact free boson theory with radius $R^2=2p/q$ (in conventions where $R^2=2$ is the self T-dual radius) is $V_{2pq}$, where $V_{2m}$ is generated by a chiral boson $\partial_z \phi$ and vertex operators $e^{i\lambda \phi(z)}$ with momenta in the charge lattice $\lambda \in \sqrt{2m}\mathbb{Z}.$ 
\end{example}
\end{definition}
For a physicist, a rational conformal field theory is one with finitely many primary operators with respect to its maximal chiral algebra $V$. When a conformal field theory is rational, its chiral algebra is expected to be described mathematically by a \emph{strongly rational vertex operator algebra} (see e.g.\ \cite{Creutzig:2016fms} for a mathematical review). We use the terms ``rational chiral algebra'' and ``strongly rational vertex operator algebra'' interchangeably.

Strongly rational vertex operator algebras have a very tame representation theory. For example, any representation of a strongly rational vertex operator algebra can be decomposed into a direct sum of irreducible representations. In fact, more is true: 
\begin{theorem}[Proved in \cite{Moore:1988qv,Huang:2005gs}]
    The representation category $\Rep(V)$ of a (not necessarily unitary) strongly rational vertex operator algebra $V$ admits the structure of a modular tensor category.
\end{theorem}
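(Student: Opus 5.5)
The plan follows Huang's strategy, which makes rigorous the Moore--Seiberg bootstrap. The standing hypotheses, strong rationality (simple, CFT-type, self-contragredient, rational and $C_2$-cofinite), are used in three places.

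\emph{Step 1: finiteness and semisimplicity.} Rationality together with $C_2$-cofiniteness gives finitely many irreducible modules $V_a$. Every admissible module is completely reducible, so $\Rep(V)$ is a finite semisimple $\mathbb{C}$-linear category. $C_2$-cofiniteness also makes every space of intertwining operators $\mathcal{Y}_{ab}^c$ finite-dimensional, so the fusion rules $N_{ab}^c$ are finite.

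\emph{Step 2: tensor and braided structure.} I would define $V_a\boxtimes V_b$ by the universal property for intertwining operators of type $\binom{\ \cdot\ }{a\ b}$, in the style of Huang--Lepowsky. Associativity isomorphisms come from the convergence and extension properties of products and iterates of intertwining operators. These follow from $C_2$-cofiniteness: the correlation functions satisfy regular-singular ODEs in the cross-ratio, so each product analytically continues to an iterate, and that identification supplies the associator. The pentagon holds by uniqueness of analytic continuation. The braiding is defined by continuing $\mathcal{Y}(w,z)$ along a path exchanging the two insertions, and the hexagon axioms follow the same way. Rigidity and the ribbon twist $\theta_a=e^{2\pi i L_0}$ come after this.

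\emph{Step 3: modularity and rigidity (the main obstacle).} The hard part is proving non-degeneracy of the braiding and rigidity together. I would first use Zhu's theorem, extended by Huang to genus-one correlators with intertwining-operator insertions, to show that the span of the characters $\chi_a(\tau)$ is $SL_2(\mathbb{Z})$-invariant. This yields a matrix $S$. Next I would prove the Verlinde formula $N_{ab}^c=\sum_d S_{ad}S_{bd}\overline{S_{cd}}/S_{0d}$. The method is to compare the genus-one one-point functions of intertwining operators under $\tau\mapsto -1/\tau$ with the action of the braiding/monodromy $\Omega^2$ on genus-zero four-point blocks. This is the Moore--Seiberg relation $S(b)\,$ vs.\ $B^2$, and the delicate analytic input is the convergence of the $q$-traces.

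From the Verlinde formula, $S$ is invertible and $S_{0a}\neq0$. This gives rigidity: the dual of $V_a$ is its contragredient $V_a'$, with evaluation and coevaluation normalized by $S_{0a}/S_{00}$. Invertibility of $S$, identified with the categorical $\tilde S_{ab}=\mathrm{tr}(c_{b,a}c_{a,b})$ up to normalization, then gives non-degeneracy.

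Finally, the pivotal/ribbon compatibility $\theta_{a^\ast}=\theta_a^\ast$ follows from $h_{a'}=h_a$. This completes the modular tensor category structure. No unitarity is used anywhere.
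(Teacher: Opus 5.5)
Your roadmap matches the route the paper relies on. The paper gives no argument of its own, only the citation to Moore--Seiberg and Huang: Huang--Lepowsky tensor products, associativity from the differential-equation convergence results, then Zhu--Huang modular invariance and the Verlinde formula, with rigidity and nondegeneracy last. Two things in Step 3 do not work as written, and the second is a real gap.

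First, the characters do not determine $S$. $V_a$ and its contragredient $V_a'$ always have the same graded dimension. So as soon as some irreducible is not self-dual (e.g.\ the two nontrivial modules of $\widehat{\mathfrak{su}}(3)_1$), the $\chi_a(\tau)$ are linearly dependent and ``the matrix $S$'' is undefined. You need Zhu's one-point functions $\mathrm{Tr}_{V_a}o(u)q^{L_0-c/24}$, $u\in V$, which are linearly independent. Then $S$ is invertible simply because it represents $\tau\mapsto-1/\tau$ in a basis, and one shows it is symmetric with $S^2=C$.

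Your logic here is also reversed, since the Verlinde formula already divides by $S_{0d}$. The correct order is:
\begin{enumerate}
\item prove that $S$ diagonalizes the fusion matrices, $\sum_c N_{ab}^cS_{cd}=\lambda_a(d)S_{bd}$;
\item set $b=0$ and use invertibility of $S$ to deduce $S_{0d}\neq0$;
\item only then write the formula, with $(S^{-1})_{dc}=S_{dc'}$ rather than $\overline{S_{cd}}$, because unitarity of $S$ is not available for non-unitary $V$ at this stage.
\end{enumerate}

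Second, ``$S_{0a}\neq0$ gives rigidity'' omits the actual content of the rigidity theorem. Take $W$ irreducible, with $e_W\colon W'\boxtimes W\to V$ and $i_W\colon V\to W\boxtimes W'$ built from contragredient intertwining operators. The zig-zag composite is then automatically a scalar multiple of $\mathrm{id}_W$. Rigidity reduces to showing that this scalar, a fusing-matrix coefficient with the vacuum as intermediate channel, is nonzero. Nothing in Steps 1--2 forces this, and nothing in your Step 3 connects it to $S$.

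What is needed is a Moore--Seiberg identity expressing that coefficient through $S_{00}$ and $S_{0a}$. It is proved from the genus-one $S$-transformation of one-point functions of intertwining operators, combined with the fusing and braiding isomorphisms. Only with such an identity does nonvanishing of the $S$-entries yield rigidity, and this is where the cited proof does its real work.

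Similarly, the equality of the analytic $S$ with the Hopf-link matrix $\tilde S_{ab}$ up to normalization is a separate theorem. It presupposes rigidity, so that categorical traces exist, and then needs the same genus-one/genus-zero comparison. Asserting it does not yet give nondegeneracy.
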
 
\noindent When $V$ is further unitary, its representation category is a \emph{unitary} modular tensor category \cite{Gui:2017hrk,Gui:2017xnu}, though the converse is less clear. Namely, it is not known whether $\Rep(V)$ being a unitary modular tensor category implies that $V$ itself is unitary. 

All vertex operator algebras are assumed to be unitary in this paper unless otherwise stated (see e.g.\ \cite{dong2014unitary,Carpi:2015fga,Gui:2017hrk,Gui:2017xnu} for the formal definition). Much of what follows holds if we merely assume that $\Rep(V)$ is pseudounitary (i.e.\ that all the quantum dimensions are positive). Pseudounitarity of $\Rep(V)$ is guaranteed provided the conformal weights $h_M$ of all non-vacuum $V$-modules are positive. However, since it is very difficult to find any examples where $\Rep(V)$ is pseudounitary without $V$ being unitary, we will assume unitarity for simplicity.

Rational chiral algebras are closely related to gapless chiral boundary conditions of 3D topological field theories. For example, the current algebra $\mathfrak{g}_k$ with $\mathfrak{g}$ a simple complex Lie algebra and  $k$ a positive integer (i.e.\ the chiral algebra of a  Wess--Zumino--Witten model \cite{Wess:1971yu,Witten:1983tw,Witten:1983ar}), can be thought of as living on the boundary of Chern--Simons theory with gauge group $G$  and level $k$ \cite{Witten:1988hf,Elitzur:1989nr}, where $G$ is the simply-connected Lie group whose complexified Lie algebra is $\mathfrak{g}$. As such, we can think of chiral algebras as giving tractable examples of relative quantum field theories in two dimensions \cite{Freed:2012bs}. Much of what we say about chiral algebras in this paper can be extended to that more general setting with a few modifications. In fact, we will study an example of a non-chiral relative conformal field theory (the $\mathbb{Z}_2$-even sector of the Ising CFT) in detail in Section \ref{subsec:nonchiralexample}.

To describe the relationship between chiral algebras and boundaries more precisely, recall that the defining data of a 3D topological field theory is a tuple $(\mathcal{B},c)$ consisting of a unitary modular tensor category $\mathcal{B}$ which describes the algebraic properties of its anyons/topological line operators, and a rational number $c$ which captures its chiral central charge. (Sometimes we abbreviate $(\mathcal{B},c)$ to just $\mathcal{B}$ to avoid clutter.) 

Recall also that any automorphism $g\in\mathrm{Aut}(V)$ permutes around the representations of $V$ (see e.g.\ the discussion around \cite[Equation (2.20)]{Dong:1994wn}) and in fact induces a ribbon auto-equivalence 
\begin{align}\label{eqn:inducedautoequiv}
    g_\ast :\Rep(V)\to\Rep(V)
\end{align} 
of the representation category of $V$ \cite{Mcrae:2019pol}. However, importantly, not every ribbon auto-equivalence of $\Rep(V)$ arises in this way. A simple example is the $\widehat{\mathfrak{su}}(2)_k$ current algebra with $k=2~\mathrm{mod}~4$ and $k\geq 6$. Since $\mathrm{Aut}(\widehat{\mathfrak{su}}(2)_k)=\textsl{PSL}_2(\mathbb{C})$, every automorphism is continuously connected to the identity and hence induces the trivial ribbon auto-equivalence of $\Rep(\widehat{\mathfrak{su}}(2)_k)$.\footnote{The subgroup of automorphisms which preserve the unitary structure of $\widehat{\mathfrak{su}}(2)_k$ is $\mathrm{Aut}^\dagger (\widehat{\mathfrak{su}}(2)_k)\cong SO(3)$.} On the other hand, when $k=2~\mathrm{mod}~4$ and $k\geq 6$, it is known that $\Rep(\widehat{\mathfrak{su}}(2)_k)$ possesses a $\mathbb{Z}_2$ worth of ribbon autoequivalences \cite{Kirillov:2001ti,Edie-Michell:2022abq}. For a physicist, this follows from the ADE classification of CFTs built on the $\widehat{\mathfrak{su}}(2)_k$ chiral algebras \cite{Cappelli:1987xt}: precisely when $k=2~\mathrm{mod}~4$ and $k\geq 6$, there is a D-series CFT whose Hilbert space is constructed by gluing left-movers to right-movers using this auto-equivalence. 

Using similar logic, any isomorphism $f:V\to V'$ induces an equivalence $f_\ast:\Rep(V')\to \Rep(V)$ of representation categories.

We have the following expectation regarding boundaries of $(\mathcal{B},c)$. (We refer readers to \cite{Kong:2017etd,Kong:2019byq,Kong:2019cuu} for a somewhat different but closely related perspective.) 

\begin{figure}
\begin{center}
\input{Figures/gaplesschiralboundary}
\caption{A vertex operator algebra $V$ and a ribbon equivalence $\Phi:\mathcal{B}(V)\to \Rep(V)$ together define a gapless chiral boundary condition of the 3D topological field theory $(\mathcal{B}(V),c(V))$. The local operators $\mathcal{O}(z)$ living on the boundary and attached to a bulk line $a\in\mathcal{B}(V)$ belong to the $V$-module $V_a:=\Phi(a)$. By the state/operator correspondence, $V_a$ can equivalently be described as the Hilbert space that the 3D TQFT $(\mathcal{B}(V),c(V))$ associates to a punctured disk.}\label{fig:gaplesschiralboundary}
\end{center}
\end{figure}

\begin{claim}{}{claim:(B,c)boundaries}
Let $V$ be a unitary strongly rational vertex operator algebra of central charge $c$, and $\Phi:\mathcal{B}\to \Rep(V)$ a ribbon equivalence. Then every pair $(V,\Phi)$ gives rise to a gapless chiral boundary condition of the 3D TQFT $(\mathcal{B},c)$. Two boundaries $(V,\Phi)$ and $(V',\Phi')$ are isomorphic if there is an isomorphism $f:V\to V'$ satisfying $f_\ast\circ \Phi'= \Phi$.
\end{claim}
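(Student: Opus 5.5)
The plan is to build the boundary condition directly from the Moore--Seiberg data of $V$, transported to $\mathcal{B}$ along $\Phi$. The point is that a 3D TQFT $(\mathcal{B},c)$ is fixed, up to equivalence, by its MTC and chiral central charge. A chiral boundary condition of it should therefore be specified by three pieces of data: (i) a space of local operators at the end of each bulk line $a$, (ii) their operator product expansions, and (iii) an identification of bulk braiding/fusion with the monodromy and fusion of these boundary operators.

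\textbf{Step 1 (endpoint spaces).} To each bulk anyon $a\in\mathcal{B}$ assign the Hilbert space of endpoint operators $V_a:=\Phi(a)$, as in Figure \ref{fig:gaplesschiralboundary}. The vacuum line goes to $V$ itself, because a ribbon equivalence preserves the unit.

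\textbf{Step 2 (correlators and consistency).} Define correlators of endpoint operators on the boundary, with the bulk lines fusing through topological junctions $\mathrm{Hom}_{\mathcal{B}}(a\otimes b,c)$. These are matched via $\Phi$ with the spaces of intertwining operators of type $\binom{V_c}{V_a\,V_b}$; this matching is exactly what the tensor structure on $\Rep(V)$ from the Theorem of Moore--Seiberg/Huang provides. Under this dictionary:
\begin{itemize}
\item the associativity isomorphisms of $\Rep(V)$ (the OPE $F$-moves) go to the $F$-symbols of $\mathcal{B}$;
\item the braiding of intertwiners, i.e.\ analytic continuation of conformal blocks, goes to the $R$-symbols;
\item the twist $e^{2\pi i L_0}$ goes to the bulk topological spin.
\end{itemize}
Because $\Phi$ is a \emph{ribbon} equivalence, every topological manipulation of bulk lines in a neighborhood of the boundary reproduces the corresponding monodromy of chiral blocks. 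This is the consistency we need: the relative theory assigns to a surface $\Sigma$ a vector in the TQFT state space $Z_{\mathcal{B}}(\Sigma)$, namely the space of conformal blocks.

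\textbf{Step 3 (central charges and framing).} Matching the conformal anomaly of these blocks to the framing anomaly of the TQFT requires $c(V)=c$ modulo whatever the MTC alone determines. Here we use the Gauss sum relation $e^{2\pi i c/8}=\sum_a d_a^2\theta_a/\mathcal{D}$, which holds for $\Rep(V)$ by modularity and Verlinde, to fix the compatible framing. Unitarity is used only to ensure that the identification is with a unitary MTC; this is \cite{Gui:2017hrk,Gui:2017xnu}.

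\textbf{Step 4 (isomorphism criterion).} An isomorphism $f:V\to V'$ transports every module, intertwiner and correlator, so it relabels the endpoint operators. The two boundaries $(V,\Phi)$ and $(V',\Phi')$ are then identified precisely when the assignments $a\mapsto\Phi(a)$ and $a\mapsto f_\ast\Phi'(a)$ agree as ribbon functors, i.e.\ when $f_\ast\circ\Phi'=\Phi$.

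The main obstacle is that there is no rigorous definition of a ``gapless boundary condition'' of a 3D TQFT. Step 2 is therefore necessarily a physical claim rather than a theorem. I would handle it by taking as the working definition the modular functor / conformal-block formulation: a boundary is a consistent system of blocks valued in $Z_{\mathcal{B}}$. With that definition, the claim reduces to the known equivalence between the modular functor of a strongly rational VOA and the Reshetikhin--Turaev TQFT of $\Rep(V)$, together with the conformal-anomaly matching described in Step 3.
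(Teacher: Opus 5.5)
Your treatment of the existence part matches the paper's. The paper offers no derivation beyond the same dictionary $V_a:=\Phi(a)$ (endpoint operators, equivalently the disk state space pierced by $a$, with $V_{\mathds{1}}=V$); your Steps 2--3 spell out consistency conditions that the paper leaves implicit. One small correction: in Step 3 nothing needs to be matched, since $c$ is defined to be $c(V)$. The Gauss sum only checks that $c \bmod 8$ is compatible with $\mathcal{B}$, whereas the boundary requires exact equality, not equality modulo what the MTC determines.

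For the isomorphism criterion you take a genuinely different route.
\begin{itemize}
\item \emph{Your route} is transport of structure: $f$ carries modules, intertwiners and blocks of $V$ to those of $V'$, so the endpoint data coincide once $f_\ast\circ\Phi'=\Phi$ as ribbon (in particular monoidal) functors.
\item \emph{The paper's route} realizes $f$ as a boundary topological line on which the invertible bulk surface $\mathcal{S}_f=(\Phi')^{-1}\circ f_\ast^{-1}\circ\Phi$ terminates (Figure~\ref{fig:identifyboundaries}), and bends $\mathcal{S}_f$ onto half the boundary. Using $F\otimes(V,\Phi)\cong(V,\Phi\circ F^{-1})$, this produces an invertible interface between $\mathcal{S}_f\otimes(V,\Phi)\cong(V,f_\ast\circ\Phi')$ and $(V',\Phi')$. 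Under the hypothesis, this is an invertible interface between the two given boundaries.
\end{itemize}
Your version is more elementary, and it makes explicit that the tensor structures, not just the objects, must agree. The paper's version has two advantages. It works with the physically natural notion of isomorphism (existence of an invertible topological interface) rather than identification of all operator data. It also says what an arbitrary $f$ does when $f_\ast\circ\Phi'\neq\Phi$: it identifies $(V',\Phi')$ with the image of $(V,\Phi)$ under the bulk surface $\mathcal{S}_f$, which fits the surface calculus used throughout the rest of the paper.

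Finally, you write ``precisely when'', but you only argue the ``if'' direction, which is also all the statement claims. The converse would need an argument that any invertible interface between the two boundaries induces such an $f$.
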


The physical interpretation of the boundary $(V,\Phi)$ is as follows. The chiral algebra $V$ is the Hilbert space of genuine boundary local operators. The map $\Phi$ associates to a topological line $a\in\mathcal{B}$ the Hilbert space $V_a:=\Phi(a)$ of boundary local operators $\mathcal{O}(z)$ on which $a$ can terminate. Alternatively, by the state/operator correspondence, $V_a:=\Phi(a)$ is the Hilbert space obtained by quantizing the 3D topological field theory on a two-dimensional disk with the line operator $a$ piercing the origin. See Figure \ref{fig:gaplesschiralboundary}. The space $V_{\mathds{1}}=\Phi(\mathds{1})$ corresponding to the identity line is simply the vertex operator algebra itself, and more generally $V_a$ is always a $V$-module, which is irreducible if and only if the input line $a$ is simple.

We note that there are other gapless boundary conditions of $(\mathcal{B},c)$ which are not of this kind, but we will not have any use for them.

To better appreciate the role of the tensor equivalence $\Phi$, we offer the following observation on how chiral boundary conditions interact with zero-form symmetries. Recall that ribbon auto-equivalences $F:\mathcal{B}\to\mathcal{B}$ can be thought of as invertible topological surface operators in the bulk 3D topological field theory described by $(\mathcal{B},c)$. It is straightforward to convince oneself that the action (by parallel fusion) of the surface described by $F$ on the boundary condition described by $(V,\Phi)$ is simply 
\begin{align}
    F \otimes (V,\Phi) \cong (V,\Phi\circ F^{-1}).
\end{align}
In particular, the $(V,\Phi)$ with $V$ fixed are transitively permuted amongst themselves via the action of invertible topological surfaces in the bulk. See Figure \ref{fig:braidedauts}.

\begin{figure}
    \begin{center}
\input{Figures/braidedauts}
\caption{The action of a bulk topological surface defined by ribbon auto-equivalence $F:\mathcal{B}\to\mathcal{B}$ on a gapless chiral boundary condition $(V,\Phi)$.}\label{fig:braidedauts}
    \end{center}
\end{figure}

To explain why the boundaries $(V,\Phi)$ and $(V',\Phi')$ are isomorphic when $f_\ast \circ \Phi'=\Phi$, note that an isomorphism $f:V\to V'$ is implemented physically by a boundary topological line operator on which the bulk surface 
\begin{align}
    \mathcal{S}_f:=(\Phi')^{-1}\circ f_\ast^{-1}\circ \Phi
\end{align} terminates, as in Figure \ref{fig:identifyboundaries}. By bending the surface $\mathcal{S}_f$ so that it fuses onto half of the boundary, we obtain an invertible topological line interface $I$ between the boundaries $(V,f_\ast\circ \Phi')=(V,\Phi)$ and $(V',\Phi')$. The existence of an invertible interface between two defects/boundaries in a quantum field theory is tantamount to saying that they are isomorphic.

\begin{figure}
    \begin{center}
        \input{Figures/identifyboundaries}
        \caption{An isomorphism $f:V\to V'$ of chiral algebras is interpreted as a bulk surface $\mathcal{S}_f$ terminating on a boundary topological line interface $f$, as on the left. Bending the bulk surface so that it fuses onto half of the boundary produces an invertible topological line interface $I$ between $(V,f_\ast\circ \Phi')$ and $(V',\Phi')$.}\label{fig:identifyboundaries}
    \end{center}
\end{figure}

Most of the discussion which follows is insensitive to the particular choice of $\Phi$ we make, and so we will fix it at the outset. Accordingly, we will often abbreviate the boundary condition $(V,\Phi)$ to just $V$, or refer to it as a ``$V$-boundary'' when the precise choice of $\Phi$ does not matter. When we would like to emphasize that $\Phi$ is a tensor equivalence associated with the vertex operator algebra $V$ (as opposed to some other vertex operator algebra $W$), we will add $V$ as a decoration and write $\Phi_V$. Similarly, we will write $\mathcal{B}(V)$ when we want to emphasize that the modular tensor category is equivalent to the representation category $\Rep(V)$ of $V$. 

\subsection{Boundary topological line operators}\label{subsec:boundarylines}

We are interested in probing the category of topological line operators supported on a gapless chiral boundary condition $(V,\Phi_V)$. Unlike  gapped boundary conditions, which are home to only finitely many simple topological line operators, we will see that gapless chiral boundaries can (and generically do) support infinitely many simple lines. 

Throughout this paper, we will take for granted the basic expectation from physics that any finite collection of elementary topological line operators which are confined to a two-dimensional locus of spacetime and which close under parallel fusion generate a spherical fusion category.

\begin{figure}
\begin{center}
\input{Figures/BVSymV}
\caption{The category $\mathcal{B}(V)$ of topological line operators in the bulk is a fusion subcategory of the category $\mathrm{Sym}(V)$ of topological line operators on the boundary defined by $(V,\Phi)$.}\label{fig:B(V)Sym(V)}
\end{center}
\end{figure}

At a minimum, we always have topological line operators which come from pushing bulk lines in $\mathcal{B}(V)$ onto the boundary. Unlike the case of gapped boundary conditions, where simple bulk lines can split into a direct sum of several simple lines when they are brought to the boundary, the lines in $\mathcal{B}(V)$ remain simple when they are pushed onto $(V,\Phi_V)$. In particular, the bulk-to-boundary map for chiral algebra boundaries of the kind we are discussing is fully faithful.\footnote{A functor $F:\mathcal{C}\to\mathcal{D}$ is fully faithful if each of the induced maps on hom-spaces $\mathrm{Hom}_{\mathcal{C}}(X,Y)\to \mathrm{Hom}_{\mathcal{D}}(F(X),F(Y))$ is an isomorphism.} Thus, we have the following.
\begin{claim}{}{}
The complete (generally infinite) category $\mathrm{Sym}(V)$ of topological line operators supported on the boundary   $(V,\Phi_V)$ contains $\mathcal{B}(V)$ as a fusion subcategory.
\end{claim}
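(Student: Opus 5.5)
The plan is to realize the bulk-to-boundary map as a tensor functor $F_{\mathrm{B}\to\partial}:\mathcal{B}(V)\to\mathrm{Sym}(V)$, show that it is fully faithful, and then identify its essential image with a fusion subcategory. First, $F_{\mathrm{B}\to\partial}$ is defined by pushing a bulk anyon $a$ onto the boundary $(V,\Phi_V)$. Monoidality comes from topological invariance in the bulk: pushing two parallel anyons to the boundary and then fusing them there is the same as fusing them in the bulk first and then pushing. The junctions provide the monoidal structure and preserve duals and pivotal structure, since bending a bulk line commutes with pushing it.

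The key step is full faithfulness. For simple $a,b\in\mathcal{B}(V)$, I would identify the morphism spaces of boundary lines with topological boundary point operators. A topological junction between $F(a)$ and $F(b)$ on the boundary is the same as a boundary local operator on which the bulk line $a\otimes b^\ast$ ends, restricted to those of conformal weight zero. By Claim \ref{cl:claim:(B,c)boundaries}, the space of such operators is $\Phi(a\otimes b^\ast)$. Its weight-zero subspace is $\mathrm{Hom}_{\Rep(V)}(V,\Phi(a\otimes b^\ast))$, because weight-zero vectors generate copies of the vacuum module. Unitarity gives a positive-energy vacuum with a one-dimensional weight-zero space, and all non-vacuum irreducible modules have $h>0$. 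Hence the only weight-zero vectors are those in vacuum summands. Since $\Phi$ is an equivalence, this space is isomorphic to $\mathrm{Hom}_{\mathcal{B}}(\mathds{1},a\otimes b^\ast)\cong\mathrm{Hom}_{\mathcal{B}}(b,a)$. It then remains to check that the map induced by $F$ on hom-spaces realizes this isomorphism, which holds by construction: a bulk junction pushed to the boundary gives exactly this boundary operator.

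Given full faithfulness, simple objects are sent to simple objects, because $\mathrm{End}(F(a))=\mathbb{C}$. The image is then closed under fusion (by monoidality), under duals, and under direct sums. So the essential image is a fusion subcategory of $\mathrm{Sym}(V)$ equivalent to $\mathcal{B}(V)$. Under the standing assumption, any finite collection of simple boundary lines that closes under fusion generates a spherical fusion category, so this image is legitimately a subcategory in that sense.

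I expect the main obstacle to be the identification of boundary junction spaces with weight-zero vectors in $\Phi(a\otimes b^\ast)$. This requires arguing, within the physics framework, that topological boundary point operators are exactly the weight-zero ($L_{-1}$-annihilated) vectors. My first attempt would be to use the state/operator correspondence on a hemisphere with $a$ and $b$ meeting at a boundary point, and then invoke positivity of conformal weights.
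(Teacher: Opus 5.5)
Your proposal is correct and follows the paper's route: you push bulk anyons onto the boundary via $F_{\mathrm{B}\to\partial}$, use that this bulk-to-boundary tensor functor is fully faithful (so simple anyons stay simple), and take its essential image as the fusion subcategory. The paper only asserts full faithfulness, contrasting it with gapped boundaries where bulk lines can split, so your identification of boundary junctions between $F(a)$ and $F(b)$ with the vacuum summands of $\Phi(a\otimes b^\ast)$ is a sound way to supply the justification the paper leaves implicit.
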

\noindent See Figure \ref{fig:B(V)Sym(V)}.

We may consider how a line $b$ in $\mathcal{B}(V)$ acts on boundary local operators attached to a line $a$ in $\mathcal{B}(V)$, i.e.\ how $b$ acts on local operators in the Hilbert space $V_a=\Phi_V(a)$. 
We obtain an operator $\widehat{b}:V_a\to V_a$ by encircling the topological line $b$ around boundary local operators (possibly attached to a bulk line $a$). See Figure \ref{fig:encirclingaction}. Since $b$ can be dragged back into the bulk, we can straightforwardly calculate the action of $\widehat{b}$ to be 
\begin{align}\label{eqn:actionpullingbulk}
    \widehat{b}\cdot \mathcal{O} = \frac{S_{ab}^V}{S_{1a}^V} \mathcal{O}, \hspace{.3in}\text{ for all }\mathcal{O}\in V_a,
\end{align}
where $S^V_{ab}$ is the modular S-matrix of $\mathcal{B}(V)$. Here, we have used the standard fact that open Hopf links can be evaluated in terms of the S-matrix, see e.g.\ \cite[II.1.5.a]{turaev2016quantum} or \cite[Equation (40)]{Barkeshli:2014cna}.
As one might have anticipated, this action takes the same form as the action of Verlinde lines on local operators in a diagonal rational conformal field theory \cite{Verlinde:1988sn}.

\begin{figure}
\begin{center}
\input{Figures/encirclingaction}
\caption{The action of a boundary topological line operator $b$ on an operator $\mathcal{O}(z)$ in the Hilbert space $V_a$ is obtained by shrinking $b$ onto $\mathcal{O}(z)$.}\label{fig:encirclingaction}
\end{center}
\end{figure}

Our central tool for discovering boundary topological line operators beyond those coming from $\mathcal{B}(V)$ is a concept which was called symmetry/subalgebra duality in \cite{Rayhaun:2023pgc,Moller:2024xtt} (though some antecedents appear in \cite{Bischoff:2016jmy,Burbano:2021loy,Rie22}). Recall that a conformal subalgebra $W\subset V$ is a subalgebra whose stress tensor coincides with that of $V$. Conversely, we say that $V$ is a conformal extension of $W$.

\begin{claim}{Symmetry/subalgebra duality
}{claim:sym/subduality}
Suppose $V$ is a unitary rational chiral algebra. Unitary rational conformal subalgebras $W\subset V$ are in one-to-one correspondence with unitary fusion categories $\mathcal{C}$ satisfying $\mathcal{B}(V)\subset \mathcal{C}\subset\mathrm{Sym}^\dagger(V)$, where $\mathrm{Sym}^\dagger(V)$ is the complete category of unitary topological line operators on $V$.
\end{claim}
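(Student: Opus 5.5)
We will construct the two maps $\mathcal{C}\mapsto V^{\mathcal{C}}$ and $W\mapsto \Ver(V/W)$ explicitly and check that they are mutually inverse. The main tool is the standard dictionary between conformal extensions and commutative algebras: a unitary rational conformal extension $W\subset V$ is equivalent to a connected étale (commutative, separable, haploid) algebra $A=V$ regarded as an object of $\Rep(W)$. Moreover, by the Kirillov--Ostrik / Huang--Kirillov--Lepowsky theorem, $\Rep(V)\simeq \Rep(W)_A^{\mathrm{loc}}$, the category of local (dyslectic) $A$-modules. This gives the SymTFT picture of Figure~\ref{fig:SymTFTconformalembedding}. The slab of $\mathrm{TQFT}_W=(\mathcal{B}(W),c)$ is bounded by the topological interface obtained by condensing $A$, with $\mathrm{TQFT}_V$ beyond it. The physical boundary is $W$ itself, and collapsing the slab reproduces the $V$-boundary.

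\textbf{Step 1 ($W\mapsto\Ver(V/W)$).} We define $\Ver(V/W):=\Rep(W)_A$, the fusion category of all (not necessarily local) right $A$-modules. Physically, it is the category of lines supported on the condensation interface, and therefore of lines on the collapsed $V$-boundary. We then check the required properties in turn.
\begin{enumerate}
\item $\Ver(V/W)$ is a unitary fusion category. This holds because $A$ is a unitary (Q-system) étale algebra in a unitary MTC.
\item $\Ver(V/W)$ contains $\mathcal{B}(V)\simeq\Rep(W)_A^{\mathrm{loc}}$ as the local modules.
\item Its lines act on $V$ by boundary line operators, so it sits inside $\mathrm{Sym}^\dagger(V)$. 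For this we use the free-module functor $\Rep(W)\to\Rep(W)_A$ and the fact that every $A$-module is a summand of a free one.
\item Operators of $W$ are exactly those neutral under every line in $\Ver(V/W)$. Shrinking a line $X\in\Rep(W)_A$ around $\mathcal{O}\in V$ amounts to projecting $A=\bigoplus_\lambda W_\lambda^{\oplus n_\lambda}$ onto its components. Summing over free modules $A\otimes\lambda$ therefore detects each $W_\lambda$ separately, and only the vacuum component $W\subset A$ (haploidity) is fixed by all of them.
\end{enumerate}
This last point gives $V^{\Ver(V/W)}=W$.

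\textbf{Step 2 ($\mathcal{C}\mapsto V^{\mathcal{C}}$).} Given a unitary fusion category $\mathcal{C}$ with $\mathcal{B}(V)\subset\mathcal{C}\subset\mathrm{Sym}^\dagger(V)$, we set $W:=V^{\mathcal{C}}$. This is a subalgebra because the lines are topological and commute with the operator product. It is conformal because the stress tensor is neutral under every topological line. We then must show that $W$ is unitary and rational, and that $\Ver(V/W)=\mathcal{C}$.

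For this we use a SymTFT/Drinfeld-center argument. Since $\mathcal{C}$ is a fusion category acting on $V$ and contains $\mathcal{B}(V)$, the Müger centralizer $\mathcal{C}'_{\mathcal{B}}$ gives a braided category. Its Drinfeld-type relative center $Z(\mathcal{C})\boxtimes\overline{\mathcal{B}(V)}$, or more precisely the centralizer of $\mathcal{B}(V)$ in $Z(\mathcal{C})$, is a nondegenerate MTC $\mathcal{B}_W$ containing a Lagrangian-type étale algebra $A$ with $\mathcal{B}_{W,A}\simeq\mathcal{C}$ and $\mathcal{B}_{W,A}^{\mathrm{loc}}\simeq\mathcal{B}(V)$. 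Physically, we "ungauge" $\mathcal{C}$ relative to $\mathcal{B}$. Decomposing $V$ under the commuting action of the tube/dome algebra of $\mathcal{C}$ then writes $V=\bigoplus_\lambda W_\lambda\otimes M_\lambda$ with finitely many $\lambda\in\mathcal{B}_W$. This exhibits $W$ as a fixed-point subalgebra of a finite (quantum) symmetry, in the spirit of a noninvertible orbifold. Rationality of $W$ should follow from this finite-index structure, using $\dim\mathcal{C}<\infty$ and the known results on rationality of finite-index subalgebras, combined with a quantum-Galois correspondence identifying $\Rep(W)\simeq\mathcal{B}_W$. With that in hand, Step~1 gives $\Ver(V/W)=\mathcal{B}_{W,A}\simeq\mathcal{C}$, and compatibility of the two embeddings into $\mathrm{Sym}^\dagger(V)$ yields equality rather than mere equivalence.

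\textbf{Main obstacle.} The hard step is the rationality of $V^{\mathcal{C}}$ together with the identification $\Rep(V^{\mathcal{C}})\simeq$ the relative center. This is a noninvertible analogue of the rationality of orbifolds (Carnahan--Miyamoto). My first attempt would be to establish a Schur--Weyl-type duality between $W$ and the dome algebra acting on $V$, so that $V$ becomes a finite-index extension of $W$. I would then invoke results on finite-index (Jones-index) subalgebras of unitary VOAs, via the conformal-net correspondence and Longo--Rehren, where finite index implies complete rationality. If that fails, I would restrict to the setting where this is known, and state the claim as conditional on it.
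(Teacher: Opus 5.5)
Your outline follows essentially the same route as the paper. The map $W\mapsto\Ver(V/W)\simeq\mathcal{B}(W)_A$ is realized on the condensation interface $\mathcal{I}_A$, and $\mathcal{C}\mapsto V^{\mathcal{C}}$ is justified by the SymTFT sandwich with bulk $Z^{\overline{\mathcal{B}(V)}}(\mathcal{C})\simeq\Rep(V^{\mathcal{C}})$; the paper does not prove rationality of $V^{\mathcal{C}}$ either, but only asserts on physical grounds that such a strongly rational $W=V^{\mathcal{C}}$ should exist, so the obstacle you flag is exactly the point it also leaves at physics-level rigor.

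Two small fixes: $A$ is Lagrangian in $\mathcal{B}_W$ only when $\mathcal{B}(V)\simeq\mathrm{Vec}$, and the dome algebra acts on the extended space $\mathbb{V}_{\mathcal{C}}=\bigoplus_X V_X$ rather than on $V$ alone, so your Schur--Weyl step should be set up there and then restricted to the $X=\mathds{1}$ summand.
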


\noindent We expect that symmetry/subalgebra duality continues to hold when $W$ is not rational, and even to some extent when $V$ is not rational. For example, we will treat the case that $V$ is the $\widehat{\mathfrak{u}}(1)$ Kac--Moody algebra (also known as the Heisenberg VOA) in Section \ref{subsec:Heisenberg}.

The bijection of Claim \ref{cl:claim:sym/subduality} is easy to state at a physics level of rigor. Given a symmetry category $\mathcal{C}\subset\mathrm{Sym}^\dagger(V)$ of topological line operators supported on the boundary $(V,\Phi_V)$, one may obtain a unitary conformal subalgebra $V^{\mathcal{C}}$ by passing to the operators $\mathcal{O}(z)$ in $V$ on which lines $X$ in $\mathcal{C}$ act as multiplication by the quantum dimension,
\begin{align}
    V^{\mathcal{C}}=\{\mathcal{O}(z)\in V\mid \widehat{X}\cdot \mathcal{O} = \mathsf{d}_{X}\mathcal{O}, \ \forall X\in \mathcal{C}\}.
\end{align}
This definition of $V^{\mathcal{C}}$ works when every line in $\mathcal{C}$ has finite quantum dimension (as is the case when $\mathcal{C}$ is a fusion category), but in the more general situation that some lines have infinite quantum dimension, we must define $V^{\mathcal{C}}$ to be the subspace of local operators $\mathcal{O}(z)$ which are transparent to every topological line in $\mathcal{C}$, see Figure \ref{fig:VCdef}. We sometimes refer to $V^{\mathcal{C}}$ as a fixed-point subalgebra, or a transparent subalgebra.

\begin{figure}
    \begin{center}
\input{Figures/VCdef}
    \end{center}
    \caption{Operators $\mathcal{O}(z)$ in $V^{\mathcal{C}}$ are by definition transparent to every topological line $b$ in $\mathcal{C}$.}\label{fig:VCdef}
\end{figure}

In the other direction, if one has a unitary conformal subalgebra $W\subset V$, one can consider the category $\Ver (V/W)$ of topological lines in $\mathrm{Sym}^\dagger(V)$ which commute with the local operators in $W$, that is
\begin{align}
    \Ver (V/W)=\{X\in\mathrm{Sym}^\dagger(V)\mid X \text{ commutes w/ }\mathcal{O}, \ \forall\mathcal{O}\in W\}. 
\end{align}

While symmetry/subalgebra duality abstractly establishes that conformal subalgebras are interchangeable with symmetries, it is not terribly useful in practice. For example, the definition of $\Ver (V/W)$ seems to require that we know the entire unitary symmetry category $\mathrm{Sym}^\dagger(V)$ of $V$ at the outset, which is often difficult to achieve. It turns out we can define $\Ver (V/W)$ in a more tractable way, at least when $V$ and $W$ are both rational. We turn to this next.

A useful slogan which will serve as our starting point is the following. \\

\noindent\textbf{Slogan:} \emph{Conformal extensions of a unitary rational chiral algebra $W$ are dual to patterns of noninvertible one-form symmetry gauging --- sometimes called anyon condensation by condensed matter physicists --- in the bulk TQFT $\mathcal{B}(W)$.} \\

\noindent  Indeed, note that any $V$ can always be thought of as a representation over itself, and therefore as a (generally reducible) representation of $W$ by restriction. In particular, $V$ is an object of $\Rep(W)$, or equivalently, 
\begin{align}
    A:=\Phi_W^{-1}(V)
\end{align} defines a non-simple topological line operator in $\mathcal{B}(W)$. We have chosen to label this line operator $A$ because, it turns out, $A$ defines a condensable algebra object in $\mathcal{B}(W)$; equivalently, it prescribes a way of gauging noninvertible one-form symmetries in the bulk topological field theory supporting $W$ on its boundary. See \cite{Kirillov:2001ti,Kong:2013aya,Huang:2014ixa,Creutzig:2017anl} for more on condensable algebras and their relations to VOA extensions.

The topological field theory obtained by gauging $A$ is none other than the bulk theory $\mathcal{B}(V)$ which supports $V$ on its boundary. That is,
\begin{align}\label{eqn:gaugingA}
    (\mathcal{B}(W)_A^{\mathrm{loc}},c(W))=(\mathcal{B}(V),c(V))
\end{align}
where we recall that $\mathcal{B}(W)_A^{\mathrm{loc}}$ is the category of \emph{local} $A$-modules inside of $\mathcal{B}(W)$ (see e.g.\ \cite{Kong:2022cpy} for a description of these concepts). We may therefore obtain a topological interface $\mathcal{I}_A$ between $\mathcal{B}(W)$ and $\mathcal{B}(V)$ by starting with the theory $\mathcal{B}(W)$ and gauging the one-form symmetry corresponding to the algebra object $A$ in half of spacetime.\footnote{We note that, because $W$ is a \emph{conformal} subalgebra of $V$, we necessarily have that $c(W)=c(V)$.} By fusing this topological interface onto the gapless chiral boundary condition $(W,\Phi_W)$ of $\mathcal{B}(W)$, one obtains a boundary condition of the form $(V,\Phi_V)$ of $\mathcal{B}(V)$, i.e.\ 
\begin{align}
    \mathcal{I}_A\otimes W\text{-boundary} \cong V\text{-boundary}.
\end{align}
In this way, one ends up with a picture which is reminiscent of the SymTFT construction.

\begin{claim}{}{}
   Figure \ref{fig:SymTFTconformalembedding} is the SymTFT representation of a conformal embedding $W\subset V$ of rational chiral algebras.
\end{claim}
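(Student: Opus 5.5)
The plan is to justify the picture in Figure \ref{fig:SymTFTconformalembedding} by assembling three facts already stated above, and then checking that the resulting sandwich has the features a SymTFT should have. In the figure, $\mathrm{TQFT}_W=(\mathcal{B}(W),c)$ sits in a slab, with the chiral boundary $(W,\Phi_W)$ on one side and the half-space gauging interface $\mathcal{I}_A$ into $\mathrm{TQFT}_V=(\mathcal{B}(V),c)$ on the other.

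\textbf{Step 1 (the input data).} Starting from $W\subset V$, I regard $V$ as an object of $\Rep(W)$ by restriction and set $A=\Phi_W^{-1}(V)$. I will invoke the cited results of \cite{Kirillov:2001ti,Huang:2014ixa,Creutzig:2017anl}: $A$ is a commutative, connected, separable (condensable) algebra, because the vertex operator algebra structure on $V$ supplies the multiplication. Unitarity and $V_0=\mathbb{C}$ give connectedness and haploidness, and locality of the vertex operators gives commutativity, i.e.\ trivial twist. These same results give $\mathcal{B}(W)_A^{\mathrm{loc}}\simeq \Rep(V)$ as ribbon categories, and this is Equation \eqref{eqn:gaugingA}. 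Since the subalgebra is conformal, $c(W)=c(V)$, so the interface $\mathcal{I}_A$ is a genuine topological interface with no gravitational mismatch.

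\textbf{Step 2 (the fusion).} Next I show $\mathcal{I}_A\otimes(W,\Phi_W)\cong(V,\Phi_V)$. The argument is to compute the Hilbert space of boundary operators attached to a bulk line $m\in\mathcal{B}(V)=\mathcal{B}(W)_A^{\mathrm{loc}}$ after fusion. Pulling $m$ through $\mathcal{I}_A$, it becomes the underlying $W$-line of the local $A$-module $m$, so the endpoint space is $\Phi_W(m)$. This is a $W$-module carrying a compatible $A=V$ action, and locality makes it an honest $V$-module; this is exactly the induction/restriction dictionary. Hence the composite map $m\mapsto \Phi_W(m)$ equals $\Phi_V$ up to the choice of $\Phi_V$, which is harmless by Claim \ref{cl:claim:(B,c)boundaries}. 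In particular, the genuine boundary operators, attached to the trivial line of $\mathcal{B}(V)$, are $\Phi_W(A)=V$, as required.

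\textbf{Step 3 (SymTFT interpretation).} I interpret $(W,\Phi_W)$ as the physical boundary and $\mathcal{I}_A$ as the topological interface. The symmetry category is then the category of lines on $\mathcal{I}_A$, namely ${}_A\mathcal{B}(W)_A$ or its appropriate one-sided version. This contains $\mathcal{B}(V)$ as the lines that pass through the interface, and it acts on $V$ while commuting with $W$, because $W$ is attached to the physical boundary and lines on the interface can be moved away from it.

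The main obstacle is Step 2: making rigorous that fusing a half-space gauging interface onto a gapless boundary reproduces exactly the vertex-algebraic extension, with the correct $\Phi_V$. I would handle it at a physics level by computing disk and punctured-disk Hilbert spaces in the slab, and appeal to the induction functor results of Creutzig--Kanade--McRae for the rigorous identification.
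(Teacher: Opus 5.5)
Your proposal follows the paper's argument: take $A=\Phi_W^{-1}(V)$ as a condensable algebra, identify $\mathcal{B}(W)_A^{\mathrm{loc}}\simeq\mathcal{B}(V)$ with $c(W)=c(V)$, and fuse the half-space gauging interface $\mathcal{I}_A$ onto the $W$-boundary to get the $V$-boundary; you simply spell these steps out in more detail. One small correction in Step 3: the lines on $\mathcal{I}_A$ are the one-sided modules $\mathcal{B}(W)_A$, not ${}_A\mathcal{B}(W)_A$. The bimodule category has global dimension $\dim\mathcal{B}(W)$ rather than $\dim\mathcal{B}(W)/\dim A$, and it describes lines of the full diagonal CFT that preserve $W\otimes\overline{W}$.
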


As in the SymTFT, we want to interpret this construction as ``exposing'' certain topological line operators of $V$.  

\begin{claim}{}{claim:verdef}The category of (not necessarily local) $A$-modules inside of $\mathcal{B}(W)$,
\begin{align}\label{eqn:Ver(V/W)}
    \Ver (V/W)\equiv\mathcal{B}(W)_A,
\end{align}
defines a fusion category of topological line operators supported on the boundary $(V,\Phi_V)$. In the SymTFT picture of Figure \ref{fig:SymTFTconformalembedding}, it arises as the category of topological line operators supported on the interface $\mathcal{I}_A$.
\end{claim}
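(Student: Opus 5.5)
The plan is to work entirely in the SymTFT picture of Figure \ref{fig:SymTFTconformalembedding} and to reduce the claim to standard facts about condensable algebras in modular tensor categories. There are three things to establish:
\begin{itemize}
\item $\mathcal{B}(W)_A$ is a (spherical, unitary) fusion category;
\item it is the category of topological lines supported on the half-space gauging interface $\mathcal{I}_A$;
\item after collapsing the slab, these lines become topological lines on the boundary $(V,\Phi_V)$, and they contain $\mathcal{B}(V)$.
\end{itemize}

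\textbf{Step 1: $\mathcal{B}(W)_A$ is a fusion category.} Since $W\subset V$ is a conformal embedding of unitary rational chiral algebras, $A=\Phi_W^{-1}(V)$ is a connected ($\mathrm{Hom}(\mathds{1},A)\cong\mathbb{C}$, because $V$ has a unique vacuum), commutative (by locality of $V$), separable (rigid, by rationality) algebra with trivial twist (integer conformal weights). Citing \cite{Kirillov:2001ti,Huang:2014ixa,Creutzig:2017anl}, the category $\mathcal{B}(W)_A$ of right $A$-modules is then a fusion category, with $\otimes_A$ as tensor product. Unitarity of $\mathcal{B}(W)$ gives a unitary, hence spherical, structure. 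Its full subcategory of local modules is $\mathcal{B}(W)_A^{\mathrm{loc}}\cong\mathcal{B}(V)$, as recalled in \eqref{eqn:gaugingA}.

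\textbf{Step 2: lines on the interface.} Half-space gauging of $A$ produces an interface whose line operators are, by the usual analysis of gauging interfaces, exactly the $A$-modules in $\mathcal{B}(W)$. Concretely, a bulk line of $\mathcal{B}(W)$ pushed onto $\mathcal{I}_A$ becomes the free module $X\otimes A$. Its summands, together with their junctions, generate all interface lines. Equivalently, the interface is the module category picture in which $\mathcal{B}(W)_A$ is the dual of $\mathcal{B}(W)$ with respect to the module category $\mathcal{B}(W)_A$. This is the standard identification in the anyon-condensation literature \cite{Kong:2013aya,Kong:2022cpy}, and I will cite it rather than reprove it.

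\textbf{Step 3: transfer to the $V$-boundary.} Collapse the slab, i.e.\ fuse $\mathcal{I}_A$ onto the $W$-boundary, using $\mathcal{I}_A\otimes W\text{-boundary}\cong V\text{-boundary}$. Topological lines on $\mathcal{I}_A$ become topological lines on the boundary $(V,\Phi_V)$, so that $\mathcal{B}(W)_A\to\mathrm{Sym}(V)$. This map should be a tensor functor, since fusion is computed before or after collapsing in the same way. The subcategory of local modules maps onto the bulk lines of $\mathcal{B}(V)$ pushed to the boundary, which confirms the containment $\mathcal{B}(V)\subset\mathcal{C}$.

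The main obstacle is showing that this map is faithful, so that distinct simple $A$-modules do not collapse to isomorphic boundary lines. The natural check is Hom-spaces: local operators at the junction of $M,N\in\mathcal{B}(W)_A$ on the $V$-boundary should be $\mathrm{Hom}_{\mathcal{B}(W)}(\mathds{1},\ldots)$ computed with $W$-modules. This reduces to $\mathrm{Hom}_A(M,N)$ via the $W$-module/$V$-module dictionary, i.e.\ an induction–restriction adjunction. My first attempt would be to realize each $M$ as a twisted (non-local) $V$-module and apply the known equivalence between $\mathcal{B}(W)_A$ and the category of (generalized, possibly non-local) $V$-modules in $\Rep(W)$ \cite{Huang:2014ixa,Creutzig:2017anl}. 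In that language, fully faithfulness becomes the statement that $\mathrm{Hom}$ of these modules coincides with $\mathrm{Hom}_A$.
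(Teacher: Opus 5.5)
Your proposal follows essentially the same route as the paper: the SymTFT slab of Figure \ref{fig:SymTFTconformalembedding}, the fact that lines on the half-space gauging interface $\mathcal{I}_A$ are exactly the $A$-modules, and then collapsing the slab onto the $W$-boundary. For the middle step the paper relies on the mesh picture of Figure \ref{fig:Amodule} and \cite[Section 5.1]{Kong:2013aya}, via the $T=W$ case of Claim \ref{cl:oneformgauging}. The paper does not argue faithfulness separately; it later simply asserts the tensor equivalence $\mathcal{C}\simeq\mathrm{TwRep}_W(V)$, which is what your appeal to \cite{Huang:2014ixa,Creutzig:2017anl} would supply.

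One correction: your ``equivalently'' sentence in Step 2 is false. The dual of $\mathcal{B}(W)$ with respect to the module category $\mathcal{B}(W)_A$ is the bimodule category ${}_A\mathcal{B}(W)_A$, not $\mathcal{B}(W)_A$. It has global dimension $\dim\mathcal{B}(W)$ rather than $\dim\mathcal{B}(W)/\dim A$, and the paper identifies it in \eqref{eqn:descrip2} with the lines of the full diagonal CFT preserving $W\otimes\overline{W}$, not with lines on $\mathcal{I}_A$. Drop that sentence; nothing else in your argument depends on it.
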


Of course, one expects to also be able to run this construction in reverse. That is, if one identifies a fusion category $\mathcal{C}$ of topological line operators supported on the boundary condition $(V,\Phi_V)$ which contains the category $\mathcal{B}(V)$ of line operators in the bulk, then there should exist a strongly rational conformal subalgebra $W$ which serves to define the ``physical boundary condition'' of a SymTFT construction as in Figure \ref{fig:SymTFTconformalembedding}. This $W$ is none other than the subalgebra of operators in $V$ which commute with the lines in $\mathcal{C}$, i.e.\ $W=V^{\mathcal{C}}.$ It is then clear from looking at Figure \ref{fig:SymTFTconformalembedding} that $Z(\mathcal{C})\cong \mathcal{B}(W)\boxtimes \overline{\mathcal{B}(V)}$, and we can conclude the following.
\begin{claim}{}{}
    Let $\mathcal{C}$ be a fusion category of topological line operators supported on a rational chiral algebra boundary $V$, and assume that $\mathcal{C}\supset \mathcal{B}(V)$. The corresponding SymTFT is
\begin{align}
    \mathrm{SymTFT} = Z^{\overline{\mathcal{B}(V)}}(\mathcal{C})\cong \Rep(V^{\mathcal{C}}),
\end{align}
where $Z^{\overline{\mathcal{B}(V)}}(\mathcal{C})$ is the Müger centralizer of $\overline{\mathcal{B}(V)}$ in $Z(\mathcal{C})$. It supports $V^{\mathcal{C}}$ as a ``physical'' boundary condition.
\end{claim}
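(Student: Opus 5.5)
We take $W:=V^{\mathcal{C}}$, whose rationality and conformality follow from symmetry/subalgebra duality (Claim~\ref{cl:claim:sym/subduality}), and realize $\mathcal{C}$ inside the SymTFT picture of Figure~\ref{fig:SymTFTconformalembedding}. Set $A:=\Phi_W^{-1}(V)$. This is a condensable algebra in $\mathcal{B}(W)$, and by Claim~\ref{cl:claim:verdef} together with the duality we have $\mathcal{C}\cong\Ver(V/W)\cong\mathcal{B}(W)_A$. The statement then has two parts:
\begin{itemize}
\item[(i)] compute $Z(\mathcal{C})$;
\item[(ii)] identify the centralizer of $\overline{\mathcal{B}(V)}$ inside it with $\Rep(W)$.
\end{itemize}

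For (i), use the standard fact that for a connected étale algebra $A$ in a modular category $\mathcal{M}$, the fusion category $\mathcal{M}_A$ of all $A$-modules has Drinfeld center
\begin{align}
Z(\mathcal{M}_A)\cong \mathcal{M}\boxtimes\overline{\mathcal{M}_A^{\mathrm{loc}}}.
\end{align}
This is the Davydov--Müger--Nikshych--Ostrik result. Physically, $\mathcal{M}_A$ is the category of lines on the interface $\mathcal{I}_A$, and folding turns $\mathcal{I}_A$ into a gapped boundary of $\mathcal{B}(W)\boxtimes\overline{\mathcal{B}(V)}$, whose boundary lines have that theory as their center. Apply this with $\mathcal{M}=\mathcal{B}(W)$ and use Equation~\eqref{eqn:gaugingA}, $\mathcal{B}(W)_A^{\mathrm{loc}}\cong\mathcal{B}(V)$, to get
\begin{align}
Z(\mathcal{C})\cong\mathcal{B}(W)\boxtimes\overline{\mathcal{B}(V)}.
\end{align}
This step requires checking that the embedding $\mathcal{B}(V)\hookrightarrow\mathcal{C}$ supplied by pushing bulk lines to the boundary matches the embedding $\mathcal{B}(W)_A^{\mathrm{loc}}\hookrightarrow\mathcal{B}(W)_A$. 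Geometrically this is clear: bulk $\mathcal{B}(V)$ lines pushed onto $\mathcal{I}_A$ are precisely the local modules. The check then fixes $\overline{\mathcal{B}(V)}$ as a specific subcategory of $Z(\mathcal{C})$, namely the one mapping to $\mathcal{B}(V)\subset\mathcal{C}$ under the forgetful functor.

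For (ii), both $\mathcal{B}(W)$ and $\overline{\mathcal{B}(V)}$ are modular, so the Deligne product factorizes. Müger's theorem then gives
\begin{align}
Z^{\overline{\mathcal{B}(V)}}(\mathcal{C})=\big(\overline{\mathcal{B}(V)}\big)'\cong\mathcal{B}(W)\cong\Rep(W)=\Rep(V^{\mathcal{C}}).
\end{align}
The last step uses the ribbon equivalence $\Phi_W$ and the modularity theorem of Moore--Seiberg and Huang. Finally, $W$ is the physical boundary because $\mathcal{B}(W)$ is exactly the bulk on the far side of $\mathcal{I}_A$. Compressing the slab between $\mathcal{I}_A$ and $W$ recovers the $V$-boundary, since $\mathcal{I}_A\otimes W\cong V$; the $\mathcal{C}$-lines live on $\mathcal{I}_A$, acting as the topological boundary.

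The main obstacle is the identification of $\overline{\mathcal{B}(V)}\subset Z(\mathcal{C})$ with the subcategory coming from $\mathcal{C}\supset\mathcal{B}(V)$. Here one must confirm that the half-braiding on bulk $\mathcal{B}(V)$ lines, induced by the braiding of the bulk, agrees with the one in the DMNO equivalence, up to the orientation reversal that produces the bar. I would first try to verify this directly via the $\alpha$-induction functors $\alpha^\pm:\mathcal{B}(W)\to\mathcal{B}(W)_A$. Their combined image generates $Z(\mathcal{C})$, and the local modules arise as $\alpha^+\cap\alpha^-$. Under this description, the two chiral factors are visibly the $\mathcal{B}(W)$ part and the reversed-orientation $\mathcal{B}(V)$ part.
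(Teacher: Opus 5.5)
Your proposal is correct and is essentially the paper's own argument. You take $W=V^{\mathcal{C}}$, identify $\mathcal{C}\cong\Ver(V/W)\cong\mathcal{B}(W)_A$, and obtain $Z(\mathcal{C})\cong\mathcal{B}(W)\boxtimes\overline{\mathcal{B}(V)}$ by folding, equivalently by Davydov--M\"uger--Nikshych--Ostrik Corollary 3.30, which the paper cites in Section~\ref{subsec:computationalaids}; you then read off the centralizer, a step you make explicit via M\"uger's decomposition.

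One correction to your closing paragraph. The combined images of $\alpha^\pm$ generate the bimodule category ${}_A\mathcal{B}(W)_A$, not $Z(\mathcal{C})$. That bimodule category has dimension $\dim\mathcal{B}(W)$, and the paper identifies it in Section~\ref{subsec:gluingrelative} with the lines of the diagonal CFT preserving $W\otimes\overline{W}$; $\alpha^+\cap\alpha^-$ is its ambichiral part. By contrast, $Z(\mathcal{C})$ has dimension $\dim\mathcal{B}(W)\dim\mathcal{B}(V)$. You do not need $\alpha^-$ at all. In the DMNO equivalence the factor $\overline{\mathcal{B}(W)_A^{\mathrm{loc}}}$ consists of local modules with half-braiding induced by the (reversed) braiding. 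Its image under the forgetful functor $Z(\mathcal{C})\to\mathcal{C}$ is therefore exactly $\mathcal{B}(V)\cong\mathcal{B}(W)_A^{\mathrm{loc}}\subset\mathcal{C}$, so the compatibility you flag as the main obstacle is built in.
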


The reason we have used the notation $\Ver(V/W)$ to denote the category $\mathcal{B}(W)_A$ is because, just as the Verlinde lines \cite{Verlinde:1988sn} of a rational conformal field theory are precisely the topological line operators of the theory which commute with the left- and right-moving chiral algebras, $\Ver (V/W)$ is the category of topological line operators of $V$ which commute with the conformal subalgebra $W$. The notation $\Ver (V/W)$ is also meant to evoke Galois theory. In that context, one studies the Galois group $\mathrm{Gal}(E/F)$ of a field extension $E/F$, which is defined to be the subgroup of the automorphism group of $E$ which preserves the subfield $F$. As we will see in Section \ref{subsec:Galois}, the analogy between $\Ver (V/W)$ and $\mathrm{Gal}(E/F)$ can be made sharp and extended. 

We note that, as one should expect, the category $\Ver (V/W)$ contains $\mathcal{B}(V)$ as a fusion subcategory
\begin{align}
    \mathcal{B}(V)\cong \mathcal{B}(W)_A^{\mathrm{loc}}\subset\mathcal{B}(W)_A = \Ver(V/W).
\end{align}
Indeed, by definition \eqref{eqn:Ver(V/W)}, $\Ver (V/W)$  is $\mathcal{B}(W)_A$, the category of $A$-modules in $\mathcal{B}(W)$. On the other hand, by Equation \eqref{eqn:gaugingA}, $\mathcal{B}(V)$ is $\mathcal{B}(W)_A^{\mathrm{loc}}$, the subcategory of $\mathcal{B}(W)_A$ consisting just of the \emph{local} $A$-modules. In fact, as we will see in Section \ref{subsec:hypergroup}, $\Ver (V/W)$ is a $K$-graded extension of $\mathcal{B}(V)$, with $K$ a certain hypergroup.

In a moment, we will treat a class of examples where the conformal subalgebra $W$ is obtained from $V$ by passing to the invariant states under the action of a finite group $G$ of automorphisms, $W=V^G$. To prepare for this, let us recall a widely-believed conjecture in the theory of vertex operator algebras.

\begin{conjecture}\label{conj:Grationality}
    If $V$ is a strongly rational vertex operator algebra and $G$ is a finite group of automorphisms of $V$, then the conformal subalgebra $V^G$ of $G$-invariant states is also strongly rational.
\end{conjecture}

This conjecture is known to be true when $G$ is a solvable group \cite{Carnahan:2016guf}. Also, complete rationality of $G$-fixed points has been established in general for conformal nets \cite{Xu:2000wt}, which are expected to be equivalent to (unitary) VOAs \cite{Carpi:2015fga,Carpi:2023onx,henriques2025every}. For ease of exposition, we will assume that Conjecture \ref{conj:Grationality} is true for the rest of the paper.

\begin{example}{}{cleftpt1}
    Recall that pointed modular tensor categories $\Vec_D^{\sigma,\omega}$ (i.e.\ Abelian TQFTs, up to the chiral central charge) are determined up to equivalence by an Abelian group $D$ (describing the anyons and their fusion), and an Abelian 3-cocycle $(\sigma,\omega)\in H^3_{\mathrm{ab}}(D,U(1))$, where $\omega:D\times D\times D\to U(1)$ encodes the F-symbols/monoidal structure, and $\sigma:D\times D \to U(1)$ encodes the braiding.  The pair $(\sigma,\omega)$ is subject to some coherence conditions, see e.g.\ \cite[Example 2.2]{Galindo:2024qzg} for further details on pointed modular tensor categories. 
    
    \hspace{.3in}Suppose that $V$ appears at the boundary of an Abelian TQFT, i.e.\ that  
    \begin{align}
        \Rep(V)\cong \mathrm{Vec}_D^{\sigma,\omega}
    \end{align}
    for some triple $(D,\sigma,\omega)$.\footnote{Even though $\mathcal{B}(V)$ is an Abelian Chern-Simons modular tensor category, it does not mean that $V$ is lattice vertex operator algebra/free boson chiral algebra. For example, the monster VOA $V^\natural$ appears on the boundary of the TQFT $(\mathcal{B},c)=(\mathrm{Vec},24)$ which is trivially an Abelian TQFT, even though $V^\natural$ is not a lattice VOA. There are even Abelian TQFTs $(\mathcal{B},c)$ which do not support any lattice vertex operator algebras at a given central charge $c$, but which do support some non-lattice $V$-boundaries, see e.g.\ \cite[Example 3.7]{Moller:2024plb}.} Consider a conformal subalgebra  $W=V^G$ consisting of the $G$-invariant states of $V$, where $G$ is a finite group of automorphisms. 
    
    \hspace{.3in}Recall from the discussion preceding Claim \ref{cl:claim:(B,c)boundaries} that every $g\in G$  induces a ribbon auto-equivalence $g_\ast:\Rep(V)\to \Rep(V)$. In simple cases, $g_\ast$ is the identity ribbon auto-equivalence for every $g\in G$, in which case one says that the automorphism group $G$ is \underline{cleft}. In this situation, it is known \cite[Theorem 4.1]{Gannon:2024tcl} that the Verlinde category is group-like, i.e.\ 
    \begin{align}
        \Ver (V/V^G) \cong \mathrm{Vec}_\Gamma^{\tilde\omega},
    \end{align}
    where $\Gamma$ is a central extension of $G$ by $D$, and $\tilde\omega\vert_D=\omega$ in $H^3(D,U(1))$. Here, $\Vec_\Gamma^{\tilde\omega}$ is the pointed fusion category with objects and fusion rules determined by the group $\Gamma$, and with F-symbols/associativity data given by $\tilde\omega \in H^3(\Gamma,U(1))$. We will see a cleft example worked out in detail in Section \ref{subsec:cleft}.
\end{example}

Before moving on, let us generalize this discussion slightly by incorporating topological surface operators of the bulk 3D TQFT $\mathcal{B}(V)$. More specifically, we would like to study the ways that topological surfaces in the bulk can end on boundary lines.

To begin, recall that topological surfaces of $\mathcal{B}(V)$ are in one-to-one correspondence with $\mathcal{B}(V)$-module categories (see e.g.\ \cite{Choi:2023xjw} for a physicist-friendly description of module categories). Equivalently, they correspond to (Morita equivalence classes\footnote{Two algebras $B$ and $B'$ are said to be Morita equivalent if their category of modules, ${_B}\mathcal{B}(V)$ and ${_{B'}}\mathcal{B}(V)$, are equivalent as $\mathcal{B}(V)$-module categories.} of) symmetric special Frobenius algebra objects of $\mathcal{B}(V)$ \cite{Fuchs:2002cm,Kapustin:2010if}, which we refer to simply as ``gaugeable'' algebras. Indeed, as explained in \cite{Roumpedakis:2022aik}, every topological surface is obtained by ``1-gauging''  an algebra $B$ of lines in $\mathcal{B}(V)$, i.e.\ inserting a mesh of the line $B$ just along a codimension-1 locus of 3D spacetime, as in Figure \ref{fig:condensationsurface}. The construction depends only on $B$ through ${_B}\mathcal{B}(V)$, the category of $B$-modules in $\mathcal{B}(V)$, which defines a $\mathcal{B}(V)$-module category.

\begin{figure}
    \begin{center}
        \input{Figures/condensationsurface}
        \caption{A surface in a 3D TQFT $\mathcal{B}$ is defined by 1-gauging a gaugeable algebra $B$ of lines. The dark green lines are described by $B$, and the purple junctions are the multiplication and comultiplication morphisms $m:B\otimes B\to B$ and ${^\circ}m:B\to B\otimes B$ of the algebra.}\label{fig:condensationsurface}
    \end{center}
\end{figure}

We pause to emphasize the distinction between the notion of a condensable algebra encountered previously, and the notion of a gaugeable algebra described presently. Physically, the former is the structure needed to perform a standard gauging of a one-form symmetry in a 3D TQFT; the latter is the structure needed to perform gauging of a zero-form symmetry of a 2D QFT, or alternatively, a ``1-gauging'' of a one-form symmetry in a 3D TQFT, as in Figure \ref{fig:condensationsurface}. Every condensable algebra defines a gaugeable algebra, but not the other way around. For the more mathematically inclined 2D CFT readers, we note that both kinds of algebras may be used as input to the Fuchs-Runkel-Schweigert construction \cite{Fuchs:2002cm} of rational conformal field theories: gaugeable algebras lead to the most general kinds of gluings between left and right-movers, whereas condensable algebras lead to pure extension type gluings (i.e.\ gluings in which one extends the chiral algebra and then glues diagonally).

With these preliminaries in place, we can state the following claim.

\begin{figure}
    \begin{center}
        \input{Figures/VerSVW}
        \caption{Every object $X$ of the category $\mathrm{Ver}_{\mathcal{S}}(V/W)$ is a $B$-module $(M,\mu)$ in $\Ver(V/W)$, i.e.\ a topological line $M$ on the $V$ boundary on which the bulk surface $\mathcal{S}$ can consistently end via a junction $\mu:M\otimes B\to M$. The SymTFT shows that the line junctions $X$ commute with local operators in $W$.}\label{fig:VerS(VW)}
    \end{center}
\end{figure}

\begin{claim}{}{}
    Suppose $\mathcal{C}\supset \mathcal{B}(V)$ is a fusion category of topological line operators supported on a $V$-boundary, and $\mathcal{S}$ is a topological surface in the bulk TQFT $\mathcal{B}(V)$ described by a gaugeable algebra $B$. 
    Then the category $\Ver_{\mathcal{S}}(V/W)$ of boundary topological line operators on which $\mathcal{S}$ can end and which commute with the local operators in $W=V^{\mathcal{C}}$ is
    \begin{align}\label{eqn:VerS(VW)}
        \Ver_{\mathcal{S}}(V/W) \cong  \Ver(V/W)_B\cong \mathcal{C}_B,
    \end{align}
    i.e.\ the category of $B$-modules in $\mathcal{C}\cong \Ver(V/W)$.
    See Figure \ref{fig:VerS(VW)} for a depiction.
\end{claim}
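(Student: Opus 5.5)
The plan is to reduce the claim to the standard fact that a topological surface defined by 1-gauging an algebra $B$ can end on a line precisely when that line carries a $B$-action. We then combine this with the identification $\mathcal{C}\cong\Ver(V/W)$ coming from symmetry/subalgebra duality (Claim \ref{cl:claim:sym/subduality}) and Claim \ref{cl:claim:verdef}.

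\textbf{Step 1: the second equivalence.} Set $W=V^{\mathcal{C}}$. By Claim \ref{cl:claim:sym/subduality} we have $\Ver(V/V^{\mathcal{C}})=\mathcal{C}$. Taking $B$-modules on both sides gives $\Ver(V/W)_B\cong\mathcal{C}_B$. Note that $B\in\mathcal{B}(V)\subset\mathcal{C}$ makes sense as an algebra in $\mathcal{C}$, because the bulk-to-boundary functor is a fully faithful tensor functor. Hence algebra structures on $B$ in $\mathcal{B}(V)$ transport to $\mathcal{C}$, and they stay symmetric special Frobenius since the functor is pivotal.

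\textbf{Step 2: ends of $\mathcal{S}$ are $B$-modules.} Realize $\mathcal{S}$ as in Figure \ref{fig:condensationsurface}, by a mesh of $B$ lines on a half-plane ending on the $V$-boundary along a line $M$. The mesh lines hitting the boundary fuse into $M$ through a junction $\mu:M\otimes B\to M$. I would show the following, in order.
\begin{enumerate}
\item Topological invariance of the mesh under $F$-moves, i.e.\ re-triangulating near the boundary, forces $\mu$ to satisfy the associativity and unit axioms of a right $B$-module.
\item Conversely, any module $(M,\mu)$ gives a consistent termination of the mesh.
\item Gauge-equivalent terminations correspond to module isomorphisms, and junction morphisms correspond to module maps.
\end{enumerate}
This is the same argument that identifies boundaries of 1-gauged surfaces with $B$-modules in the bulk (cf.\ Roumpedakis et al.). The difference is that the ambient category of lines at the end is now the boundary category $\mathcal{C}$ rather than $\mathcal{B}(V)$. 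Independence of the choice of $B$ within its Morita class follows because Morita-equivalent algebras have equivalent module categories. So the category of boundary lines on which $\mathcal{S}$ ends is $\mathcal{C}_B$, or $\mathrm{Sym}^\dagger(V)_B$ before any commutation constraint is imposed.

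\textbf{Step 3: commutation with $W$.} We must show that a $B$-module $(M,\mu)$ commutes with $W$ exactly when $M\in\Ver(V/W)$. In the SymTFT picture of Figure \ref{fig:SymTFTconformalembedding}, both $M$ and the $B$ lines live on the interface $\mathcal{I}_A$: $B$ sits there because $\mathcal{B}(V)\subset\Ver(V/W)$. The whole configuration, mesh included, can therefore be pulled off the physical $W$-boundary, so operators in $W$ pass through it. Conversely, restricting a junction that commutes with $W$ to its underlying line $M$ gives a line commuting with $W$.

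The main obstacle is Step 2. One has to check that the junction data needs no structure beyond a module action, for instance that the Frobenius/special conditions on $B$ make the mesh at the boundary independent of its fine structure. One also has to check that a single-sided module, rather than a bimodule, is the right notion given the orientation conventions of Figure \ref{fig:VerS(VW)}. I would settle this by working locally with the projector $p=\mu\circ(\mathrm{id}_M\otimes\Delta\circ\eta)$-type idempotents, as in the standard FRS treatment of defect junctions.
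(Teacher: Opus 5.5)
Your proposal is correct and is essentially the paper's own argument. The paper likewise views objects of $\Ver_{\mathcal{S}}(V/W)$ as lines $M$ on the SymTFT interface $\mathcal{I}_A$, so commutation with $W$ is automatic, on which the $B$-mesh of $\mathcal{S}$ ends through a junction $\mu:M\otimes B\to M$; these are exactly $B$-modules in $\Ver(V/W)\cong\mathcal{C}$, and your Step 2 spells out the consistency checks the paper leaves implicit.

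One small refinement concerns Morita invariance. It follows most cleanly from the paper's alternative formula $\Ver_{\mathcal{S}}(V/W)\cong\mathrm{Fun}_{\mathcal{B}(V)}({}_B\mathcal{B}(V),\Ver(V/W))$, which depends on $B$ only through ${}_B\mathcal{B}(V)$, rather than from equivalence of module categories inside $\mathcal{B}(V)$ alone.
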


The basic intuition behind Equation \eqref{eqn:VerS(VW)} is that objects of $\Ver_{\mathcal{S}}(V/W)$ are  lines $M$ on the interface $\mathcal{I}_A$ on which the mesh of algebra object $B$ can consistently end. As part of the data, one must choose a junction $\mu$ between the mesh $B$ and the line $M$, i.e.\ a morphism $M\otimes B \to M$. Thus, objects $X$ in $\Ver_{\mathcal{S}}(V/W)$ are really $B$-modules $X=(M,\mu)$ in $\Ver(V/W)\cong \mathcal{C}$. Similar principles determine how boundary conditions of a 2D QFT transform under gauging zero-form symmetries, c.f.\ \cite[Equation (5.8)]{Huang:2021zvu}.

Before moving on, we note that there is another formula for $\Ver_{\mathcal{S}}(V/W)$ which makes manifest the fact that the surface $\mathcal{S}$ depends only on the gaugeable algebra $B$ through its category of modules, ${_B}\mathcal{B}(V)$. Specifically, one can show that 
\begin{align}
    \Ver_{\mathcal{S}}(V/W) \cong \mathrm{Fun}_{\mathcal{B}(V)}({_B}\mathcal{B}(V),\Ver(V/W)),
\end{align}
i.e.\ $\Ver_{\mathcal{S}}(V/W)$ is the category of right $\mathcal{B}(V)$-module functors from ${_B}\mathcal{B}(V)$ to $\Ver(V/W)$.

\subsection{Twisted sectors and the dome algebra}\label{subsec:twisteddome}

\begin{figure}
    \begin{center}
\input{Figures/SymTFTtwistedops}
\caption{The SymTFT interpretation of the decompositions in Equation \eqref{eqn:twisteddecomposition} and \eqref{eqn:twistedtriple}.}\label{fig:SymTFTtwistedops}
    \end{center}
\end{figure}

Suppose that $\mathcal{C}$ is a fusion category of topological line operators supported on a $V$-boundary, and further that $\mathcal{C}$ contains the lines $\mathcal{B}(V)$ coming from the bulk as a subcategory, $\mathcal{B}(V)\subset\mathcal{C}$. As we have described, $\mathcal{C}$ will always be of the form $\Ver(V/W)$ for some strongly rational conformal subalgebra $W$ (specifically, for $W=V^{\mathcal{C}}$ the subalgebra of $\mathcal{C}$-neutral operators). 

For each $X\in\mathcal{C}$, we may consider the Hilbert space $V_X$ of local operators which live at the endpoint of the topological line $X$. A physicist might refer to $V_X$ as a ``twisted sector'' whenever $X$ is different from the identity line, however a mathematician typically reserves the expression ``twisted module'' for when $X \notin \mathcal{B}(V)$, i.e.\ when $X$ cannot be dragged into the bulk. We will adopt the mathematical terminology in what follows, referring to $V_a$ as an ``ordinary $V$-module'' or just ``$V$-module'' whenever $a\in\mathcal{B}(V)$ (see Figure \ref{fig:gaplesschiralboundary} for a depiction of ordinary $V$-modules), and to $V_X$ as a ``twisted $V$-module'' or ``$W$-local twisted $V$-module'' when $X\in\Ver(V/W)$ but $X\notin \mathcal{B}(V)$.

Let us start by describing  ordinary $V$-modules in more detail. If $V$ admits a rational conformal subalgebra $W$ then, by restriction, every $V$-module can be decomposed into a direct sum of irreducible $W$-modules as 
\begin{align}\label{eqn:Vmoddecomp}
    V_a\cong \bigoplus_{\mu\in\mathrm{Irr}(\mathcal{B}(W))} J_a^\mu\otimes  W_\mu, \ \ \ \ \ \ (a\in\mathcal{B}(V)),
\end{align}
where $J_a^\mu$ is a finite-dimensional vector space whose dimension $B_{a\mu}\equiv \dim(J_a^\mu)$ encodes the multiplicity with which the $W$-module $W_\mu$ appears in the decomposition of $V_a$. 

This decomposition is manifested in the SymTFT. Indeed, one can work in a basis such that every operator $\mathcal{O} \in V_a$ decomposes into a triple of the  form, 
\begin{align}\label{eqn:Vmodtriple}
    \mathcal{O}\leftrightharpoons (x,\mu,\widetilde{\mathcal{O}}), \ \ \ \ \ \ \  x\in J_a^\mu, \ \ \mu\in\mathcal{B}(W), \ \ \widetilde{\mathcal{O}}\in W_\mu,
\end{align}
where $\mu$ is a topological line in the $\mathcal{B}(W)$ TQFT, $\widetilde{\mathcal{O}}$ is an operator in $W_\mu$ (i.e.\ an operator on the $(W,\Phi_W)$ boundary attached to the line $\mu$  in the bulk), and $x$ is a topological point junction on the interface $\mathcal{I}_A$ between the lines $\mu\in\mathcal{B}(W)$ and $a\in\mathcal{B}(V)$. The picture corresponding to Equation \eqref{eqn:Vmodtriple} is recovered by taking $X=a$ in Figure \ref{fig:SymTFTtwistedops}.

\begin{figure}
    \begin{center}
        \input{Figures/stateopcorrtwisted}
        \caption{By the state/operator correspondence, the Hilbert space $V_X$ of local operators at the endpoint of a boundary line $X$ is the same as the Hilbert space of states on the disk background on the right. When $X$ is in $\mathcal{B}(V)$, this picture can be alternatively drawn as in Figure \ref{fig:gaplesschiralboundary} by dragging $X$ into the bulk.}\label{fig:stateopcorrtwisted}
    \end{center}
\end{figure}

More generally, for any topological line operator $X\in\mathcal{C}$, we may study the corresponding Hilbert space $V_X$ consisting of the local operators on the boundary $(V,\Phi_V)$ on which $X$ can terminate. By the state/operator correspondence, $V_X$ can also be thought of as the Hilbert space of states on the disk background on the right of Figure \ref{fig:stateopcorrtwisted}. 

The space $V_X$ can no longer be thought of as an ordinary $V$-module because boundary local operators on $(V,\Phi_V)$ incur a monodromy when they pass through the topological line $X$. However, local operators in the subalgebra $W=V^{\mathcal{C}}$ are neutral with respect to the symmetry category $\mathcal{C}$, and so we \emph{can} think of $V_X$ as a (generally reducible) $W$-module. In particular, we obtain a decomposition of the form 
\begin{align}\label{eqn:twisteddecomposition}
    V_X \cong \bigoplus_{\mu\in\mathcal{B}(W)} J_{X}^\mu \otimes W_\mu, \ \ \ \ \ \ (X\in\mathcal{C}, \  B_{X\mu}\equiv \dim(J_X^\mu)),
\end{align}
which generalizes Equation \eqref{eqn:Vmoddecomp}, and also  \cite[Equation (1.6)]{Lin:2022dhv}. In other words, $V_X$ decomposes into ordinary irreducible representations of the transparent subalgebra $W$. Correspondingly, we have the following.
\begin{claim}{}{}
 Every twisted-sector operator $\mathcal{O}(z)\in V_X$ decomposes into a triple of the form
\begin{align}\label{eqn:twistedtriple}
    \mathcal{O}\leftrightharpoons (x,\mu,\widetilde{\mathcal{O}}), \ \ \ \ \ \ x\in J_X^\mu, \ \ \mu\in\mathcal{B}(W), \ \ \widetilde{\mathcal{O}}\in W_\mu,
\end{align}
as depicted in Figure \ref{fig:SymTFTtwistedops}.
\end{claim}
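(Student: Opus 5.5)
The plan is to derive the triple decomposition of Equation \eqref{eqn:twistedtriple} by cutting the disk picture for $V_X$ (Figure \ref{fig:stateopcorrtwisted}) along the SymTFT sandwich of Figure \ref{fig:SymTFTconformalembedding}. The construction has three steps.

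\textbf{Step 1 (sandwich the disk).} First, I would use the claim that $\mathcal{I}_A\otimes W\text{-boundary}\cong V\text{-boundary}$. This lets us replace the $V$-boundary by a thin slab. The slab has $\mathcal{B}(W)$ in its interior, the $W$-boundary on the outside, and the interface $\mathcal{I}_A$ facing the $\mathcal{B}(V)$ bulk.

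By Claim \ref{cl:claim:verdef}, the boundary line $X\in\mathcal{C}\cong\Ver(V/W)=\mathcal{B}(W)_A$ is then realized as a line supported on $\mathcal{I}_A$. Consequently, the endpoint operator $\mathcal{O}\in V_X$ becomes a state on a disk: the bulk is $\mathcal{B}(W)$, the boundary circle is of type $W$, and the puncture sits on the interface $\mathcal{I}_A$, where $X$ ends.

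\textbf{Step 2 (shrink the slab and insert a complete set of lines).} Next, I would shrink the slab radially toward the puncture. Topological invariance of the $\mathcal{B}(W)$ bulk near the puncture lets me use semisimplicity of $\mathcal{B}(W)$, which is modular, so a tube of $\mathcal{B}(W)$ surrounding the endpoint can be resolved into a direct sum over simple lines $\mu\in\mathrm{Irr}(\mathcal{B}(W))$. Each summand consists of:
\begin{itemize}
\item a $\mathcal{B}(W)$ line $\mu$ running from the $W$-boundary to the interface;
\item a topological junction on $\mathcal{I}_A$ where $\mu$ meets the endpoint of $X$;
\item a $W$-boundary operator at the other end of $\mu$.
\end{itemize}
The space of such junctions is $\mathrm{Hom}_{\mathcal{B}(W)_A}(\mu\otimes A, X)\cong\mathrm{Hom}_{\mathcal{B}(W)}(\mu,X)$, where the isomorphism is the free-module adjunction. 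This space is what I define to be $J_X^\mu$. The operator at the $W$-boundary end of $\mu$ lies in $W_\mu=\Phi_W(\mu)$ by Claim \ref{cl:claim:(B,c)boundaries}.

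This yields $V_X\cong\bigoplus_\mu J_X^\mu\otimes W_\mu$, which matches Equation \eqref{eqn:twisteddecomposition}. The identification is $W$-linear because $W$ acts only on the $W_\mu$ factor.

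\textbf{Step 3 (consistency check).} For $X=a\in\mathcal{B}(V)=\mathcal{B}(W)_A^{\mathrm{loc}}$, the space $J_a^\mu=\mathrm{Hom}(\mu,a)$ recovers the branching multiplicities $B_{a\mu}$ of Equation \eqref{eqn:Vmoddecomp}. This gives a check on the normalization of the construction.

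\textbf{Main obstacle.} I expect the hardest part to be justifying that the resolution in Step 2 is complete and non-degenerate, i.e.\ that every state in $V_X$ arises this way and that no identifications occur. I would handle this by a dimension count in the spirit of the Turaev--Viro/Reshetikhin--Turaev gluing axiom: the state space on the annulus of $\mathcal{B}(W)$ is $\bigoplus_\mu \mu\otimes\mu^\ast$, and gluing it to the disk with $X$ inserted produces the stated sum. Rigorously, this amounts to the known equivalence between $W$-modules with an $A$-twisted action and $\mathcal{B}(W)_A$, namely the twisted-module results of Kirillov--Ostrik and Huang--Kirillov--Lepowsky, which I would quote.
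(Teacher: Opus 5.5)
Your proposal is correct at the paper's level of rigor and lands on the same picture, Figure~\ref{fig:SymTFTtwistedops}, but the logic runs in the opposite order.

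\textbf{The paper's route.} It first observes that $W=V^{\mathcal{C}}$ is transparent to every line in $\mathcal{C}$. So $W$-operators can be carried around the endpoint of $X$ without monodromy, and $V_X$ is an honest (reducible) $W$-module. Complete reducibility for rational $W$ then gives \eqref{eqn:twisteddecomposition} immediately, with $J_X^\mu$ the multiplicity space. The SymTFT is used only afterwards, to interpret $J_X^\mu$ as the space of junctions between $\mu$ and $X$ on $\mathcal{I}_A$. The dimension of that space is fixed by the restriction functor, Equation~\eqref{eqn:Bdef}.

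\textbf{Your route.} You derive the decomposition and the junction interpretation together, from the gluing axiom applied to the $\mathcal{B}(W)$ annulus of the sandwiched disk.

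\textbf{Trade-offs.} In your approach the geometric meaning of $x$ and the $W$-linearity are built in rather than added afterwards. In the paper's approach, your ``main obstacle'' (completeness and non-degeneracy of the resolution) never arises, because it is just semisimplicity of $\Rep(W)$. Your proposed rigorous fallback, the identification of non-local $A$-modules with twisted modules under which $R$ becomes the forgetful functor to $W$-modules, already gives the claim in one line:
$V_X\cong\Phi_W(R(X))\cong\bigoplus_\mu \mathrm{Hom}_{\mathcal{B}(W)}(\mu,R(X))\otimes W_\mu$.
So your gluing argument is best read as the physical explanation of that equivalence.

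\textbf{Two small corrections.}
\begin{enumerate}
\item The Hom space in Step 2 should read $\mathrm{Hom}_{\mathcal{B}(W)}(\mu,R(X))$, i.e.\ Frobenius reciprocity \eqref{eqn:Frobeniusreciprocity}.
\item The disk in Step 1 is not purely $\mathcal{B}(W)$. It has a $\mathcal{B}(V)$ core bounded by an $\mathcal{I}_A$ circle carrying the $X$-puncture. This is surrounded by a $\mathcal{B}(W)$ annulus whose outer circle carries the $W$-boundary condition.
\end{enumerate}
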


In \cite{grmath}, we will give a sharper definition of the notion of a $W$-local twisted $V$-module. For now, we note that there is a tensor equivalence 
\begin{align}
\begin{split}
    \Phi:\mathcal{C}&\xrightarrow{\sim} \mathrm{TwRep}_{W}(V) \\
    X&\mapsto \Phi(X)=:V_X,
\end{split}
\end{align} 
between the category of boundary lines in $\mathcal{C}$ and the category $\mathrm{TwRep}_{W}(V)$ of $W$-local twisted $V$-modules. This extends the equivalence $\Phi:\mathcal{B}(V)\xrightarrow{\sim}\Rep(V)$ described in Figure \ref{fig:gaplesschiralboundary}.

Let us give further details on the multiplicities $B_{X\mu}$ from Equation \eqref{eqn:twisteddecomposition}. Note that $X$, by definition, is an object of $\Ver(V/W)\cong \mathcal{B}(W)_A$, i.e.\ an $A$-module inside of $\mathcal{B}(W)$. Thus in particular, there is a ``restriction'' functor $R:\mathcal{B}(W)_A\to\mathcal{B}(W)$ which is obtained by forgetting the structure of $X\in\mathcal{B}(W)_A$ as an $A$-module, thinking of it instead just as an object of $\mathcal{B}(W)$. The physical interpretation of the restriction functor is given in Figure \ref{fig:physicalrestriction}.
From this, it follows that the number of topological point junctions $B_{X\mu}$ is captured by the number of times that $\mu$ appears inside of $R(X)$, i.e.
\begin{align}\label{eqn:Bdef}
    B_{X\mu}=\dim\mathrm{Hom}_{\mathcal{B}(W)}(R(X),\mu).
\end{align}
In Section \ref{subsec:computationalaids}, we will explain how induction provides a concrete computational tool for computing the $B_{X\mu}$ in examples.

\begin{figure}
\begin{center}
    \input{Figures/physicalrestriction}
    \caption{The physical interpretation of the restriction functor $R:\mathcal{B}(W)_A\to\mathcal{B}(W)$. One considers a topological line $X\in\mathcal{B}(W)_A$ on the interface $\mathcal{I}_A$, wrapped into the shape of a cylinder, and considers the line $R(X)\in\mathcal{B}(W)$ obtained when the radius of the cylinder is taken to $0$.}\label{fig:physicalrestriction}
\end{center}
\end{figure}

Let us define an extended Hilbert space $\mathbb{V}_{\mathcal{C}}$ as the direct sum over all irreducible $W$-local twisted $V$-modules,
\begin{align}\label{eqn:extendedHilbertspace}
    \mathbb{V}_{\mathcal{C}}\equiv \bigoplus_{X\in\mathrm{Irr}(\mathcal{C})}V_X. 
\end{align}
The decomposition of this space into $W$-modules, 
\begin{align}\label{eqn:SchurWeyl}
    \mathbb{V}_{\mathcal{C}}\cong \bigoplus_{\mu\in \mathrm{Irr}(\mathcal{B}(W))} J^\mu\otimes W_\mu, \ \ \ \ \ J^\mu\equiv \bigoplus_{X\in \mathrm{Irr}(\mathcal{C})}J^\mu_X,
\end{align}
admits an interesting representation-theoretic interpretation.

To explain this, note that the tube algebra $\mathrm{Tube}(\mathcal{C})$ (first defined in \cite{ocneanu1994chirality}) acts on $\mathbb{V}_{\mathcal{C}}$ via lassos \cite{Lin:2022dhv},
\begin{align}\label{eqn:lassoaction}
\input{Figures/lassoaction}
\end{align}
Here, $X,Y,Z$ are lines in $\mathcal{C}$ and $v$ is a suitable topological point junction between them.
However, a subtlety we must take into account is that the tube algebra does not act faithfully in the setting of relative QFTs, i.e.\ when the bulk $\mathcal{B}(V)$ has anyons. Indeed, if one lassos an operator $\mathcal{O}$ with a line $a \in\mathcal{B}(V)$, then $a$ can be pulled into the bulk and collapsed to a point on the other side of $\mathcal{O}$, as in Figure \ref{fig:domeequivrel}. Thus, the algebra which acts faithfully is a certain quotient of $\mathrm{Tube}(\mathcal{C})$ which was called the dome algebra in \cite{Green:2023ork}.
\begin{definition}
    The dome algebra
\begin{align}
    \mathrm{Dome}_{\mathcal{B}(V)}(\mathcal{C})\equiv \mathrm{Tube}(\mathcal{C})/\sim,
\end{align}
is the quotient of the tube algebra by the equivalence relation in Figure \ref{fig:domeequivrel}. 
\end{definition}

\begin{figure}
    \begin{center}
 \input{Figures/domeequivrel}
\caption{In a relative theory, lasso actions of $\Ver (V/W)$ generate a quotient of the tube algebra, rather than the tube algebra itself, due to the fact that certain lines $a\in\mathcal{B}(V)$ can be dragged into the bulk. }\label{fig:domeequivrel}
    \end{center}
\end{figure}

On the other hand, the SymTFT picture makes it clear that lassos only act on a twisted sector operator $\mathcal{O}\leftrightharpoons (x,\mu, \widetilde{\mathcal{O}})$ through its associated topological point junction $x \in J^\mu$; this is simply because the lassos appear in the SymTFT  supported on the topological interface $\mathcal{I}_A$ in Figure \ref{fig:SymTFTtwistedops}, and thus by locality cannot alter $\mu$ or $\widetilde{\mathcal{O}}$. 
In particular, the action of the dome algebra on $\mathbb{V}_{\mathcal{C}}$ descends to an action on each of the spaces $J^\mu$ individually. Schematically,
\begin{align}
    L_{X;Z}^{Y,v} \cdot (x,\mu,\widetilde{\mathcal{O})} = ( L_{X;Z}^{Y,v}\cdot x,\mu, \widetilde{\mathcal{O}}).
\end{align}
It turns out that each of the $J^\mu$ define an \emph{irreducible}  representation of $\mathrm{Dome}_{\mathcal{B}(V)}(\mathcal{C})$, and in fact, every irreducible representation of the dome algebra arises in this way \cite{Green:2023ork}. In particular, we learn the following.
\begin{claim}{}{}
There is a one-to-one correspondence between representations of the dome algebra and anyons in the SymTFT (i.e.\ representations of the transparent subalgebra $V^{\mathcal{C}}$),
    \begin{align}\label{eqn:domereps}
    \mathrm{Rep}(\mathrm{Dome}_{\mathcal{B}(V)}(\mathcal{C}))\cong \mathcal{B}(V^{\mathcal{C}})\cong Z^{\overline{\mathcal{B}(V)}}(\mathcal{C}).
\end{align}
\end{claim}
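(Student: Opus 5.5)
The plan is to prove the two equivalences in \eqref{eqn:domereps} separately. The second, $\mathcal{B}(V^{\mathcal{C}})\cong Z^{\overline{\mathcal{B}(V)}}(\mathcal{C})$, is the SymTFT claim stated earlier. I take that claim as given, but I will recall its origin. By symmetry/subalgebra duality (Claim \ref{cl:claim:sym/subduality}) we have $\mathcal{C}\cong\Ver(V/W)=\mathcal{B}(W)_A$ with $W=V^{\mathcal{C}}$. By the standard result on condensation, $Z(\mathcal{B}(W)_A)\cong\mathcal{B}(W)\boxtimes\overline{\mathcal{B}(W)_A^{\mathrm{loc}}}=\mathcal{B}(W)\boxtimes\overline{\mathcal{B}(V)}$. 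Taking the Müger centralizer of the factor $\overline{\mathcal{B}(V)}$, which is modular, leaves exactly $\mathcal{B}(W)$.

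The real content is the first equivalence, between $\mathrm{Rep}(\mathrm{Dome}_{\mathcal{B}(V)}(\mathcal{C}))$ and this centralizer. First I use the known fact that finite-dimensional representations of $\mathrm{Tube}(\mathcal{C})$ form a category equivalent to $Z(\mathcal{C})$ (Müger, Izumi, Ocneanu). Under this equivalence, an object $(Z,\beta)$ of the center corresponds to the tube-algebra module $\bigoplus_X\mathrm{Hom}_{\mathcal{C}}(X,Z)$, on which lassos act through the half-braiding $\beta$. The dome algebra is a quotient, so its representations are exactly the tube representations annihilated by the relations of Figure \ref{fig:domeequivrel}. Hence $\mathrm{Rep}(\mathrm{Dome})$ is a full subcategory of $Z(\mathcal{C})$, closed under subobjects and direct sums, and it remains to identify which objects lie in it.

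The key step is to translate the dome relation into a condition on half-braidings. The relation says that lassoing by a bulk line $a\in\mathcal{B}(V)$ is the same as passing $a$ to the other side through the bulk, that is, using the braiding of $\mathcal{B}(V)$ pushed to the boundary. For a center object $(Z,\beta)$, this amounts to requiring $\beta_{a}$ to coincide with the canonical half-braiding that $Z$ acquires from the inclusion $\mathcal{B}(V)\to\mathcal{C}$. This is the central functor coming from the embedding $\mathcal{B}(V)\to Z(\mathcal{C})$, which in the SymTFT picture is the image of $\overline{\mathcal{B}(V)}$. In other words, the double braiding of $(Z,\beta)$ with every object of $\overline{\mathcal{B}(V)}\subset Z(\mathcal{C})$ is trivial, which is precisely the definition of the Müger centralizer. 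I would check this on matrix units of the tube algebra: compose the relation with $\mathrm{Hom}(X,Z)$ and use naturality of $\beta$ to reduce it to a single identity $\beta_a=c_{a,Z}^{-1}$. The inverse-braiding convention must be handled carefully, since it is what produces $\overline{\mathcal{B}(V)}$ rather than $\mathcal{B}(V)$.

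I expect the main obstacle to be showing that the relation is an equivalence on the full subcategory. This requires checking that the quotient does not kill anything beyond the centralizing condition, so that each irreducible representation of the centralizer stays irreducible and nonzero in $\mathrm{Rep}(\mathrm{Dome})$. I would settle this with a semisimplicity and dimension count. The dome algebra is a finite-dimensional $C^\ast$-quotient of the semisimple tube algebra, so it is semisimple. Its dimension equals $\sum\dim(\text{irreps})^2$ over the surviving summands. These can be matched with the summands of the decomposition \eqref{eqn:SchurWeyl}, whose spaces $J^\mu$ are indexed by $\mu\in\mathcal{B}(W)$. As a consistency check, this also realizes the $J^\mu$ as the irreducible dome modules.
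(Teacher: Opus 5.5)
Your argument is correct, but it takes a different route from the paper's.

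\textbf{The paper's route.} The paper obtains the first equivalence concretely. The SymTFT picture (Figure~\ref{fig:SymTFTtwistedops}) shows that dome elements act only on the junction label $x\in J^\mu$, so each multiplicity space $J^\mu$ in \eqref{eqn:SchurWeyl} is a dome module. That these modules are irreducible and exhaust all irreducibles is imported from \cite{Green:2023ork}. The statement that the quotient of $\mathrm{Tube}(\mathcal{C})$ kills exactly the labels $(\mu,b)$ with $b\neq 1$ in $Z(\mathcal{C})\cong\mathcal{B}(V^{\mathcal{C}})\boxtimes\overline{\mathcal{B}(V)}$ appears only as a heuristic remark after the claim.

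\textbf{Your route.} You turn that remark into the proof. You translate the dome relation into the half-braiding condition $\beta_a=c_{a,Z}^{-1}$ for $a\in\mathcal{B}(V)$, i.e.\ trivial double braiding with the image of the central functor. You then use \eqref{eqn:Drinfeldcenterverlinde} together with M\"uger's centralizer theorem.

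\textbf{What each buys.} Your approach is intrinsic and independent of the VOA. It works for any fusion category with a central braided embedding of $\mathcal{B}(V)$, and it actually proves the ``$b\neq 1$ is killed'' assertion. The paper's approach directly exhibits the irreducible dome modules as multiplicity spaces of twisted sectors, which is what the Schur--Weyl interpretation needs. In your approach that identification requires one more step: matching $J^\mu$ with the tube module of the image of $\mu$ under $\mathcal{B}(W)\to Z(\mathcal{C})$. The dimensions agree, since $\dim J^\mu_X=B_{X\mu}=\dim\mathrm{Hom}(I(\mu),X)$ by Frobenius reciprocity.

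\textbf{The anticipated obstacle.} The ``main obstacle'' you anticipate is not one. Representations of a quotient algebra are exactly the representations of the original algebra annihilated by the ideal, on the same underlying space. So once your key step is an \emph{if and only if} on simple objects of $Z(\mathcal{C})$, nonvanishing and irreducibility are automatic and no dimension count is needed. The count as sketched would in any case presuppose the irreducibility of the $J^\mu$.
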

By way of comparison, we remark that, by folding along $\mathcal{I}_A$, one deduces that the representation category of the tube algebra in this context would be given by 
\begin{align}\label{eqn:Tubereps}
    \Rep(\mathrm{Tube}(\mathcal{C}))\cong Z(\mathcal{C})\cong  \mathcal{B}(V^{\mathcal{C}})\boxtimes\overline{\mathcal{B}(V)},
\end{align}
where we have used the general fact that $\Rep(\mathrm{Tube}(\mathcal{C}))\cong Z(\mathcal{C})$.
Thus, the quotient by the equivalence relation $\sim$ has the effect of killing the representations associated with the factor $\overline{\mathcal{B}(V)}$ in Equation \eqref{eqn:Tubereps}. Indeed, representations of the tube algebra labeled by $(\mu,b)\in\mathcal{B}(V^{\mathcal{C}})\boxtimes\overline{\mathcal{B}(V)}$ with $b\neq 1$ are precisely the ones in which the equivalence relation $\sim$ is violated, as one can easily convince oneself by appealing to the SymTFT.

Now, the action of the dome algebra on $\mathbb{V}_{\mathcal{C}}$ clearly commutes with the action of $V^{\mathcal{C}}$ on $\mathbb{V}_{\mathcal{C}}$ via the OPE; indeed, the SymTFT makes this clear because it exhibits the dome algebra as acting on topological junctions $x$ on the interface $\mathcal{I}_A$, while $V^{\mathcal{C}}$ acts on operators $\widetilde{\mathcal{O}}$ on the (spatially separated) $W$ boundary. Thus, we have the following interpretation.

\begin{claim}{}{}
    Equation \eqref{eqn:SchurWeyl} is a Schur-Weyl decomposition of the extended Hilbert space $\mathbb{V}_{\mathcal{C}}$ into $(\mathrm{Dome}_{\mathcal{B}(V)}(\mathcal{C}),V^{\mathcal{C}})$-bimodules.
\end{claim}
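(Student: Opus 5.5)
To establish that Equation \eqref{eqn:SchurWeyl} is a Schur--Weyl decomposition, we must show three things. First, the dome algebra $\mathrm{Dome}_{\mathcal{B}(V)}(\mathcal{C})$ and $V^{\mathcal{C}}=W$ act on $\mathbb{V}_{\mathcal{C}}$ by mutually commuting actions. Second, each summand $J^\mu\otimes W_\mu$ is an irreducible bimodule, and distinct $\mu$ give inequivalent ones. Third, the two actions are each other's commutants, or equivalently the map $\mu\mapsto J^\mu$ is a bijection onto irreducible dome representations. The plan is to transport everything to the SymTFT of Figure \ref{fig:SymTFTtwistedops}. There $\mathbb{V}_{\mathcal{C}}$ is identified, via the triple decomposition \eqref{eqn:twistedtriple}, with $\bigoplus_\mu J^\mu\otimes W_\mu$, where $J^\mu=\bigoplus_X \mathrm{Hom}_{\mathcal{B}(W)}(R(X),\mu)$ by \eqref{eqn:Bdef}.

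For commutativity, I would argue by locality. A lasso $L^{Y,v}_{X;Z}$ is a configuration of lines in $\mathcal{C}\cong\mathcal{B}(W)_A$ supported on the interface $\mathcal{I}_A$. Its action on $(x,\mu,\widetilde{\mathcal{O}})$ therefore only composes morphisms into $x$ and leaves $\mu$ and $\widetilde{\mathcal{O}}$ untouched. The OPE with an element of $W$, on the other hand, acts only on $\widetilde{\mathcal{O}}\in W_\mu$ at the physical boundary, which is spatially separated from $\mathcal{I}_A$. The two actions thus act on different tensor factors, hence commute. Algebraically, I would make this precise by writing the lasso action as the map $\mathrm{Hom}(R(X),\mu)\to\mathrm{Hom}(R(Z),\mu)$ given by precomposition with the $A$-module morphism associated with $(Y,v)$. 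I would also check that this map respects the relation $\sim$ of Figure \ref{fig:domeequivrel}, so that it indeed descends to the dome algebra. This descent holds because a lasso by $a\in\mathcal{B}(V)=\mathcal{B}(W)_A^{\mathrm{loc}}$ can be pushed off $\mathcal{I}_A$ into the $\mathcal{B}(V)$ region and contracted there.

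For irreducibility and bijectivity, I would invoke the preceding Claim \eqref{eqn:domereps}, i.e.\ the result of \cite{Green:2023ork}: the $J^\mu$ are irreducible, pairwise inequivalent, and exhaust $\mathrm{Irr}(\mathrm{Dome}_{\mathcal{B}(V)}(\mathcal{C}))$ as $\mu$ ranges over $\mathrm{Irr}(\mathcal{B}(V^{\mathcal{C}}))$. Since $W$ is strongly rational, the $W_\mu$ are irreducible and pairwise inequivalent. Consequently each $J^\mu\otimes W_\mu$ is an irreducible bimodule, and no two summands are isomorphic. Since the dome algebra is finite-dimensional and semisimple, the double commutant theorem then gives that the commutant of the dome action on $\mathbb{V}_{\mathcal{C}}$ is generated by $\bigoplus_\mu \mathrm{id}_{J^\mu}\otimes \mathrm{End}(W_\mu)$. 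The commutant of the $W$-action is $\bigoplus_\mu\mathrm{End}(J^\mu)$, which is exactly the image of the dome algebra. This is the Schur--Weyl property.

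I expect the main obstacle to be the irreducibility and completeness statement for $J^\mu$, if one does not simply cite \cite{Green:2023ork}. To prove it directly, I would fold along $\mathcal{I}_A$ and use $\Rep(\mathrm{Tube}(\mathcal{C}))\cong Z(\mathcal{C})\cong\mathcal{B}(W)\boxtimes\overline{\mathcal{B}(V)}$. I would then show that the representations surviving the quotient by $\sim$ are precisely those labelled $(\mu,\mathds{1})$, and that the tube representation labelled $(\mu,\mathds{1})$ is realized on $\bigoplus_X\mathrm{Hom}(R(X),\mu)$. The last identification comes from the adjunction between $R$ and induction $-\otimes A$, together with the standard description of $Z(\mathcal{C})$-simples as tube-algebra modules.
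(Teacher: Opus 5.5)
Your proposal is essentially the paper's argument. Commutativity comes from SymTFT locality: lassos live on $\mathcal{I}_A$ and touch only the junction $x\in J^\mu$, while $V^{\mathcal{C}}$ acts only on $\widetilde{\mathcal{O}}$ at the spatially separated $W$-boundary. Irreducibility and exhaustion of the $J^\mu$ are imported from \cite{Green:2023ork} via Equation \eqref{eqn:domereps}, and your Schur's-lemma/commutant step makes explicit what the paper leaves implicit. Note that only the statement that the dome algebra fills out the commutant of $W$ holds literally; because the $W_\mu$ are infinite-dimensional, the converse holds only up to density.

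One correction to your algebraic formalization: the lasso map $J^\mu_X\to J^\mu_Z$ is not composition with an $A$-module morphism. Such a map, for example a partial trace of $v$ over $Y$, ignores the half-braiding of $I(\mu)$ and cannot mix distinct simple sectors $X\neq Z$, so $J^\mu$ could not be irreducible whenever $\dim J^\mu>1$. The correct map is composition with a morphism between $R(X)$ and $R(Z)$ in $\mathcal{B}(W)$ built from $v$ and the braiding of $\mathcal{B}(W)$. This morphism is still independent of $\mu$ and $\widetilde{\mathcal{O}}$, so your commutativity conclusion is unaffected.
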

This generalizes observations made in \cite{Lin:2022dhv,Choi:2024tri} to the setting of 2D \emph{relative} quantum field theories. It would be interesting to understand when $\mathrm{Dome}_{\mathcal{B}(V)}(\mathcal{C})$ admits some kind of Hopf structure. 

The dome algebra is a new entrant in an expanding list of algebras which act on observables of 2D QFTs, including tube algebras \cite{Lin:2022dhv}, boundary tube algebras/strip algebras \cite{Cordova:2024vsq,Cordova:2024iti,Copetti:2024dcz,Choi:2024tri}, generalized tube algebras/annular algebras \cite{Choi:2024tri}, and double triangle algebras \cite{Petkova:2001ag}. Connections between and generalizations of these algebras are clearly worthy of further study.

\begin{example}{}{cleftpt2}
Consider a chiral algebra $V$ and define $W=V^G$ for some finite group $G$ of automorphisms. In this case, the category $\mathrm{TwRep}_{V^G}(V)$ of $V^G$-local twisted $V$-modules just coincides with the usual category of $G$-twisted $V$-modules (see e.g.\ \cite{Dong:1997ea} for a formal definition of $G$-twisted $V$-modules in the context of vertex operator algebras). 

\hspace{.3in}Continuing Example \ref{ex:cleftpt1}, assume further that the chiral algebra $V$ has a pointed representation category, $\Rep(V)\cong \mathrm{Vec}_D^{\sigma,\omega}$, and that $G$ is a cleft group of automorphisms so that $\Ver (V/V^G)\cong \mathrm{Vec}_\Gamma^{\tilde\omega}$. In this situation, the dome algebra $\mathrm{Dome}_{\mathcal{B}(V)}(\Ver (V/V^G))$ will be a certain quotient of 
\begin{align}
    \mathrm{Tube}(\Ver (V/V^G))\cong \mathscr{D}_{\tilde\omega}(\Gamma),
\end{align}
where $\mathscr{D}_{\tilde\omega}(\Gamma)$ is the ($\tilde\omega$-twisted) quantum double of $\Gamma$. We conjecture that this quotient coincides with a quasi-Hopf algebra $\mathscr{D}_{\tilde\omega}(\Gamma,D)$ defined by Mason-Ng\footnote{It was shown in \cite{Gannon:2024tcl} that this quasi-Hopf algebra is also equivalent to one defined by Naidu \cite{naidu2010crossed}.} \cite{Mason:2014kea} (see also \cite[Definition A.5]{Gannon:2024tcl}), i.e. 
\begin{align}
    \mathrm{Dome}_{\mathcal{B}(V)}(\Ver (V/V^G))\cong \mathscr{D}_{\tilde\omega}(\Gamma,D).
\end{align}
One consistency check on this conjecture is that the representation category of the Mason-Ng algebra $\mathscr{D}_{\tilde\omega}(\Gamma,B)$ is known to be the $G$-equivariantization of $\Ver (V/V^G)\cong\mathrm{Vec}_\Gamma^{\tilde\omega}$, which in turn recovers the representation category of $W=V^G$. In other words, it is known that 
\begin{align}
\Rep(\mathscr{D}_{\tilde{\omega}}(\Gamma,D))\cong \mathcal{B}(V^G) , 
\end{align}
in agreement with the general expectation Equation \eqref{eqn:domereps} set by the SymTFT.
We will give further evidence for this conjecture in Example \ref{ex:cleftpt3}.
\end{example}

\subsection{The effective hypergroup}\label{subsec:hypergroup}

In an absolute 2D QFT with a fusion category symmetry $\mathcal{C}$, the Grothendieck ring/fusion algebra of $\mathcal{C}$ appears as a subalgebra of the tube algebra $\mathrm{Tube}(\mathcal{C})$. Indeed, it corresponds to lasso diagrams where the incoming and outgoing lines are both taken to be the identity, i.e.\ to lassos $\mathsf{L}_{X;Z}^{Y,v}$ in Equation \eqref{eqn:lassoaction} with $X=Y=\mathds{1}$ and $v$ the identity junction. Intuitively, the fusion algebra of $\mathcal{C}$ is the part of $\mathrm{Tube}(\mathcal{C})$ which maps genuine local operators to genuine local operators.

In relative QFTs, we saw that $\mathrm{Tube}(\mathcal{C})$ does not act faithfully on the extended Hilbert space, and, it turns out, neither does the subalgebra corresponding to the fusion algebra of $\mathcal{C}$. 
Instead, one must pass to 
\begin{align}\label{eqn:effectivehypergroup}
   K:=K_{\mathcal{C}}\sslash K_{\mathcal{B}(V)},
\end{align} 
where $K_{\mathcal{C}}$ (resp.\ $K_{\mathcal{B}(V)}$) is the hypergroup induced by the fusion ring of $\mathcal{C}$ (resp.\ $\mathcal{B}(V)$), and $K_{\mathcal{C}}\sslash K_{\mathcal{B}(V)}$ is the double coset hypergroup  \cite{Bischoff:2016jmy,Rie22} (see Appendix \ref{app:hypergroups} for a review). This is the correct algebraic structure which acts faithfully on the genuine local operators in $V$, 
\begin{align}
    K \curvearrowright V.
\end{align}
If we consider a symmetry obtained from a subalgebra $W$, i.e.\ take $\mathcal{C}=\Ver (V/W)$, then the subspace of states which are invariant with respect to the action of $K=K_{\Ver(V/W)}\sslash K_{\mathcal{B}(V)}$ is precisely $W$ \cite[Theorem 7]{Rie22}, i.e.\ 
\begin{align}
    W=V^K := \{\mathcal{O}(z)\in V\mid r_i\cdot \mathcal{O}=\mathcal{O}, \forall r_i \in K\}.
\end{align}
This is a clear corollary of what we have said so far because $W$ is the subalgebra of $V$ which is transparent to the lines in $\mathcal{C}$, and $K$ is just a kind of quotient of $\mathcal{C}$ by the subcategory which acts trivially. 

Another important property of the hypergroup \eqref{eqn:effectivehypergroup} is the following.

\begin{claim}{}{cla:hypergroupgrading}
Any category $\mathcal{C}\supset\mathcal{B}(V)$ of topological line operators supported on a $V$-boundary is a $K$-graded extension of the category $\mathcal{B}(V)$ of anyons in the bulk,
\begin{align}
    \mathcal{C}\cong \bigoplus_{r_i\in K}\mathcal{B}(V)_{r_i}, \ \ \ \ \ \ \mathcal{B}(V)_1 \cong \mathcal{B}(V).
\end{align}
\end{claim}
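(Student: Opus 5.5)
The plan is to build the grading from the double cosets of $\mathcal{B}(V)$ in $\mathcal{C}$, and then to check the hypergroup condition \eqref{eqn:hypgraded} by a dimension count.

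\emph{Step 1: define the components.} Put an equivalence relation on simple objects of $\mathcal{C}$: say $X\sim Y$ if $Y$ is a summand of $a\otimes X\otimes b$ for some $a,b\in\mathcal{B}(V)$. Because $\mathcal{B}(V)$ is closed under fusion and duals, this is an equivalence relation. By the construction in Appendix \ref{app:hypergroups}, its classes are exactly the elements $r_i$ of the double coset hypergroup $K=K_{\mathcal{C}}\sslash K_{\mathcal{B}(V)}$. For each class $r_i$, let $\mathcal{B}(V)_{r_i}$ be the full additive subcategory spanned by the simples in that class. Since $\mathcal{C}$ is semisimple, $\mathcal{C}\cong\bigoplus_{r_i}\mathcal{B}(V)_{r_i}$ as abelian categories. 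The class of $\mathds{1}$ consists of the simples occurring in $a\otimes b$ with $a,b\in\mathcal{B}(V)$, and these all lie in $\mathcal{B}(V)$, so $\mathcal{B}(V)_1=\mathcal{B}(V)$. Each component is a $(\mathcal{B}(V),\mathcal{B}(V))$-bimodule category by construction. Physically, the $\mathcal{B}(V)$-module structure comes from fusing bulk anyons onto boundary lines; this produces the surfaces $\mathcal{S}_{r_i}$, but it is not needed for the algebra.

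\emph{Step 2: the grading condition.} For $X\in\mathcal{B}(V)_{r_i}$ and $Y\in\mathcal{B}(V)_{r_j}$, I need to show that $\d(\pi_k(X\otimes Y))/\d(X\otimes Y)$ equals the double-coset structure constant $P_{ij}^k$. I would establish the following chain of reductions:
\begin{enumerate}[label=(\roman*)]
\item The ratio is unchanged if $X$ is replaced by $a\otimes X$ or $Y$ by $Y\otimes b$, with $a,b\in\mathcal{B}(V)$. Left fusion by $a$ preserves every component, so $\d(\pi_k(aX\otimes Y))=\d(a)\,\d(\pi_k(X\otimes Y))$, and the denominator also scales by $\d(a)$. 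The same holds on the right.
\item The ratio is also unchanged under $X\mapsto X\otimes c$ when $Y\mapsto c^\ast\otimes Y$ is compensated. For this I would use the canonical object $R=\bigoplus_{c}\d(c)\,c$ of $\mathcal{B}(V)$. The double-coset multiplication is defined by averaging $[X]\star[Y]$ over $X\otimes R\otimes Y$, normalized by $\d(X)\d(R)\d(Y)$.
\item Hence it suffices to prove that the ratio for $X\otimes Y$ equals the ratio for $X\otimes R\otimes Y$.
\end{enumerate}

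\emph{Step 3: the main obstacle.} The hard point is exactly (iii): showing that $\pi_k(X\otimes Y)$ and $\pi_k(X\otimes R\otimes Y)$ carry the same fraction of quantum dimension. This is false for general pairs of fusion categories $\mathcal{D}\subset\mathcal{C}$. It must therefore use that $\mathcal{B}(V)$ is modular and centrally embedded, i.e.\ that it lifts to $Z(\mathcal{C})$ through the half-braiding coming from the bulk. I would handle it in the following steps:
\begin{enumerate}[label=(\roman*)]
\item By Claim \ref{cl:claim:verdef}, write $\mathcal{C}\cong\mathcal{B}(W)_A$ with $\mathcal{B}(V)\cong\mathcal{B}(W)_A^{\mathrm{loc}}$.
\item The left and right $\mathcal{B}(V)$-actions both come from the central structure. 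The component decomposition is then the decomposition of $A$-modules according to the eigenvalues of the monodromy with the $\mathcal{B}(V)$-lines, realised as projections built from $R$ and the S-matrix, as in Equation \eqref{eqn:actionpullingbulk}.
\item The idempotent $\frac{1}{\d(R)}R$ inserted between $X$ and $Y$ can be slid through the braiding. This identifies the projections on $X\otimes Y$ with those on $X\otimes R\otimes Y$ up to the factor $\d(R)$.
\item Modularity of $\mathcal{B}(V)$, via invertibility of $S$, guarantees that these projections exactly separate the double cosets.
\end{enumerate}

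If this direct argument stalls, the fallback is to cite the known result that a fusion category containing a modular subcategory $\mathcal{D}$ factorizes as $\mathcal{C}\cong\mathcal{D}\boxtimes\mathcal{D}'$ (Müger). This does not apply verbatim, since $\mathcal{B}(V)$ need not be braided-central in $\mathcal{C}$. Instead I would adapt Müger's argument to the centralizer of $\overline{\mathcal{B}(V)}$ in $Z(\mathcal{C})$, and verify the dimension identity there using $Z(\mathcal{C})\cong\mathcal{B}(W)\boxtimes\overline{\mathcal{B}(V)}$.
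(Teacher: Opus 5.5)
Your route differs from the paper's and its core is correct, but you have misplaced the difficulty. You are right that something beyond $\mathcal{B}(V)\subset\mathcal{C}$ is needed: for $\mathrm{Vec}_H\subset\mathrm{Vec}_G$ with $H$ non-normal, the ratio for $g\otimes g'$ is $0$ or $1$. That extra input is only \emph{centrality}, which the paper records in Section \ref{subsec:computationalaids}: $a\otimes X\cong X\otimes a$, because the bulk-to-wall functor factors through $Z(\mathcal{C})$.

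With centrality, your step (iii) is immediate. First, $X\otimes R\otimes Y\cong R\otimes(X\otimes Y)$. Second, each class $\mathcal{B}(V)_{r_k}$ is stable under fusion with $\mathcal{B}(V)$, so $\pi_k(R\otimes Z)\cong R\otimes\pi_k(Z)$. Hence $\d(\pi_k(X\otimes R\otimes Y))=\d(R)\,\d(\pi_k(X\otimes Y))$. Since $\omega_H=R/\d(R)$ for $H=K_{\mathcal{B}(V)}$, the ratio for $X\otimes Y$ equals the coefficient of $[r_k]$ in $q_H(r_i\cdot\omega_H\cdot r_j)$. That coefficient is the double-coset constant $P_{ij}^k$ of Appendix \ref{app:hypergroups}, and the equality holds for every choice of simple representatives. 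Your Step 2(i)--(ii) are then superfluous.

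Modularity plays no role, and Step 3(ii) and (iv) should be dropped. A bulk loop cannot link a line lying on the boundary. The only available structure is the single half-braiding from $\mathcal{B}(V)\to Z(\mathcal{C})$, and its phases do not cut out the components. In the cleft case $\mathrm{Vec}_D^{\sigma,\omega}\subset\mathrm{Vec}_\Gamma^{\tilde\omega}$, these phases restricted to $D$ reproduce the nondegenerate braiding of $\mathcal{B}(V)$, so they already vary inside the trivial component.

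The Müger-type fallback is also a dead end. A factorization $\mathcal{C}\cong\mathcal{B}(V)\boxtimes\mathcal{C}'$ would make $K\cong K_{\mathcal{C}'}$, which is integerizable. That contradicts the $(G_2)_1$ example of Section \ref{subsec:G21}.

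The paper argues differently. It pinches each $X$ into a strip of the bulk surface $\mathcal{S}=\mathcal{I}^\ast\otimes\mathcal{I}$ (Figure \ref{fig:pinch}). It takes the components to be the indecomposable $\mathcal{B}(V)$-module categories attached to the simple summands $\mathcal{S}_{r_i}$; these agree with your double cosets precisely because of centrality. It then derives \eqref{eqn:hypergroupgrading} by computing $\d(\pi_k(X\otimes Y))$ twice. Once it uses the $T$-junction projection. Once it collapses $X,Y$ into domes, with factors $\gamma(X)=\d(X)/\mathsf{d}(r_i)$, and fuses the domes via Figure \ref{fig:domefusion}.

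Your streamlined argument is shorter and purely algebraic. It isolates exactly the hypothesis used and lands directly on the double-coset constants as defined. The paper's argument is diagrammatic, but it identifies each graded piece with a bulk surface. It also shows that the grading constants are the structure constants of the dome action of $K$ on $V$, a link your argument does not supply.
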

\noindent We will give a physical argument for this in the next subsection. 

For now, let us give the definition of a $K$-grading. Let $P_{ij}^k$ be the (usually not integer) structure constants of the hypergroup $K$, i.e.
\begin{align}
    r_i\star r_j = \sum_{k} P_{ij}^k\cdot r_k, \ \ \ \ \ \  \sum_{k}P_{ij}^k = 1,
\end{align}
and let $\pi_i:\mathcal{C}\to \mathcal{B}(V)_{r_i}$ be the natural projections. To be $K$-graded means that, if $X$ is a simple object of $\mathcal{B}(V)_{r_i}$, and $Y$ a simple object of $\mathcal{B}(V)_{r_j}$, then 
\begin{align}\label{eqn:hypergroupgrading}
    \frac{\dim(\pi_k(X\otimes Y))}{\dim(X\otimes Y)} = P_{ij}^k, \ \ \ \ \ X\in\mathcal{B}(V)_{r_i}, \ Y\in\mathcal{B}(V)_{r_j}.
\end{align}
In other words, the fraction of the quantum dimension of $X\otimes Y$ which comes from the $r_k$-graded component is given by the hypergroup structure constant $P_{ij}^k$. In particular, if $P_{ij}^k=0$, then the tensor product $X\otimes Y$ contains no  summands $Z\subset X\otimes Y$ which belong to the component $\mathcal{B}(V)_{r_k}$.

\begin{example}{}{Ghypergroup}The simplest example is when $W=V^G$, for some finite subgroup $G$ of $\mathrm{Aut}(V)$. Assume that $W=V^G$ is strongly rational, which is a theorem in the case that $G$ is solvable \cite{Carnahan:2016guf} (cf.\ Conjecture \ref{conj:Grationality}). In this case, the hypergroup is simply $G$ itself, and one learns that $\Ver (V/W)$ is a $G$-graded extension of $\mathcal{B}(V)$, 
\begin{align}
    \Ver(V/W)\cong \bigoplus_{g\in G} \mathcal{B}(V)_g, \ \ \ \ \mathcal{B}(V)_{1}=\mathcal{B}(V).
\end{align}The way one usually sees this stated is that the category of $G$-twisted representations $\mathrm{TwRep}_G(V)$, which in this case coincides with $\mathrm{TwRep}_W(V)\cong \Ver (V/W)$, is a $G$-graded extension of $\Rep(V)$.  In fact, it is known that $\Ver (V/W)$ is a \underline{$G$-crossed braided} extension of $\mathcal{B}(V)$ \cite{kirillov2002modular,Kirillov:2001uz,kirillov2004g,muger2004galois,muger2005conformal,Mcrae:2019pol}. It is expected there exists a generalization of $G$-crossed braided categories to the setting of hypergroups, see e.g.\ \cite{Rie22} for some discussion. 
\end{example}

\begin{example}{}{}
Another simple example is when $V$ is a chiral CFT, a.k.a.\ a holomorphic vertex operator algebra, defined as a chiral algebra $V$ with a trivial representation category, $\mathcal{B}(V) = \mathrm{Vec}$. If $V$ admits a fusion category symmetry $\mathcal{C}$, then the hypergroup is simply $K_{\mathcal{C}}$---where  $K_{\mathcal{C}}$ is the hypergroup induced by the fusion ring of $\mathcal{C}$, see Appendix \ref{app:hypergroups}---since we do not need to take any quotient in Equation \eqref{eqn:effectivehypergroup}. The category $\mathcal{C}$ is trivially a $K_{\mathcal{C}}$-graded extension of $\mathcal{B}(V)=\mathrm{Vec}$. See e.g.\ \cite{Lin:2019hks,Burbano:2021loy,Rayhaun:2023pgc,Volpato:2024goy,Fosbinder-Elkins:2024hff,Moller:2024xtt} for examples of fusion category symmetries of chiral CFTs.
\end{example}

The following is a far-reaching situation. Suppose that one has a sequence of conformal embeddings $W\subset V\subset U$, and that the conformal embeddings $W\subset U$ and $V\subset U$ induce associated hypergroups $K$ and $K'$, respectively. In this case, the hypergroup induced by the embedding $W\subset V$ is the double coset hypergroup $K\sslash K'$:
\begin{equation}
\label{eqn:consecutiveembeddings}
    \begin{tikzcd}
& U & \\
W \arrow[rr,"K\sslash K'"] \arrow[ru,"K"] & & V\,. \arrow[lu,"K'",swap]  
\end{tikzcd}
\end{equation}
The action of $K\sslash K'$ on $V=U^{K'}$ is given by 
\begin{align}
    [r_i]\cdot \mathcal{O}(z) \equiv  \omega_{K'}\cdot r_i \cdot \omega_{K'}\cdot\mathcal{O}(z),
\end{align}
where $\omega_{K'}$ is the Haar element of $K'$, Equation \eqref{eqn:Haarelement}. This action is of course independent of the choice of representative $r_i\in K$ of $[r_i]$. 

In this situation, not only is $\Ver (V/W)$ a $K\sslash K'$-graded extension of $\mathcal{B}(V)=\Ver (V/V)$, but more generally, $\Ver (U/W)$ is a $K\sslash K'$-graded extension of $\Ver (U/V)$.

\begin{example}{Cosine hypergroup symmetry}{cosine}
Take $U=V_{2m}$ and $V=V_{2m}^+$ and $W=\widehat{\mathfrak{u}}(1)^+$. Here, $V_{2m}$ is the chiral algebra of the compact boson theory of radius $R^2=2m$ (cf.\ Example \ref{ex:latticeVOA}) and $\widehat{\mathfrak{u}}(1)$ is the $U(1)$ Kac-Moody algebra. Also, $V_{2m}^+$ (resp.\ $\widehat{\mathfrak{u}}(1)^+$) is the subalgebra of operators in $V_{2m}$ (resp.\ $\widehat{\mathfrak{u}}(1)$) which are neutral under charge conjugation symmetry. 

\hspace{.25in}By Example \ref{ex:Ghypergroup}, since $V_{2m}^{O(2)}=\widehat{\mathfrak{u}}(1)^+$, the hypergroup induced by the embedding $\widehat{\mathfrak{u}}(1)^+\subset V_{2m}$ is $K\equiv O(2)\cong \mathrm{Aut}^\dagger(V_{2m})$. Similarly, the hypergroup induced by the embedding $V_{2m}^+\subset V_{2m}$ is $K'\equiv \mathbb{Z}_2^{\mathrm{C}}$, the group generated by the charge conjugation symmetry. Thus, we have an action of 
\begin{align}
    K\sslash K' \cong O(2)\sslash \mathbb{Z}_2^{\mathrm{C}}
\end{align}
on the chiral algebra $V_{2m}^+$. To describe this hypergroup, call $R_\theta$ with $\theta \in [0,2\pi)$ the rotations in $O(2)$ and call $C$ the reflection.  The double cosets are $K\sslash K'=\{R_\theta\mid \theta\in[0,\pi]\}$, where
\begin{align}
    [R_\theta] = \begin{cases}
        \{R_\theta,R_{-\theta},R_\theta C, R_{-\theta}C\},&\theta\notin \{ 0,\pi\} \\
        \{R_\theta ,R_\theta C\}, &\theta\in \{0,\pi\}.
    \end{cases}
\end{align}
The hypergroup multiplication follows from Equation \eqref{eqn:doublecosethypgpintro}, which unpacks as
\begin{align}
    [R_\theta]\star [R_{\theta'}]=\frac12([R_{\theta+\theta'}]+[R_{\theta-\theta'}]).
\end{align}
One can readily compute that the action of $[R_\theta]$ on e.g.\ the operators $\mathcal{O}_\ell(z)\equiv \cos(\sqrt{2m}\ell\phi(z))\in V_{2m}^+$ is given by 
\begin{align}
    [R_\theta]\cdot \mathcal{O}_\ell(z)=\cos(\theta\ell) \mathcal{O}_{\ell}(z).
\end{align}
This could be called a cosine hypergroup symmetry of the chiral algebra of the orbifold branch.
\end{example}

\subsection{Relation to bulk surfaces}\label{subsec:surfaces}

It will be useful in what follows to develop an alternative perspective on categories of boundary topological line operators, one which incorporates topological surfaces in the bulk. In particular, this will allow us to see why they are hypergroup-graded extensions.

First, recall that topological surfaces $\mathcal{S}$ in the TQFT described by $\mathcal{B}(V)$ are in one-to-one correspondence with $\mathcal{B}(V)$-module categories $\mathcal{M}$. Under this correspondence, objects $m\in\mathcal{M}$ correspond to topological lines which can appear at the boundary of the topological surface $\mathcal{S}$, and the module category map $\mathcal{B}(V)\times \mathcal{M}\to\mathcal{M}$ encodes the fusion of lines in the bulk onto lines on the boundary of $\mathcal{S}$. (See e.g.\ \cite{Choi:2023xjw} for a review of module categories oriented towards physicists.)

The following will be a useful lemma.
\begin{claim}{}{} 
Let $\mathcal{C}\supset \mathcal{B}(V)$ be a category of topological line operators on a $V$-boundary, let $W=V^{\mathcal{C}}$ be the subalgebra of $\mathcal{C}$-neutral operators, and call $\mathcal{I}\equiv \mathcal{I}_A$ the topological interface between $\mathcal{B}(W)$ and $\mathcal{B}(V)$ (cf.\ Figure \ref{fig:SymTFTconformalembedding}). Then $\mathcal{C}$ is equivalent to the category of topological lines which can appear at the boundary of the (generally decomposable) surface $\mathcal{S}\equiv \mathcal{I}^\ast \otimes \mathcal{I}$ in $\mathcal{B}(V)$.
\end{claim}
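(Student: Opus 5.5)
The plan is to identify both sides with the same category of modules over an algebra in $\mathcal{B}(W)$. By Claim \ref{cl:claim:verdef} and symmetry/subalgebra duality, $\mathcal{C}\cong\Ver(V/W)\cong\mathcal{B}(W)_A$, where $A=\Phi_W^{-1}(V)$ is the condensable algebra with $\mathcal{B}(W)_A^{\mathrm{loc}}\cong\mathcal{B}(V)$. The interface $\mathcal{I}=\mathcal{I}_A$ is obtained by condensing $A$ in half of spacetime. As a $(\mathcal{B}(W),\mathcal{B}(V))$-bimodule category, the lines on $\mathcal{I}_A$ are $\mathcal{B}(W)_A$: bulk $\mathcal{B}(W)$ lines act by free induction $\mu\mapsto \mu\otimes A$, and $\mathcal{B}(V)=\mathcal{B}(W)_A^{\mathrm{loc}}$ acts by $\otimes_A$.

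Next I would use the standard fact that lines on the boundary of a surface obtained by fusing two interfaces are computed by the relative Deligne product of their bimodule categories. Lines at the boundary of $\mathcal{S}=\mathcal{I}^\ast\otimes\mathcal{I}$ then form the $\mathcal{B}(V)$-module category
\begin{align}
\mathcal{M}_{\mathcal{S}}\cong (\mathcal{B}(W)_A)^{\mathrm{op}}\boxtimes_{\mathcal{B}(W)}\mathcal{B}(W)_A .
\end{align}
Here the dual interface $\mathcal{I}^\ast$ carries the opposite bimodule structure.

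Rather than computing this product abstractly, I would argue geometrically with the SymTFT picture of Figure \ref{fig:SymTFTconformalembedding}. Take a line $X\in\mathcal{C}$ on the $V$-boundary and resolve the boundary as $\mathcal{I}_A$ fused onto the $W$-boundary. The $W$ slab can be viewed as a thin sliver of $\mathcal{B}(W)$ bounded by $\mathcal{I}$ and $\mathcal{I}^\ast$. When $X$ is pulled into the $\mathcal{B}(V)$ bulk, the sliver detaches from the boundary, and $X$ becomes a line bounding $\mathcal{I}^\ast\otimes\mathcal{I}$. Running the pinching in reverse gives the inverse map, in the same way as Figure \ref{fig:FBtopartial}.

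The step that needs care is showing that the relative Deligne product above is equivalent to $\mathcal{B}(W)_A$ itself, rather than merely admitting a functor from it. Algebraically, I would use $\mathcal{B}(W)_A\cong \mathrm{Fun}_{\mathcal{B}(W)}(\mathcal{B}(W)_A,\mathcal{B}(W))$ together with $\mathcal{M}^{\mathrm{op}}\boxtimes_{\mathcal{D}}\mathcal{N}\cong \mathrm{Fun}_{\mathcal{D}}(\mathcal{M},\mathcal{N})$. This reduces the product to $\mathrm{Fun}_{\mathcal{B}(W)}(\mathcal{B}(W)_A,\mathcal{B}(W)_A)\cong {}_A\mathcal{B}(W)_A$, the category of $A$-bimodules.

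That reduction leaves the main obstacle: ${}_A\mathcal{B}(W)_A$ must be identified with $\mathcal{B}(W)_A$, as a $\mathcal{B}(V)$-module category. This holds because $A$ is commutative in the braided category $\mathcal{B}(W)$. Using the braiding, any right $A$-module acquires a left action, and this gives the standard equivalence $\mathcal{B}(W)_A\simeq{}_A\mathcal{B}(W)_A^{\pm}$ onto bimodules whose two actions are related by the braiding. Conversely, the full bimodule category of the sliver sees precisely these bimodules, because the $\mathcal{B}(W)$ region between $\mathcal{I}$ and $\mathcal{I}^\ast$ contributes only through its braided action.

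I would make this rigorous through the known formula $\mathcal{B}(W)_A\boxtimes_{\mathcal{B}(W)}\mathcal{B}(W)_A\cong$ modules over $A\otimes A$, restricted by the braided-commutativity relations. Finally, I would check that the $\mathcal{B}(V)$-module structures agree: on both sides $\mathcal{B}(V)$ acts by $\otimes_A$ with local modules.
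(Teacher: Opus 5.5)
\textbf{There is a genuine gap: you compute lines \emph{on} the surface, not lines \emph{at its boundary}.} Your geometric paragraph is essentially the paper's own argument: pinch $\mathcal{I}$ so that $X$ sits at the tip of a thin $\mathcal{B}(W)$ sliver bounded by $\mathcal{I}$ and $\mathcal{I}^\ast$, and invert by un-pinching (Figure~\ref{fig:pinch}). The trouble is the algebraic scaffolding around it. The relative Deligne product $\mathbf{L}(\mathcal{I}^\ast)\boxtimes_{\mathcal{B}(W)}\mathbf{L}(\mathcal{I})$ computes the lines supported \emph{on} the composite surface $\mathcal{I}^\ast\otimes\mathcal{I}$, which form a multifusion category. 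It does not compute the $\mathcal{B}(V)$-module category of lines \emph{at its boundary}. The statement you quote from the paper is about $\mathbf{L}(\mathcal{I}\otimes\mathcal{J})$, i.e.\ lines on the composite. Your own reduction confirms this: you land on ${}_A\mathcal{B}(W)_A$. That is the category of lines on the condensation surface of $A$ in $\mathcal{B}(W)$. By Equations~\eqref{eqn:descrip1}--\eqref{eqn:descrip2} it is also the category of $W\otimes\overline{W}$-preserving lines of the diagonal CFT on $V$.

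\textbf{The ``main obstacle'' cannot be overcome, because the equivalence you need is false.} In general ${}_A\mathcal{B}(W)_A\not\simeq\mathcal{B}(W)_A$. Global dimension is Morita invariant, so $\dim({}_A\mathcal{B}(W)_A)=\dim\mathcal{B}(W)$, whereas $\dim(\mathcal{B}(W)_A)=\dim\mathcal{B}(W)/\dim A$. Concretely, for $\widehat{\mathfrak{su}}(2)_4\subset\widehat{\mathfrak{su}}(3)_1$ the former has rank $8$. The latter is $\mathrm{TY}_-(\mathbb{Z}_3)$ of rank $4$, which matches $\mathcal{S}=\mathds{1}\oplus C$ with module categories $\Vec_{\mathbb{Z}_3}\oplus\Vec$.

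Braided commutativity of $A$ only identifies $\mathcal{B}(W)_A$ with the full subcategory ${}_A\mathcal{B}(W)_A^{\pm}$ of one-sided (braided-induced) bimodules. Your assertion that the sliver ``sees precisely these bimodules'' is unjustified and false. So is the idea of restricting modules over $A\otimes A$ by braided-commutativity relations: the relative Deligne product admits no such truncation.

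\textbf{What actually makes the statement true is the duality between $\mathcal{I}$ and $\mathcal{I}^\ast$.} Line interfaces $\mathds{1}_{\mathcal{B}(V)}\Rightarrow\mathcal{I}^\ast\otimes\mathcal{I}$ correspond, via the fold at the tip (together with the junction $L$), to line interfaces $\mathcal{I}\Rightarrow\mathcal{I}$. The latter are $\mathbf{L}(\mathcal{I})=\mathcal{B}(W)_A\cong\mathcal{C}$, with $\mathcal{B}(V)=\mathcal{B}(W)_A^{\mathrm{loc}}$ acting by fusion on both sides. Drop the Deligne-product detour and make this adjunction, which is your pinching picture, the argument.
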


\noindent In particular, this means that the $\mathcal{B}(V)$-module category corresponding to $\mathcal{S}=\mathcal{I}^\ast\otimes\mathcal{I}$ is simply $\mathcal{C}$ itself.\footnote{Note that $\CC$ is a $\mathcal{B}(V)$-module category simply because $\mathcal{B}(V)$ is a subcategory of $\CC$ and therefore acts on $\CC$ by the tensor product.} 

The equivalence of the claim above sends an object $X$ of $\CC$, which can be thought of as a line supported on the interface $\mathcal{I}$, to a line $\underline{X}$ at the boundary of $\mathcal{S}$ by  ``pinching''  the interface $\mathcal{I}$ at the location of the line, as in Figure \ref{fig:pinch}. In the process, we see that $\mathcal{S}$ admits a privileged topological line junction $L$ on the interface $\mathcal{I}$. We use the following schematic equation to encode the content of Figure \ref{fig:pinch}: 
\begin{align}\label{eqn:pinchequation}
    X \rightleftharpoons (\underline{X},\mathcal{S},L).
\end{align}
We think of $X$ as being ``inflated'' into a strip of the surface $\mathcal{S}$ which is bounded by $\underline{X}$ and $L$. Alternatively, we imagine that we are able to drag $X$ into the bulk at the expense of leaving behind a trailing surface $\mathcal{S}$ which is anchored to the topological interface $\mathcal{I}$ via the line junction $L$. A similar ``pinching'' manipulation was considered in \cite{KNBalasubramanian:2025vum}.

Let us see how the hypergroup grading arises from this perspective.  
The surface $\mathcal{S}$ generally decomposes into a direct sum of elementary surfaces, 
\begin{align}\label{eqn:Sdecomp}
    \mathcal{S}\cong\bigoplus_{r_i\in K}\mathcal{S}_{r_i}.
\end{align} 
Each $\mathcal{S}_{r_i}$, by virtue of being a topological surface operator, corresponds to a $\mathcal{B}(V)$-module category which we call $\mathcal{B}(V)_{r_i}$. The decomposition in Equation \eqref{eqn:Sdecomp} translates at the level of $\mathcal{B}(V)$-module categories to a decomposition of $\mathcal{C}$, i.e.\ 
\begin{align}
    \mathcal{C}\cong \bigoplus_{r_i\in K}\mathcal{B}(V)_{r_i}.
\end{align}
Then, instead of using the surface $\mathcal{S}$ in Figure \ref{fig:pinch} (or Equation \eqref{eqn:pinchequation}), we can pass to components and write  
\begin{align}\label{eqn:stripdecompositionSk}
    X\leftrightharpoons (\underline{X},\mathcal{S}_{r_i},L_{r_i}),
\end{align}
for any $X\in\mathcal{B}(V)_{r_i}\subset\mathcal{C}$. So far this is just a decomposition of $\mathcal{C}$, but we will see shortly why it is in fact a hypergroup grading, in the sense of Equation \eqref{eqn:hypergroupgrading}.

\begin{figure}
    \begin{center}
\input{Figures/pinch}
\caption{Pinching the interface $\mathcal{I}$ near the location of the line $X$ produces a line $\underline{X}$ at the boundary of the surface $\mathcal{S}=\mathcal{I}^\ast\otimes\mathcal{I}$ in $\mathcal{B}(V)$.}\label{fig:pinch}
    \end{center}
\end{figure}

\begin{example}{}{}
   Lines $X$ in $\mathcal{B}(V)\cong \mathcal{B}(V)_1\subset\CC$ correspond to triples $(\underline{X},\mathcal{S}_1,L_1)$ with $\mathcal{S}_1$ the trivial surface (corresponding to the fact that $\mathcal{B}(V)$ is the regular module category over itself), $L_1$ the identity line on $\mathcal{I}$, and $\underline{X}$ a bulk topological line operator. Of course, in this situation, inflating $X$ into a triple $(\underline{X},\mathcal{S}_1,L_1)$ corresponds to simply pulling the line $X$ off the interface $\mathcal{I}$ and into the bulk $\mathcal{B}(V)$. 
\end{example} 

\begin{example}{}{}Suppose that $V$ admits a faithful action of a finite group $K=G$ by automorphisms. In this situation, $\Ver (V/V^G)$ is a $G$-crossed braided extension of $\mathcal{B}(V)$, and one can ask how to describe the surface operators $\mathcal{S}_g$ which appear in the ``strip'' picture of line operators in $\mathcal{B}(V)_g$. As mentioned in Example \ref{ex:cleftpt1}, each $g\in G$ induces a permutation $g_\ast$ on simple modules of $V$, which further extends to a ribbon auto-equivalence of $\mathrm{Rep}(V)\cong \mathcal{B}(V)$. Ribbon auto-equivalences of $\mathcal{B}(V)$ can be thought of as \emph{invertible} topological surface operators of the corresponding topological field theory, and the claim is that $\mathcal{S}_g$ is to be identified with this surface operator $g_\ast$ (using $\Phi:\mathcal{B}(V)\to \Rep(V)$ to relate autoequivalences of $\Rep(V)$ and autoequivalences of $\mathcal{B}(V)$). 
\end{example}

\begin{figure}
    \begin{center}
\input{Figures/domehypergroup}
\caption{Left: The action of an element $r_i$ of the hypergroup $K$ on a local operator $\mathcal{O}(z)$ in $V$ can be represented via a dome, up to a $r_i$-dependent proportionality constant $1/\mathsf{d}(r_i)$. Right: More generally, the action of the dome algebra on twisted local operators can also be represented via domes.}\label{fig:domehypergroup}
    \end{center}
\end{figure}

The ``strip'' perspective on topological lines on a $V$-boundary affords a nice picture of the  hypergroup action on $V$, and the action of the dome algebra on the extended Hilbert space $\mathbb{V}_{\mathcal{C}}$ from Equation \eqref{eqn:extendedHilbertspace}, to which we turn next.

To make contact with the effective hypergroup, we work in a slightly unusual normalization in which the action of topological lines $X\in\mathcal{C}$ on genuine local operators $\mathcal{O}(z)\in V$ is rescaled from the usual one (cf.\ Figure \ref{fig:encirclingaction}) by a factor of the inverse quantum dimension:
\begin{center}
    \input{Figures/rescaledaction}
\end{center}
This defines an action on $V$ of the (stochastically normalized) hypergroup $K_{\mathcal{C}}$ induced by the fusion ring of $\mathcal{C}$.

It is clear that if $X$ lives in the subcategory $\mathcal{B}(V)\subset\mathcal{C}$ of lines which can be dragged into the bulk, then $X\cdot \mathcal{O}(z)=\mathcal{O}(z)$. In particular, the subhypergroup $K_{\mathcal{B}(V)}$ is in the kernel of the action $K_{\mathcal{C}}\curvearrowright V$, and thus we obtain an action of the double coset hypergroup, $K_{\mathcal{C}}\sslash K_{\mathcal{B}(V)}\curvearrowright V$.

We can represent this action in another way. We imagine inflating $X$ into a strip, and then contracting $\underline{X}$ to a point, wherein we incur an $X$-dependent proportionality factor which we call $\gamma(X)$:
\begin{center}
\input{Figures/line2dome}
\end{center}
Although it naively appears as though the proportionality constant $\gamma(X)/\dim(X)$ depends on $X$, it only does so through the hypergroup element $r_i$ that specifies the graded component $\mathcal{B}(V)_{r_i}\subset \mathcal{C}$ to which $X$ belongs. Indeed, by calculating the action of $X$ on the identity local operator in two ways --- one in the standard way using Equation \eqref{eqn:hypact1}, and one by using Equation \eqref{eqn:hypact2} --- one can straightforwardly derive that 
\begin{align}\label{eqn:gammadimd}
    \frac{\gamma(X)}{\dim(X)}=\frac{1}{\mathsf{d}(r_i)},
\end{align}
where $\mathsf{d}(r_i)$ is defined to be the multiple of the identity operator obtained by the following manipulation:
\begin{center}
    \input{Figures/dri}
\end{center}
In total, one finds that one can represent the action of the effective hypergroup $K_{\mathcal{C}}\sslash K_{\mathcal{B}(V)}=\{r_i\}$ on $V$ via domes built out of the surfaces $\mathcal{S}_{r_i}$, as in the left of Figure \ref{fig:domehypergroup}. Similarly, one can represent the action of the dome algebra on the extended Hilbert space $\mathbb{V}_{\mathcal{C}}$ as in the right of Figure \ref{fig:domehypergroup}.

It will also be convenient in what follows to define the following rescaled operators on $V$:
\begin{align}\label{eqn:rescaledhypergroup}
   \mathbf{r}_i \equiv \mathsf{d}(r_i)\widehat{r}_i.
\end{align}
That is, the $\mathbf{r}_i$ describe the action of domes with the factor $\mathsf{d}(r_i)^{-1}$ stripped off.

\begin{example}{}{cleftpt3}
    Consider again the situation of a cleft group $G$ of automorphisms acting on a chiral algebra $V$ appearing at the boundary of an Abelian TQFT, $\mathcal{B}(V)\cong \mathrm{Vec}_D^{\sigma,\omega}$. Take $W=V^G$ so that $\Ver (V/V^G)\cong \mathrm{Vec}_\Gamma^{\tilde\omega}$. (Cf.\ Example \ref{ex:cleftpt1}.) We conjectured in Example \ref{ex:cleftpt2} that the dome algebra in this case is given by a Mason-Ng algebra $\mathscr{D}_{\tilde\omega}(\Gamma,D)$ \cite[Definition A.5]{Gannon:2024tcl}. We can now make a conjectural identification between the basis elements.
    
    \hspace{.3in}The underlying vector space of $\mathscr{D}_{\tilde\omega}(\Gamma,D)$ is $\mathbb{C}[\Gamma]\otimes \mathbb{C}[G]$, with basis elements which we label as $\Delta_xg$. Suppressing the choice of point junctions $v$ (which are unique up to multiplication by a complex number), we assert that one should identify the $\Delta_{x}g$  with dome diagrams of Figure \ref{fig:domehypergroup} according to
    \begin{align}
        \Delta_xg \equiv \mathsf{D}_{g^{-1}xg,g}^{x}, \ \ \ \ \ \ \  x\in \Gamma, g\in G.
    \end{align}
    The composition of two domes $\Delta_xg$ and $\Delta_yh$ should clearly be proportional to the dome defined by $\delta_{g^{-1}xg,y}\Delta_xgh$, i.e.\ 
    \begin{align}
        \Delta_xg\otimes \Delta_yh \propto \delta_{g^{-1}xg,y}\Delta_xgh.
    \end{align}The identification with the Mason-Ng algebra gives the constant of proportionality, which we do not write explicitly here.
    % %
\end{example}

\begin{figure}
\begin{center}
    \input{Figures/Tjunction}
    \caption{The definition of the $T$-junction line operator.  }\label{fig:Tjunction}
\end{center}
\end{figure}

Next, let us understand the composition of hypergroup elements using the dome picture. To this end, it is helpful to define a trivalent topological line junction, called the $T$-junction in e.g.\ \cite{Carqueville:2018sld,Mulevicius:2020tgg,Mulevicius:2020bat,Mulevicius:2022gce,Heinrich:2025wkx}, between three copies of the surface operator $\mathcal{S}$. This $T$-junction plays a role in our approach which is similar to that of symmetry fractionalization in the setting of $G$ symmetry enrichment \cite{Barkeshli:2014cna,Delmastro:2022pfo,Brennan:2022tyl}. In our setting, it can be defined implicitly by assembling a network of interfaces $\mathcal{I}$ and $\mathcal{I}^\ast$ as in the left of Figure \ref{fig:Tjunction}, and then fusing each interface $\mathcal{I}$ with its orientation reversal. We sometimes use the notation ${_k}T_{ij}$ to denote the components of $T$ which sit at the trivalent junction of $\mathcal{S}_{r_i}$, $\mathcal{S}_{r_j}$, $\mathcal{S}_{r_k}$, and similarly $T_{ij}$ for the components of $T$ which sit at the trivalent junction of $\mathcal{S}_{r_i}$, $\mathcal{S}_{r_j}$, $\mathcal{S}$. 

\begin{figure}
    \begin{center}
    \input{Figures/TFusion}
    \caption{The $T$-junction defines a tensor product $\otimes_T$ on the category of lines which appear on the boundary of $\mathcal{S}$. This tensor product agrees with the fusion of lines on $\mathcal{I}$. }\label{fig:Tfusion}
    \end{center}
\end{figure}

The $T$-junction is useful for a variety of purposes. For example, it can be used to define a tensor product $\otimes_T$ on the category of line operators which live on the boundary of the surface $\mathcal{S}$, as illustrated on the top row of Figure \ref{fig:Tfusion}. Recall that lines on $\mathcal{I}$ can be identified with lines on the boundary of $\mathcal{S}$ via the pinching functor $X\mapsto \underline{X}$ defined in Figure \ref{fig:pinch}. The tensor product $\otimes_T$ is compatible with the pinching functor in the sense that 
\begin{align}\label{eqn:Tfusion}
    \underline{X\otimes Y}\cong \underline{X}\otimes_T \underline{Y}.
\end{align}
See Figure \ref{fig:Tfusion} for a visual illustration of Equation \eqref{eqn:Tfusion}. 

We note that, if $X\in \mathcal{B}(V)_{r_i}$ and $Y\in \mathcal{B}(V)_{r_j}$, then we can recover $\pi_k(X\otimes Y)$ (i.e.\ the part of $X\otimes Y$ which resides in the graded component $\mathcal{B}(V)_{r_k}$, cf.\ Equation \eqref{eqn:hypergroupgrading}), by replacing the three surfaces in Figure \ref{fig:Tfusion} with $\mathcal{S}_{r_i}$, $\mathcal{S}_{r_j}$, and $\mathcal{S}_{r_k}$, as in Figure \ref{fig:pikXY}.

\begin{figure}
    \centering
    \input{Figures/pikXY}
    \caption{Replacing the outgoing surface $\mathcal{S}=\bigoplus_{r_k}\mathcal{S}_{r_k}$ with one of its components $\mathcal{S}_{r_k}$ has the effect of projecting $X\otimes Y$ onto its part $\pi_k(X\otimes Y)$ which resides in the graded component $\mathcal{B}(V)_{r_k}$.}
    \label{fig:pikXY}
\end{figure}

Importantly, the $T$-junction controls how two $\mathcal{S}$-surfaces recombine in the presence of the interface $\mathcal{I}$. Indeed, one can resolve each $\mathcal{S}$ into $\mathcal{I}\otimes\mathcal{I}^\ast$, deform the resulting configuration of interfaces, and then refuse everything together, as is sketched in Figure \ref{fig:Tmove}.

We can use the recombination rule of Figure \ref{fig:Tmove} to determine how the composition of hypergroup elements is realized in terms of domes. In order to make it easier to draw pictures, let us represent the action of the hypergroup at the level of states instead of operators, by using the state/operator correspondence of Figure \ref{fig:gaplesschiralboundary}. Then, two consecutive hypergroup elements can be represented as acting on the Hilbert space that the TQFT $\mathcal{B}(V)$ associates to the disk with boundary $V$ imposed, see the left of Figure \ref{fig:domefusion}. We can partially fuse $\mathcal{S}_{r_i}$ and $\mathcal{S}_{r_j} $ together using the recombination rule of Figure \ref{fig:Tmove}, and then squash the spherical shell which is left over to produce a point operator $\mathbf{P}_{ij}$ on the surface $\mathcal{S}$, as in the right of Figure \ref{fig:domefusion}.

\begin{figure}
    \begin{center}
        \input{Figures/Tmove}
        \caption{Two $\mathcal{S}$ surfaces terminating on the interface $\mathcal{I}_A$ can be fused together using the $T$-junction.}\label{fig:Tmove}
    \end{center}
\end{figure}

This point operator encodes the (rescaled) structure constants of the hypergroup.\footnote{The suggestion that the point operators $\mathbf{P}_{ij}$ should be related to the effective hypergroup was given to us by Ingo Runkel.} Indeed, each surface $\mathcal{S}_{r_i}$ supports a unique topological point operator (the identity $\mathds{1}_i$), so a point operator on $\mathcal{S}\cong \bigoplus_{r_i} \mathcal{S}_{r_i}$ can be written as 
\begin{align}
   \mathbf{P}_{ij} = \sum_{k} \mathbf{P}_{ij}^k \mathds{1}_k,
\end{align} 
for some collection of numbers $\mathbf{P}_{ij}^k$. The operators acting on $V$ from Equation \eqref{eqn:rescaledhypergroup} fuse according to these numbers, 
\begin{align}
    \mathbf{r}_i\cdot \mathbf{r}_j=\sum_{k}\mathbf{P}_{ij}^k \mathbf{r}_k.
\end{align}
The $\mathbf{P}_{ij}^k$ are related to the stochastically normalized structure constants of the hypergroup by incorporating the $\mathsf{d}(r_i)^{-1}$ factors, 
\begin{align}\label{eqn:rescaledhypergrouprelation}
    \widehat{r}_i\cdot \widehat{r}_j = \sum_k P_{ij}^k \widehat{r}_k, \ \ \ \ \ \ \ P_{ij}^k=\frac{\mathsf{d}(r_k)}{\mathsf{d}(r_i)\mathsf{d}(r_j)}\mathbf{P}_{ij}^k, \ \ \ \ \ \ \ \sum_k P_{ij}^k=1.
\end{align}Because they arise from point operators, which can be rescaled by arbitrary complex numbers, there is no a priori reason why the $P_{ij}^k$ or the $\mathbf{P}_{ij}^k$ need to be integers. (In fact, as we will see by way of example in Section \ref{subsec:G21}, it is sometimes the case that there is no rescaling of the $r_i$ at all which makes the structure constants integers.)

Let us combine the various ingredients so far to demonstrate that $\mathcal{C}$ is a hypergroup graded extension of $\mathcal{B}(V)$ in the sense of Equation \eqref{eqn:hypergroupgrading}. Choose lines $X\in\mathcal{B}(V)_{r_i}$ and $Y\in \mathcal{B}(V)_{r_j}$. First, by combining the pinching trick (Figure \ref{fig:pinch}) with the fact that the projection of $X\otimes Y$ onto $\mathcal{B}(V)_{r_k}\subset\mathcal{C}$ can be implemented as in Figure \ref{fig:pikXY}, we obtain the following characterization of $\dim(\pi_k(X\otimes Y))$:
\begin{center}
    \input{Figures/dimkXY1}
\end{center}
This same quantity can be computed another way, by contracting the $X$ and $Y$ lines to points (incurring factors of $\gamma(X)$ and $\gamma(Y)$ as in Equation \eqref{eqn:hypact2}), and then using the hypergroup composition rule from Figure \ref{fig:domefusion} to obtain
\begin{center}
\input{Figures/dimkXY2}
\end{center}
Equating these two expressions and using Equations  \eqref{eqn:gammadimd} and \eqref{eqn:rescaledhypergrouprelation} we find that 
\begin{align}
    \dim(\pi_k(X\otimes Y))= \dim(X)\dim(Y) P_{ij}^k,
\end{align}
which is exactly the defining equation of a hypergroup grading, Equation \eqref{eqn:hypergroupgrading}.

The upshot is that one has a choice when presenting a noninvertible symmetry of $V$. One can work with a fusion ring (namely, the Grothendieck ring of $\CC$) with integer structure constants, at the price of the action on $V$ possessing a kernel. This is the structure one naturally encounters when one works with topological line operators on the $V$-boundary. Alternatively, one can quotient out the kernel and work with the effective hypergroup, wherein one encounters possibly non-integerizable structure constants. This is the structure which arises when one trades lines for domes using the pinching functor of Figure \ref{fig:pinch}.

\begin{figure}
    \begin{center}
        \input{Figures/domefusion}
        \caption{The composition law of the effective hypergroup.}\label{fig:domefusion}
    \end{center}
\end{figure}

\subsection{Behavior under topological manipulations}\label{subsec:topman}

In full, modular invariant conformal field theories, topological line operators behave in a universal manner under topological manipulations, like orbifolding \cite{Dixon:1985jw,Dixon:1986jc,Fuchs:2002cm,Bhardwaj:2017xup}. The simplest example goes back to \cite{Vafa:1989ih}, where it was observed that if one orbifolds a 2D theory $\mathcal{T}$ with a $\mathbb{Z}_n$ symmetry, then the orbifolded theory $\mathcal{T}/\mathbb{Z}_n$ is guaranteed by the structure of the gauging to have a ``dual'' $\widehat{\mathbb{Z}}_n$ symmetry, where $\widehat{G}$ denotes the Pontryagin dual of $G$. More generally, if $\mathcal{T}$ has a fusion category $\mathcal{C}$ of topological line operators, one may gauge a gaugeable algebra object $A$ inside $\mathcal{C}$, in which case the theory $\mathcal{T}/A$ is guaranteed to have a dual symmetry category ${_A}\mathcal{C}_A$ given by the category of $A$-$A$ bimodules in $\mathcal{C}$ \cite{Fuchs:2002cm,Bhardwaj:2017xup}.

The basic procedure of performing an orbifold consists of two steps. (In the math literature on VOAs, the word ``orbifold'' often refers just to the first step.) In the first step, one shrinks the operator algebra by passing to the $\mathcal{C}$-invariant operators of $\mathcal{T}$, and in the second step, one re-enlarges the operator algebra by adding back in suitable twisted sectors. The need to perform both steps together is dictated by modular invariance, but in the more permissive setting of chiral algebras and relative quantum field theories, where modular invariance is spoiled anyways by the existence of a non-trivial bulk, we may contemplate performing the first and second steps separately. Each step may be interpreted as a kind of gauging of the 3D bulk TQFT in the presence of a boundary condition described by a chiral algebra or relative QFT. One may then ask whether a category of boundary topological line operators behaves universally under such topological manipulations.

\paragraph{Behavior under conformal extension/one-form symmetry gauging}

\begin{figure}
\begin{center}
    \input{Figures/Amodule}
    \caption{The boundary topological line operators after gauging a condensable algebra $(A,m)$ in the bulk/boundary system are described by $A$-modules $(M,\mu)$. Here, $m:A\otimes A\to A$ is the multiplication map on the algebra, and $\mu:M\otimes A\to M$ is the $A$-module map for $M$. The associativity constraint is visualized.}\label{fig:Amodule}
\end{center}
\end{figure}

The easier of the two steps to analyze is the second one, which in the language of chiral algebras corresponds to conformal extension/gauging noninvertible one-form symmetries of the bulk/boundary system.  

In more detail, suppose one is given a rational chiral algebra $W$ with a category $\mathcal{C}$ of boundary topological line operators containing the bulk lines $\mathcal{B}(W)$. Assume further that there is a condensable algebra $A$ in $\mathcal{B}(W)\subset \mathcal{C}$, which we recall defines a way to gauge noninvertible one-form symmetries in the bulk. We would like to determine the category of topological line operators of the theory after gauging $A$, which takes $W$ to a larger chiral algebra (more precisely, a conformal extension) $V\supset W$ with $\mathcal{B}(V)\cong \mathcal{B}(W)_A^{\mathrm{loc}}$. Our main claim is the following.
\begin{claim}{}{oneformgauging}
    Suppose $\mathcal{C}$ is a category of boundary topological line operators supported on a chiral algebra $W$, and $V$ is a chiral algebra obtained from $W$ by gauging a condensable algebra object $A$ of the bulk TQFT $\mathcal{B}(W)$ in the presence of its $W$-boundary. Then the category of boundary topological line operators of $V$ which is guaranteed by the structure of the gauging is $\mathcal{C}_A$, 
    the category of $A$-modules in $\mathcal{C}$, which has quantum dimension 
    \begin{align}
        \dim(\mathcal{C}_A)=\dim(\mathcal{C})/\dim(A), \ \ \ \ \dim(\mathcal{C})=\sum_{X\in\mathrm{Irr}(\mathcal{C})} \dim(X)^2.
    \end{align}In particular, if $\mathcal{C}\cong \Ver (W/T)$ for some conformal subalgebra $T\subset W$, then 
    \begin{align}
        \Ver (V/T)\cong \Ver (W/T)_A.
    \end{align}
\end{claim}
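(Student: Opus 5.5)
The argument splits into three pieces: a SymTFT argument identifying the lines with $A$-modules, a dimension count, and the specialization to $\mathcal{C}=\Ver(W/T)$. The key input is Claim \ref{cl:claim:verdef}, together with the SymTFT picture of Figure \ref{fig:SymTFTconformalembedding}, which says that gauging $A$ in half of the bulk $\mathcal{B}(W)$ produces an interface $\mathcal{I}_A$ with $\mathcal{I}_A\otimes W\text{-boundary}\cong V\text{-boundary}$.

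\textbf{Step 1: lines on $V$ are $A$-modules in $\mathcal{C}$.} I would realize the gauged system as the $W$-boundary with a mesh of $A$ inserted along a thin slab adjacent to it, and then collapse the slab onto the boundary. A line $M\in\mathcal{C}$ survives on the collapsed boundary exactly when the $A$-mesh can end consistently on it. This requires a junction $\mu:M\otimes A\to M$ obeying the associativity constraint drawn in Figure \ref{fig:Amodule}. The argument is the same one used to obtain $\Ver_{\mathcal{S}}(V/W)\cong\mathcal{C}_B$ in Figure \ref{fig:VerS(VW)}, and parallels \cite[Eq.\ (5.8)]{Huang:2021zvu}. Morphisms are junctions commuting with $\mu$, so the boundary lines form $\mathcal{C}_A$.

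For the tensor structure, the point is that $A$ comes from the bulk. The half-braiding of $\mathcal{B}(W)$ lines with lines of $\mathcal{C}$, obtained by passing them through the bulk, makes $A$ a commutative algebra in $Z(\mathcal{C})$. Hence $\mathcal{C}_A$ is a fusion category under $\otimes_A$, by the standard result for commutative algebras in the center. Its unit is $A$ itself, which corresponds to the identity line on $V$.

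\textbf{Step 2: consistency check.} The local modules $\mathcal{B}(W)_A^{\mathrm{loc}}\subset\mathcal{C}_A$ reproduce $\mathcal{B}(V)$. This is Equation \eqref{eqn:gaugingA}, so $\mathcal{C}_A\supset\mathcal{B}(V)$, as every category of lines on a $V$-boundary must.

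\textbf{Step 3: the dimension formula.} Consider the free-module functor $F(X)=X\otimes A$. It is left adjoint to the forgetful functor and satisfies $\dim_{\mathcal{C}_A}(X\otimes A)=\dim X$. I would use the standard formula $\dim(\mathcal{C}_A)=\dim(\mathcal{C})/\dim(A)$ for a connected étale algebra $A$ in $Z(\mathcal{C})$ (Davydov–Müger–Nikshych–Ostrik / Kirillov–Ostrik). To justify it, expand $\sum_{X}\dim(X)\,[F(X)]$ over simples of $\mathcal{C}_A$ and use Frobenius reciprocity. Each simple $M\in\mathcal{C}_A$ then appears with coefficient $\dim_{\mathcal{C}}(M)=\dim(A)\dim_{\mathcal{C}_A}(M)$. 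Comparing total dimensions gives the formula.

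\textbf{Step 4: the case $\mathcal{C}\cong\Ver(W/T)$.} Here $\mathcal{C}=\mathcal{B}(T)_{A'}$, where $A'=\Phi_T^{-1}(W)$. The algebra $\tilde A=\Phi_T^{-1}(V)$ is an extension of $A'$ in $\mathcal{B}(T)$. By the standard correspondence for iterated condensation (Kirillov–Ostrik, Davydov–Nikshych–Ostrik), $\tilde A$ corresponds to the condensable algebra $A$ in $\mathcal{B}(T)_{A'}^{\mathrm{loc}}=\mathcal{B}(W)$, and there is an equivalence $\mathcal{B}(T)_{\tilde A}\cong(\mathcal{B}(T)_{A'})_A$. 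Therefore $\Ver(V/T)\cong\Ver(W/T)_A$.

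The main obstacle is making this equivalence rigorous, specifically checking that the $A$-module structure in $\mathcal{C}=\mathcal{B}(T)_{A'}$ uses the half-braiding that comes from the bulk. If a direct check proves awkward, I would instead argue through the SymTFT: stacking the interfaces $\mathcal{I}_{A'}$ and $\mathcal{I}_A$ gives $\mathcal{I}_{\tilde A}$, and symmetry/subalgebra duality (Claim \ref{cl:claim:sym/subduality}) then identifies both sides with the lines on $V$ that commute with $T$.
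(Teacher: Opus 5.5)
This is essentially the paper's argument. The paper likewise identifies the lines on $V$ as those lines of $\mathcal{C}$ on which the bulk $A$-mesh can consistently end (Figure \ref{fig:Amodule}), and it defers the mathematics to \cite[Section 5.1]{Kong:2013aya}. Your dimension count and the iterated-condensation equivalence in Step 4 are standard facts that the paper leaves implicit. The paper uses $(\mathcal{B}(U)_B)_A\cong\mathcal{B}(U)_A$ itself when discussing consecutive embeddings, and since the category of $A$-modules does not depend on the half-braiding, the obstacle you flag only concerns the tensor structure.

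One correction to the setup of Step 1: the $A$-mesh must fill the whole bulk. Equivalently, the slab you collapse is the \emph{ungauged} $\mathcal{B}(W)$ slab between $W$ and $\mathcal{I}_A$. The one-sided condition $\mu:M\otimes A\to M$ relies on $A$-lines passing over $M$ through the bulk. A thin gauged slab collapsed onto the boundary is instead 2D zero-form gauging of $A$ on the $W$-boundary: it leaves the far bulk as $\mathcal{B}(W)$ and would produce bimodules ${}_A\mathcal{C}_A$ rather than $\mathcal{C}_A$.
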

\noindent Note that in the special case that $T=W$, then $\Ver (W/W)\cong \mathcal{B}(W)$ and we go over to the content of Claim \ref{cl:claim:verdef}, which says that $\Ver (V/W)\cong \mathcal{B}(W)_A$.

The mathematical proof is essentially an application of \cite[Section 5.1]{Kong:2013aya}. Physically, since the gauging to go from $W$ to $V$ is implemented by inserting a fine mesh of $A$ into the bulk/boundary system, the lines of the gauged theory are those lines in $\mathcal{C}$ on which the mesh of $A$ can end consistently. From Figure \ref{fig:Amodule}, we see that those lines are described by $A$-modules.

\paragraph{Behavior under passing to fixed points/gauging zero-form symmetries}

Suppose we are given a category $\mathcal{C}$ of topological lines supported on a chiral algebra $V$. (Again, we assume that $\mathcal{C}\supset \mathcal{B}(V)$.) What can we say about the topological line operators of a chiral algebra $W=V^H$ obtained by passing to the states of $V$ which are invariant with respect to a hypergroup $H$? Eventually it should be possible to analyze this situation using the framework developed in \cite{Carqueville:2018sld} (see also \cite{Carqueville:2021edn,Mulevicius:2020tgg,KNBalasubramanian:2025vum,Carqueville:2021dbv,Mulevicius:2022gce}) for performing generalized orbifolds of 3D TQFTs. In lieu of performing this more general analysis, we content ourselves here with considering the simpler situation that $H$ is a finite group. 

\begin{figure}
    \begin{center}
        \input{Figures/hactionfunctor}
        \caption{For each $h\in H$, there is a functor $h_\ast:\mathcal{C}\to\mathcal{C}$, with $\mathcal{C}$ the category of boundary line operators.}\label{fig:hactionfunctor}
    \end{center}
\end{figure}

Indeed, call $K=K_{\mathcal{C}}\sslash K_{\mathcal{B}(V)}$ the possibly noninvertible hypergroup induced by the topological lines $\mathcal{C}$, and consider a subhypergroup $H\subset K$ described by a finite group. In this situation, we expect to find an action of $H$ on $\mathcal{C}$ as described in \cite{Muger2010BraidedCrossedG} (see also \cite[Definition 2.3]{Galindo:2024qzg}). Let us sketch why this is so.

\begin{figure}
    \begin{center}
\input{Figures/T2gh}
\caption{The physical interpretation of the associativity natural isomorphisms $T^{g,h}_2(X)$.}\label{fig:T2gh}
    \end{center}
\end{figure}

First, for each $h\in H$, we obtain a functor 
\begin{align}
    h_\ast :\mathcal{C}\to \mathcal{C},
\end{align}
which acts as in Figure \ref{fig:hactionfunctor}. The functor assigns to $X$ in $\mathcal{C}$ the boundary topological line operator $h_\ast(X)$ obtained by wrapping $\mathcal{S}_h$ into a half cylinder surrounding $X$ and shrinking it. 

Additionally, there should exist natural isomorphisms 
\begin{align}
    T_2^{g,h}(X):(gh)_\ast(X)\to g_\ast(h_\ast(X)).
\end{align}
These natural isomorphisms can be encoded in phases $\omega_X(g,h)$ which arise when one fuses two surfaces $\mathcal{S}_g$ and $\mathcal{S}_h$ in the presence of the boundary topological line operator $X$, as in Figure \ref{fig:T2gh}. These natural isomorphisms are subject to coherence conditions \cite[Definition 2.3, Part (ii)]{Galindo:2024qzg} which encode the associativity of fusing three surfaces in the two different orders. 

There should also exist a tensor structure on each of the functors $h_\ast$, i.e.\ 
natural isomorphisms 
\begin{align}
    \tau^h_{X,Y}:h_\ast(X\otimes Y)\to h_\ast(X)\otimes h_\ast(Y),
\end{align}
which are subject to the coherence conditions in \cite[Definition 2.3, Part (iii)]{Galindo:2024qzg} that have straightforward physical interpretations. These natural isomorphisms can be encoded in numbers $[\eta_{X,Y}^{Z;h}]_{xy}$ which express what happens when one drags a surface $\mathcal{S}_h$ terminating on the boundary through a topological point junction from $X\otimes Y$ to $Z$, as in Figure \ref{fig:tauh}.

\begin{figure}
\begin{center}
\input{Figures/tauh}
\caption{The physical interpretation of the tensor structure natural isomorphisms $\tau^h_{X,Y}$.}\label{fig:tauh}
\end{center}
\end{figure}

Whenever a fusion category $\mathcal{C}$ admits an action by a group $H$, there is a procedure for obtaining a new fusion category $\mathcal{C}^H$ known as equivariantization, see e.g.\ \cite{burciu2013fusion} for a helpful reference. 
\begin{definition}[Equivariantization]\label{def:equivariantization}
    The $H$-equivariantization of $\mathcal{C}$, which we denote $\mathcal{C}^H$, by definition consists of pairs $(X,\varphi)$, where $X$ is a (not necessarily simple) object of $\CC$, and $\varphi$ is a family of isomorphisms $\varphi_h:h_\ast(X)\to X$ for each $h\in H$ satisfying 
    \begin{align}\label{eqn:equivariantizationCondition}
        \varphi_{gh} = \varphi_{g}\circ g_\ast(\varphi_h)\circ T_2^{g,h}.
    \end{align} 
    See \cite[Definition 2.5]{Galindo:2024qzg} for the full definition. 
\end{definition}
A useful fact is that 
\begin{align}
    \dim(\CC^H)=|H|\dim(\CC).
\end{align}
We will interpret this procedure physically shortly.

\begin{example}{$\mathbb{Z}_2$-equivariantization}{}
    The simplest example is equivariantization with respect to a $\mathbb{Z}_2$ symmetry. Indeed, call $h$ the generator of this $\mathbb{Z}_2$. If a simple object $X$ is acted upon non-trivially by $h$ (i.e.\ if $h_\ast(X)$ is not the same topological line as $X$), then $X$ and $h_\ast(X)$ merge into a single object in the equivariantization,
    \begin{align}
        (X\oplus h_\ast(X),\varphi),
    \end{align}
     with $\varphi_h=\mathrm{id}_{h_\ast(X)}\oplus T_2^{h,h}(X)$. On the other hand, if $X$ is stabilized by the action of $h$, then $X$ splits into two objects in the equivariantization,
    \begin{align}
        (X,\varphi^{\pm}),
    \end{align} 
     see \cite[Example 2.6]{Galindo:2024qzg} for details.
\end{example}

Now, given a strongly rational vertex operator algebra $V$ with an action of a finite group $H$ by symmetries, and assuming that $V^H$ is strongly rational, it is known that the representation category of the fixed-point vertex operator subalgebra $V^H$ is given by applying equivariantization to the fusion category $\mathrm{TwRep}_H(V)$ of $H$-twisted modules of $V$. That is,
\begin{align}\label{eqn:Rep(VH)}
    \Rep(V^H) = \mathrm{TwRep}_H(V)^H.
\end{align}
Alternatively, one may say that the bulk TQFT $\mathcal{B}(V^H)$ is obtained by gauging the $H$ zero-form symmetry induced on $\mathcal{B}(V)$ by the action of $H$ on $V$. We claim that this statement can be generalized as follows.
\begin{claim}{}{claim:0gauging}
    Suppose that $V$ supports a fusion category $\mathcal{C}$ of topological line operators which contains the lines $\mathcal{B}(V)$ coming from the bulk. Suppose further that $H$ is a finite group arising inside of the (possibly noninvertible) hypergroup $K=K_{\mathcal{C}}\sslash K_{\mathcal{B}(V)}$ induced by $\mathcal{C}$, i.e.\ $H\subset K$. Then $\mathcal{C}$ admits an action of $H$, and the equivariantization $\mathcal{C}^H$ describes a fusion category of topological line operators supported on the chiral algebra $V^H$ obtained by passing to $H$ fixed-points of $V$. In particular, thinking of $\mathcal{C}=\Ver(V/V^K)$ as being obtained from the conformal embedding $V^K\subset V$, one has that 
    \begin{align}\label{eqn:Hequiv}
        \Ver(V^H/V^K)=\Ver(V/V^K)^H.
    \end{align}
    Equation \eqref{eqn:Rep(VH)} corresponds to the special case that $K=H$. 
\end{claim}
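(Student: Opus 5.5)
The plan is to reduce the claim to known statements about equivariantization and to symmetry/subalgebra duality (Claim \ref{cl:claim:sym/subduality}), organized around the chain of conformal embeddings $V^K\subset V^H\subset V$. The proof has three parts: construct the $H$-action on $\mathcal{C}$, show $\mathcal{C}^H$ acts on $V^H$, and identify it with $\Ver(V^H/V^K)$.

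\emph{Step 1: the $H$-action.} Since $H\subset K$ is group-like, each $h\in H$ labels a simple summand $\mathcal{S}_h$ of $\mathcal{S}=\mathcal{I}^\ast\otimes\mathcal{I}$. The structure constants satisfy $P_{h,h^{-1}}^{1}=1$, so $\mathcal{S}_h\otimes\mathcal{S}_{h^{-1}}\cong\mathds{1}$ and $\mathcal{S}_h$ is an invertible surface. The half-cylinder construction of Figure \ref{fig:hactionfunctor} then defines autoequivalences $h_\ast$ of $\mathcal{C}$. The tensorators $\tau^h$ and the compositors $T_2^{g,h}$ are supplied by the $T$-junction of Figure \ref{fig:Tjunction}, restricted to the components ${_{gh}}T_{g,h}$. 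Their coherence follows from the associativity of the $T$-move (Figure \ref{fig:Tmove}). Algebraically, I would view $V^H\subset V$ as an $H$-fixed-point embedding and use the known result that $\mathrm{TwRep}_H(V)$ carries an $H$-action. I would then argue that this action extends from the $H$-graded part $\bigoplus_{h}\mathcal{B}(V)_h$ to all of $\mathcal{C}$ by conjugation with the invertible lines $L_h$.

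\emph{Step 2: $\mathcal{C}^H$ acts on $V^H$.} Passing from $V$ to $V^H$ is the inverse of a one-form gauging. Concretely, $V$ is recovered from $V^H$ by condensing $A_H=\Phi^{-1}_{V^H}(V)\in\mathcal{B}(V^H)$, which is the regular algebra $\mathrm{Fun}(H)$ under the equivariantization $\mathcal{B}(V^H)\cong\mathrm{TwRep}_H(V)^H$ of \eqref{eqn:Rep(VH)}. By Claim \ref{cl:oneformgauging} applied to $W=V^H$, we have $\Ver(V/V^K)\cong\Ver(V^H/V^K)_{A_H}$. The standard de-equivariantization/equivariantization duality gives $(\mathcal{D}^H)_{\mathrm{Fun}(H)}\cong\mathcal{D}$ and, conversely, identifies $\mathcal{D}^H$ with any category $\mathcal{E}$ containing $\mathrm{Rep}(H)$ centrally (here via $\mathcal{B}(V^H)$) such that $\mathcal{E}_{\mathrm{Fun}(H)}\cong\mathcal{D}$ $H$-equivariantly (Drinfeld--Gelaki--Nikshych--Ostrik). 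Applying this with $\mathcal{E}=\Ver(V^H/V^K)$ and $\mathcal{D}=\mathcal{C}$ gives \eqref{eqn:Hequiv}, provided the $H$-action induced on $\mathcal{E}_{A_H}$ agrees with the one built in Step 1. As a consistency check on dimensions, $\dim\Ver(V^H/V^K)=\dim\mathcal{C}\cdot|H|$, because $\dim(A_H)=|H|$.

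\emph{Main obstacle.} The hardest step is the hypothesis of the DGNO duality: showing that $\mathrm{Rep}(H)\subset\mathcal{B}(V^H)\subset\Ver(V^H/V^K)$ lands in a way compatible with half-braidings, so that de-equivariantization is defined, and that the two $H$-actions on $\mathcal{C}$ coincide. My first approach would be to realize both actions geometrically through the invertible surfaces $\mathcal{S}_h$ in the SymTFT of Figure \ref{fig:SymTFTconformalembedding}, with $W=V^K$. The key fact is that $\mathcal{B}(V^H)$ is the centralizer-type subcategory $Z^{\overline{\mathcal{B}(V)}}$ restricted to the $H$-part. From there, I would check that the canonical $H$-action of de-equivariantization on $\mathrm{Fun}(H)$-modules is implemented by the same half-cylinders of $\mathcal{S}_h$. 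The special case $K=H$ then recovers \eqref{eqn:Rep(VH)} as a sanity check.
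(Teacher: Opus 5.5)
Your proposal is correct, but it follows a different route from the paper's.

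\textbf{The paper's route.} The argument is direct and physical. It builds the $H$-action from half-cylinders of the invertible surfaces $\mathcal{S}_h$ (Figures \ref{fig:hactionfunctor}, \ref{fig:T2gh}, \ref{fig:tauh}). It then performs a zero-form gauging of $H$ by inserting a mesh of $\bigoplus_h\mathcal{S}_h$ near the boundary. The lines that survive are pairs $(X,\varphi)$: the $\varphi_h$ are the junctions where $\mathcal{S}_h$ meets $X$, and the equivariance condition encodes that surfaces can be fused freely. From this the paper reads off $\Ver(V^H/V^K)$.

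\textbf{Your route.} You run the inverse operation along the chain $V^K\subset V^H\subset V$. Claim \ref{cl:oneformgauging}, applied with $W=V^H$, $T=V^K$ and $A=A_H\cong\mathrm{Fun}(H)$, gives $\Ver(V/V^K)\cong\Ver(V^H/V^K)_{A_H}$. The DGNO correspondence between fusion categories over $\Rep(H)$ and fusion categories with $H$-action then inverts this.

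\textbf{What each buys.} Your version is an algebraic proof resting only on Claim \ref{cl:oneformgauging}, Equation \eqref{eqn:Rep(VH)} and standard results. It also makes the existence of an $H$-action automatic: the canonical de-equivariantization action, transported along the equivalence, is all the claim as stated requires. It is the converse of the paper's own ``undoing'' remark $(\mathcal{C}^H)_{A_{\Rep(H)}}\cong\mathcal{C}$. What it does not give is that this action coincides with the geometric half-cylinder action. The paper's picture supplies that, though only heuristically.

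\textbf{Your ``main obstacle''.} Only the matching of the two actions actually remains. The centrality hypothesis is already covered by the paper's remark that the bulk-to-wall functor $\mathcal{B}(V^H)\to\Ver(V^H/V^K)$ is central. Its lift lands in the $\overline{\mathcal{B}(V^H)}$ factor of $Z(\Ver(V^H/V^K))$. On the Tannakian, hence symmetric, subcategory $\Rep(H)$ the braiding reversal is harmless.

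\textbf{Minor points in Step 1.}
\begin{itemize}
\item The $L_h$ are the lines along which $\mathcal{S}_h$ ends on the boundary, not objects of $\mathcal{C}$. A group-like graded component need not contain any invertible line; for example, $\mathcal{N}_\pm$ in the Ising example have dimension $\sqrt{2}$. The half-cylinder already defines $h_\ast$ on every $X\in\mathcal{C}$, so no ``extension'' step is needed.
\item The tensorators $\tau^h$ come from dragging $\mathcal{S}_h$ through point junctions, not from the $T$-junction.
\end{itemize}
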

We have already argued that $\mathcal{C}$ admits an action by $H$, so all that is left to do is to argue that the equivariantization $\mathcal{C}^H$ describes a category of topological line operators supported on the chiral algebra $V^H$. The basic idea is that the passage from $V$ to $V^H$ is implemented by performing a gauging of the $H$ zero-form symmetry in the bulk $\mathcal{B}(V)$ in the presence of the boundary defined by $V$. Gauging a zero-form symmetry corresponds to inserting a mesh of the surface $\widetilde{\mathcal{S}}=\bigoplus_{h\in H}\mathcal{S}_h$, so determining the boundary lines in the gauged theory amounts to imposing that this mesh be able to consistently end on lines in $\mathcal{C}$. 

This is precisely what equivariantization achieves. The line $(X,\varphi)$ can be thought of as a boundary line $X\in\mathcal{C}$ equipped with a choice of topological point junction $\varphi_h$ describing how the surface $\mathcal{S}_h$ intersects $X$ on the boundary. The condition \eqref{eqn:equivariantizationCondition} in the definition of equivariantization amounts to imposing that surfaces can be freely fused on the boundary with impunity.

Before moving on, we note that one can always ``undo'' topological manipulations \cite{Vafa:1989ih,Gaiotto:2020iye,Diatlyk:2023fwf}, even in the relative setting. For example, suppose we are in the setting of Claim \ref{cl:claim:0gauging} and are interested in ``undoing'' the $H$ zero-form gauging of the bulk/boundary theory $\mathcal{B}(V)/V$. The category $\mathcal{B}(V^H)$ which describes anyons in the bulk TQFT after gauging the $H$ zero-form symmetry possesses objects $\mathcal{L}_\rho\sim (\mathds{1},\rho)$ for each irreducible representation $\rho$ of $H$. (The irreducible representations $\rho$ parametrize the different $H$-equivariant structures on the identity line.) These lines enjoy the fusion rules of the fusion category $\Rep(H)$, and moreover 
\begin{align}
    A_{\Rep(H)}=\bigoplus_\rho \dim(\rho)\cdot  \mathcal{L}_{\rho}
\end{align}
admits the structure of a condensable algebra in $\mathcal{B}(V^H)$. If one gauges the noninvertible one-form symmetry prescribed by $A$ in the bulk/boundary theory $\mathcal{B}(V^H)/V^H$, then one goes back to the original chiral algebra $V$. Moreover, if one keeps track of what happens at the level of boundary topological line operators, one finds that
\begin{align}
\begin{split}
\text{Bulk TQFT: }& \mathcal{B}(V)\xrightarrow{\text{gauge } H^{(0)}} \mathcal{B}(V^H) \xrightarrow{\text{gauge }A_{\mathrm{Rep}(H)}^{(1)}} \mathcal{B}(V) \\
\text{Boundary chiral algebra: }& V\xrightarrow{\text{gauge } H^{(0)}} V^H \xrightarrow{\text{gauge }A_{\mathrm{Rep}(H)}^{(1)}}V  \\ 
 \text{Boundary lines: }&   \mathcal{C} \xrightarrow{\text{gauge } H^{(0)}}\mathcal{C}^H \xrightarrow{\text{gauge }A_{\mathrm{Rep}(H)}^{(1)}}(\mathcal{C}^H)_{A_{\Rep(H)}} \cong \mathcal{C}.
 \end{split}
\end{align}
That is, by applying Claim \ref{cl:claim:0gauging} and Claim \ref{cl:oneformgauging} sequentially, one recovers the category $\mathcal{C}$ of topological line operators on $V$ that one started with. 
\subsection{Symmetry-resolved partition functions}\label{subsec:symmetryresolvedpf}

Let us now briefly sketch a theory of symmetry-resolved partition functions for relative 2D CFTs and chiral algebras, generalizing the treatment in \cite{Lin:2022dhv,Choi:2024tri} for ordinary CFTs. The symmetry-resolved partition functions of a chiral algebra $V$ might also be called its ``twisted characters.''

To set the stage, recall that, given a representation $R$ of a finite group $G$, there are two ways to encode its data. One may either keep track of the non-negative integers $a_i$ arising in the decomposition of $R$ into irreducible representations $R_i$, 
\begin{align}
    R\cong \bigoplus_i a_i  R_i,
\end{align}
or one may keep track of the traces of group elements over  $R$ (i.e.\ the characters), 
\begin{align}
    \chi_R(g) \equiv \mathrm{Tr}_R (g).
\end{align}
Of course, one may go back and forth between the $\{a_i\}$ and the class function $\chi_R$ if one has knowledge of the irreducible characters $\chi_{R_i}(g)$ of $G$, e.g.\ 
\begin{align}\label{eqn:finitegroupcharacterrelation}
    \chi_R(g) = \sum_i a_i \chi_{R_i}(g).
\end{align}
Following the terminology of \cite{Choi:2024tri}, we will refer to the $a_i$ as presenting $R$ in the ``representation basis'' and the function $\chi_R$ as presenting $R$ in the ``symmetry basis''. Orthogonality relations enjoyed by the $\chi_{R_i}$ then allow one to invert Equation \eqref{eqn:finitegroupcharacterrelation}.

We can apply this way of thinking to the extended Hilbert space $\mathbb{V}_\CC$ in Equation \eqref{eqn:extendedHilbertspace}. Recall that this Hilbert space is acted on by the dome algebra $\mathrm{Dome}_{\mathcal{B}(V)}(\CC)$. By virtue of the decomposition in Equation \eqref{eqn:SchurWeyl}, we see that the ordinary modules $W_\mu$ of $W$ precisely give the representation basis presentation of $\mathbb{V}_\CC$. That is, if we grade $W_\mu$ by $L_0$ eigenvalues, 
\begin{align}
    W_\mu = \bigoplus_{h} W_{\mu,h},
\end{align}
then $\dim(W_{\mu,h})$ gives the multiplicity with which the irreducible representation $J^\mu$ of the dome algebra appears in the extended Hilbert space $\mathbb{V}_\CC$ at conformal dimension $h$. Thus, the characters of $W$, 
\begin{align}
    \mathbf{Z}_\mu(\tau) \equiv \mathrm{Tr}_{W_\mu}q^{L_0-c/24}, \ \ \ \ \ \  q=e^{2\pi i\tau}, \ \mu\in \Rep(W),
\end{align}
are the symmetry-resolved partition functions of $V$ in the representation basis. 

Alternatively, we may present the data of the action of the dome algebra on $\mathbb{V}_\CC$ in the symmetry basis,
\begin{align}\label{eqn:symmetrybasispf}
    Z_X^{k,v}(\tau) = \mathrm{Tr}_{V_X}\mathsf{D}_{X,k}^{X,v}q^{L_0-c/24}=~\input{Figures/symmetrybasispf}~, \ \ \ \ \ \ X\in\CC, \ k\in K, \ q=e^{2\pi i \tau}.
\end{align}
The trace with $\mathsf{D}_{X,k}^{X,v}$ inserted can be represented as a partition function (with topological point, line, and surface insertions as in Equation \eqref{eqn:symmetrybasispf}) of the 3D TQFT $(\mathcal{B}(V),c(V))$ on a solid torus with the boundary condition $(V,\Phi_V)$ imposed and with boundary complex structure $\tau$.

As in the case of finite groups, one can interpolate between these two bases using character theory. In particular, extending the arguments of \cite{Choi:2024tri} mutatis mutandis, one finds that 
\begin{align}
    Z_{X}^{k,v}(\tau) = \sum_{\mu\in\Rep(W)} [\chi_\mu]_X^{k,v}\mathbf{Z}_\mu(\tau)
\end{align}
where $[\chi_\mu]_X^{k,v}$ is a character of the dome algebra which may be represented as in Figure \ref{fig:domecharacters}. 

\begin{figure}
    \begin{center}
        \input{Figures/domecharacters}
        \caption{The character $[\chi_\mu]_X^{k,v}$ can be represented as the $S^2\times S^1$ partition function obtained by gluing together the two solid tori on the left along their common boundary (making sure to glue $X$ to $X$ and $L_k$ to $L_k$). Alternatively, $[\chi_\mu]_{X}^{k,v}$ can be reduced to evaluating configurations like those on the right (i.e.\ determining the multiple of the identity operator obtained when one shrinks the configuration down to a point operator).}\label{fig:domecharacters}
    \end{center}
\end{figure}

\begin{example}{}{twisteddim}
In this paper, we will mainly be interested in computing just the graded-dimension of the twisted-sector Hilbert space $V_X$, i.e.\ without any insertions of domes into the trace,
\begin{align}\label{eqn:twistedpf}
    Z_X(\tau) = \mathrm{Tr}_{V_X}q^{L_0-c/24}
 = ~\input{Figures/gradeddimension}~. \end{align}
Equation \eqref{eqn:twisteddecomposition} tells us that this can be expanded into ordinary characters $\mathbf{Z}_\mu(\tau) = \mathrm{Tr}_{W_\mu} q^{L_0-c/24}$ of $W$ as
\begin{align}\label{eqn:twistedpfB}
    Z_X(\tau) = \sum_{\mu\in\Rep(W)}B_{X\mu} \mathbf{Z}_\mu(\tau)
\end{align}
where $B_{X\mu}\equiv \dim J^\mu_X$ is the dimension of the Hilbert space of topological point junctions between the bulk line $\mu \in \mathcal{B}(W)$ and the interface line $X\in\CC$ (cf.\ Figure \ref{fig:SymTFTtwistedops}). We will explain practical methods for computing these integers $B_{X\mu}$ shortly. 
\end{example}

\subsection{Galois theory}\label{subsec:Galois}

As we have been discussing, there is a close relationship between the conformal subalgebras of a chiral algebra and its fusion categories of topological line operators. In this subsection, we will show how this relationship can be strengthened to a Galois correspondence, in a similar spirit as previous work in the conformal net and vertex operator algebra literatures \cite{dong1997quantum,dong2013quantum,Bischoff:2016jmy,Dong:2025ttr} (see also \cite{Moller:2024xtt} for a discussion in the setting of holomorphic vertex operator algebras).

Before we turn to vertex operator algebras, let us briefly review the original example of a Galois correspondence. Given a field extension $E/F$, one is interested in studying the intermediate fields $K$ which sit between $E$ and $F$, i.e.\ which satisfy $F\subset K \subset E$. In favorable circumstances, (namely, when the field extension $E/F$ is finite and Galois), the intermediate fields are controlled by the Galois group $\mathrm{Gal}(E/F)$, which by definition consists of the automorphisms of $E$ which fix $F$ point-wise. In particular, there is a one-to-one correspondence between intermediate fields $K$ and subgroups $H$ of the Galois group,
\begin{align}
\begin{split}
    H\mapsto E^H&=\{e\in E\mid \phi(e)=e, \forall \phi \in H\},  \\
    K \mapsto \mathrm{Aut}(E/K)&=\{\phi\in \mathrm{Aut}(E)\mid \phi(k)=k, \forall k\in K\}.
\end{split}
\end{align}
This correspondence satisfies a number of useful properties. For example, it is inclusion-reversing, meaning that if $H\subset H'$ then $E^{H}\supset E^{H'}$. Also, the degrees of the field extensions entering the correspondence are controlled by group theory, in the sense that 
\begin{align}
    |H|=[E:E^H], \ \ \ \ \  |\mathrm{Gal}(E/F)|/|H|=[E^H:F],
\end{align}
where e.g.\ $[E:F]$ is defined to be the dimension of $E$ thought of as a vector space over $F$. 
Since the inception of this original Galois correspondence, the essential structure has been abstracted and found in many mathematical situations: for example, in the field of algebraic topology, there is a Galois connection between covering spaces of a path-connected topological space and subgroups of its fundamental group.

\begin{table}
\begin{center}
    \begin{tabular}{c|c}
    Galois theory & Vertex operator algebra analog \\\toprule
    Field $E$ & VOA $V$ \\\midrule
    Finite, Galois field extension $E/F$ & Conformal extension of strongly rational VOAs $V/W$ \\\midrule
    \multirow{2}{*}{Galois group $\mathrm{Gal}(E/F)$} & Verlinde category $\Ver (V/W)$ \\
    & or hypergroup $K=K_{\Ver (V/W)}\sslash K_{\mathcal{B}(V)}$\\\midrule
    \multirow{2}{*}{Subgroup $H<\mathrm{Gal}(E/F)$} & Fusion subcategory $\mathcal{C}\subset\Ver (V/W)$ with $\mathcal{C}\supset \mathcal{B}(V)$\\
    & or subhypergroup $L\subset K$ \\\midrule 
     \multirow{2}{*}{$|H|$} & $|\mathcal{C}|=\dim(\mathcal{C})/\dim(\Rep(V))$ \\
    & or $D(L)$ (Equation \eqref{eqn:weighthypergroup})\\\midrule
    \multirow{2}{*}{Fixed-point subfield $E^H$} & Fixed-point subalgebra $V^{\mathcal{C}}$ \\ & or subalgebra $V^L$ \\\midrule 
    Degree $[E:F]$ & Degree $[V:W]=\sqrt{\dim(\Rep(W))/\dim(\Rep(V))}$ 
    \end{tabular}
    \caption{The analogy between Galois theory of fields and Galois theory of vertex operator algebras. }\label{tab:Galoistheory}
\end{center}
\end{table}

By now, the analogy to vertex operator algebras should be clear. We summarize its basic features in Table \ref{tab:Galoistheory}. The key idea is that the role of finite Galois field extensions $E/F$ is played by conformal embeddings of strongly rational vertex operator algebras $V/W$, and the role of the Galois group $\mathrm{Gal}(E/F)$ should be played by the Verlinde fusion category $\Ver (V/W)$. The role of the degree of a field extension is played by 
\begin{align}
    [V:W]=\sqrt{\frac{\dim(\Rep(W))}{\dim(\Rep(V))}}=\dim(A)=\lim_{\tau\to 0}\frac{\mathrm{ch}_V(\tau)}{\mathrm{ch}_W(\tau)},
\end{align}
where $A$ is the condensable algebra in $\Rep(W)$ defined by the conformal extension $W\subset V$, and e.g.\ $\mathrm{ch}_V(\tau) = \mathrm{Tr}_V q^{L_0-c/24}$. Our convention for the global dimension of a fusion category $\CC$ is 
\begin{align}
    \dim(\mathcal{C})=\sum_{X\in\mathrm{Irr}(\CC)}\dim(X)^2.
\end{align} The role of subgroups of the Galois group is played by fusion subcategories $\mathcal{C}\subset\Ver (V/W)$ which contain $\mathcal{B}(V)$, and their ``order'' is defined to be 
\begin{align}
    |\mathcal{C}|=\dim(\mathcal{C})/\dim(\mathcal{B}(V)).
\end{align}
With these definitions in place, it is a matter of following one's nose. 

\begin{claim}{Noninvertible Galois theory for chiral algebras}{}
    The intermediate vertex operator algebras $U$ satisfying $W\subset U\subset V$ are in one-to-one correspondence with fusion subcategories $\mathcal{C}$ of $\Ver (V/W)$ which contain $\mathcal{B}(V)$, with the correspondence assigning 
\begin{align}\label{eqn:VOAgalois}
\begin{split}
   \mathcal{C}&\mapsto V^{\mathcal{C}}=\{\mathcal{O} \in V \mid \widehat{X}\cdot \mathcal{O}=\d_X \mathcal{O},~ \forall X\in\mathcal{C}\} \\
   U&\mapsto \Ver (V/U) = \{X\in\Ver (V/W)\mid \widehat{X}\cdot \mathcal{O}=\d_X\mathcal{O}, ~\forall \mathcal{O}\in U\}.
\end{split}
\end{align}
It is antitone in the sense that $U\subset U'$ if and only if  the corresponding fusion categories satisfy $\mathcal{C}'\subset \mathcal{C}$. Furthermore, 
\begin{align}
    |\mathcal{C}|=[V:V^{\mathcal{C}}], \ \ \ \ |\Ver (V/W)|/|\mathcal{C}|=[V^{\mathcal{C}}:W].
\end{align}
\end{claim}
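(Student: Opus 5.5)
The plan is to translate everything into the algebraic picture of Claim \ref{cl:claim:verdef}: $\Ver(V/W)\cong\mathcal{B}(W)_A$, with $A=\Phi_W^{-1}(V)$ the condensable algebra attached to $W\subset V$. We will then invoke the standard correspondence between intermediate extensions and subalgebras of a condensable algebra.

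\textbf{Step 1: intermediate VOAs as subalgebras.} Intermediate VOAs $W\subset U\subset V$ correspond bijectively to condensable subalgebras $A_U\subset A$ in $\mathcal{B}(W)$, where $A_U=\Phi_W^{-1}(U)$. This is the Kirillov--Ostrik / Huang--Kirillov--Lepowsky dictionary \cite{Kirillov:2001ti,Huang:2014ixa}. Each such $U$ is automatically strongly rational, since it is a finite extension of the strongly rational $W$ \cite{Creutzig:2017anl}.

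\textbf{Step 2: subalgebras as fusion subcategories.} Condensable subalgebras $A_U\subset A$ correspond to fusion subcategories $\mathcal{C}$ with $\mathcal{B}(V)=\mathcal{B}(W)_A^{\mathrm{loc}}\subset\mathcal{C}\subset\mathcal{B}(W)_A$. To build the map, factor the extension in two stages, $W\subset U\subset V$. Claim \ref{cl:claim:verdef} applied to $U\subset V$ identifies $\Ver(V/U)\cong\mathcal{B}(U)_{A'}$, where $A'$ is the condensable algebra of $V$ over $U$. Claim \ref{cl:oneformgauging} then gives
\begin{align}
\Ver(V/W)\cong\Ver(U/W)_{A'},
\end{align}
and inside this, $\mathcal{B}(U)_{A'}\subset \mathcal{B}(W)_{A}$ sits as the lines that commute with $U$.

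Conversely, given $\mathcal{C}$, Claim \ref{cl:claim:sym/subduality} produces $V^{\mathcal{C}}$, which is a rational conformal subalgebra. It contains $W$ because $\mathcal{C}\subset\Ver(V/W)$. Both composites are the identity by symmetry/subalgebra duality applied to the pair $(V,U)$ rather than $(V,W)$. Concretely:
\begin{itemize}
\item $V^{\Ver(V/U)}=U$, as in the discussion of $W=V^K$ around \cite[Theorem 7]{Rie22}.
\item $\Ver(V/V^{\mathcal{C}})=\mathcal{C}$.
\end{itemize}
Antitonicity is immediate from the definitions, since a larger set of lines has fewer transparent operators.

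\textbf{Step 3: dimension identities.} By Claim \ref{cl:claim:verdef} and Claim \ref{cl:oneformgauging},
\begin{align}
\dim\Ver(V/U)=\dim\mathcal{B}(U)/\dim A'.
\end{align}
Since $\mathcal{B}(U)$ is modular, the local-module formula \cite{Kirillov:2001ti} gives
\begin{align}
\dim\mathcal{B}(V)=\dim\mathcal{B}(U)/(\dim A')^2.
\end{align}
Dividing the two yields $|\mathcal{C}|=\dim A'=[V:U]$. The second identity follows from multiplicativity of the degree, $[V:W]=[V:U][U:W]$, which holds because $\dim A=\dim A'\cdot\dim A_U$, together with the first identity applied to $\mathcal{C}=\Ver(V/W)$.

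\textbf{Main obstacle.} The hard part is the injectivity statement $\Ver(V/V^{\mathcal{C}})=\mathcal{C}$. The issue is that a priori $\Ver(V/V^{\mathcal{C}})$ could be strictly larger than $\mathcal{C}$, meaning lines in $\Ver(V/W)\setminus\mathcal{C}$ might still fix all of $V^{\mathcal{C}}$. The first attempt is a dimension count. Write $U=V^{\mathcal{C}}$, so that $\mathcal{C}\subset\Ver(V/U)$. The SymTFT claim gives $Z^{\overline{\mathcal{B}(V)}}(\mathcal{C})\cong\mathcal{B}(U)$. For the Müger centralizer,
\begin{align}
\dim Z^{\overline{\mathcal{B}(V)}}(\mathcal{C})=\dim(\mathcal{C})^2/\dim\mathcal{B}(V),
\end{align}
so $\dim\mathcal{B}(U)=\dim(\mathcal{C})^2/\dim\mathcal{B}(V)$. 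Step 3 then gives $\dim\Ver(V/U)=\dim\mathcal{C}$. A fusion subcategory of equal global dimension must be the whole category, which forces equality.
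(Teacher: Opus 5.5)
Your Steps 1 and 3 agree with the paper. Intermediate $U$ correspond to condensable subalgebras $A_U\subset A$, and your forward map $U\mapsto \mathcal{B}(U)_{A'}\subset\mathcal{B}(W)_A$ is the map $B\mapsto(\mathcal{B}(W)_B^{\mathrm{loc}})_A$. The index formulas follow correctly from $|\Ver(V/U)|=\dim A'=[V:U]$ and multiplicativity of the degree; the paper gets them slightly more directly from Equation \eqref{eqn:dimver} and $[V:U]=\sqrt{\dim\Rep(U)/\dim\Rep(V)}$.

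The gap is in the bijection. In Step 2 you say both composites are the identity ``by symmetry/subalgebra duality.'' But Claim \ref{cl:claim:sym/subduality} is a stronger version of the statement being proved: restricted to subalgebras containing $W$, it \emph{is} the bijection, so citing it is circular. Your last paragraph correctly isolates the real content, $\Ver(V/V^{\mathcal{C}})=\mathcal{C}$, but the dimension count does not escape the circle. The paper reads off $Z^{\overline{\mathcal{B}(V)}}(\mathcal{C})\cong\Rep(V^{\mathcal{C}})$ from Figure \ref{fig:SymTFTconformalembedding}, i.e.\ by assuming that $\mathcal{C}$ is the \emph{full} category of lines on the interface $\mathcal{I}_A$ for $V^{\mathcal{C}}\subset V$. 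That assumption is precisely $\mathcal{C}=\Ver(V/V^{\mathcal{C}})$. Your own computation shows the two are equivalent: a proper inclusion $\mathcal{C}\subsetneq\Ver(V/V^{\mathcal{C}})$ would give $\dim Z^{\overline{\mathcal{B}(V)}}(\mathcal{C})<\dim\Rep(V^{\mathcal{C}})$. Stated without that claim, what you need is an Artin-type bound $[V:V^{\mathcal{C}}]\le|\mathcal{C}|$, and nothing in your argument provides it.

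The missing ingredient is a categorical surjectivity result, which the paper imports from \cite[Theorem 4.10]{davydov2013witt}. For the connected \'etale algebra $A$ in the non-degenerate category $\mathcal{B}(W)$, fusion subcategories $\mathcal{C}$ with $\mathcal{B}(W)_A^{\mathrm{loc}}\subset\mathcal{C}\subset\mathcal{B}(W)_A$ are in bijection with \'etale subalgebras $B\subset A$, each such $\mathcal{C}$ being $(\mathcal{B}(W)_B^{\mathrm{loc}})_A$. Combined with your Step 1, every $\mathcal{C}$ in question equals $\Ver(V/U)$ for some intermediate $U$, with no reference to fixed points. The easy identity $V^{\Ver(V/U)}=U$ (the \cite[Theorem 7]{Rie22} statement you already quote) then gives $V^{\mathcal{C}}=U$, hence $\Ver(V/V^{\mathcal{C}})=\mathcal{C}$. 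It also gives injectivity of $U\mapsto\Ver(V/U)$. With this input you need neither Claim \ref{cl:claim:sym/subduality} nor the SymTFT identification, and the dimension count becomes unnecessary.
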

Note that the proof of the formulae involving degrees of extensions uses Equation \eqref{eqn:dimver} below. The reason we must only consider those $\mathcal{C}$ which contain $\mathcal{B}(V)$ in this correspondence is that the categories obtained in the image of the second map in Equation \eqref{eqn:VOAgalois} always contain $\mathcal{B}(V)$. The proof essentially follows from \cite[Theorem 4.10]{davydov2013witt}, using the fact that, if $V$ is obtained from $W$ via a condensable algebra $A$ in $\mathcal{B}(W)$, then intermediate vertex operator algebras $U$ correspond to condensable subalgebras of $A$. 

It is curious that, while normal subgroups play a role in the Galois theory of field extensions, there does not seem to be a direct analog in the Galois theory of vertex operator algebras.

Note that fusion subcategories $\mathcal{C}\subset\Ver (V/W)$ which contain $\mathcal{B}(V)$ are in one-to-one correspondence with subhypergroups of $K=K_{\Ver (V/W)}\sslash K_{\mathcal{B}(V)}$ via the assignment $\mathcal{C}\mapsto K_{\mathcal{C}}\sslash K_{\mathcal{B}(V)}$. So one could equivalently formulate the Galois theory in terms of hypergroups instead of fusion categories. 

\begin{claim}{}{}
The intermediate vertex operator algebras $U$ satisfying $W\subset U\subset V$  are in one-to-one correspondence with subhypergroups $L$ of $K=K_{\Ver (V/W)}\sslash K_{\mathcal{B}(V)}$, with the correspondence assigning 
\begin{align}
    L\mapsto V^L, \ \ \ \ \ \ U\mapsto K_{\Ver (V/U)}\sslash K_{\mathcal{B}(V)}.
\end{align}
It is antitone in the sense that $U\subset U'$ if and only if the corresponding hypergroups satisfy $L'\subset L$. Furthermore, 
\begin{align}
    D(L)=[V:V^{L}], \ \ \ \ D(K)/D(L)=[V^L:W],
\end{align}
where $D(L)$ is the weight of the hypergroup $L$, defined in Equation \eqref{eqn:weighthypergroup}.
\end{claim}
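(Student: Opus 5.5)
The plan is to transport the preceding fusion-categorical Galois claim to hypergroups. The key intermediate statement is that the map $\mathcal{C}\mapsto L_{\mathcal{C}}:=K_{\mathcal{C}}\sslash K_{\mathcal{B}(V)}$ is an inclusion-preserving bijection between fusion subcategories $\mathcal{B}(V)\subset\mathcal{C}\subset\Ver(V/W)$ and subhypergroups of $K$. Granting this, we compose it with the bijection $\mathcal{C}\mapsto V^{\mathcal{C}}$, $U\mapsto\Ver(V/U)$ of the previous claim. Two things then remain to be checked: that $V^{L_{\mathcal{C}}}=V^{\mathcal{C}}$, and that $D(L_{\mathcal{C}})=|\mathcal{C}|$.

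For the bijection, we send a subhypergroup $L\subset K$ to the full subcategory $\mathcal{C}_L:=\bigoplus_{r_i\in L}\mathcal{B}(V)_{r_i}$, using the grading of Claim \ref{cl:cla:hypergroupgrading}. This subcategory is closed under duals because $L$ is closed under $r_i\mapsto r_{i^\ast}$. It is closed under fusion because of Equation \eqref{eqn:hypergroupgrading}: if $X\in\mathcal{B}(V)_{r_i}$ and $Y\in\mathcal{B}(V)_{r_j}$, then $X\otimes Y$ has a summand in $\mathcal{B}(V)_{r_k}$ only when $P_{ij}^k\neq0$, and that happens only for $r_k\in L$. It contains $\mathcal{B}(V)=\mathcal{B}(V)_{r_0}$ since $r_0\in L$.

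In the other direction, suppose $\mathcal{B}(V)\subset\mathcal{C}$. Then $\mathcal{C}$ is stable under fusion with $\mathcal{B}(V)$ on both sides, so it is a union of full double cosets, i.e.\ of graded components. Moreover, the set of such cosets closes under $\star$ because the supports of products agree (again Equation \eqref{eqn:hypergroupgrading}). Hence the two maps are mutually inverse and inclusion-preserving.

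For the fixed points, recall that a $K_{\mathcal{B}(V)}$-invariant operator is $\widehat{X}$-invariant (with the rescaled action) for all $X$ in a graded component precisely when it is invariant under the corresponding $r_i$. So $V^{\mathcal{C}}=V^{L_{\mathcal{C}}}$, and this agrees with the stated maps $L\mapsto V^L$ and $U\mapsto K_{\Ver(V/U)}\sslash K_{\mathcal{B}(V)}$. Antitonicity is then inherited from the previous claim.

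For the degree formula, we aim to show $D(L)=\dim(\mathcal{C}_L)/\dim(\mathcal{B}(V))$. Assuming the weight in Equation \eqref{eqn:weighthypergroup} is defined as $D(L)=\sum_{r\in L}1/P_{r r^\ast}^{r_0}$, as is standard for double coset hypergroups, we compute $P_{ii^\ast}^{0}$ from the grading. Take simple $X\in\mathcal{B}(V)_{r_i}$. The $r_0$-component of $X\otimes X^\ast$ is $\pi_0(X\otimes X^\ast)$, and its dimension equals $\dim\mathrm{Hom}$-weighted data computable via the module category structure. Concretely, $\mathcal{B}(V)_{r_i}$ is an indecomposable $\mathcal{B}(V)$-module category, and the internal End of $X$ gives $\dim\pi_0(X\otimes X^\ast)=\dim(\mathcal{B}(V))\,\dim(X)^2/\dim(\mathcal{B}(V)_{r_i})$, where $\dim(\mathcal{B}(V)_{r_i})=\sum_{Y}\dim(Y)^2$ over simples of the component. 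This yields $1/P_{ii^\ast}^0=\dim(\mathcal{B}(V)_{r_i})/\dim(\mathcal{B}(V))$.

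Summing over $r_i\in L$ gives $D(L)=|\mathcal{C}_L|=[V:V^L]$ by the previous claim. The relation $D(K)/D(L)=[V^L:W]$ then follows from multiplicativity of degrees, $[V:W]=[V:V^L][V^L:W]$.

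\textbf{Main obstacle.} The hard step is the formula for $\dim\pi_0(X\otimes X^\ast)$. It requires identifying $\pi_0(X\otimes X^\ast)$ with the internal End algebra $\underline{\mathrm{End}}(X)\in\mathcal{B}(V)$ of the module category $\mathcal{B}(V)_{r_i}$, together with the standard result that $\dim\underline{\mathrm{End}}(X)=\dim(\mathcal{B}(V))\dim(X)^2/\dim(\mathcal{M})$ for an indecomposable module category $\mathcal{M}$ with compatible dimension normalization. If this normalization proves awkward, a fallback is to use the bulk-surface picture: identify $\mathsf{d}(r_i)^2$ with $\dim(\mathcal{B}(V)_{r_i})/\dim(\mathcal{B}(V))$ directly from the dome evaluation, and combine it with Equation \eqref{eqn:rescaledhypergrouprelation}.
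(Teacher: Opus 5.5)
Your proposal is correct and follows the same route as the paper. The paper only observes that fusion subcategories $\mathcal{B}(V)\subset\mathcal{C}\subset\Ver(V/W)$ correspond to subhypergroups of $K$ via $\mathcal{C}\mapsto K_{\mathcal{C}}\sslash K_{\mathcal{B}(V)}$, and then transports the preceding fusion-categorical Galois claim; you additionally supply the weight identity $D(L)=\dim(\mathcal{C}_L)/\dim(\mathcal{B}(V))$ that the paper leaves implicit, and your derivation of $1/P^0_{ii^\ast}=\dim(\mathcal{B}(V)_{r_i})/\dim(\mathcal{B}(V))$ is sound, since it uses $\pi_0(X\otimes X^\ast)\cong\underline{\mathrm{End}}(X)$ together with the Frobenius--Perron eigenvector of the indecomposable module category $\mathcal{B}(V)_{r_i}$.
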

\noindent This formulation in terms of hypergroups most closely connects with existing results in the conformal net and VOA literature, as the following example demonstrates.

\begin{example}{}{}
    When $G$ is a finite group of automorphisms of $V$ and $W=V^G$, the hypergroup induced by the conformal embedding $V^G\subset V$ is simply $G$ itself. Thus, the formulation of Galois theory in terms of hypergroups says that the intermediate VOAs $V^G\subset U\subset V$ are in one-to-one correspondence with subgroups $H<G$, under the assignment $H\mapsto V^H$. Furthermore, the index of the embedding $|H|=[V:V^H]$ is simply given by the order of $H$. This is the original Galois theory of \cite{dong1997quantum,dong2013quantum}.
\end{example}

\subsection{Computational aids}\label{subsec:computationalaids}

Much of the discussion so far has been dedicated to developing abstract formalism. In this last subsection, we describe a number of constraints on $\Ver (V/W)$ and $K=K_{\Ver (V/W)}\sslash K_{\mathcal{B}(V)}$ which are helpful for carrying out concrete calculations. We will make heavy use of these constraints in Section \ref{sec:examples} and in follow-on work.

First, we note that by \cite[Corollary 3.30]{davydov2013witt}, the Drinfeld center of the Verlinde category $\Ver(V/W)$ is given by
\begin{align}\label{eqn:Drinfeldcenterverlinde}
    Z(\Ver (V/W))\cong \mathcal{B}(W)\boxtimes \overline{\mathcal{B}(V)} \cong \Rep(W)\boxtimes \overline{\Rep(V)}. 
\end{align}
To see this physically, note that by folding Figure \ref{fig:SymTFTconformalembedding} along the interface $\mathcal{I}_A$, we learn that $\Ver (V/W)$ is the category of topological line operators supported on a gapped boundary condition of $\mathcal{B}(W)\boxtimes\overline{\mathcal{B}(V)}$, and it is known that the Drinfeld center of a category of boundary lines recovers the MTC in the bulk. 

Thus, one immediately has access to the Drinfeld center of $\Ver (V/W)$ once one knows the representation theory of $V$ and $W$. In particular, the quantum dimension of $\Ver (V/W)$  follows as a corollary, 
\begin{align}\label{eqn:dimver}
    \dim(\Ver (V/W))^2=\dim(\Rep(W))\dim(\Rep(V)),
\end{align}
where we have used the fact that $\dim(Z(\mathcal{C}))=\dim(\mathcal{C})^2$ for a fusion category $\mathcal{C}$. See \cite[Equation (8)]{davydov2013witt} and references therein.

One also has access to the rank once one knows how $V$-modules decompose into $W$-modules (cf.\ Equation \eqref{eqn:Vmoddecomp}). Indeed, it follows from \cite{Ostrik:2001xnt} (see also Equation (4.3) of \cite{Rie22}) that 
\begin{align}\label{eqn:Brank}
    \mathrm{rank}(\Ver (V/W))=\sum_{a\in\mathcal{B}(V)}\sum_{\mu\in\mathcal{B}(W)}(B_{a\mu})^2.
\end{align}
We emphasize that the sum over $a$ is just over the category $\mathcal{B}(V)$, not over the larger category $\Ver (V/W)$. In fact, more is true: calling $\mathcal{K}_0(\Ver(V/W))$ the Grothendieck ring of $\Ver(V/W)$,  the complexification decomposes into a direct sum of matrix algebras à la Wedderburn-Artin as 
\begin{align}
    \mathcal{K}_0(\Ver(V/W))\otimes \mathbb{C} \cong \bigoplus_{a\in\mathcal{B}(V)} \bigoplus_{\mu\in\mathcal{B}(W)} \mathrm{Mat}_{\mathbb{C}}(B_{a\mu}),
\end{align}
where $\mathrm{Mat}_{\mathbb{C}}(n)$ is the algebra of $n\times n$ complex matrices \cite{Dong:2025ttr}.  In particular, $\Ver(V/W)$ has commutative fusion rules if and only if $B_{a\mu}\leq 1$ for all $a\in\mathcal{B}(V)$ and $\mu\in\mathcal{B}(W)$. 

To help with the computation of the matrix $B$ itself, we note that it must intertwine the modular data of $V$ and $W$, i.e.\ 
\begin{align}\label{eqn:intertwinemodular}
    \sum_{\mu\in\mathcal{B}(W)}B_{a\mu}S^W_{\mu\nu} = \sum_{b\in\mathcal{B}(V)} S^V_{ab}B_{b\nu}, \ \ \ \ \sum_{\mu\in\mathcal{B}(W)}B_{a\mu}T^W_{\mu\nu} = \sum_{b\in\mathcal{B}(V)} T^V_{ab}B_{b\nu},
\end{align}
where $S^V,T^V$ are the modular S- and T-matrices, respectively, of $V$, and likewise for $S^W,T^W$. It also satisfies 
\begin{align}
    \mathrm{Tr}_{V_a}q^{L_0-c/24} = \sum_{\mu\in \mathcal{B}(W)}B_{a\mu} \mathrm{Tr}_{W_\mu}q^{L_0-c/24}, \ \ \ \ \ \ a\in\mathcal{B}(V),
\end{align}
and so it is strongly constrained if one knows the characters of $V$ and $W$.

To gain more detailed information about $\Ver (V/W)$, we note that there is an induction functor, defined as
\begin{align}\label{eqn:induction}
\begin{split}
I:\mathcal{B}(W)&\to \mathcal{B}(W)_A=\Ver (V/W)\\
\mu &\mapsto (\mu\otimes A,  \mathrm{id}_\mu \otimes m),
\end{split}
\end{align}
where $m:A\otimes A\to A$ is the multiplication morphism on $A$.
That is, $I(\mu)$ is the $A$-module which, as an object, is isomorphic to $A\otimes \mu$, and whose module map $I(\mu)\otimes A \to I(\mu)$ is given by $\mathrm{id}_\mu \otimes m$.

Physically, we can think of $I$ as a  ``bulk-to-wall'' map which describes the effect of taking a line in the bulk $\mathcal{B}(W)$ and pushing it onto the interface $\mathcal{I}_A$, as in Figure \ref{fig:induction}. We note that, unlike the bulk-to-wall map $\mathcal{B}(V)\to\Ver (V/W)$, simple lines in $\mathcal{B}(W)$ can and often do become non-simple when brought to the interface $\mathcal{I}_A$.

\begin{figure}
    \begin{center}
\input{Figures/induction}
\caption{The induction functor $I:\mathcal{B}(W)\to \Ver (V/W)$ is a bulk-to-wall map which describes the result of bringing a bulk topological line in $\mathcal{B}(W)$ to the interface $\mathcal{I}_A$.}\label{fig:induction}
    \end{center}
\end{figure}

One useful feature of the induction functor $I$ is that every simple object of $\Ver (V/W)$ occurs as a subobject of $I(\mu)$ for some $\mu\in\mathcal{B}(W)$. Thus, we gain information about all the simples of $\Ver (V/W)$ by computing $I$. Moreover, it is a tensor functor, which means in particular that 
\begin{align}\label{eqn:inductiontensorfunctor}
    I(\mu)\otimes I(\nu) \cong I(\mu\otimes \nu), \ \ \ \ \ I(\mu^\ast)\cong I(\mu)^\ast.
\end{align}
Physically, fusing in the bulk and then pushing onto the interface is the same as pushing onto the interface and then fusing.
Thus, if we know the fusion rules of the chiral algebra $W$ (which are the same as the fusion rules of $\mathcal{B}(W)$), then we can access partial information about the fusion rules of $\Ver (V/W)$.

Induction also preserves quantum dimensions, 
\begin{align}
    \dim_{\mathcal{B}(W)}(\mu) = \dim_{\mathcal{B}(W)_A}(I(\mu)).
\end{align}
Indeed, the quantum dimension of a line can be thought of as its expectation value when it is wound into a circle, and it does not matter whether we take this expectation value before or after pushing onto the interface.

Another useful property is that induction enjoys Frobenius reciprocity with restriction. That is, if one decomposes $I(\mu)$ into a direct sum of simples in $\Ver (V/W)$, 
\begin{align}\label{eqn:Frobrepconcrete}
    I(\mu) \cong \bigoplus_{X\in\Ver (V/W)}B_{X\mu} X,
\end{align}
then the numbers $B_{X\mu}$ which arise are precisely equal to the multiplicities in the decomposition of $V_X$ into $W$-modules, Equation \eqref{eqn:twisteddecomposition}. A more abstract way to encode this same information is to write 
\begin{align}\label{eqn:Frobeniusreciprocity}
    \mathrm{Hom}_{\Ver (V/W)}(I(\mu),X)\cong \mathrm{Hom}_{\mathcal{B}(W)}(\mu,R(X)),
\end{align}
where $R:\Ver (V/W)\to \mathcal{B}(W)$ is the restriction functor, which assigns to $X\in\Ver (V/W)=\mathcal{B}(W)_A$ the corresponding underlying object in $\mathcal{B}(W)$ (i.e.\ forgetting its structure as an $A$-module). That is, 
\begin{align}\label{eqn:restriction}
\begin{split}
    R:\mathcal{B}(W)_A&\to \mathcal{B}(W) \\
    (M,\alpha)&\mapsto M,
\end{split}
\end{align}
where $\alpha:M\otimes A\to M$ is the module map on $M$.
Alternatively, restriction corresponds to taking a twisted module $V_X \in \mathrm{TwRep}_W(V)$ and thinking of it simply as an ordinary $W$-module. (Cf. Figure \ref{fig:physicalrestriction} for a physical picture of the restriction functor.) 

The physical derivation of Equation \eqref{eqn:Frobeniusreciprocity} is to observe that both the left- and right-hand sides can be thought of as the Hilbert space of point junctions between a bulk line $\mu\in\mathcal{B}(W)$ and a line $X$ on the interface $\mathcal{I}_A$. Indeed, the left-hand side of Equation \eqref{eqn:Frobeniusreciprocity} corresponds to viewing Figure \ref{fig:physinterpfrobrep} from right-to-left, while the right-hand side of Equation \eqref{eqn:Frobeniusreciprocity} corresponds to viewing Figure \ref{fig:physinterpfrobrep} from left-to-right.

Another interpretation of Frobenius reciprocity is that it asserts that the number of topological point junctions between $\mu$ and $X$  is equal to the number of times that the $W$-module $W_\mu$ appears in the twisted sector $V_X$, a conclusion we already came to by thinking about the physics of the SymTFT in Section \ref{subsec:twisteddome}. Often times induction is very computable, and so gives a window into the numbers $B_{X\mu}$ which are useful for a variety of purposes.

\begin{figure}
    \centering
    \input{Figures/physinterpfrobrep}
    \caption{The physical derivation of Frobenius reciprocity, Equation \eqref{eqn:Frobeniusreciprocity}.}\label{fig:physinterpfrobrep}
\end{figure}

A useful fact we note in passing about the relationship between restriction and induction is that 
\begin{align}\label{eqn:ResInd}
    R(I(\mu))\cong  \mu\otimes A.
\end{align}

In the case of consecutive embeddings $U\subset W\subset V$, there is a natural generalization of the induction functor,
\begin{align}
   {_U}I_W^V:\Ver (W/U)\to \Ver (V/U) .
\end{align}
Suppose that the extension $U\subset W$ is mediated by a condensable algebra $B$ in $\mathcal{B}(U)$, and similarly the extension $U\subset V$ corresponds to a condensable algebra $A$ in $\mathcal{B}(U)$. Then,  $\Ver (W/U)=\mathcal{B}(U)_B$ and $\Ver (V/U)= \mathcal{B}(U)_A=(\mathcal{B}(U)_B)_A$, where we use the fact that $A$ can naturally thought of as an  algebra in $\mathcal{B}(U)_B$ (see e.g.\ the discussion below Example 5.2 of \cite{Kong:2013aya}). It is then clear that one can simply replace $\mathcal{B}(W)$ with $\mathcal{B}(U)_B$ in Equation \eqref{eqn:induction} to define an induction functor in an identical way. 

The physical interpretation of ${_U}I_W^V$ is given in Figure \ref{fig:generalizedinduction}. One immediate consequence of this physical picture is that induction functors make the following diagram commute, 
\begin{center}
    \begin{tikzcd}
& \Ver (T/U) & \\
\Ver (W/U) \arrow[rr,"{_U}I_W^V"] \arrow[ru,"{_U}I_W^T"] & & \Ver (V/U)\,,\arrow[lu,"{_U}I_V^T",swap]  
\end{tikzcd}
\end{center}
where we have assumed a sequence $U\subset W\subset V\subset T$ of conformal embeddings. 

\begin{example}{}{}
Suppose that $W$ is obtained from $V$ as the fixed points of a finite group $G$ of automorphisms of $V$, i.e.\ $W=V^G$. Then by Claim \ref{cl:claim:0gauging}, we have that $\mathcal{B}(V^G)$ and $\Ver(V/V^G)$ are related by $G$-equivariantization,
\begin{align}
    \Ver(V/V^G)^G \cong \mathcal{B}(V^G),
\end{align}
so one can think of lines in $\mathcal{B}(V^G)$ as pairs $(X,\varphi)$, where $X$ is a $G$-invariant line in $\Ver(V/V^G)$ and $\varphi$ is a choice of equivariant structure (cf.\ Definition \ref{def:equivariantization}). In this situation, the induction functor $I:\mathcal{B}(V^G)\to \Ver(V/V^G)$ is simply given by
\begin{align}\label{eqn:inductionGequiv}
   I(X,\varphi)\cong X.
\end{align}
\end{example}

\begin{figure}
 \begin{center}
\input{Figures/generalizedinduction}
\caption{The generalized induction functor ${_U}I_W^V$ describes the result of bringing a line $\mu$ on the condensation interface between $\mathcal{B}(U)$ and $\mathcal{B}(W)$ to the condensation interface between $\mathcal{B}(W)$ and $\mathcal{B}(V)$ to produce a topological line ${_U}I_W^V(\mu)$ on the condensation interface between $\mathcal{B}(U)$ and $\mathcal{B}(V)$. }\label{fig:generalizedinduction}
 \end{center}
\end{figure}

The action of $\Ver (V/W)$ on the extended Hilbert space $\mathbb{V}_{\Ver(V/W)}$ in Equation \eqref{eqn:extendedHilbertspace} can also be constrained by induction. Indeed, for lines which can be pulled into the bulk $\mathcal{B}(W)$, one can straightforwardly see that 
\begin{align}\label{eqn:inducedaction}
    \widehat{I(\mu)}\cdot \mathcal{O} = \frac{S_{\mu\nu}^W}{S_{1\nu}^W}\mathcal{O}, \ \ \ \ \ (\mathcal{O}\in B_{a\nu}W_\nu\subset V_a)
\end{align}
where $S^W_{\mu\nu}$ is the modular S-matrix of $\mathcal{B}(W)$.

We also note that $\mathcal{B}(V)$ must be central in $\Ver (V/W)$, that is, 
\begin{align}
    a\otimes X \cong X \otimes a, \ \ \ \text{ for all }a\in\mathcal{B}(V), \ X\in\Ver (V/W).
\end{align}
This can be easily seen because the lines $a\in\mathcal{B}(V)$ can be pulled off the interface $\mathcal{I}_A$ and into the bulk, and then pushed back onto the interface on the other side of $X$. More abstractly, this is a consequence of the fact that the bulk-to-wall functor $\mathcal{B}(V)\to \Ver(V/W)$ is a central functor, and hence can be factored through the Drinfeld center, $\mathcal{B}(V)\to Z(\Ver(V/W))\to \Ver(V/W)$.

Finally, we note that all the standard constraints on fusion categories apply to $\Ver (V/W)$ as well, including e.g.\ associativity of the fusion rules, Frobenius reciprocity,
\begin{align}\label{eqn:Frobrep2}
    N_{ab}^c = N_{c\bar b}^a = N_{\bar a c}^b = N_{b\bar c}^{\bar a} = N_{\bar c a}^{\bar b},
\end{align}
quantum dimensions furnishing an eigenvector of the fusion rules,
\begin{align}
    \sum_{Z\in\Ver (V/W)}N_{XY}^Z \mathsf{d}_Z = \mathsf{d}_X\mathsf{d}_Y,
\end{align}
and so on and so forth. See e.g.\ \cite{etingof2015tensor} for a standard reference.

\section{Symmetries and Boundaries of Absolute CFTs}\label{sec:gluing}

In the previous section, we focused on symmetries of chiral algebras (or more generally, relative QFTs). However, one of the primary applications of chiral algebras is their use as building blocks of full rational conformal field theories which contain both left-movers and right-movers. In this section, we explain how the chiral considerations of Section \ref{sec:relativesymmetries} are relevant for computing symmetries and boundary conditions of such full rational conformal field theories.

\subsection{Setting up the problem}\label{subsec:settingup}

\begin{figure}
    \begin{center}
\input{Figures/combiningchiralsymmetries}
    \end{center}
    \caption{Top left: The Kapustin-Saulina representation of a CFT built on $V_L$ and $V_R$. Top right: Obtained from top left by applying the SymTFT construction of Figure \ref{fig:SymTFTconformalembedding}. Bottom: Obtained from top right by fusing the interfaces $\mathcal{I}_{A_L}$, $\mathcal{I}$, and $\overline{\mathcal{I}}_{A_R}$. The categories $\Ver(\cdot)$ are categories of topological lines supported on the interfaces which they point to.}\label{fig:combiningchiralsymmetries}
\end{figure}

To make the problem sharper, suppose one is studying a full rational conformal field theory $\mathcal{H}$ built on rational chiral algebras $V_L$ and $V_R$,
\begin{align}\label{eqn:fullCFTHilbertspace}
    \mathcal{H}\cong \bigoplus_{a\in\mathcal{B}(V_L)}\bigoplus_{b\in\mathcal{B}(V_R)}M_{ab} ~(V_L)_a\otimes \overline{(V_R)}_{b}, \ \ \ \ M_{ab}\in\mathbb{Z}_{\geq 0},
\end{align}
and that one has calculated the category $\Ver (\mathcal{H}/V_L\otimes\overline{V_R})$ of topological line operators that preserve the chiral algebras $V_L$ and $\overline{V_R}$. 
\begin{example}{Verlinde lines in diagonal rational CFTs}{}
    In the case that $V_L=V_R$ and $\mathcal{H}$ is the canonical diagonal modular invariant theory built on $V_L=V_R$, then $\Ver (\mathcal{H}/V_L\otimes\overline{V_R})$ is the usual category of Verlinde lines, which is equivalent as a fusion category to $\mathrm{Rep}(V_L)=\mathrm{Rep}(V_R)$.
\end{example} 
\begin{example}{Verlinde lines in general rational CFTs}{}
More generally, thinking of the Hilbert space $\mathcal{H}$ as a Lagrangian algebra object in $\Rep(V_L)\boxtimes\overline{\Rep(V_R)}$, the category $\Ver (\mathcal{H}/V_L\otimes\overline{V}_R)$ is equivalent to the category of $\mathcal{H}$-modules in $\Rep(V_L)\boxtimes\overline{\Rep(V_R)}$, 
\begin{align}
    \Ver (\mathcal{H}/V_L\otimes\overline{V}_R)\cong \big(\Rep(V_L)\boxtimes\overline{\Rep(V_R)}\big)_{\mathcal{H}}.
\end{align}
\end{example}

Suppose that one later realizes that $V_L$ and $V_R$ admit strongly rational conformal subalgebras $W_L\subset V_L$ and $W_R\subset V_R$. Then $W_L\otimes \overline{W_R}$ is a conformal subalgebra of the full operator algebra of $\mathcal{H}$,
\begin{align}
    W_L\otimes\overline{W_R}\subset V_L\otimes \overline{V_R}\subset\mathcal{H},
\end{align}
and one can compute the corresponding symmetry category $\Ver (\mathcal{H}/W_L\otimes \overline{W}_R)$ which preserves $W_L\otimes\overline{W_R}$. There are clearly more topological lines which preserve the chiral algebras $W_L\otimes\overline{W_R}$ than there are that preserve $V_L\otimes\overline{V_R}$,
\begin{align}
    \Ver (\mathcal{H}/W_L\otimes\overline{W_R})\supset \Ver (\mathcal{H}/V_L\otimes\overline{V_R}),
\end{align} 
since $W_L\otimes\overline{W_R}\subset V_L\otimes \overline{V_R}$. However, if one has already done the hard work of computing $\Ver (V_L/W_L)$ and $\Ver (V_R/W_R)$ using the ideas of the previous section,  one prefers not to compute $\Ver (\mathcal{H}/W_L\otimes\overline{W_R})$ from scratch, but rather to reconstruct this category from $\Ver (\mathcal{H}/V_L\otimes\overline{V_R})$ and the two chiral computations of $\Ver (V_L/W_L)$ and $\Ver (V_R/W_R)$. What is the relationship between the four categories $\Ver (\mathcal{H}/W_L\otimes\overline{W_R})$, $\Ver (\mathcal{H}/V_L\otimes\overline{V_R})$, $\Ver (V_L/W_L)$, and $\Ver (V_R/W_R)$?

Following Kapustin and Saulina \cite{Kapustin:2010if}, by thinking of the conformal field theory $\mathcal{H}$ as being built on top of the chiral algebras $V_L$ and $V_R$, we may represent it via the 3D background depicted in the top left of Figure \ref{fig:combiningchiralsymmetries}. Here, $\mathcal{I}$ is a topological interface between the topological quantum field theories $\mathcal{B}(V_L)$ and $\mathcal{B}(V_R)$ supporting $V_L$ and $V_R$ on their boundaries. Its role is to make it possible to lay out the chiral algebras along with their bulks on an interval geometry, and hence to define a 2D CFT by dimensional reduction; operationally, $\mathcal{I}$ dictates how irreducible representations of $V_L$ and $V_R$ are glued together to produce a consistent modular invariant Hilbert space $\mathcal{H}$ and, as such, it encodes information like the pairing matrix $M_{ab}$ in Equation \eqref{eqn:fullCFTHilbertspace}. The category $\Ver (\mathcal{H}/V_L\otimes \overline{V_R})$ of topological line operators of $\mathcal{H}$ which commute with the $V_L\otimes\overline{V_R}$ subalgebra appear in this picture on the worldvolume of the topological interface $\mathcal{I}$. 

On the other hand, we may also think of $\mathcal{H}$ as a conformal field theory built on the smaller chiral algebras $W_L\subset V_L$ and $W_R\subset V_R$. Thus, after replacing $V_L$ with $W_L$ and $V_R$ with $W_R$, etc., one recovers an analogous presentation of the same 2D CFT $\mathcal{H}$, as depicted in the bottom of Figure \ref{fig:combiningchiralsymmetries}. This presentation of the theory exposes the larger symmetry category $\Ver (\mathcal{H}/W_L\otimes \overline{W_R})$ on the worldvolume of $\mathcal{I}'$. 

One can pass between these two presentations of the conformal field theory $\mathcal{H}$ by using the ``SymTFT'' picture of conformal embeddings depicted in Figure \ref{fig:SymTFTconformalembedding}. Indeed, if we apply this to the conformal embeddings $W_L\subset V_L$ and $W_R\subset V_R$ separately, we get a picture as in the top right of Figure \ref{fig:combiningchiralsymmetries}. Here, $A_L$ is the condensable algebra in $\mathcal{B}(W_L)$ defined by $V_L$, and similarly $A_R$ is the condensable algebra in $\mathcal{B}(W_R)$ defined by $V_R$.  If we then fuse the three topological interfaces $\mathcal{I}_{A_L}$, $\mathcal{I}$, and $\overline{\mathcal{I}}_{A_R}$ together, we recover the topological interface $\mathcal{I}'$, i.e.\ 
\begin{align}\label{eqn:relationinterfaces}
    \mathcal{I}'=\mathcal{I}_{A_L}\otimes \mathcal{I}\otimes \overline{\mathcal{I}}_{A_R}.
\end{align}
Recall that the two symmetry categories $\Ver (V_L/W_L)$ and $\Ver (V_R/W_R)$ appear in this picture as the category of line operators supported on the topological interfaces $\mathcal{I}_{A_L}$ and $\mathcal{I}_{A_R}$, respectively. 

Thus, the question we are asking is whether we can relate the category of topological line operators supported on the interface $\mathcal{I}'$ to the category of topological line operators supported on the interfaces $\mathcal{I}_{A_L}$, $\mathcal{I}$, and $\overline{\mathcal{I}}_{A_R}$ which recover $\mathcal{I}'$ when they are fused together.

\begin{figure}
    \begin{center}
\input{Figures/CDbimodule}
\caption{The category $\mathbf{L}(\mathcal{I})$ of topological lines supported on the interface $\mathcal{I}$ possesses the structure of a $(\mathcal{C},\mathcal{D})$-bimodule category.}\label{fig:CDbimodule}
    \end{center}
\end{figure}

\subsection{Gluing chiral symmetries using the relative Deligne product}\label{subsec:gluingrelative}

The answer is yes if we use the techniques developed in \cite{Huston:2022utd}. To state the relation, we first note the following.

\begin{claim}{}{}
The category $\mathbf{L}(\mathcal{I})$ of topological line operators supported on an interface $\mathcal{I}$ between two topological orders $\mathcal{C}$ and $\mathcal{D}$ is naturally a $(\mathcal{C},\mathcal{D})$-bimodule category, via the fusion of lines in the bulk onto lines supported on the interface, as in Figure \ref{fig:CDbimodule}.
\end{claim}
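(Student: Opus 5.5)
The plan is to build the bimodule structure directly from fusing bulk lines onto the interface, and then check the coherence axioms geometrically. Concretely, I would model $\mathbf{L}(\mathcal{I})$ as the category whose objects are topological lines on the interface $\mathcal{I}$ and whose morphisms are topological point junctions between them. This is the same convention used for $\mathrm{Sym}(V)$ and for $\Ver(V/W)$ in Claim \ref{cl:claim:verdef}.

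\textbf{Step 1: the actions.} Define
\begin{align}
    \triangleright:\mathcal{C}\times\mathbf{L}(\mathcal{I})\to\mathbf{L}(\mathcal{I}), \qquad \triangleleft:\mathbf{L}(\mathcal{I})\times\mathcal{D}\to\mathbf{L}(\mathcal{I}).
\end{align}
Here $c\triangleright M$ is the line obtained by bringing $c\in\mathcal{C}$ from the left bulk parallel to $M$ and fusing it onto the interface. Likewise, $M\triangleleft d$ fuses $d\in\mathcal{D}$ from the right bulk. On morphisms, a pair of junctions $(f,g)$ is sent to the configuration with $f$ in the bulk and $g$ on the interface, pushed together. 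Functoriality and bilinearity follow from locality, since composition is vertical stacking of junctions.

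\textbf{Step 2: the module associators.} The associators
\begin{align}
    m_{c,c',M}:(c\otimes c')\triangleright M\to c\triangleright(c'\triangleright M)
\end{align}
and the unit constraints come from isotopy. First fusing $c,c'$ in the bulk and then onto the wall gives a configuration isotopic to fusing them onto the wall one at a time, and the isotopy is a topological junction. The pentagon and triangle axioms then hold because any two such isotopies are homotopic rel boundary. Equivalently, one can invoke the same reasoning that makes $\mathcal{C}$ itself monoidal. The right action by $\mathcal{D}$ is handled identically.

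\textbf{Step 3: the middle associativity.} We need an isomorphism
\begin{align}
    (c\triangleright M)\triangleleft d\cong c\triangleright(M\triangleleft d).
\end{align}
This is the step I expect to be the main obstacle, and also the genuinely new one. The point is that $c$ and $d$ approach from opposite sides of the interface and never cross it, so their fusions are supported in disjoint half-spaces. The order of the two moves can therefore be exchanged by an isotopy, with no braiding data entering. I would check the two mixed pentagon-type coherence diagrams by the same homotopy argument as in Step 2.

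\textbf{Mathematical realization.} As a cross-check, I would give an algebraic realization. Write $\mathcal{I}$ as a condensation interface associated to an algebra, in the spirit of Claim \ref{cl:claim:verdef}, where $\Ver(V/W)=\mathcal{B}(W)_A$ with $\mathcal{B}(V)$ acting via local modules. In this language the actions become induction-type functors, as in Equation \eqref{eqn:induction}, composed with the tensor product of modules. The bimodule axioms then reduce to the standard fact that $\mathcal{C}_A$ for a suitable algebra is a module category over the relevant centers.
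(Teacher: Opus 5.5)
Your proposal takes the same route as the paper. The paper justifies the claim only by the picture of fusing bulk lines from either side onto the interface (Figure~\ref{fig:CDbimodule}), and illustrates it with $\Ver(V_L/W_L)$, where $\mathcal{B}(V_L)$ acts as a subcategory and $\mathcal{B}(W_L)$ acts through the induction functor. That example is essentially your algebraic cross-check, and your Steps~2--3 simply make explicit the coherence data the paper leaves to the figure.

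One small caveat: your cross-check only covers condensation interfaces $\mathcal{I}_A$. A general interface is better handled by folding to a Lagrangian algebra in $\mathcal{C}\boxtimes\overline{\mathcal{D}}$. Since you offer it only as a cross-check, this does not affect the argument.
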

We use $\underline{\otimes}$ to denote both tensor products, i.e.\ both maps $\mathcal{C}\times\mathbf{L}(\mathcal{I})\to\mathbf{L}(\mathcal{I})$ and $\mathbf{L}(\mathcal{I})\times \mathcal{D}\to\mathbf{L}(\mathcal{I})$.  
\begin{example}{}{}The category $\Ver (V_L/W_L)$ of topological line operators supported on the interface $\mathcal{I}_{A_L}$ is naturally a $(\mathcal{B}(W_L),\mathcal{B}(V_L))$-bimodule category:  $\mathcal{B}(V_L)$ acts on $\Ver (V_L/W_L)$ simply because it is a subcategory, and $\mathcal{B}(W_L)$ acts through the induction functor from Equation \eqref{eqn:induction}, 
\begin{align}
\begin{split}
    \mu\underline{\otimes}X\underline{\otimes}\nu = I(\mu)\otimes X\otimes \nu, \ \ \ \ \mu\in\mathcal{B}(W_L), \ X\in\Ver (V_L/W_L), \ \nu\in\mathcal{B}(V_L).
\end{split}
\end{align}
Similar comments apply to the categories $\Ver (V_R/W_R)$ and $\Ver (\mathcal{H}/V_L\otimes\overline{V_R})$ of line operators supported on the interfaces $\overline{\mathcal{I}}_{A_R}$ and $\mathcal{I}$, respectively. 
\end{example}

The correct way to combine the three wall categories is to use the \emph{relative} Deligne product, see \cite[Section 3]{etingof2010fusion} for a thorough exposition. The intuition is that the category $\mathbf{L}(\mathcal{I}\otimes\mathcal{J})$ of line operators on a composite interface $\mathcal{I}\otimes\mathcal{J}$ is \emph{not} given by the standard Deligne product $\mathbf{L}(\mathcal{I})\boxtimes \mathbf{L}(\mathcal{J})$. Instead, one must impose certain relations, which are taken care of by the \emph{relative} Deligne product. For example, the standard Deligne product always treats the lines $(X,\mathds{1})$ and $(\mathds{1},Y)$ as distinct, but if there is a bulk line $a \in \mathcal{D}$ such that $X$ is the result of pushing $a$ onto $\mathcal{I}$, and $Y$ is the result of pushing $a$ onto $\mathcal{J}$, then $(X,\mathds{1})$ and $(\mathds{1},Y)$ should be identified in the composite category $\mathbf{L}(\mathcal{I}\otimes\mathcal{J})$. We formalize this below.

We will provide concrete constructions of the relative Deligne product shortly, but for now, we appeal to its more abstract definition through a universal property. Let $\mathcal{M}$ be a right $\mathcal{D}$-module category, $\mathcal{N}$ be a left $\mathcal{D}$-module category, and $\mathcal{A}$ be any Abelian category. We say that a functor $B:\mathcal{M}\boxtimes \mathcal{N}\to \mathcal{A}$ is $\mathcal{D}$-balanced if there is a family of natural isomorphisms $b_{X,D,Y}: B(X\underline{\otimes}D,Y)\xrightarrow{\sim} B(X,D\underline{\otimes}Y)$ which satisfy the following pentagon equation, 
\[
\begin{tikzcd}[column sep=small,row sep=small]
B(X\otimes(D_1\otimes D_2),Y)
  \arrow[rr,"m_{X,D_1,D_2}"]%
  \arrow[d,"b_{X,D_1\otimes D_2,Y}"']
  & &
  B((X\otimes D_1)\otimes D_2,Y)
    \arrow[d,"b_{X\otimes D_1,D_2,Y}"] \\[0.5em]
B(X,(D_1\otimes D_2)\otimes Y)
  & &
  B(X\otimes D_1,D_2\otimes Y)
    \arrow[dl,"b_{X,D_1,D_2\otimes Y}"] \\[0.5em]
& B(X,D_1\otimes(D_2\otimes Y))
    \arrow[ul,"n^{-1}_{D_1,D_2,Y}"']
\end{tikzcd}
\]
where $m$ is the right $\mathcal{D}$-module category associator for $\mathcal{M}$ and $n$ is the left $\mathcal{D}$-module category associator for $\mathcal{N}$. The relative Deligne product is then any Abelian category $\mathcal{M}\boxtimes_{\mathcal{D}}\mathcal{N}$ along with a $\mathcal{D}$-balanced functor $B:\mathcal{M}\boxtimes \mathcal{N}\to \mathcal{M}\boxtimes_{\mathcal{D}}\mathcal{N}$ which satisfies the universal property that, if $F:\mathcal{M}\boxtimes\mathcal{N}\to\mathcal{A}$ is a $\mathcal{D}$-balanced functor, then there is an additive functor $F':\mathcal{M}\boxtimes_{\mathcal{D}}\mathcal{N}\to\mathcal{A}$ satisfying $F=F'\circ B$.

\begin{claim}{(Equation (15) of \cite{Huston:2022utd})}{}
   If $\mathcal{I}$ is an interface between $\mathcal{C}$ and $\mathcal{D}$, and $\mathcal{J}$ is an interface between $\mathcal{D}$ and $\mathcal{E}$, then the category of topological line operators supported on the composite interface $\mathcal{I}\otimes \mathcal{J}$ is furnished by the relative Deligne product,
\begin{align}
    \mathbf{L}(\mathcal{I}\otimes \mathcal{J})\cong \mathbf{L}(\mathcal{I})\boxtimes_{\mathcal{D}}\mathbf{L}(\mathcal{J}).
\end{align}
\end{claim}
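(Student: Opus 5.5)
We would prove this by realizing both sides as categories of modules over algebras and matching them, since interfaces in 3D TQFTs are classified by algebraic data. First, we use the standard description of a topological interface between modular categories $\mathcal{C}$ and $\mathcal{D}$. Folding turns $\mathcal{I}$ into a gapped boundary of $\mathcal{C}\boxtimes\overline{\mathcal{D}}$. Equivalently, $\mathcal{I}$ is obtained from a Lagrangian algebra, or more concretely by 1-gauging a gaugeable algebra, as in Figure \ref{fig:condensationsurface}. Under this description, $\mathbf{L}(\mathcal{I})$ becomes a fusion category equipped with a central structure: a braided functor $\mathcal{C}\boxtimes\overline{\mathcal{D}}\to Z(\mathbf{L}(\mathcal{I}))$, which is an equivalence for a genuine interface. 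This is the bulk-to-wall map of Figure \ref{fig:CDbimodule}. It supplies the $(\mathcal{C},\mathcal{D})$-bimodule structure and the balancing isomorphisms, the latter realized physically by dragging a bulk line of $\mathcal{D}$ off $\mathcal{I}$ and onto $\mathcal{J}$.

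Second, we would build a $\mathcal{D}$-balanced functor
$$B:\mathbf{L}(\mathcal{I})\boxtimes\mathbf{L}(\mathcal{J})\to\mathbf{L}(\mathcal{I}\otimes\mathcal{J}),\qquad (X,Y)\mapsto X\otimes Y,$$
which places $X$ on $\mathcal{I}$ and $Y$ on $\mathcal{J}$ and then fuses the two walls. The balancing isomorphism $b_{X,D,Y}$ is the topological move that slides $D\in\mathcal{D}$ through the slab between the walls. The pentagon follows from isotopy invariance, since sliding $D_1\otimes D_2$ at once agrees with sliding the factors one at a time. By the universal property this yields an additive functor $F':\mathbf{L}(\mathcal{I})\boxtimes_{\mathcal{D}}\mathbf{L}(\mathcal{J})\to\mathbf{L}(\mathcal{I}\otimes\mathcal{J})$.

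Third, we show $F'$ is an equivalence, and this is the step we expect to be the main obstacle. Our plan is to use the concrete model of the relative Deligne product from \cite{etingof2010fusion}. If $\mathbf{L}(\mathcal{J})\cong{}_{C}\mathcal{D}$ for some algebra $C$ in $\mathcal{D}$, then $\mathbf{L}(\mathcal{I})\boxtimes_{\mathcal{D}}\mathbf{L}(\mathcal{J})$ is the category of $C$-modules in $\mathbf{L}(\mathcal{I})$, where $C$ acts through the bulk-to-wall functor. On the physical side, $\mathcal{I}\otimes\mathcal{J}$ is obtained by 1-gauging the same algebra $C$ on a sheet adjacent to $\mathcal{I}$. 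By the same mesh-ending argument used for Claim \ref{cl:oneformgauging} and for $\Ver_{\mathcal{S}}(V/W)\cong\mathcal{C}_B$ (Figure \ref{fig:VerS(VW)}), lines on the composite wall are exactly lines of $\mathbf{L}(\mathcal{I})$ on which the $C$-mesh ends consistently, i.e.\ $C$-modules. Two points need checking: that this identification is compatible with $F'$, and that every module category arises as ${}_C\mathcal{D}$. The second holds by Ostrik's theorem in the semisimple setting.

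Finally, as a consistency check and an independent route to essential surjectivity, we would compare global dimensions. Both sides should have dimension $\dim\mathbf{L}(\mathcal{I})\dim\mathbf{L}(\mathcal{J})/\dim\mathcal{D}$, which follows from the formula $\dim(\mathcal{M}\boxtimes_{\mathcal{D}}\mathcal{N})$ in \cite{etingof2010fusion} together with $\dim Z(\mathbf{L})=\dim\mathcal{C}\dim\mathcal{D}$. Combined with full faithfulness, which comes from matching Hom spaces via Frobenius reciprocity (Equation \eqref{eqn:Frobeniusreciprocity}), this equality of dimensions would complete the argument.
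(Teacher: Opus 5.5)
\textbf{Verdict: there is a genuine gap in Step 3.} Your Steps 1 and 2 agree with the paper's physical reading of $B$: fuse a line on $\mathcal{I}$ with one on $\mathcal{J}$, with $b_{X,D,Y}$ sliding $D$ across the slab. The algebraic model $\mathbf{L}(\mathcal{I})\boxtimes_{\mathcal{D}}\mathcal{D}_C\simeq\mathbf{L}(\mathcal{I})_C$ is also standard. (Note that a left $\mathcal{D}$-module category is $\mathcal{D}_C$, i.e.\ right $C$-modules, not ${}_C\mathcal{D}$.) The problem is the physical half of Step 3, which is where all the content of the claim sits.

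\emph{First problem.} It is not true that $\mathcal{I}\otimes\mathcal{J}$ is obtained by 1-gauging $C$ on a sheet next to $\mathcal{I}$. 1-gauging an algebra of $\mathcal{D}$ produces a surface $\mathcal{S}_C$ \emph{inside} $\mathcal{D}$. So $\mathcal{I}\otimes\mathcal{S}_C$ is still a $\mathcal{C}$--$\mathcal{D}$ interface and can never be the $\mathcal{C}$--$\mathcal{E}$ interface $\mathcal{I}\otimes\mathcal{J}$; take $\mathcal{J}$ to be a gapped boundary with $\mathcal{E}=\mathrm{Vec}$. By the pinching lemma (Figure \ref{fig:pinch}), the surface attached to the module category $\mathbf{L}(\mathcal{J})\simeq\mathcal{D}_C$ is $\mathcal{S}_{\mathcal{J}}$, i.e.\ $\mathcal{J}$ fused with its orientation reversal, not $\mathcal{J}$ itself.

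\emph{Second problem.} The mesh-ending argument you borrow from $\Ver_{\mathcal{S}}(V/W)\cong\mathcal{C}_B$ computes lines on $\mathcal{I}$ along which $\mathcal{S}_C$ \emph{terminates}, a T-shaped configuration. It does not compute lines on a wall onto which $\mathcal{S}_C$ has been fused in parallel; those are $C$--$C$ bimodules. Already for $\mathcal{I}=\mathds{1}$ one has $\mathbf{L}(\mathcal{S}_C)\simeq{}_C\mathcal{D}_C$, whereas $\mathcal{D}_C$ is the category of lines on which $\mathcal{S}_C$ ends.

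So the identification $\mathbf{L}(\mathcal{I}\otimes\mathcal{J})\simeq\mathbf{L}(\mathcal{I})_C$ is unsupported. The two items you flag as needing checking (compatibility with $F'$, and Ostrik's theorem) are not where the difficulty lies.

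\textbf{The fix is the ingredient the paper uses.} Pinching $\mathcal{J}$ produces not only $\mathcal{S}_{\mathcal{J}}$ but also a canonical junction $L_{\mathcal{J}}$ along which $\mathcal{S}_{\mathcal{J}}$ ends on $\mathcal{J}$.
\begin{itemize}
\item A $C$-module in $\mathbf{L}(\mathcal{I})$ is a line on $\mathcal{I}$ from which a strip of $\mathcal{S}_{\mathcal{J}}\cong\mathcal{S}_C$ emanates.
\item Capping that strip on $\mathcal{J}$ with $L_{\mathcal{J}}$ and collapsing the $\mathcal{D}$ slab gives a line on $\mathcal{I}\otimes\mathcal{J}$.
\item Pinching $\mathcal{J}$ at a line of the composite wall runs this construction backwards.
\end{itemize}
The paper does this symmetrically. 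It pinches both walls and reads composite lines as line interfaces between $\mathcal{S}_{\mathcal{I}}$ and $\mathcal{S}_{\mathcal{J}}$, i.e.\ as $\mathrm{Fun}_{\mathcal{D}}(\mathbf{L}(\mathcal{I})^{\vee},\mathbf{L}(\mathcal{J}))$ (Figure \ref{fig:IJsandwich}); the equivalence itself is attributed to \cite{Huston:2022utd}. Since $\mathrm{Fun}_{\mathcal{D}}(\mathcal{M}^{\vee},\mathcal{D}_C)\simeq\mathcal{M}_C$, your one-sided model works once this step replaces the 1-gauging argument.

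\textbf{Two smaller points.}
\begin{itemize}
\item Full faithfulness is more naturally obtained from the TQFT gluing formula than from the induction/restriction reciprocity of Section 2. Cut the sphere linking a junction point along a circle in its $\mathcal{D}$ region. This gives $\mathrm{Hom}(B(X,Y),B(X',Y'))\cong\bigoplus_{d\in\mathrm{Irr}(\mathcal{D})}\mathrm{Hom}(X\underline{\otimes}d,X')\otimes\mathrm{Hom}(Y,d\underline{\otimes}Y')$, which are exactly the Hom spaces of the relative Deligne product.
\item $\mathcal{I}\otimes\mathcal{J}$ is generically decomposable, so $\mathbf{L}(\mathcal{I}\otimes\mathcal{J})$ is only multifusion. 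The dimension comparison in Step 4 therefore has to be done block by block rather than as a single global-dimension count.
\end{itemize}
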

Indeed, the $\mathcal{D}$-balanced functor $B:\mathbf{L}(\mathcal{I})\boxtimes\mathbf{L}(\mathcal{J})\to \mathbf{L}(\mathcal{I})\boxtimes_{\mathcal{D}}\mathbf{L}(\mathcal{J})$ which participates in the universal property  physically encodes the result of fusing a line $X$ supported on $\mathcal{I}$ with a line $Y$ supported on $\mathcal{J}$, as in Figure \ref{fig:Dbalfunctor}. The accompanying family $b_{X,D,Y}$ of natural isomorphisms can be unpacked into a collection of numbers $[F_{XDY}^L]_{(Z_1u_1v_1)(Z_2u_2v_2)}$ which behave like F-symbols, see Figure \ref{fig:bnaturalisos}. Indeed, when $\mathcal{C}=\mathcal{D}=\mathcal{E}$, and $\mathcal{I}$ and $\mathcal{J}$ are both taken to be the trivial surface,  then these numbers coincide with the standard F-symbols of $\mathcal{C}$.

\begin{figure}
    
\begin{center}
\input{Figures/Dbalfunctor}
\caption{The physical interpretation of the functor $B:\mathbf{L}(\mathcal{I})\boxtimes\mathbf{L}(\mathcal{J})\to\mathbf{L}(\mathcal{I})\boxtimes_{\mathcal{D}}\mathbf{L}(\mathcal{J})$.}\label{fig:Dbalfunctor}
\end{center}
\end{figure}

\begin{figure}
    \begin{center}
\input{Figures/bnaturalisos}
\caption{The physical interpretation of the natural isomorphisms $b_{X,D,Y}:B(X\underline{\otimes}D,Y)\xrightarrow{\sim}B(X,D\underline{\otimes}Y)$.}\label{fig:bnaturalisos}
    \end{center}
\end{figure}

If we apply these constructions to the problem at hand, we arrive at the main result of this section. Recall that, given a fusion category $\mathcal{C}$, the dual category $\mathcal{C}^{\vee}$ also admits the structure of a fusion category. (See e.g.\ \cite[Remark 2.1.6]{etingof2015tensor}. In our context, where all categories are rigid, $\mathcal{C}^\vee$ is equivalent to $\mathcal{C}^{\mathrm{op}}$ defined in \cite[Definition 2.1.5]{etingof2015tensor}.) Similarly, given a left $\mathcal{C}$-module category $\mathcal{M}$, the dual category $\mathcal{M}^{\vee}$ admits the structure of a right $\CC$-module category (see e.g.\ \cite[Remark 7.1.5]{etingof2015tensor}). 
We then have the following.

\begin{claim}{}{}
Let $\mathcal{H}$ be a conformal field theory with left- and right-moving chiral algebras $V_L$ and $\overline{V_R}$, respectively. The symmetries preserved by conformal subalgebras $W_L\subset V_L$ and $\overline{W_R}\subset \overline{V_R}$ are given by the formula
\begin{align}
    \Ver (\mathcal{H}/W_L\otimes\overline{W_R})\cong \Ver (V_L/W_L)\boxtimes_{\mathcal{B}(V_L)}\Ver (\mathcal{H}/V_L\otimes \overline{V_R})\boxtimes_{\mathcal{B}(V_R)}\Ver (V_R/W_R)^{\vee},
\end{align}
which extends the relationship between the four interfaces in Equation \eqref{eqn:relationinterfaces} to a relation between their corresponding categories of topological line operators.
This equation describes how to glue symmetries of chiral algebra boundary conditions into  symmetries of full conformal field theories. Cf.\ Equation \eqref{eqn:gluingformula} from the introduction for an equivalent formulation.
\end{claim}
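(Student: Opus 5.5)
We will prove the formula by identifying both sides with categories of topological lines on interfaces, and then invoking the relative Deligne product result for composite interfaces (Equation (15) of \cite{Huston:2022utd}, stated in the Claim just above).

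\textbf{Step 1: identify the left-hand side with lines on a single interface.} Present $\mathcal{H}$ via the Kapustin--Saulina background built on $W_L$ and $W_R$ (bottom of Figure \ref{fig:combiningchiralsymmetries}), with gluing interface $\mathcal{I}'$ between $\mathcal{B}(W_L)$ and $\mathcal{B}(W_R)$. By the same reasoning that identifies $\Ver(\mathcal{H}/V_L\otimes\overline{V_R})$ with $\mathbf{L}(\mathcal{I})$, we get $\Ver(\mathcal{H}/W_L\otimes\overline{W_R})\cong\mathbf{L}(\mathcal{I}')$. Algebraically this is the statement that $\mathcal{H}$, viewed as a Lagrangian algebra in $\Rep(W_L)\boxtimes\overline{\Rep(W_R)}$, has module category equal to the interface lines.

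\textbf{Step 2: decompose the interface.} Apply the SymTFT picture of Figure \ref{fig:SymTFTconformalembedding} separately to $W_L\subset V_L$ and $W_R\subset V_R$. The $W_L$-boundary fused with $\mathcal{I}_{A_L}$ is the $V_L$-boundary, and likewise on the right. Hence $\mathcal{I}'=\mathcal{I}_{A_L}\otimes\mathcal{I}\otimes\overline{\mathcal{I}}_{A_R}$, as in Equation \eqref{eqn:relationinterfaces}.

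To make this rigorous, I would check the Lagrangian algebras: the algebra for $\mathcal{I}'$ is obtained from the one for $\mathcal{I}$ by restricting along $\Rep(W_L)\boxtimes\overline{\Rep(W_R)}\to\Rep(V_L)\boxtimes\overline{\Rep(V_R)}$. The restriction functor corresponds exactly to composing with the condensation interfaces.

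\textbf{Step 3: apply the composite-interface Claim twice.} Use associativity of $\boxtimes_{\mathcal{B}}$ (from \cite{etingof2010fusion}) to get
$$\mathbf{L}(\mathcal{I}')\cong\mathbf{L}(\mathcal{I}_{A_L})\boxtimes_{\mathcal{B}(V_L)}\mathbf{L}(\mathcal{I})\boxtimes_{\mathcal{B}(V_R)}\mathbf{L}(\overline{\mathcal{I}}_{A_R}).$$
Then substitute $\mathbf{L}(\mathcal{I}_{A_L})\cong\Ver(V_L/W_L)$ from Claim \ref{cl:claim:verdef}, with its bimodule structure given by induction as in the Example above.

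\textbf{Step 4: identify the orientation-reversed factor (main obstacle).} We need $\mathbf{L}(\overline{\mathcal{I}}_{A_R})\cong\Ver(V_R/W_R)^{\vee}$ as a $(\mathcal{B}(V_R),\mathcal{B}(W_R))$-bimodule category. Reversing the orientation of an interface reverses the order of fusion, giving $\mathcal{C}^{\mathrm{op}}$, and swaps the left and right actions. Because everything is rigid, this is the dual category.

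The subtle part is the braiding conventions: the right-movers carry $\overline{\mathcal{B}(V_R)}$, so the actions must be matched with the reversed braiding. I would verify this on the level of the algebra-module description. Right $A_R$-modules in $\overline{\mathcal{B}(W_R)}$ correspond to left modules via duality, which yields $(\mathcal{B}(W_R)_{A_R})^{\vee}$ with compatible actions.

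Finally, I would confirm the result with two consistency checks:
1. \emph{Trivial case.} Setting $W=V$ returns $\mathcal{B}\boxtimes_{\mathcal{B}}\mathbf{L}\boxtimes_{\mathcal{B}}\mathcal{B}\cong\mathbf{L}$.
2. \emph{Dimension count.} Via \eqref{eqn:dimver}, $\dim(\mathcal{M}\boxtimes_{\mathcal{D}}\mathcal{N})=\dim\mathcal{M}\dim\mathcal{N}/\dim\mathcal{D}$ should reproduce $\dim\Ver(\mathcal{H}/W_L\otimes\overline{W_R})=\sqrt{\dim\Rep(W_L)\dim\Rep(W_R)}$.
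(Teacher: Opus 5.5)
Your proposal follows the paper's argument: it identifies both sides with line categories on the Kapustin--Saulina interfaces, decomposes $\mathcal{I}'=\mathcal{I}_{A_L}\otimes\mathcal{I}\otimes\overline{\mathcal{I}}_{A_R}$ via the SymTFT picture of the two conformal embeddings, and invokes the composite-interface result of \cite{Huston:2022utd}, with orientation reversal producing the dual factor; your Step 4 and the consistency checks make explicit what the paper leaves implicit. One small slip in Step 2: the Lagrangian algebra of $\mathcal{I}'$ is the image of $\mathcal{H}$ under the lax monoidal forgetful functor $\Rep(V_L)\boxtimes\overline{\Rep(V_R)}\to\Rep(W_L)\boxtimes\overline{\Rep(W_R)}$, so the arrow you wrote points the wrong way.
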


For example, if $V_L=V_R=V$ and $W_L=W_R=W$ and $\mathcal{H}$ is the canonical diagonal rational conformal field theory built on $V$, then $\Ver (\mathcal{H}/V\otimes\overline{V})\cong \mathcal{B}(V)$ is the regular $(\mathcal{B}(V),\mathcal{B}(V))$-bimodule category, which behaves like the identity with respect to the relative Deligne product, and hence we find that 
\begin{align}\label{eqn:descrip1}
    \Ver (\mathcal{H}/W\otimes\overline{W})=\Ver (V/W)\boxtimes_{\mathcal{B}(V)}\Ver (V/W)^{\vee}.
\end{align}
Thus, the category of topological line operators of the canonical diagonal rational CFT preserved by a subalgebra $W\otimes \overline{W}$ roughly factorizes into a (relative Deligne) product of a category on the left with a category on the right.

 There is another well-known formula for $\Ver(\mathcal{H}/W\otimes \overline{W})$ to which we can compare Equation \eqref{eqn:descrip1}. Call $A$ the condensable algebra in $\mathcal{B}(W)$ which mediates the conformal extension from $W$ to $V$. If we forget about the fact that it is a condensable algebra (describing 1-form gauging in 3D) and treat it instead as a gaugeable algebra (describing 0-form gauging in 2D), then we can start with the diagonal CFT built on $W$, which has $\mathcal{B}(W)$ as its Verlinde lines, and gauge $A$ to obtain the diagonal CFT built on $V$. From this description, \cite{Fuchs:2002cm,Bhardwaj:2017xup} predicts that
\begin{align}\label{eqn:descrip2}
    \Ver(\mathcal{H}/W\otimes \overline{W})\cong {_{A}}\mathcal{B}(W)_A,
\end{align}
where ${_{A}}\mathcal{B}(W)_A$ is the category of $A$-$A$-bimodules in $\mathcal{B}(W)$. The equivalence of Equation \eqref{eqn:descrip1} and Equation \eqref{eqn:descrip2} follows from the fact that
\begin{align}
    \mathcal{B}(W)_A\boxtimes_{\mathcal{B}(W)_A^{\mathrm{loc}}}\mathcal{B}(W)_A^{\vee}\cong {_{A}}\mathcal{B}(W)_A.
\end{align}

The description in Equation \eqref{eqn:descrip2} has some useful corollaries. For example, thinking of the canonical diagonal CFT built on $V$ as a non-diagonal theory built on the conformal subalgebra $W$, we have that the pairing matrix $M_{\mu\nu}$ from Equation \eqref{eqn:fullCFTHilbertspace} is given by 
\begin{align}
    M_{\mu\nu} = \sum_{a\in\mathcal{B}(V)} B_{a\mu}B_{a^\ast\nu}, \ \ \ \ \mu,\nu \in\mathcal{B}(W),
\end{align}
where the $B_{a\mu}$ describe the decomposition of $V$-modules into $W$-modules (cf.\ Equation \eqref{eqn:Vmoddecomp}).
From \cite[Claim 3]{Ostrik:2001xnt}, the complexified Grothendieck ring of $\Ver(\mathcal{H}/W\otimes \overline{W})$ decomposes into matrix algebras as 
\begin{align}\label{eqn:matrixalgdecomp}
    \mathcal{K}_0(\Ver(\mathcal{H}/W\otimes\overline{W}))\otimes \mathbb{C}\cong \bigoplus_{\mu,\nu \in \mathcal{B}(W)} \mathrm{Mat}_{\mathbb{C}}(M_{\mu\nu}).
\end{align}
In particular, $\Ver(\mathcal{H}/W\otimes\overline{W})$ has commutative fusion rules if and only if $M_{\mu\nu}\leq 1$ for all $\mu,\nu\in\mathcal{B}(W)$. 

There is yet another description of $\Ver(\mathcal{H}/W\otimes\overline{W})$ using the relative Deligne product, namely $\Ver(\mathcal{H}/W\otimes\overline{W})\cong \Ver(V/W)^\vee \boxtimes_{\mathcal{B}(W)} \Ver(V/W)$. We leave a more detailed exploration to our companion paper \cite{grmath}.

\begin{example}{}{cleftamalg}
    Consider a chiral algebra $V$ supported on the boundary of an Abelian Chern-Simons theory, i.e.\ $\Rep(V)\cong \mathrm{Vec}_D^{\sigma,\omega}$. Let $\mathcal{H}$ be the canonical diagonal rational conformal field theory built on top of $V$. Then the fusion category of topological line operators of $\mathcal{H}$ preserved by $V\otimes \overline{V}$ is 
    \begin{align}
        \Ver (\mathcal{H}/V\otimes\overline{V})\cong \mathcal{B}(V)\cong \mathrm{Vec}_D^\omega.
    \end{align}
    Now, suppose $G$ is a finite, \underline{cleft} group of automorphisms of $V$, so that by Example \ref{ex:cleftpt1},
    \begin{align}
        \Ver (V/V^G)\cong \mathrm{Vec}_\Gamma^{\tilde\omega}
    \end{align}
    for some extension $\Gamma$ of $G$ by $D$. We can ask what is the category $\Ver(\mathcal{H}/V^G\otimes\overline{V^G})$ of topological line operators of $\mathcal{H}$ which preserve the factorizing subalgebra $V^G\otimes \overline{V^G}$. 
    
   \hspace{.3in} In this case, the relative Deligne product is related to the amalgamated product of groups. Define $\Gamma^{(2)}=\Gamma\times \Gamma/D$, where $D$ is embedded in $\Gamma\times \Gamma$ diagonally via $d\mapsto (d,d)$. Then
    \begin{align}
    \begin{split}
        \Ver (\mathcal{H}/V^G\otimes\overline{V^G})&\cong \Ver (V/V^G)\boxtimes_{\mathcal{B}(V)}\Ver (V/V^G)^{\vee} \\
        &\cong \mathrm{Vec}_\Gamma^{\tilde\omega}\boxtimes_{\mathrm{Vec}_D^\omega}\mathrm{Vec}_{\Gamma}^{\tilde{\omega}^{\vee}} \\
        &\cong \mathrm{Vec}_{\Gamma^{(2)}}^{\nu},
    \end{split}
    \end{align}
    where here, $\tilde{\omega}^{\vee}(\gamma_1,\gamma_2,\gamma_3) = \omega(\gamma_1,\gamma_2,\gamma_3)^{-1}\in H^3(\Gamma,U(1))$ and $\nu\in H^3(\Gamma^{(2)},U(1))$ is a 3-cocycle which obeys $[q^\ast\nu]=[\tilde{\omega}\times \tilde{\omega}^\vee]\in H^3(\Gamma\times \Gamma,U(1))$, where $q:\Gamma\times \Gamma\to \Gamma^{(2)}$ is the natural quotient map and $q^\ast:H^3(\Gamma^{(2)},U(1))\to H^3(\Gamma\times\Gamma,U(1))$ denotes pullback. 
\end{example}

\subsection{Concrete constructions of the relative Deligne product}\label{subsec:relativedeligne}

\begin{figure}
    \begin{center}
\input{Figures/IJsandwich}
\caption{The physical interpretation of the equivalence $\mathrm{Fun}_{\mathcal{D}}(\mathbf{L}(\mathcal{I})^{\vee},\mathbf{L}(\mathcal{J}))\cong \mathbf{L}(\mathcal{I}\otimes \mathcal{J}).$ The two blue surfaces are determined abstractly by the structure of $\mathbf{L}(\mathcal{I})^{\vee}$ and $\mathbf{L}(\mathcal{J})$ as $\mathcal{D}$-module categories.}\label{fig:IJsandwich}
    \end{center}
\end{figure}

The description of the relative Deligne product through its universal property is not very useful for computations. Therefore, we now provide some more concrete descriptions.

\paragraph{As a category of $\mathcal{D}$-module functors} One useful fact is that the relative Deligne product is equivalent to a certain category of $\mathcal{D}$-module functors,
\begin{align}\label{eqn:functorrelativedeligne}
    \mathbf{L}(\mathcal{I}\otimes\mathcal{J})\cong \mathrm{Fun}_{\mathcal{D}}(\mathbf{L}(\mathcal{I})^{\vee},\mathbf{L}(\mathcal{J})).
\end{align}
In this realization, the $\mathcal{D}$-balanced functor  which arises in the universal property is given mathematically by 
\begin{align}
\begin{split}
    B:\mathbf{L}(\mathcal{I})\boxtimes\mathbf{L}(\mathcal{J})&\to \mathrm{Fun}_{\mathcal{D}}(\mathbf{L}(\mathcal{I})^{\vee},\mathbf{L}(\mathcal{J})) \\
    (X,Y)&\mapsto \mathrm{Hom}_{\mathbf{L}(\mathcal{I})}(\cdot,X)\otimes Y.
\end{split}
\end{align}

To interpret this physically, we note the following facts. Recall that $\mathcal{D}$-module categories describe topological surfaces in the TQFT defined by $\mathcal{D}$, and so both $\mathbf{L}(\mathcal{I})^{\vee}$ and $\mathbf{L}(\mathcal{J})$ define surfaces in $\mathcal{D}$. In fact, the pinching trick from the previous section (Figure \ref{fig:pinch}) tells us that the surfaces are simply $\mathcal{S}_{\mathcal{I}}\equiv \mathcal{I}^\ast\otimes \mathcal{I}$ and $\mathcal{S}_{\mathcal{J}}\equiv\mathcal{J}^\ast\otimes \mathcal{J}$, respectively. As before, the objects of $\mathbf{L}(\mathcal{I})^{\vee}$ and $\mathbf{L}(\mathcal{J})$ can be thought of as topological lines which bound $\mathcal{S}_{\mathcal{I}}$ and $\mathcal{S}_{\mathcal{J}}$. Both surfaces admit topological junctions $L_{\mathcal{I}}$ and $L_{\mathcal{J}}$ with the interfaces $\mathcal{I}$ and $\mathcal{J}$, respectively, and moreover the category $\mathrm{Fun}_{\mathcal{D}}(\mathbf{L}(\mathcal{I})^{\vee},\mathbf{L}(\mathcal{J}))$ physically encodes the category of topological line interfaces between $\mathcal{S}_{\mathcal{I}}$ and $\mathcal{S}_{\mathcal{J}}$.

Using these ingredients,  we can produce a line on the composite surface $\mathcal{I}\otimes \mathcal{J}$ from a line interface $Z\in \mathrm{Fun}_{\mathcal{D}}(\mathbf{L}(\mathcal{I})^{\vee},\mathbf{L}(\mathcal{J}))$ between the surfaces $\mathcal{S}_{\mathcal{I}}$ and $\mathcal{S}_{\mathcal{J}}$ by sandwiching it with $\mathcal{I}$ and $\mathcal{J}$, as in Figure \ref{fig:IJsandwich}. This furnishes the equivalence in Equation \eqref{eqn:functorrelativedeligne}. In this picture, the image of bounding lines $X\in \mathbf{L}(\mathcal{I})^{\vee}$ and $Y\in\mathbf{L}(\mathcal{J})$ under the $\mathcal{D}$-balanced functor $B:\mathbf{L}(\mathcal{I})\boxtimes \mathbf{L}(\mathcal{J})\to \mathbf{L}(\mathcal{I}\otimes \mathcal{J})$ is simply the interface obtained by fusion, as in Figure \ref{fig:B in Fun description}. The natural isomorphisms $b_{X,D,Y}$ (or equivalently, the F-symbols $[F^L_{XDY}]_{(Z_1u_1v_1)(Z_2u_2v_2)}$) arise in a picture which we do not draw, but which is obtained by generalizing Figure \ref{fig:bnaturalisos}.

\begin{figure}
    \begin{center}
\input{Figures/BinFundescription}
\caption{The physical interpretation of the $\mathcal{D}$-balanced functor $B:\mathbf{L}(\mathcal{I})\boxtimes \mathbf{L}(\mathcal{J})\to \mathrm{Fun}_{\mathcal{D}}(\mathbf{L}(\mathcal{I})^{\vee},\mathbf{L}(\mathcal{J}))$.}\label{fig:B in Fun description}
    \end{center}
\end{figure}

\paragraph{As a category of modules for an algebra} Another useful perspective on the relative Deligne product is obtained by folding. By folding the left of Figure \ref{fig:IJsandwich} along the line interface $Z$, we see that $\mathrm{Fun}_{\mathcal{D}}(\mathbf{L}(\mathcal{I})^{\vee},\mathbf{L}(\mathcal{J}))$ is equivalent to the category of topological lines on the canonical gapped boundary of $Z(\mathcal{D})\cong \mathcal{D}\boxtimes\overline{\mathcal{D}}$ on which the surface $\mathcal{S}_{\mathcal{I}}\otimes \mathcal{S}_{\mathcal{J}}$ can terminate. Gapped boundary conditions of $Z(\mathcal{D})$ are in one-to-one correspondence with Lagrangian algebras of $Z(\mathcal{D})$, and the canonical gapped boundary of $Z(\mathcal{D})$ corresponds to the algebra 
   \begin{align}
    K_{\mathcal{D}}=\bigoplus_{D\in\mathrm{Irr}(\mathcal{D})} D\boxtimes D^\ast.
\end{align}
Now, in the case that $\mathcal{S}_{\mathcal{I}}$ and $\mathcal{S}_{\mathcal{J}}$ are the trivial surfaces of $\overline{\mathcal{D}}$ and $\mathcal{D}$, respectively, the folded problem reduces to finding the topological lines supported on the canonical gapped boundary $K_{\mathcal{D}}$; this category is well-known to be described by $Z(\mathcal{D})_{K_{\mathcal{D}}}$, i.e.\ the category of $K_{\mathcal{D}}$-modules in $Z(\mathcal{D})$. The same logic which underpins the derivation of this fact shows that, in general, one should compute the category of $K_{\mathcal{D}}$-modules inside of $\mathbf{L}(\mathcal{I})\boxtimes\mathbf{L}(\mathcal{J})$, so that 
\begin{align}
   \mathrm{Fun}_{\mathcal{D}}(\mathbf{L}(\mathcal{I})^{\vee},\mathbf{L}(\mathcal{J}))\cong (\mathbf{L}(\mathcal{I})\boxtimes\mathbf{L}(\mathcal{J}))_{K_{\mathcal{D}}}.
\end{align}
Here, we are using the fact that $\mathbf{L}(\mathcal{I})\boxtimes\mathbf{L}(\mathcal{J})$ is in particular a $\mathcal{D}\boxtimes \overline{\mathcal{D}}$-module category, and so it makes sense to compute the modules of an algebra of $\mathcal{D}\boxtimes\overline{\mathcal{D}}$ inside of it.

Just as in Equation \eqref{eqn:induction}, we have an induction functor (which we now call $B$ instead of $I$, because it coincides with the $\mathcal{D}$-balanced functor arising in the universal property of the relative Deligne product),
\begin{align}
\begin{split}
    B:\mathbf{L}(\mathcal{I})\boxtimes\mathbf{L}(\mathcal{J})&\rightarrow (\mathbf{L}(\mathcal{I})\boxtimes\mathbf{L}(\mathcal{J}))_{K_{\mathcal{D}}} \\
    X &\mapsto (X\otimes K_{\mathcal{D}},\mathrm{id}_X\otimes m)
\end{split}
\end{align}
where $m$ is the multiplication on the algebra $K_{\mathcal{D}}$. The physical interpretation of this functor is obtained by simply folding Figure \ref{fig:B in Fun description}, as in Figure \ref{fig:boundaryinterpretation}. In particular, it can be thought of as a kind of bulk-to-boundary functor obtained by pushing a bounding line of the surface $\mathcal{S}_{\mathcal{I}}\otimes \mathcal{S}_{\mathcal{J}}$ onto the canonical boundary $K_{\mathcal{D}}$. This picture in particular makes it clear that $B$ should enjoy several useful properties: it is a tensor functor, and it enjoys Frobenius reciprocity with restriction. 

The content of this subsection can be summarized by the following.
\begin{claim}{}{}
    The category of line operators on a composite surface is given by any of the following expressions,
\begin{align}
    \mathbf{L}(\mathcal{I}\otimes\mathcal{J})\cong \mathbf{L}(\mathcal{I})\boxtimes_{\mathcal{D}}\mathbf{L}(\mathcal{J})\cong \mathrm{Fun}_{\mathcal{D}}(\mathbf{L}(\mathcal{I})^{\vee},\mathbf{L}(\mathcal{J}))\cong (\mathbf{L}(\mathcal{I})\boxtimes\mathbf{L}(\mathcal{J}))_{K_{\mathcal{D}}}.
\end{align}
\end{claim}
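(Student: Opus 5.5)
The three expressions are linked by two equivalences. The first is \(\mathbf{L}(\mathcal{I}\otimes\mathcal{J})\cong \mathbf{L}(\mathcal{I})\boxtimes_{\mathcal{D}}\mathbf{L}(\mathcal{J})\), which is the Claim taken from Equation (15) of \cite{Huston:2022utd}. The remaining content splits into two steps:
\[
\mathbf{L}(\mathcal{I})\boxtimes_{\mathcal{D}}\mathbf{L}(\mathcal{J})\cong \mathrm{Fun}_{\mathcal{D}}(\mathbf{L}(\mathcal{I})^{\vee},\mathbf{L}(\mathcal{J}))
\quad\text{and}\quad
\mathrm{Fun}_{\mathcal{D}}(\mathbf{L}(\mathcal{I})^{\vee},\mathbf{L}(\mathcal{J}))\cong(\mathbf{L}(\mathcal{I})\boxtimes\mathbf{L}(\mathcal{J}))_{K_{\mathcal{D}}}.
\]
I will verify each step using the universal property of the relative Deligne product stated above.

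\textbf{Step 1: the functor-category model.} I take the functor \(B(X,Y)=\mathrm{Hom}_{\mathbf{L}(\mathcal{I})}(\cdot,X)\otimes Y\). This is the internal-Hom functor \(\mathbf{L}(\mathcal{I})^{\vee}\to\mathrm{Vec}\), tensored with \(Y\).
\begin{enumerate}
\item I check that \(B\) is \(\mathcal{D}\)-balanced. The balancing isomorphism \(b_{X,D,Y}\) comes from the adjunction \(\mathrm{Hom}(M,X\underline{\otimes}D)\cong\mathrm{Hom}(M\underline{\otimes}D^\ast,X)\), together with the module structure of \(\mathbf{L}(\mathcal{I})^{\vee}\). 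The pentagon then reduces to the module associativity of \(m\) and \(n\).
\item I apply \cite[Section 3]{etingof2010fusion}, which uses semisimplicity and exactness. Every right exact \(\mathcal{D}\)-balanced functor out of \(\mathcal{M}\boxtimes\mathcal{N}\) factors uniquely through \(\mathrm{Fun}_{\mathcal{D}}(\mathcal{M}^{\vee},\mathcal{N})\). The reason is that every \(\mathcal{D}\)-module functor \(\mathcal{M}^\vee\to\mathcal{N}\) is a colimit of the functors \(B(X,Y)\) (by an Eilenberg--Watts type argument), and balanced functors respect this presentation.
\end{enumerate}

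\textbf{Step 2: the algebra-module model.} I realize \(\mathbf{L}(\mathcal{I})\) and \(\mathbf{L}(\mathcal{J})\) as categories of modules over algebras in \(\mathcal{D}\), using Ostrik's theorem: \(\mathbf{L}(\mathcal{I})^{\vee}\cong{}_{P}\mathcal{D}\) and \(\mathbf{L}(\mathcal{J})\cong\mathcal{D}_Q\). This gives
\[
\mathrm{Fun}_{\mathcal{D}}({}_P\mathcal{D},\mathcal{D}_Q)\cong{}_P\mathcal{D}_Q,
\]
the category of \(P\)--\(Q\) bimodules in \(\mathcal{D}\).

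Next I use the equivalence between \(\mathcal{D}\)-bimodules and modules over \(K_{\mathcal{D}}=\bigoplus_{D} D\boxtimes D^\ast\) in \(\mathcal{D}\boxtimes\mathcal{D}^{\mathrm{rev}}\). Under this equivalence, a \(K_{\mathcal{D}}\)-module structure on an object of \(\mathbf{L}(\mathcal{I})\boxtimes\mathbf{L}(\mathcal{J})\) is the same thing as a coherent identification of the left and right \(\mathcal{D}\)-actions. Concretely, the components of the action \(D\boxtimes D^\ast\) are exactly the balancing isomorphisms \(b_{X,D,Y}\).

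The argument then runs in three moves:
\begin{enumerate}
\item I show that the free-module functor \(X\mapsto(X\otimes K_{\mathcal{D}},\mathrm{id}\otimes m)\) is \(\mathcal{D}\)-balanced.
\item I show that it satisfies the universal property. Every \(K_{\mathcal{D}}\)-module is a coequalizer of free modules, so a balanced functor extends uniquely; this is the standard monadicity argument.
\item Universality then yields the final equivalence, and Frobenius reciprocity for the free-module functor follows automatically.
\end{enumerate}

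\textbf{Main obstacle.} I expect the hardest part to be the coherence bookkeeping in Step 2. The task is to match the associativity of the \(K_{\mathcal{D}}\)-action exactly with the pentagon for \(b\). This requires choosing the multiplication on \(K_{\mathcal{D}}\) correctly, namely via the canonical coend structure in \(Z(\mathcal{D})\), and tracking the dualities on \(\mathbf{L}(\mathcal{I})^{\vee}\).

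My first approach is to reduce to the case \(\mathcal{I}\), \(\mathcal{J}\) trivial, where the statement becomes \(\mathcal{D}\cong Z(\mathcal{D})_{K_{\mathcal{D}}}\), which is known. I would then transport the result along the module structures, which are determined by \(P\) and \(Q\).
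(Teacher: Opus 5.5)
Your proposal is correct in outline, but it argues differently from the paper. Both you and the paper take $\mathbf{L}(\mathcal{I}\otimes\mathcal{J})\cong\mathbf{L}(\mathcal{I})\boxtimes_{\mathcal{D}}\mathbf{L}(\mathcal{J})$ from \cite{Huston:2022utd}. Both also identify the free-module functor $X\mapsto(X\otimes K_{\mathcal{D}},\mathrm{id}_X\otimes m)$ with the balanced functor $B$.

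Beyond that, the paper gives no categorical argument. It quotes the $\mathrm{Fun}_{\mathcal{D}}$ model as a known fact and supports it physically: it pinches $\mathcal{I}$ and $\mathcal{J}$ into the surfaces $\mathcal{S}_{\mathcal{I}}=\mathcal{I}^\ast\otimes\mathcal{I}$ and $\mathcal{S}_{\mathcal{J}}=\mathcal{J}^\ast\otimes\mathcal{J}$, then sandwiches a line interface between them (Figure~\ref{fig:IJsandwich}). The $K_{\mathcal{D}}$-module model comes from folding along that interface. This turns it into a line on the canonical gapped boundary of $Z(\mathcal{D})\cong\mathcal{D}\boxtimes\overline{\mathcal{D}}$ on which $\mathcal{S}_{\mathcal{I}}\otimes\mathcal{S}_{\mathcal{J}}$ can end, combined with the remark that the trivial case reproduces $Z(\mathcal{D})_{K_{\mathcal{D}}}$.

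You instead derive both remaining equivalences from uniqueness of the universal object. Step~1 is the equivalence of \cite[Section 3]{etingof2010fusion}. Step~2 is the standard ``balanced objects'' (relative center) description of $\boxtimes_{\mathcal{D}}$, obtained by monadicity. Your route is a genuine proof modulo the first equivalence, and it shows exactly where semisimplicity and the coend multiplication on $K_{\mathcal{D}}$ enter. The paper's route gives each model a geometric meaning (interfaces between pinched surfaces, boundary lines of a folded theory). That is what the paper later exploits, e.g.\ to count simples and to use tensoriality of $B$ in the $SU(3)_1$ example.

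Two details need repair.

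\textbf{The balanced functor.} $B(X,Y)$ must be built from the $\mathcal{D}$-valued internal Hom, $B(X,Y)=\underline{\mathrm{Hom}}(\cdot,X)\triangleright Y$. With a $\mathrm{Vec}$-valued Hom (as you wrote, and as the displayed formula in the paper literally reads), $B(X,Y)$ is not a $\mathcal{D}$-module functor. For $\mathcal{M}=\mathcal{N}=\mathcal{D}=\mathrm{Vec}_{\mathbb{Z}_2}$, $B(\mathds{1},\mathds{1})$ would be $M\mapsto\mathrm{Hom}(M,\mathds{1})\,\mathds{1}$. This functor is nonzero but kills the invertible simple, so it cannot be a module functor. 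For the same reason, the $\mathrm{Vec}$-level adjunction cannot produce $b_{X,D,Y}$. The balancing instead comes from $\underline{\mathrm{Hom}}(M,X\underline{\otimes}D)\cong\underline{\mathrm{Hom}}(M,X)\otimes D$, combined with the module associator of $\mathbf{L}(\mathcal{J})$.

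\textbf{Sidedness in Ostrik's theorem.} $\mathbf{L}(\mathcal{I})^{\vee}$ is a left $\mathcal{D}$-module category, so Ostrik gives $\mathbf{L}(\mathcal{I})^{\vee}\cong\mathcal{D}_P$, and then $\mathrm{Fun}_{\mathcal{D}}(\mathcal{D}_P,\mathcal{D}_Q)\cong{}_P\mathcal{D}_Q$. As written, $\mathrm{Fun}_{\mathcal{D}}({}_P\mathcal{D},\mathcal{D}_Q)$ pairs a right module with a left one. Your three moves in Step~2 never use $P$ and $Q$, so this detour is optional anyway.
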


\begin{figure}
    \begin{center}
\input{Figures/boundaryinterpretation}
\caption{The physical interpretation of the equivalence $\mathrm{Fun}_{\mathcal{D}}(\mathbf{L}(\mathcal{I})^{\vee},\mathbf{L}(\mathcal{J}))\xrightarrow{\sim}(\mathbf{L}(\mathcal{I})\boxtimes\mathbf{L}(\mathcal{J}))_{K_{\mathcal{D}}}$.}\label{fig:boundaryinterpretation}
    \end{center}
\end{figure}

Before moving on, we note one more useful formula. Suppose $\mathcal{H}$ is a rational conformal field theory with maximal chiral algebra $V\otimes \overline{V}$, such that 
\begin{align}
    \mathcal{H}\cong \bigoplus_{a\in\mathcal{B}(V)} V_a\otimes \overline{V}_{\phi(a)^\ast},
\end{align}
where $\phi$ is a ribbon auto-equivalence of $\mathcal{B}(V)$. In this case, if $W$ is a rational conformal subalgebra, then 
\begin{align}
    \Ver (\mathcal{H}/W\otimes \overline{W})\cong (\Ver (V/W)\boxtimes \Ver (V/W)^{\vee})_{K_{\mathcal{B}(V)}^\phi}
\end{align}
where now $K_{\mathcal{B}(V)}^\phi$ is a twisted Lagrangian algebra of $\mathcal{B}(V)\boxtimes\overline{\mathcal{B}(V)}$, 
\begin{align}
    K_{\mathcal{B}(V)}^\phi \cong \bigoplus_{a\in \mathcal{B}(V)} a\otimes \phi(a)^\ast.
\end{align}
As an abstract fusion category, $\Ver (\mathcal{H}/W\otimes \overline{W})$ is independent of the precise choice of twist $\phi$ which is chosen. However, for example, the induction functor $\Ver (V/W)\boxtimes \Ver (V/W)^{\vee}\to \Ver (\mathcal{H}/W\otimes\overline{W})$ will depend on $\phi$. 

\subsection{Boundary conditions from folding}\label{subsec:boundaries}

It turns out there is a close relationship between the symmetry structure of a chiral algebra $V$ and boundary conditions of full CFTs built on top of $V$. This subsection is dedicated to describing this connection.

In the previous subsections, we allowed for ``heterotic'' rational CFTs, i.e.\ CFTs for which the left- and right-moving chiral algebras are different. To make the present discussion simpler, let us specialize to the case that the left- and right-moving chiral algebra are the same. To this end, recall from Figure \ref{fig:KS} the definition of the theory $\mathrm{CFT}_{V,\mathcal{S}}$. It is a rational conformal field theory built on the chiral algebra $V$, with left- and right-movers glued using the topological surface $\mathcal{S}$. Let us use the notation 
\begin{align}
    \mathrm{Bdy}^\dagger(\mathrm{CFT}_{V,\mathcal{S}}), \ \ \ \mathrm{Bdy}^\dagger(\mathrm{CFT}_{V,\mathcal{S}}/W)
\end{align}
to denote, respectively, the full category of unitary boundary conditions of the theory $\mathrm{CFT}_{V,\mathcal{S}}$, and the subcategory of unitary boundary conditions of $\mathrm{CFT}_{V,\mathcal{S}}$ which preserve the conformal subalgebra $W\subset V$. 

We say by definition that a boundary condition preserves the subalgebra $W$ if the corresponding boundary state $|B\rangle$ satisfies 
\begin{align}\label{eqn:boundarypreservingW}
   w_n|B\rangle = (-1)^{h_w}\overline{w}_{-n}|B\rangle,
\end{align}
where $w_n$ are the modes of an arbitrary operator $w(z)$ in the chiral algebra $W$, and $h_w$ is the conformal dimension of $w(z)$. 
Note that sometimes one sees a relaxed version of Equation \eqref{eqn:boundarypreservingW} where a boundary condition preserves $W$ only up to a ``gluing automorphism'', 
\begin{align}\label{eqn:gluingaut}
    w_n|B\rangle =(-1)^{h_w} \Omega(\overline{w}_{-n})|B\rangle, \ \ \ \ \Omega\in\mathrm{Aut}(W).
\end{align}
However, in our nomenclature, we would say that such a boundary condition breaks $W$ down to the subalgebra $W^\Omega$ of $\Omega$-invariant operators.

We also remark that, sometimes, the same CFT can be recovered using multiple Kapustin-Saulina surfaces. That is, sometimes, $\mathrm{CFT}_{V,\mathcal{S}}\cong \mathrm{CFT}_{V,\mathcal{S}'}$ for two genuinely different surface operators $\mathcal{S}$ and $\mathcal{S}'$. In this situation, the definition of what it means for a boundary condition to preserve the subalgebra $W$ may depend on which surface operator is used to construct the theory.

\begin{example}{}{}
Let $V=V_{2m}$ be the $c=1$ chiral boson algebra from Example \ref{ex:latticeVOA}. The theory $\mathrm{CFT}_{V,\mathds{1}}$ built using the identity surface $\mathcal{S}=\mathds{1}$ in $U(1)_{2m}$ Chern-Simons theory is the compact boson of radius $R^2=2m$, or radius $R^2=2/m$ by T-duality. The boundary conditions which preserve the $W=\widehat{\mathfrak{u}}(1)$ Kac-Moody subalgebra in the compact boson of radius $R^2=2m$ (resp.\ $R^2=2/m$) are the Dirichlet (resp.\ Neumann) boundary conditions. The Neumann (resp.\ Dirichlet) boundaries preserve the smaller subalgebra $W=\widehat{\mathfrak{u}}(1)^+$ of $\mathbb{Z}_2$ charge-conjugation invariant operators. 

\hspace{.3in} On the other hand, one can construct a theory $\mathrm{CFT}_{V,C}$ using the $\mathbb{Z}_2$ surface $C$ in $U(1)_{2m}$ Chern-Simons theory which sends the Wilson line of charge $\lambda$ to the Wilson line of charge $-\lambda$. This also engineers the compact boson theory of radius $R^2=2m$, i.e.\ $\mathrm{CFT}_{V,\mathds{1}}\cong \mathrm{CFT}_{V,C}$. However, in this description of the theory, at radius $R^2=2m$ (resp.\ $R^2=2/m$), it is the Neumann (resp.\ Dirichlet) boundaries which preserve the $\widehat{\mathfrak{u}}(1)$ Kac-Moody algebra, and the Dirichlet (resp.\ Neumann) boundaries which preserve the $\widehat{\mathfrak{u}}(1)^+$ subalgebra.

\end{example}

The following proposal is a generalization of observations made in \cite{Kapustin:2010if}. Recall that, by definition, $\Ver_{\mathcal{S}}(V/W)$ consists of topological lines on $V$ on which $\mathcal{S}$ can terminate and which commute with the local operators in the subalgebra $W$ (cf.\ Equation \eqref{eqn:VerS(VW)}). 

\begin{claim}{}{claim:VerS(VW)boundaries}
   The folding functor defined by Figure \ref{fig:lines2boundaries},
   \begin{align}
       \partial: \Ver_{\mathcal{S}}(V/W)\to \mathrm{Bdy}^\dagger(\mathrm{CFT}_{V,\mathcal{S}}/W)
   \end{align}
   furnishes an equivalence of categories. See also Figure \ref{fig:squashboundary}.
\end{claim}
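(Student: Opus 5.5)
The plan is to reduce the claim to the standard classification of boundary conditions of a Kapustin--Saulina sandwich, applied not to $\mathrm{CFT}_{V,\mathcal{S}}$ viewed over $V$ but to the same theory viewed over the smaller chiral algebra $W$. First I will use the SymTFT picture of the conformal embedding (Figure \ref{fig:SymTFTconformalembedding}) on both sides of the interval. Concretely, I replace each $V$-boundary by $W$ together with the condensation interface $\mathcal{I}_A$. This exhibits $\mathrm{CFT}_{V,\mathcal{S}}$ as $\mathrm{CFT}_{W,\mathcal{S}'}$ with $\mathcal{S}'=\mathcal{I}_A\otimes\mathcal{S}\otimes\mathcal{I}_A^\ast$, a (possibly decomposable) surface in $\mathcal{B}(W)$, exactly as in Equation \eqref{eqn:relationinterfaces}.

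Next I invoke the known result, essentially \cite{Kapustin:2010if} in the FRS language \cite{Fuchs:2002cm}: boundary conditions of $\mathrm{CFT}_{W,\mathcal{S}'}$ that preserve the full chiral algebra $W$ in the sense of Equation \eqref{eqn:boundarypreservingW} are the same as topological line operators on $W$ at which $\mathcal{S}'$ can end. Equivalently, they are $B'$-modules in $\mathcal{B}(W)$, where $B'$ is a gaugeable algebra representing $\mathcal{S}'$. Physically, a $W$-preserving boundary of the folded theory must be a topological junction of the squashed ball picture (Figure \ref{fig:squashboundary}). Unitarity on both sides matches because all categories involved are unitary. This identifies $\mathrm{Bdy}^\dagger(\mathrm{CFT}_{V,\mathcal{S}}/W)$ with the category of lines on the $W$-boundary on which $\mathcal{I}_A\otimes\mathcal{S}\otimes\mathcal{I}_A^\ast$ terminates.

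The third step unfolds this back. A line on $W$ at which the sandwich $\mathcal{I}_A\otimes\mathcal{S}\otimes\mathcal{I}_A^\ast$ ends can be slid so that the two copies of $\mathcal{I}_A$ recombine on the boundary. The result is a line living on the $V$-boundary, that is, on the interface $\mathcal{I}_A$, at which $\mathcal{S}$ ends; this is the pinching manipulation of Figure \ref{fig:pinch}. By Equation \eqref{eqn:VerS(VW)} that category is $\Ver_{\mathcal{S}}(V/W)\cong\Ver(V/W)_B$. Algebraically, I will phrase this as the equivalence
\begin{align}
\mathrm{Fun}_{\mathcal{B}(V)}({_B}\mathcal{B}(V),\Ver(V/W))\cong \mathcal{B}(W)_{B'},
\end{align}
which should follow from $\Ver(V/W)=\mathcal{B}(W)_A$ and the Morita-theoretic description of composite surfaces via the relative Deligne product (the preceding claims). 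Finally, I check that the composite of these equivalences is exactly the folding map $\partial$ of Figure \ref{fig:lines2boundaries}, by tracking a line $X$ through the squashing.

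The main obstacle is the second step, namely justifying that \emph{every} unitary $W$-preserving boundary state arises from a topological junction. This is a completeness statement about the FRS/Cardy classification for a general, non-Morita-simple surface $\mathcal{S}'$. I would try to import it from the FRS theorem that $W$-preserving boundary conditions of a $W$-rational CFT defined by an algebra $B'$ are $B'$-modules, which is rigorous for strongly rational $W$. The remaining work is then making sure the decomposable $\mathcal{S}'$ and the sign convention in Equation \eqref{eqn:boundarypreservingW} (no gluing automorphism) are handled correctly.
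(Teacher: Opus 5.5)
Your proposal is sound, but it is organized differently from the paper's argument. The paper works directly on the $V$-boundary. A line $X\in\Ver_{\mathcal{S}}(V/W)$ at which $\mathcal{S}$ ends is folded along $X$ (Figure \ref{fig:lines2boundaries}) into a boundary $\partial X$ of $\mathrm{CFT}_{V,\mathcal{S}}$, which preserves $W$ because $X$ commutes with $W$. That $\partial$ is an equivalence is then simply asserted as the natural extension of the $W=V$ statement of Fuchs--Runkel--Schweigert and Kapustin--Saulina; the introduction only says it ``can be proved by abstract nonsense''.

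You instead change the base chiral algebra. Rewriting the sandwich over $W$ turns $W$-preserving boundaries of $\mathrm{CFT}_{V,\mathcal{S}}$ into fully chiral-algebra-preserving boundaries of the non-diagonal theory $\mathrm{CFT}_{W,\mathcal{S}'}$. That is exactly the case the paper cites as already proved, so all the new content sits in the identification $\Ver_{\mathcal{S}}(V/W)\simeq\Ver_{\mathcal{S}'}(W/W)\cong\mathcal{B}(W)_{B'}$.

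That identification does hold. Lines bounding $\mathcal{S}'$ in $\mathcal{B}(W)$ are U-turns of $\mathcal{I}_A$ on which $\mathcal{S}$ ends, so the module category of $\mathcal{S}'$ is $\mathcal{B}(W)_A\boxtimes_{\mathcal{B}(V)}{_B}\mathcal{B}(V)$. The tensor--hom adjunction then gives
\begin{align*}
\mathrm{Fun}_{\mathcal{B}(W)}\big(\mathcal{B}(W)_A\boxtimes_{\mathcal{B}(V)}{_B}\mathcal{B}(V),\,\mathcal{B}(W)\big)\simeq\mathrm{Fun}_{\mathcal{B}(V)}\big({_B}\mathcal{B}(V),\,\mathrm{Fun}_{\mathcal{B}(W)}(\mathcal{B}(W)_A,\mathcal{B}(W))\big).
\end{align*}
Combined with $\mathrm{Fun}_{\mathcal{B}(W)}(\mathcal{B}(W)_A,\mathcal{B}(W))\simeq{_A}\mathcal{B}(W)\simeq\mathcal{B}(W)_A$, which uses commutativity of $A$ and requires tracking left/right and duality conventions, this yields your displayed equivalence. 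This is presumably the ``abstract nonsense'' the paper has in mind.

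What your route buys is a genuine reduction to established results:
\begin{itemize}
\item The completeness step you flag is supplied by the Fjelstad--Fuchs--Runkel--Schweigert uniqueness theorem combined with Kong--Runkel. The boundary-field algebra of any $W$-preserving boundary has full centre equal to the given bulk, hence is Morita equivalent to $B'$, so the boundary comes from a $B'$-module.
\item Your worry about a decomposable $\mathcal{S}'$ is moot. $Z_{00}=1$ in the $W$-decomposition of the spectrum forces $B'$ to be simple whenever $\mathcal{S}$ is.
\end{itemize}
The cost is scope. Your reduction needs $W$ rational so that FRS applies over $W$. The paper's direct folding picture makes no reference to that, which is why the paper can go on to extrapolate it, beyond rigor, to the Virasoro subalgebra in Claim \ref{cl:claim:bdyextrap}.
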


The physical argument for Claim \ref{cl:claim:VerS(VW)boundaries} is the folding trick. Namely, one studies a $V$-boundary of $\mathcal{B}(V)$ in the presence of a boundary topological line defect $X \in \Ver_{\mathcal{S}}(V/W)$ on which the bulk surface $\mathcal{S}$ terminates. One imagines this to be a taco --- with $V$ the taco shell and $\mathcal{S}$ the plant-based meat alternative --- and then folds it into the two-dimensional form in which it will ultimately be eaten. Then, away from the folding crease, the theory goes over to $\mathrm{CFT}_{V,\mathcal{S}}$; at the crease, we construct a boundary condition $\partial X$ of $\mathrm{CFT}_{V,\mathcal{S}}$. See Figure \ref{fig:lines2boundaries} (cf.\ also Figure \ref{fig:squashboundary}). 

\begin{figure}
    \begin{center}
        \input{Figures/lines2boundaries}
        \caption{The functor $\partial$ between (non-genuine) lines $X$ on a chiral algebra boundary $V$ and boundary conditions of $\mathrm{CFT}_{V,\mathcal{S}}.$}\label{fig:lines2boundaries}
    \end{center}
\end{figure}

In the special case that one takes $\mathcal{S}$ to be the trivial surface operator and $W=V$, then $\Ver_{\mathcal{S}}(V/W)$ reduces to $\Rep(V)$ and Claim \ref{cl:claim:VerS(VW)boundaries} says that the boundaries of  the canonical diagonal ``charge-conjugation'' modular invariant CFT built on $V$ are in one-to-one correspondence with representations of $V$, as originally showed by Cardy \cite{Cardy:1989ir}. More generally, if $W=V$ but one allows $\mathcal{S}$ to be a non-trivial surface operator (corresponding to gauging an algebra $B\in\Rep(V)$ in the canonical diagonal theory) then the boundaries of $\mathrm{CFT}_{V,\mathcal{S}}$ which preserve $V$ correspond to objects in $\Ver_{\mathcal{S}}(V/W)\cong \Rep(V)_B$, as proved in \cite{Fuchs:2002cm,Kapustin:2010if}.

If we trust Claim \ref{cl:claim:VerS(VW)boundaries} beyond its naive regime of validity by taking $W$ to be the Virasoro algebra, then what we learn is that \emph{every} boundary condition of $\mathrm{CFT}_{V,\mathcal{S}}$ corresponds to some line on $V$ on which $\mathcal{S}$ can terminate.
\begin{claim}{}{claim:bdyextrap}
    Let $\mathcal{S}$ be a topological surface defined by a gaugeable algebra $B$ in $\Rep(V)$. Then there is the following equivalence of categories:
    \begin{align}
        \partial:\mathrm{Sym}^\dagger_{\mathcal{S}}(V)\to  \mathrm{Bdy}^\dagger(\mathrm{CFT}_{V,\mathcal{S}})
    \end{align}
    where $\mathrm{Sym}^\dagger_{\mathcal{S}}(V)\cong \mathrm{Sym}^\dagger(V)_B$ is the full category of topological lines on $V$ on which $\mathcal{S}$ can terminate, and is mathematically described as the category of $B$-modules in $\mathrm{Sym}^\dagger(V)$.
\end{claim}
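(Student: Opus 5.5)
The plan is to obtain Claim \ref{cl:claim:bdyextrap} as a limit of Claim \ref{cl:claim:VerS(VW)boundaries}, taking a colimit over the lattice of rational conformal subalgebras $W\subset V$. First I would check that both sides of the desired equivalence are exhausted by $W$-indexed pieces. On the line side, any finite collection of simple unitary lines of $\mathrm{Sym}^\dagger(V)$ generates a unitary fusion category, and we may enlarge it by adjoining $\mathcal{B}(V)$. By symmetry/subalgebra duality (Claim \ref{cl:claim:sym/subduality}) this fusion category equals $\Ver(V/W)$ for the rational subalgebra $W=V^{\mathcal{C}}$, so
\begin{align}
\mathrm{Sym}^\dagger(V)=\bigcup_{W}\Ver(V/W).
\end{align}
Passing to $B$-modules commutes with this union, since a $B$-module structure on $M$ only involves morphisms inside the fusion category generated by $M$ and $B$. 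By Equation \eqref{eqn:VerS(VW)} this gives $\mathrm{Sym}^\dagger_{\mathcal{S}}(V)=\bigcup_W \Ver_{\mathcal{S}}(V/W)$. On the boundary side, every unitary boundary state of $\mathrm{CFT}_{V,\mathcal{S}}$ preserves at least Virasoro. I would then argue that a unitary boundary preserves some \emph{rational} $W$: its gluing condition \eqref{eqn:boundarypreservingW} holds on the subalgebra of $V$ that it preserves. I would aim to show this subalgebra is rational under the finiteness hypotheses implicit in the claim. Where it is not (for example, the continuous families of Virasoro-preserving boundaries), I would instead use the infinite-category extension of symmetry/subalgebra duality flagged in the text.

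Second, I would verify that the equivalences $\partial_W$ of Claim \ref{cl:claim:VerS(VW)boundaries} are compatible for $W'\subset W$. The inclusion $\Ver_{\mathcal{S}}(V/W)\hookrightarrow \Ver_{\mathcal{S}}(V/W')$ must correspond to the inclusion $\mathrm{Bdy}^\dagger(\mathrm{CFT}_{V,\mathcal{S}}/W)\hookrightarrow\mathrm{Bdy}^\dagger(\mathrm{CFT}_{V,\mathcal{S}}/W')$. This is geometric: the folding of Figure \ref{fig:lines2boundaries} is defined directly on a line $X$ of the $V$-boundary and does not reference $W$. So $\partial_W$ and $\partial_{W'}$ agree on common objects. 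On morphisms both sides compute the topological junctions of $X$, respectively the boundary-changing operators of weight zero, again independently of $W$. The filtered union of compatible equivalences is then an equivalence $\partial$, which is full, faithful and essentially surjective.

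The main obstacle is the essential surjectivity step. One must show that an arbitrary unitary boundary condition of $\mathrm{CFT}_{V,\mathcal{S}}$ (preserving only Virasoro) lies in $\mathrm{Bdy}^\dagger(\mathrm{CFT}_{V,\mathcal{S}}/W)$ for some rational $W$, or else treat irrational $W$ directly. My first attempt would be to run the folding argument in reverse, without passing through a subalgebra. Unfold the boundary $b$ into a line on the $V$-boundary: fusing $b$ with the Cardy-type interface between $\mathrm{CFT}_{V,\mathcal{S}}$ and the sandwich produces a topological line defect $X_b$ of the chiral boundary on which $\mathcal{S}$ ends. Then show that the assignment $b\mapsto X_b$ is inverse to $\partial$ via the bijection between topological junctions and weight-zero boundary operators. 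Unitarity enters here, to guarantee semisimplicity and that $X_b$ is a genuine unitary line.
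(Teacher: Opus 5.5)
\textbf{The gap is in your first step.} The identity $\mathrm{Sym}^\dagger(V)=\bigcup_W\Ver(V/W)$, with $W$ ranging over rational conformal subalgebras, is false. The paper's physical input is only that a finite set of lines \emph{which closes under fusion} forms a fusion category. A single simple line need not generate finitely many simples, and the paper's own example shows this. For $V=\widehat{\mathfrak{su}}(2)_1$ one has $\mathrm{Irr}(\mathrm{Sym}^\dagger(V))=SU(2)$. If $g\in SU(2)$ has infinite order, its powers are infinitely many non-isomorphic simples, so $g$ lies in no fusion category $\Ver(V/W)$. Consistently, the subalgebra $g$ preserves is $V^{\langle g\rangle}\cong\widehat{\mathfrak{u}}(1)$, which is irrational. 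The folded boundary $\partial g$ (whose gluing automorphism is $g$) therefore preserves no rational $W$ either: any $W$ it preserves sits inside $V^{\langle g\rangle}$, and a rational such $W$ would put $g$ into the fusion category $\Ver(V/W)$.

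So your filtered union of the $\partial_W$ only matches the ``locally finite'' parts $\bigcup_W\Ver_{\mathcal{S}}(V/W)$ and $\bigcup_W\mathrm{Bdy}^\dagger(\mathrm{CFT}_{V,\mathcal{S}}/W)$. That adds nothing beyond Claim \ref{cl:claim:VerS(VW)boundaries}. Everything new in the statement, namely lines and boundaries preserving only an irrational subalgebra down to Virasoro, is missed on the line side exactly as on the boundary side. There are also no finiteness hypotheses implicit in the claim that would exclude them: it is meant to cover all unitary boundaries, and that is what Claim \ref{cl:claim:identityCardybound} uses.

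\textbf{How the paper argues, and what your fallback is missing.} The paper does not attempt a colimit. It takes Claim \ref{cl:claim:VerS(VW)boundaries} (the folding picture of Figure \ref{fig:lines2boundaries}, backed by FRS/Kapustin--Saulina when $W=V$) and simply trusts it at $W=\mathrm{Vir}$, on the grounds that the folding construction never refers to $W$. This is an extrapolation, not a proof. Your fallback of unfolding a boundary directly is the same argument, and once you use it the colimit scaffolding is unnecessary.

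However, the crux is not where you place it. No separate ``Cardy-type interface'' is needed: the sandwich already is $\mathrm{CFT}_{V,\mathcal{S}}$. You can put $b$ on the end face of the slab and shrink it to a crease using topological invariance of the bulk. What has to be explained is why the resulting crease $X_b$ is \emph{topological}, and that comes from chirality, not unitarity. The conformal gluing condition $T=\bar T$ at $b$ unfolds to continuity of the single holomorphic stress tensor across the crease. On a purely chiral boundary, that is exactly the condition for a line to be topological. Unitarity then only serves to identify weight-zero boundary-changing operators with topological junctions (full faithfulness) and to ensure semisimplicity.
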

Let us describe some surprising consequences of this claim in the special case that $\mathcal{S}$ is the trivial surface operator, i.e.\ in the case that we are considering the canonical diagonal rational CFT built on $V$, which we denote $\mathrm{CFT}_V$. We expect most of the claims to generalize. 

First, note that the lines supported on e.g.\ the left-moving chiral algebra embed as a subcategory into the category $\mathrm{Sym}^\dagger(\mathrm{CFT}_V)$ of line operators of the full diagonal theory built on $V$, 
\begin{align}
\begin{split}
    \mathrm{Sym}^\dagger(V) &\subset \mathrm{Sym}^\dagger(\mathrm{CFT}_V)\cong \mathrm{Sym}^\dagger(V)\boxtimes_{\mathcal{B}(V)}\mathrm{Sym}^\dagger(V)^{\vee}, \\
    X &\mapsto B(X,\mathds{1}),
\end{split}
\end{align}
where $B:\mathrm{Sym}^\dagger(V)\boxtimes \mathrm{Sym}^\dagger(V)^{\vee}\to \mathrm{Sym}^\dagger(V)\boxtimes_{\mathcal{B}(V)}\mathrm{Sym}^\dagger(V)^{\vee}$ is the $\mathcal{B}(V)$-balanced functor appearing in the definition of the relative Deligne product, cf.\ Section \ref{subsec:gluingrelative}. 
\begin{example}{}{}
Consider the $SU(2)_1$ WZW model. As will be argued more fully elsewhere, the lines supported on the left-moving chiral algebra are described by $\mathrm{Irr}(\mathrm{Sym}^\dagger(V))=SU(2)$ while the lines of the full CFT are $\mathrm{Irr}(\mathrm{Sym}^\dagger(\mathrm{CFT}_V))=SU(2)_L\times SU(2)_R/\mathbb{Z}_2$, with  $SU(2)$ being identified with $SU(2)_L$ (cf.\ Example \ref{ex:cleftamalg}). The Kapustin-Saulina picture of Figure \ref{fig:KS} shows that the general situation is analogous.
\end{example}

We then have a result which says that, in the canonical diagonal theory $\mathrm{CFT}_V$, every boundary condition can be generated by fusing bulk topological lines onto a single known boundary condition.

\begin{claim}{}{}
    Fix any boundary condition $b$ of $\mathrm{CFT}_V$. Then any other simple boundary condition $b'$ occurs as a summand of $\mathcal{L}\otimes b$, where $\mathcal{L}$ is some topological line in $\mathrm{Sym}^\dagger(\mathrm{CFT}_V)$ and $\mathcal{L}\otimes b$ is the boundary obtained by parallel fusion.
\end{claim}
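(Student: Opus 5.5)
The plan is to transport the problem, via Claim \ref{cl:claim:bdyextrap} with $\mathcal{S}=\mathds{1}$, to a statement purely about $\mathrm{Sym}^\dagger(V)$. The folding functor $\partial:\mathrm{Sym}^\dagger(V)\to\mathrm{Bdy}^\dagger(\mathrm{CFT}_V)$ is an equivalence, so we may write $b=\partial X$ and $b'=\partial Y$ for simple lines $X,Y\in\mathrm{Sym}^\dagger(V)$. The key compatibility to establish first is
\begin{align}
\partial(Z\otimes X)\cong B(Z,\mathds{1})\otimes \partial X, \qquad Z\in \mathrm{Sym}^\dagger(V).
\end{align}
Here $B(Z,\mathds{1})$ is the image of $Z$ under the embedding $\mathrm{Sym}^\dagger(V)\subset\mathrm{Sym}^\dagger(\mathrm{CFT}_V)$ described just above the statement. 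Geometrically this is clear from Figure \ref{fig:lines2boundaries}. A line $Z$ on the left-moving half of the taco shell, placed parallel to the crease, becomes after folding a topological line of $\mathrm{CFT}_V$ parallel to the boundary, namely $B(Z,\mathds{1})$. Sliding it into the crease fuses it with $X$. I would make this precise by checking that both sides are computed by the same configuration in the Kapustin--Saulina geometry of Figure \ref{fig:KS}, with only an isotopy of $Z$ along the $V$-boundary in between.

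With this in hand, the statement reduces to a fusion-category fact. Given simple $X,Y$, we need some $Z$ with $\mathrm{Hom}(Y,Z\otimes X)\neq 0$. The choice $Z=Y\otimes X^\ast$ works, since $Y\otimes X^\ast\otimes X\supset Y\otimes\mathds{1}=Y$ by rigidity (coevaluation $\mathds{1}\to X^\ast\otimes X$ is a nonzero summand inclusion in the semisimple setting). Then set $\mathcal{L}=B(Y\otimes X^\ast,\mathds{1})$. We get $\mathcal{L}\otimes b\cong\partial(Y\otimes X^\ast\otimes X)\supset\partial Y=b'$, using additivity of $\partial$. When $b=\partial\mathds{1}$ one may take $Z=Y$ and obtain $b'=\mathcal{L}\otimes\partial\mathds{1}$ exactly, which is the refinement mentioned in the introduction.

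The main obstacle is that $\mathrm{Sym}^\dagger(V)$ is generally infinite. Rigidity and semisimplicity are not automatic there. I would handle this by noting that $X$ and $Y$ each lie in some fusion subcategory $\mathcal{C}\supset\mathcal{B}(V)$, which is finite by Claim \ref{cl:claim:sym/subduality}: take $\mathcal{C}=\Ver(V/W)$ with $W\subset V$ a rational subalgebra preserved by both boundaries. We can assume such a $W$ exists, for instance from the intersection of the stabilized subalgebras, granting rationality in the spirit of Conjecture \ref{conj:Grationality}. Inside $\Ver(V/W)$ all the manipulations above are legitimate. The compatibility of $\partial$ with fusion is then the genuinely physical input, and I would verify it at least at the level of boundary states by comparing cylinder amplitudes, using Equation \eqref{eqn:twistedpfB}.
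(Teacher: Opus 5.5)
Your proposal is the paper's argument: Claim \ref{cl:claim:bdyextrap} makes $\mathrm{Bdy}^\dagger(\mathrm{CFT}_V)$ the regular module category of $\mathrm{Sym}^\dagger(V)\subset\mathrm{Sym}^\dagger(\mathrm{CFT}_V)$, i.e.\ $X\otimes b\cong\partial(X\otimes\partial^{-1}b)$, and then one takes $\mathcal{L}=\partial^{-1}b'\otimes(\partial^{-1}b)^\ast$, which is your $B(Y\otimes X^\ast,\mathds{1})$.

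Your one addition, reducing to a finite $\Ver(V/W)$, is not in the paper, and it cannot always be carried out. For $V=\widehat{\mathfrak{su}}(2)_1$ and $g\in SU(2)$ of infinite order, the powers of $g$ are pairwise distinct simple lines, so $g$ lies in no fusion subcategory. The reduction is also unnecessary: the argument only uses rigidity and semisimplicity around $\partial^{-1}b$, which the paper simply takes for granted in $\mathrm{Sym}^\dagger(V)$.
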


In fact, we do not even need to consider the full category $\mathrm{Sym}^\dagger(\mathrm{CFT}_V)$ of topological lines in the bulk. It suffices to consider the subcategory $\mathrm{Sym}^\dagger(V)$ of lines coming from e.g.\ the left-moving chiral algebra $V$. Indeed, by Claim \ref{cl:claim:bdyextrap}, there is an equivalence
\begin{align}\label{eqn:linebdyequiv}
    \partial:\mathrm{Sym}^\dagger(V)\to \mathrm{Bdy}^\dagger(\mathrm{CFT}_V)
\end{align}
obtained by the folding trick. 
The category $\mathrm{Bdy}^\dagger(\mathrm{CFT}_V)$ is acted on in the obvious way by the subcategory $\mathrm{Sym}^\dagger(V)\subset \mathrm{Sym}^\dagger(\mathrm{CFT}_V)$ of bulk lines: namely, $\mathrm{Bdy}^\dagger(\mathrm{CFT}_V)$ transforms in the ``regular'' module category of $\mathrm{Sym}^\dagger(V)$, 
\begin{align}
    X\otimes b \cong \partial(X\otimes \partial^{-1}b).
\end{align}Thus, we can always take $\mathcal{L}=\partial^{-1}b'\otimes (\partial^{-1}b)^\ast$, in which case it clearly follows that $b'\subset \mathcal{L}\otimes b$.

We obtain an even stronger statement if we choose $b$ to be identity Cardy boundary condition, i.e.\ $b=\partial\mathds{1}$ the boundary condition obtained by folding the trivial line.

\begin{claim}{}{claim:identityCardybound}
    Let $b=\partial\mathds{1}$ be the identity Cardy boundary condition of $\mathrm{CFT}_V$. Every other simple boundary condition is of the form 
    \begin{align}
        b' = X\otimes \partial\mathds{1},
    \end{align} with $X$ a topological line in $\mathrm{Sym}^\dagger(V)\subset\mathrm{Sym}^\dagger(\mathrm{CFT}_V)$ coming from the left-moving chiral algebra. In particular, in a unitary diagonal rational CFT,
    \begin{align}
        g_{b'}=\dim(X)g_{\partial\mathds{1}} \geq g_{\partial\mathds{1}} = \sqrt{S_{11}},
    \end{align}
    where $S$ is the modular S-matrix of $\Rep(V)$. Hence, the identity Cardy state has the smallest $g$-function amongst all boundaries.
\end{claim}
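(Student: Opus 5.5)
The plan is to derive the statement from the folding equivalence of Claim \ref{cl:claim:bdyextrap} in the case $\mathcal{S}=\mathds{1}$, i.e.\ Equation \eqref{eqn:linebdyequiv}. First I would check that $\partial$ intertwines the action of $\mathrm{Sym}^\dagger(V)$. A line $X$ supported on the left-moving chiral algebra acts on a boundary $\partial Y$ by parallel fusion. In the taco picture of Figure \ref{fig:lines2boundaries}, $X$ lies on the $V$ shell and can be slid along the shell until it meets the crease line $Y$. This gives $X\otimes\partial Y\cong\partial(X\otimes Y)$. Setting $Y=\mathds{1}$ yields $X\otimes\partial\mathds{1}\cong\partial X$. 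Now let $b'$ be simple and put $X:=\partial^{-1}b'$. Since $\partial$ is an equivalence and $b'$ is simple, $X$ is a simple object of $\mathrm{Sym}^\dagger(V)$, and $b'\cong X\otimes\partial\mathds{1}$. This proves the first assertion, with no extra summands.

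For the $g$-function, I would use that $g_b$ is the overlap of the vacuum with the boundary state, or equivalently the coefficient of the vacuum character in the open-string channel after an $S$-transformation. Fusing a topological line $X$ onto a boundary multiplies its boundary state by the eigenvalue of $\widehat X$ on the vacuum. That eigenvalue is $\dim(X)$, because the identity operator is transparent and a circle of $X$ shrinks to its quantum dimension. Hence $g_{X\otimes b}=\dim(X)\,g_b$, and so $g_{b'}=\dim(X)g_{\partial\mathds{1}}$. Unitarity gives $\dim(X)\ge1$ for any simple unitary line: $\dim(X)$ is the Frobenius–Perron dimension, which is at least $1$ for nonzero objects. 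I would cite the pseudounitarity discussion around the definition of unitary $\Rep(V)$.

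Finally, $g_{\partial\mathds{1}}=\sqrt{S_{11}}$ is Cardy's result. The boundary state is $|\partial\mathds{1}\rangle=\sum_a \frac{S_{1a}}{\sqrt{S_{1a}}}|a\rangle\!\rangle$ for the identity Cardy state, so its vacuum overlap is $S_{11}/\sqrt{S_{11}}$.

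I expect the main obstacle to be making $X\otimes\partial Y\cong\partial(X\otimes Y)$ precise when $X$ lies outside $\mathcal{B}(V)$ and breaks $V$ down to Virasoro, where no SymTFT with a rational bulk is available. My first attempt would be to choose a rational $W\subset V^{\{X,Y\}}$ with $X,Y\in\Ver(V/W)$, which is possible by Claim \ref{cl:claim:sym/subduality}, and then argue inside the finite setting of Claim \ref{cl:claim:VerS(VW)boundaries}, where locality on the shell makes the sliding argument rigorous. Lines with no rational fixed subalgebra would have to be treated only at the physical level of rigor that Claim \ref{cl:claim:bdyextrap} itself assumes.
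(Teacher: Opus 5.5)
Your proposal is correct and is essentially the paper's argument: the paper also uses the folding equivalence $\partial:\mathrm{Sym}^\dagger(V)\to\mathrm{Bdy}^\dagger(\mathrm{CFT}_V)$ of Claim \ref{cl:claim:bdyextrap} and notes that boundaries form the regular module, $X\otimes b\cong\partial(X\otimes\partial^{-1}b)$. Taking $b=\partial\mathds{1}$ then gives $b'=\partial^{-1}b'\otimes\partial\mathds{1}$ with no extra summands, and the bound follows from $\dim(X)\geq 1$ for unitary lines. Your additions (the vacuum-eigenvalue argument for $g_{X\otimes b}=\dim(X)g_b$, and Cardy's value $\sqrt{S_{11}}$) fill in steps the paper leaves implicit. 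Your caveat is also apt, with one refinement: Claim \ref{cl:claim:sym/subduality} only supplies a rational $W$ when $X$ and $Y$ generate a fusion subcategory, so lines such as infinite-order elements of $SU(2)$ acting on $\widehat{\mathfrak{su}}(2)_1$ stay at the physical level of rigor, exactly as in the paper.
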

In \cite{Collier:2021ngi}, the authors observed, by applying numerical bootstrap techniques, that the $g$-function of the identity Cardy boundary condition provides a lower bound for the $g$-function of \emph{stable} branes in several examples of diagonal 2D RCFTs. Claim \ref{cl:claim:identityCardybound} above shows that this bound applies to any boundary condition (i.e.\ not just the stable ones) in any diagonal theory $\mathrm{CFT}_V$.

Before moving on, recall that $\mathrm{Ver}(V/W)$ admits a hypergroup grading, Claim \ref{cl:cla:hypergroupgrading}. Since Claim \ref{cl:claim:VerS(VW)boundaries} identifies $\mathrm{Ver}(V/W)$ with $\mathrm{Bdy}^\dagger(\mathrm{CFT}_V/W)$,  it is natural to ask whether the hypergroup has an interpretation in terms of boundary conditions of $\mathrm{CFT}_V$. 

Indeed, suppose that a topological line $X\in \mathcal{B}(V)_g\subset\Ver(V/W)$ belongs to a graded-component of $\mathrm{Ver}(V/W)$ corresponding to an invertible element $g$ of the hypergroup. Invertible elements of the hypergroup can be thought of as automorphisms of $V$, and we claim that the boundary condition $\partial X$ of $\mathrm{CFT}_V$ obtained from $X$ by folding enjoys the twisted gluing condition from Equation \eqref{eqn:gluingaut} with $\Omega=g$.
 In the case that $r_i$ is a noninvertible element of the hypergroup, one should be able to obtain a richer collection of gluing conditions than have previously been considered, though the equations will likely involve twisted sectors. We leave a more detailed exploration of this question to the future.

\begin{figure}
    \begin{center}
        \input{Figures/annulus}
        \caption{The pictorial interpretation of the relationship Equation \eqref{eqn:annulus} between annulus partition functions of $\mathrm{CFT}_V$ (far left) and twisted-sector partition functions of the chiral algebra $V$ (far right).}\label{fig:annulus}
    \end{center}
\end{figure}

We conclude by noting the following result on annulus partition functions. 

\begin{claim}{}{claim:annulus}
    The annulus partition function ${^{S^1\times I}}Z_{\partial X,\partial X'}(\delta)$ of $\mathrm{CFT}_V$ in the presence of boundary conditions $\partial X,\partial X' \in \mathrm{Bdy}^\dagger(\mathrm{CFT}_V/W)$ is
    \begin{align}\label{eqn:annulus}
        {^{S^1\times I}}Z_{\partial X,\partial X'}(\delta) = \sum_{X''\in \mathrm{Irr}(\Ver(V/W))} N_{XX'}^{X''}~Z_{X''}(\tau),  \ \ \ \ \ \tau=i\frac{\delta}{2},
    \end{align}
    where $N_{XX'}^{X''}$ are the fusion coefficients of $\Ver(V/W)$, and $Z_X(\tau)$ is the graded-dimension of the (twisted) $V$-module $V_X$ in Equation \eqref{eqn:twistedpf}. 
\end{claim}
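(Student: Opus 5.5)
Viewed in the open channel, the annulus with boundary conditions $\partial X$ and $\partial X'$ at its two ends is a 3D TQFT computation, and we will evaluate it there. By Claim \ref{cl:claim:VerS(VW)boundaries} (with $\mathcal{S}=\mathds{1}$), each boundary $\partial X$ is obtained from the taco picture of Figure \ref{fig:lines2boundaries}: the $V$-boundary of $\mathcal{B}(V)$ is folded, and the crease carries the boundary line $X\in\Ver(V/W)$. We will apply this construction to both ends of the annulus $S^1\times I$. Unfolding both creases turns the $\mathrm{CFT}_V$ partition function into the partition function of $\mathcal{B}(V)$ on a solid torus whose boundary torus carries the chiral $V$-boundary condition. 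The lines $X$ and $X'$ then wrap the boundary torus along the two cycles that were the former creases. Both of these cycles are parallel to the $S^1$ direction of the annulus, so $X$ and $X'$ end up as two parallel, non-intersecting boundary lines. The modulus of the unfolded torus is set by doubling the interval: its length $\delta$ in units of the circle gives $\tau=i\delta/2$. This is the standard doubling trick, and we will fix the exact normalization by matching the trivial case $X=X'=\mathds{1}$. In that case the left side is the Cardy annulus $\mathrm{Tr}_{V}\,q^{L_0-c/24}$, and the right side is $Z_{\mathds{1}}$.

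Next we fuse the two parallel boundary lines. Since $X$ and $X'$ are topological lines on the same boundary and run in parallel, they can be brought together, giving
\begin{align}
X\otimes X'\cong\bigoplus_{X''}N_{XX'}^{X''}X''.
\end{align}
Here we must take care with orientation. One crease is reversed relative to the other, so the fusion may naturally come out as $X^\ast\otimes X'$ or $X\otimes X'^\ast$. We will fix this convention by comparing with Cardy's result for $X,X'\in\mathcal{B}(V)$, where the annulus multiplicities are $N_{a^\ast b}^{c}$ in the usual convention. Most likely the statement's $\partial X$ is defined so that this dual is absorbed into the orientation of the folding.

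What remains is a solid torus whose boundary carries a single line $X''$ along the $S^1$ direction. Read in the other channel, with time running along the cycle transverse to $X''$, this is a trace over states on a spatial circle that is crossed by $X''$. Equivalently, by Figure \ref{fig:stateopcorrtwisted}, it is a trace over the twisted module $V_{X''}$ with no dome insertion. This is exactly $Z_{X''}(\tau)$ of Example \ref{ex:twisteddim}. Summing over $X''$ gives Equation \eqref{eqn:annulus}.

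The main obstacle is justifying the first step rigorously: we must show that the annulus partition function of $\partial X$ and $\partial X'$ really equals the unfolded solid-torus amplitude, with no extra normalization factors. As a cross-check we would use the closed-channel route. Each boundary state $|\partial X\rangle$ is expanded in Ishibashi-type states of $W\otimes\overline W$, with coefficients given by the twisted-character data $B_{X\mu}$ and the $S^W$-matrix, as in Equation \eqref{eqn:inducedaction}. We would then verify modular covariance using the intertwining relations \eqref{eqn:intertwinemodular} together with Frobenius reciprocity \eqref{eqn:Frobeniusreciprocity}. In this check, $\sum_{X''}N_{XX'}^{X''}B_{X''\mu}$ should match the multiplicity of $W_\mu$ in the open-string spectrum.
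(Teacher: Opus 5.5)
Your argument is essentially the paper's own: there the claim is justified only by the (un)folding trick of Figure \ref{fig:annulus}. That trick unfolds both creases into a solid torus carrying $X$ and $X'$ as parallel boundary lines along the $S^1$, fuses them, and reads off $\sum_{X''}N\,Z_{X''}$. Your caution about the dual is also well placed: Cardy's case forces $N_{X^\ast X'}^{X''}$ (or $N_{XX'^\ast}^{X''}$) unless the orientation convention for $\partial X'$ absorbs it, which the paper's $E_8$ example cannot detect because all its lines are self-dual.

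One wording fix: the trace over $V_{X''}$ arises with time running along the $S^1$, parallel to $X''$ (the open channel you started in), so that each spatial disk has its boundary circle punctured by $X''$. With time transverse to $X''$, the line would instead wrap the spatial circle and act as an operator rather than twist the state space.
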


Note that Claim \ref{cl:claim:annulus} generalizes the classic Cardy result (see e.g.\ \cite[Equations (29), (41)]{Cardy:2004hm}) on annulus partition functions involving boundaries in $\mathrm{Bdy}^\dagger(\mathrm{CFT}_V/V)$, see \cite{Behrend:1999bn,Fuchs:2002cm}. From our perspective it arises from the (un)folding trick, as depicted in e.g.\ Figure \ref{fig:annulus}. 

\section{Examples}\label{sec:examples}

In this section, we provide a number of examples which illustrate the generalities developed in Section \ref{sec:relativesymmetries} and Section \ref{sec:gluing}.

\subsection{A cleft example: invertible symmetries of free chiral boson CFTs}\label{subsec:cleft}

Recall from Example \ref{ex:latticeVOA} that $V_{2m}$ refers to the $c=1$ chiral algebra of the compact free boson CFT of radius $R^2=2m$. It lives on the boundary of $U(1)_{2m}$ Chern-Simons theory, and we write its irreducible representations as $V_{2m,r}$ with $r\in\mathbb{Z}_{2m}$. It admits a conformal subalgebra of the form
\begin{align}\label{eqn:latlatembedding}
    V_{2k}\subset V_{2m}
\end{align}
if and only if $k=mn^2$ for some positive integer $n\in\mathbb{Z}_{>0}$. Indeed, in this case, the vertex operators included in $V_{2mn^2}$ form a closed subset (under the OPE) of the vertex operators included in $V_{2m}$. 

We would like to compute the fusion category 
\begin{align}
    \mathcal{C}_{m\vert n}:=\Ver (V_{2m}/V_{2mn^2})
\end{align} 
of boundary topological line operators which commute with the $V_{2mn^2}$ subalgebra. 

\begin{claim}{}{}
    The category of topological line operators of $V_{2m}$ which preserve $V_{2mn^2}$ is given by 
    \begin{align}
        \mathcal{C}_{m\vert n}\cong \mathrm{Vec}_{\mathbb{Z}_{2mn}}^{\omega_m}
    \end{align}
    where $\omega_m$ is the 3-cocycle in $H^3(\mathbb{Z}_{2mn},U(1))\cong \mathbb{Z}_{2mn}$ corresponding to $m$ units of 't Hooft anomaly.\footnote{See e.g.\ \cite{Lin:2021udi} for an explicit expression.}
\end{claim}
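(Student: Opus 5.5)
Since $V_{2mn^2}\subset V_{2m}$ is the fixed-point subalgebra of a cleft cyclic group of automorphisms, I will compute $\mathcal{C}_{m\vert n}$ with the cleft machinery of Example \ref{ex:cleftpt1} and then pin down the associator using the Drinfeld center formula \eqref{eqn:Drinfeldcenterverlinde}.

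\textbf{The cleft group.} Take $G=\mathbb{Z}_n$, generated by the $U(1)$ rotation $g=e^{2\pi i p/(n\sqrt{2m})}$, where $p$ is the zero mode of $\partial\phi$. It acts on $e^{i\lambda\phi}$ with $\lambda=\sqrt{2m}\,\ell$ by the phase $e^{2\pi i\ell/n}$. Its fixed points are exactly the vertex operators with $n\mid \ell$, so $V_{2m}^{G}=V_{2mn^2}$. Every $g\in G$ is continuously connected to the identity inside $U(1)\subset\mathrm{Aut}(V_{2m})$, so $g_\ast$ is the identity ribbon autoequivalence of $\Rep(V_{2m})\cong \Vec_{\mathbb{Z}_{2m}}^{\sigma,\omega}$, and $G$ is cleft. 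Example \ref{ex:cleftpt1} (i.e.\ \cite[Theorem 4.1]{Gannon:2024tcl}) then gives $\mathcal{C}_{m\vert n}\cong \Vec_\Gamma^{\tilde\omega}$, where $\Gamma$ is a central extension
\begin{align}
1\to \mathbb{Z}_{2m}\to\Gamma\to \mathbb{Z}_n\to 1 .
\end{align}
As a check, \eqref{eqn:dimver} gives $|\Gamma|^2=\dim\Rep(V_{2mn^2})\dim\Rep(V_{2m})=2mn^2\cdot 2m$, so $|\Gamma|=2mn$, as expected.

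\textbf{Identifying $\Gamma$.} I will use the induction functor \eqref{eqn:induction}. The condensable algebra is $A=\bigoplus_{j\in\mathbb{Z}_n}\mu_{2mnj}$ inside $\mathcal{B}(V_{2mn^2})=\Vec_{\mathbb{Z}_{2mn^2}}$, and $\mathcal{B}(V_{2mn^2})_A$ is pointed with group $\mathbb{Z}_{2mn^2}/\langle 2mn\rangle\cong\mathbb{Z}_{2mn}$. The functor $I$ is a tensor functor that sends the generator $\mu_1$ to a simple object whose multiples exhaust all $2mn$ simples. Hence $\Gamma\cong\mathbb{Z}_{2mn}$. Physically, the generating line is the $U(1)$ rotation $e^{2\pi i p/(2mn\sqrt{2m})}$, which realizes the $\mathbb{Z}_{2mn}$ twisted sectors.

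\textbf{Fixing the anomaly.} Since $H^3(\mathbb{Z}_{2mn},U(1))\cong\mathbb{Z}_{2mn}$, it remains to determine the class $k$ of $\tilde\omega$. For $\Vec_{\mathbb{Z}_N}^{\omega_k}$ the Drinfeld center is pointed, with anyons $(a,e)\in\mathbb{Z}_N\times\mathbb{Z}_N$ and spins $\theta=\exp\bigl(2\pi i(ae/N+ka^2/N^2)\bigr)$. I will match this against \eqref{eqn:Drinfeldcenterverlinde}, which says $Z(\mathcal{C}_{m\vert n})\cong \Vec_{\mathbb{Z}_{2mn^2}}\boxtimes\overline{\Vec_{\mathbb{Z}_{2m}}}$ with quadratic form $q(r,s)=\exp\bigl(2\pi i (r^2/(4mn^2)-s^2/(4m))\bigr)$. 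A faster route is to compute the spin of the generating twisted sector directly. The ground state of the twisted sector of the generating line has momentum $\lambda$ shifted by $1/(n\sqrt{2m})$, so its conformal weights are $h\in \frac{(r/n+2m\ell)^2}{4m}+\mathbb{Z}$. The spins $e^{2\pi i h}$ in the sector $a=1$ then read off $k/N^2$ with $N=2mn$. Comparing $r^2/(4mn^2)$ with $k/(4m^2n^2)$ gives $k=m$ modulo the $ae/N$ ambiguity. The standard anomaly formula for a $\mathbb{Z}_N$ subgroup of the chiral $U(1)$ at level $2m$ gives the same answer: the spin selection rule in the $a$-twisted sector is $h\in ka^2/N^2+\frac{1}{N}\mathbb{Z}$.

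I expect this last step to be the main obstacle. The cohomology class must be pinned down carefully, accounting for the ambiguity from redefining the generator (which acts by $k\mapsto u^2k$), and checking that the spin data determine $k$ modulo $N$ and not just modulo a divisor. Concretely, I will verify that the twisted-sector spectrum $\{h \bmod \frac1N\}$ equals $m/N^2$ and not some other $k'/N^2$. I will also cross-check that the resulting center reproduces the discriminant form $\mathbb{Z}_{2mn^2}\oplus\overline{\mathbb{Z}_{2m}}$.
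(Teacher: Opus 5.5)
Your first two steps match the paper. Cleftness of $\mathbb{Z}_n\subset U(1)$ together with Example~\ref{ex:cleftpt1} gives $\mathcal{C}_{m\vert n}\cong\Vec_\Gamma^{\tilde\omega}$. The induction functor for $A=\bigoplus_{\ell}\widetilde{\eta}^{2mn\ell}$, combined with Frobenius reciprocity, then gives $\Gamma\cong\mathbb{Z}_{2mn}$, generated by $\kappa=I(\widetilde{\eta})$.

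The routes diverge when fixing $\tilde\omega$. The paper matches the \emph{fusion rules} of $Z(\Vec_{\mathbb{Z}_{2mn}}^{\tilde\omega})$ against $\mathbb{Z}_{2m}\times\mathbb{Z}_{2mn^2}$, citing \cite{Coste:2000tq}. You instead match \emph{twists}, reading off the spins of the anyons lying over $\kappa$ from the twisted sector $V_\kappa\cong\bigoplus_\ell V_{2mn^2,1+2mn\ell}$. Yours is the sharper invariant. The anyon group of $Z(\Vec_{\mathbb{Z}_N}^{\omega_k})$ is $\mathbb{Z}_{\gcd(N,2k)}\times\mathbb{Z}_{N^2/\gcd(N,2k)}$, so fusion rules only detect $\gcd(N,2k)$. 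They cannot, for instance, separate $k=0$ from $k=m$ when $n=1$ (toric code versus double semion at $m=1$).

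Your computation also settles the worry you raise. With $N=2mn$, you have $\frac{(1+2mn\ell)^2}{4mn^2}\equiv\frac{m}{N^2}\pmod{\frac1N}$. Anyons over the generator of $\Vec_{\mathbb{Z}_N}^{\omega_k}$ have spins in $\frac{k}{N^2}+\frac1N\mathbb{Z}$. Hence $k\equiv m\pmod N$ exactly for the generator $\kappa$, and no automorphism ambiguity remains.

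The price is one input you should state explicitly. The braided equivalence $Z(\mathcal{C}_{m\vert n})\simeq\mathcal{B}(V_{2mn^2})\boxtimes\overline{\mathcal{B}(V_{2m})}$ must be compatible with the forgetful functor, so that $(\mu,\mathds{1})$ lies over $I(\mu)$, and it must preserve twists. Both are standard for categories of the form $\mathcal{C}_A$ \cite{davydov2013witt}.

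One slip to correct. The generating line is not the rotation $e^{2\pi i p/(2mn\sqrt{2m})}$: that rotation multiplies $e^{i\sqrt{2m}\ell\phi}$ by $e^{2\pi i\ell/(2mn)}$, so it does not fix $V_{2mn^2}$. By \eqref{eqn:inducedaction}, $\kappa$ acts on $V_{2m}$ as the order-$n$ rotation $g^{\pm1}$, with the sign depending on conventions. This is consistent with the momentum shift $1/(n\sqrt{2m})$ you use afterwards. The line $\kappa$ has order $2mn$ only because $\kappa^n$ is the generating Wilson line of $U(1)_{2m}$, which acts trivially on genuine operators. For the same reason the lines of $\mathcal{C}_{m\vert n}$ do not form a $\mathbb{Z}_{2mn}$ subgroup of $\mathrm{Aut}(V_{2m})$, so your ``standard anomaly formula'' cross-check is only heuristic. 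Neither point affects the proof.
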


\noindent The rest of this subsection is dedicated to deriving this claim and explaining related results (e.g.\ writing down the twisted sectors and their corresponding partition functions).

A useful fact is that $V_{2mn^2}$ can be obtained from $V_{2m}$ by passing to the invariant operators with respect to a $\mathbb{Z}_n$ automorphism, i.e.\ 
\begin{align}
    V_{2mn^2}=V_{2m}^{\mathbb{Z}_n}.
\end{align}
This $\mathbb{Z}_n$ symmetry is simply the $\mathbb{Z}_n$ subgroup of the $U(1)$ symmetry whose corresponding Noether current is $\partial \phi$, i.e.\ the $U(1)$ whose elements $e^{i\alpha}$ act on vertex operators in $V_{2m}$ as 
\begin{align}
    e^{i\sqrt{2m}\ell \phi(z)}\mapsto e^{i\alpha \ell}e^{i\sqrt{2m}\ell\phi(z)}.
\end{align}
In particular, it follows from the logic of Example \ref{ex:Ghypergroup} that the hypergroup induced by the conformal embedding in Equation \eqref{eqn:latlatembedding} is simply $\mathbb{Z}_n$ itself, 
\begin{align}
    K=\mathbb{Z}_n.
\end{align}
Moreover, by Claim \ref{cl:cla:hypergroupgrading}, $\mathcal{C}_{m\vert n}$ is a $\mathbb{Z}_n$-graded extension of $\mathcal{B}(V_{2m})$, 
\begin{align}
    \mathcal{C}_{m\vert n} \cong \bigoplus_{g\in\mathbb{Z}_n} \mathcal{B}(V_{2m})_g, \ \ \ \ \ \mathcal{B}(V_{2m})_1 = \mathcal{B}(V_{2m}),
\end{align}
where $\mathcal{B}(V_{2m})\cong \mathrm{Vec}_{\mathbb{Z}_{2m}}^{\sigma_m,\omega_m}$ with $\omega_m\in H^3(\mathbb{Z}_{2m},U(1))$ the 3-cocycle corresponding to $m$ units of 't Hooft anomaly.

Note further that this $\mathbb{Z}_n$ is a \emph{cleft} group of automorphisms, i.e.\ each element induces the trivial ribbon autoequivalence on $\mathcal{B}(V_{2m})$ (cf.\ the discussion around Equation \eqref{eqn:inducedautoequiv}). The easiest way to see this is to note that the group of ribbon autoequivalences of a modular tensor category is finite, and  that each element of $\mathbb{Z}_n$ is continuously connected to the identity automorphism in $\mathrm{Aut}(V_{2m})$. In our ``dome'' perspective on the effective hypergroup, Figure \ref{fig:domehypergroup}, this means that each surface $\mathcal{S}_g$ with $g\in\mathbb{Z}_n$ is the trivial surface.

From Example \ref{ex:cleftpt1}, we know that $\mathbb{Z}_n$ being cleft implies that, as a fusion category, $\mathcal{C}_{m\vert n}\cong \mathrm{Vec}_\Gamma^{\tilde\omega}$ for some extension $\Gamma$ of $\mathbb{Z}_n$ by $\mathbb{Z}_{2m}$, and some 3-cocycle $\tilde\omega \in H^3(\Gamma,U(1))$ which restricts to $\omega_m$ on $\mathbb{Z}_{2m}$. Another way to see that $\mathcal{C}_{m\vert n}$ should be pointed (i.e.\ group-like) is to compute its  dimension squared using Equation \eqref{eqn:dimver} and its rank using \eqref{eqn:Brank}, and observe that they are equal. Indeed, the dimension is simply given by 
\begin{align}
    \dim(\mathcal{C}_{m\vert n})^2=\dim(\mathcal{B}(V_{2m}))\dim(\mathcal{B}(V_{2mn^2}))=\sqrt{2m}\times \sqrt{2mn^2}=2mn.
\end{align}
To compute the rank, note that we have the following decomposition of $V_{2m}$-modules into $V_{2k}$-modules by restriction, 
\begin{align}\label{eqn:latlatdecomp}
    V_{2m,r}\cong \bigoplus_{\ell=0}^{n-1}V_{2k,2mn\ell+rn}.
\end{align}
This implicitly fixes the entries of the matrix $B$ appearing in Equation \eqref{eqn:Brank}, and one finds that 
\begin{align}
    \mathrm{rank}(\mathcal{C}_{m\vert n})=2mn.
\end{align}
It follows that $\mathcal{C}_{m\vert n}$ is a pointed fusion category.

To pin down the group $\Gamma$, we can use the induction (i.e.\ bulk-to-wall) functor 
\begin{align}
    I:\mathcal{B}(V_{2mn^2})\to \mathcal{C}_{m\vert n}\cong \mathcal{B}(V_{2mn^2})_A
\end{align} 
from Equation \eqref{eqn:induction} to constrain the fusion rules. Call $\widetilde{\eta}$ the generator of $\mathcal{B}(V_{2mn^2})$ and $\eta$ the generator of $\mathcal{B}(V_{2m})$. Applying the decomposition in Equation \eqref{eqn:latlatdecomp} to the vacuum module corresponding to $r=0$, we learn that the algebra $A$ in $\mathcal{B}(V_{2mn^2})$ which controls the conformal extension from $V_{2mn^2}$ to $V_{2m}$ is given by 
\begin{align}
    A \cong \bigoplus_{\ell=0}^{n-1} \widetilde{\eta}^{2mn\ell}.
\end{align}
In particular, we note that $R(I(\widetilde{\eta}^j))=\bigoplus_{\ell=0}^{n-1}\widetilde{\eta}^{2mn\ell+j}$ (cf.\ Equation \eqref{eqn:ResInd}), where $R:\mathcal{B}(V_{2mn^2})_A\to \mathcal{B}(V_{2mn^2})$ is the restriction functor which acts by forgetting the $A$-module structure of its input (see also Figure \ref{fig:physicalrestriction} for the physical interpretation).

By applying Frobenius reciprocity, Equation \eqref{eqn:Frobeniusreciprocity}, we learn that 
\begin{align}\label{eqn:latlatfrobrep}
\begin{split}
    \mathrm{Hom}_{\mathcal{C}_{m\vert n}}(I(\widetilde{\eta}^i),I(\widetilde{\eta}^j))&=\mathrm{Hom}_{\mathcal{B}(V_{2mn^2})}(\widetilde{\eta}^i,R(I(\widetilde{\eta}^j)))=\delta_{i= j~\mathrm{mod}~2mn}.
\end{split}
\end{align}
Since induction preserves quantum dimensions and in these categories an object is simple if and only if it has quantum dimension $1$,  it follows that 
\begin{align}
    \kappa^i\equiv I(\widetilde{\eta}^i)
\end{align} 
is simple, and Equation \eqref{eqn:latlatfrobrep} implies that 
\begin{align}\label{eqn:latlatfrobrepB}
    B_{\kappa^i,\widetilde{\eta}^j} = \delta_{i=j~\mathrm{mod}~2mn}.
\end{align}
 We can further deduce from \eqref{eqn:latlatfrobrep}  that $\kappa^i\cong \kappa^j$ if and only if $i=j~\mathrm{mod}~2mn$.
Since $\mathcal{C}_{m\vert n}$ has rank $2mn$, it follows that the $\kappa^i$, with $i=0,\dots,2mn-1$, exhaust all the simple lines of the category, 
\begin{align}
    \mathcal{C}_{m\vert n}=\{\kappa^i\mid i=0,\dots,2mn-1\}.
\end{align}Moreover, since induction is a tensor functor (Equation \eqref{eqn:inductiontensorfunctor}), we have that $\kappa^{i}\otimes \kappa^{j}\cong \kappa^{i+j}$, so that 
\begin{align}
    \Gamma \cong \mathbb{Z}_{2mn},
\end{align}
i.e.\ $\kappa$ generates a cyclic group of order $2mn$.

To determine the 3-cocycle $\tilde\omega$, we can use Equation \eqref{eqn:Drinfeldcenterverlinde}, which says that
\begin{align}
    Z(\mathrm{Vec}_\Gamma^{\tilde\omega})\cong \overline{\mathrm{Vec}_{\mathbb{Z}_{2m}}^{\sigma_m,\omega_m}}\boxtimes \mathrm{Vec}_{\mathbb{Z}_{2mn^2}}^{\sigma_{mn^2},\omega_{mn^2}}.
\end{align}
Now, $Z(\mathrm{Vec}_\Gamma^{\tilde\omega})$ is simply the category underlying ($\tilde\omega$-twisted) Dijkgraaf-Witten theory associated to $\Gamma$ \cite{Dijkgraaf:1989pz}, also known as the twisted quantum double model. The fusion rules of the twisted quantum double of a cyclic group were computed in \cite{Coste:2000tq}, where it was found that, up to equivalences induced by automorphisms of $\mathbb{Z}_{2mn}$, only the 3-cocycle $\tilde\omega=\omega_{m}\in H^3(\mathbb{Z}_{2mn},U(1))\cong \mathbb{Z}_{2mn}$ leads to fusion rules of the form $\mathbb{Z}_{2m}\times \mathbb{Z}_{2mn^2}$. Thus, in total we find that 
\begin{align}
    \mathcal{C}_{m\vert n}\cong \mathrm{Vec}_{\mathbb{Z}_{2mn}}^{\omega_m}.
\end{align}

We can plug in the matrix $B$ from Equation \eqref{eqn:latlatfrobrepB} into Equation \eqref{eqn:twisteddecomposition} to determine how the $\kappa^i$ twisted sector of $V_{2m}$ (i.e.\ the Hilbert space of local operators supported at the end of the boundary topological line $\kappa^i$, see Figure \ref{fig:stateopcorrtwisted}) decomposes into representations of $V_{2mn^2}$, 
\begin{align}\label{eqn:kappatwistedsectors}
    (V_{2m})_{\kappa^i}=\bigoplus_{\ell=0}^{n-1}V_{2mn^2,2mn\ell+i}.
\end{align}
In particular, the twisted partition function in Equation \eqref{eqn:twistedpf} is given by 
\begin{align}
    Z_{\kappa^i}(\tau) = \sum_{\ell=0}^n\chi_{2mn^2,2mn\ell+i}(\tau)
\end{align}
where here, 
\begin{align}
    \chi_{2k,r}(\tau)  := \mathrm{Tr}_{V_{2m,r}}q^{L_0-\frac{1}{24}}= \frac{1}{\eta(\tau)}\sum_{n\in\mathbb{Z}}q^{\frac{(r+2mn)^2}{4m}}
\end{align}
are the characters of the chiral boson algebra $V_{2k}$, with $\eta(\tau) = q^{1/24}\prod_{n=1}^\infty(1-q^n)$ the Dedekind-eta function.

Before moving on, we comment in passing that we can also formally consider the category $\Ver(V_{2m}/\widehat{\mathfrak{u}}(1))$ of topological lines of $V_{2m}$ which commute with the $U(1)$ Kac-Moody subalgebra $\widehat{\mathfrak{u}}(1)$. This is a kind of $n\to \infty$ limit of $\mathcal{C}_{m\vert n}$, and it is clear that the simple topological lines will be described by 
\begin{align}
    \mathrm{Irr}(\Ver(V_{2m}/\widehat{\mathfrak{u}}(1))) = \widetilde{U}(1) ,
\end{align}
where $\widetilde{U}(1)$ is a $2m$-fold cover of the hypergroup,
\begin{align}
    K=\widetilde{U}(1)/\mathbb{Z}_{2m} = U(1).
\end{align}
In other words, for each element $g$ of $U(1)$, there are exactly $2m$ $g$-twisted modules.

\subsection{A non-chiral relative example: \texorpdfstring{$\mathbb{Z}_2$}{Z2}-even sector of the Ising CFT}\label{subsec:nonchiralexample}

Even though we have chosen to present the results of Section \ref{sec:relativesymmetries} in terms of chiral algebras for simplicity, there is no conceptual obstruction to applying our techniques more or less ``out-of-the-box'' to any (relative) conformal field theory living at the boundary of a semi-simple bulk TQFT, including those with both left- and right-movers. Indeed, in this more general setting, one still expects that finite symmetries one-to-one correspond with conformal subalgebras of finite index. 

Of course in practice, the easiest subalgebras to discover are typically those of the form $V_L\otimes \overline{V}_R\subset \mathcal{H}$, with $V_L$ containing purely holomorphic operators and $\overline{V}_R$ containing purely anti-holomorphic operators. Working with such tensor factorizing subalgebras is not a serious limitation in rational relative CFTs, as it is expected that any finite index conformal subalgebra $\mathcal{A}\subset\mathcal{H}$ will itself contain a conformal subalgebra of the form $V_L\otimes \overline{V}_R$ with finite index; thus, by the Galois theory of Section \ref{subsec:Galois}, the symmetry category $\Ver (\mathcal{H}/\mathcal{A})$ exposed by an arbitrary conformal subalgebra $\mathcal{A}$ will always be a subcategory of $\Ver (\mathcal{H}/V_L\otimes \overline{V}_R)$ for some $V_L$ and $V_R$.  See Section 7.3 of \cite{Moller:2024xtt} for further discussion.

Let us treat a non-chiral example here in detail. In the remainder of this subsection, we will classify all topological line operators of the $\mathbb{Z}_2$-even sector $\mathcal{A}$ of the Ising CFT, which can be thought of as a relative conformal field theory in its own right living on the boundary of 3D $\mathbb{Z}_2$ gauge theory (i.e.\ the IR limit of the toric code). We achieve this by considering the $c=\sfrac12$ Virasoro conformal subalgebra $L_{\sfrac12}\otimes \overline{L}_{\sfrac12}\subset\mathcal{A}$ and computing $\Ver (\mathcal{A}/L_{\sfrac12}\otimes \overline{L}_{\sfrac12})$; since any topological line operator must commute with the Virasoro subalgebra, $\Ver (\mathcal{A}/L_{\sfrac12}\otimes \overline{L}_{\sfrac12})$ contains all noninvertible symmetries of $\mathcal{A}$.

\begin{claim}{}{claim:Z2evenIsing}
    The full category 
    \begin{align}
        \mathcal{C}:=\Ver(\mathcal{A}/L_{\sfrac12}\otimes \overline{L}_{\sfrac12})
    \end{align} 
    of boundary topological line operators of the $\mathbb{Z}_2$-even sector $\mathcal{A}$ of the Ising CFT has 
    \begin{align}
        \mathrm{Irr}(\mathcal{C}) = \{\mathds{1},e,m,f,\mathcal{N}_+,\mathcal{N}_-\},
    \end{align}
    with the fusion rules of $\mathds{1},e,m,f$ governed by the toric code and the rest given by 
    \begin{align}
    \begin{split}
        &\mathcal{N}_\pm\otimes \mathcal{N}_\pm = \mathds{1}\oplus f, \ \ \ \mathcal{N}_\pm \otimes \mathcal{N}_\mp \cong e\oplus m, \ \ f\otimes \mathcal{N}_\pm \cong \mathcal{N}_\pm \otimes f \cong \mathcal{N}_\pm \\
        &\hspace{.6in}e\otimes \mathcal{N}_\pm \cong \mathcal{N}_\pm \otimes e \cong \mathcal{N}_\mp , \ \ \ m\otimes \mathcal{N}_\pm \cong \mathcal{N}_\pm \otimes m \cong \mathcal{N}_\mp.
        \end{split}
    \end{align}
    The effective hypergroup which acts faithfully on genuine local operators of $\mathcal{A}$ is $K=\mathbb{Z}_2$.
\end{claim}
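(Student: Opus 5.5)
The plan is to apply the machinery of Section \ref{subsec:computationalaids} to the conformal embedding $W:=L_{\sfrac12}\otimes \overline{L}_{\sfrac12}\subset \mathcal{A}$, with $\mathcal{B}(W)=\mathrm{Ising}\boxtimes\overline{\mathrm{Ising}}$. First I identify the condensable algebra. The $\mathbb{Z}_2$-even sector of the Ising CFT is $(\mathds{1},\mathds{1})\oplus(\varepsilon,\varepsilon)$, since $(\sigma,\sigma)$ is odd. So
\begin{align*}
A=\mathds{1}\boxtimes\mathds{1}\oplus\varepsilon\boxtimes\varepsilon, \qquad \dim(A)=2.
\end{align*}
Next I determine the local modules $\mathcal{B}(W)_A^{\mathrm{loc}}$. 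I expect the following decompositions:
\begin{itemize}
\item $\mathds{1}\leftrightarrow(\mathds{1},\mathds{1})\oplus(\varepsilon,\varepsilon)$;
\item $f\leftrightarrow(\mathds{1},\varepsilon)\oplus(\varepsilon,\mathds{1})$;
\item $(\sigma,\sigma)$ splits into the two local modules $e$ and $m$.
\end{itemize}
This recovers the toric code with $\dim\mathcal{B}(\mathcal{A})=4$, and it fixes the branching matrix $B_{a\mu}$.

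With $B$ in hand, Equation \eqref{eqn:dimver} gives $\dim(\mathcal{C})^2=16\cdot 4$, so $\dim(\mathcal{C})=8$. Equation \eqref{eqn:Brank} gives the rank: the rows for $\mathds{1}$ and $f$ each contain two $1$'s, and the rows for $e$ and $m$ each contain one, so $\mathrm{rank}(\mathcal{C})=2+2+1+1=6$. The toric code lines $\mathds{1},e,m,f$ are four invertible simples of $\mathcal{C}$. The remaining two simples must therefore have $d_1^2+d_2^2=4$. Since all $B_{a\mu}\leq 1$, the fusion ring is commutative, and I expect $d_1=d_2=\sqrt2$.

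To construct these two extra simples explicitly I use the induction functor \eqref{eqn:induction}. Set $\mathcal{N}_+:=I(\sigma\boxtimes\mathds{1})$ and $\mathcal{N}_-:=I(\mathds{1}\boxtimes\sigma)$. By \eqref{eqn:ResInd}, $R(\mathcal{N}_+)=\sigma\boxtimes\mathds{1}\oplus\sigma\boxtimes\varepsilon$. Frobenius reciprocity \eqref{eqn:Frobeniusreciprocity} then gives $\mathrm{Hom}(\mathcal{N}_+,\mathcal{N}_+)=\mathbb{C}$, so $\mathcal{N}_+$ is simple, with quantum dimension $\sqrt2$ because induction preserves dimensions. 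The same argument shows $\mathcal{N}_-$ is simple. Moreover $\mathrm{Hom}(\mathcal{N}_+,\mathcal{N}_-)=0$, since $\mathds{1}\boxtimes\sigma$ does not appear in $R(\mathcal{N}_+)$.

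The fusion rules follow from the fact that $I$ is a tensor functor \eqref{eqn:inductiontensorfunctor}. For example,
\begin{align*}
\mathcal{N}_+\otimes\mathcal{N}_+ = I(\mathds{1}\boxtimes\mathds{1})\oplus I(\varepsilon\boxtimes\mathds{1}) = \mathds{1}\oplus f,
\end{align*}
because $R(I(\varepsilon\boxtimes\mathds{1}))=\varepsilon\boxtimes\mathds{1}\oplus\mathds{1}\boxtimes\varepsilon$, which is $f$. Similarly $\mathcal{N}_+\otimes\mathcal{N}_-=I(\sigma\boxtimes\sigma)=e\oplus m$. The fusions of $\mathcal{N}_\pm$ with $e$, $m$ and $f$ then follow from associativity, centrality of $\mathcal{B}(\mathcal{A})$, and Frobenius reciprocity \eqref{eqn:Frobrep2}. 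For instance, $N_{e\,\mathcal{N}_+}^{\mathcal{N}_-}=N_{\mathcal{N}_+\mathcal{N}_-}^{e}=1$, and dimension counting then forces $e\otimes\mathcal{N}_+=\mathcal{N}_-$.

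Finally, I compute the double coset hypergroup $K_{\mathcal{C}}\sslash K_{\mathcal{B}(\mathcal{A})}$. Because $\mathcal{B}(\mathcal{A})$ maps $\mathcal{N}_+$ to $\mathcal{N}_\pm$, there are exactly two double cosets, $[\mathds{1}]$ and $[\mathcal{N}_+]$. The product $\mathcal{N}_+\otimes\mathcal{N}_+$ lies entirely in the trivial component, so $[\mathcal{N}]\star[\mathcal{N}]=[\mathds{1}]$ with weight $1$, and therefore $K=\mathbb{Z}_2$.

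The step I expect to be the main obstacle is pinning down the local modules. Specifically, I need to confirm that $(\sigma,\sigma)\otimes A$ really splits into two inequivalent local simples $e$ and $m$, with the correct $B$-matrix, and that $I(\varepsilon\boxtimes\mathds{1})$ is identified with $f$ rather than with $e$ or $m$. I would check this using the modular intertwining relations \eqref{eqn:intertwinemodular}, in particular through the $T$-eigenvalues: the spins of $(\mathds{1},\varepsilon)$ and $(\sigma,\sigma)$ are $-\tfrac12$ and $0$, which match the toric code anyons $f$ and $e,m$ respectively.
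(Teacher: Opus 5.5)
Your proposal is correct and follows essentially the same route as the paper: rank $6$ from \eqref{eqn:Brank}, $\mathcal{N}_+=I(\sigma\boxtimes\mathds{1})$ and $\mathcal{N}_-=I(\mathds{1}\boxtimes\sigma)$ shown simple and inequivalent by Frobenius reciprocity, and the fusion rules from the tensor-functor property of induction. The only difference is the order of steps. The paper first obtains $K=K_{\mathrm{TY}_+(\mathbb{Z}_2)}\sslash\mathbb{Z}_2\cong\mathbb{Z}_2$ from the consecutive embeddings $L_{\sfrac12}\otimes\overline{L}_{\sfrac12}\subset\mathcal{A}\subset\mathcal{H}$, and uses the resulting grading to get rank $2$ for the non-identity component. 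You instead get that count from the global dimension \eqref{eqn:dimver} and read $K$ off the computed fusion rules, which is slightly more self-contained.
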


\noindent Before calculating $\mathcal{C}$, we note that everything we say here applies also to many examples of chiral algebras. For example, the chiral algebra $\mathfrak{spin}(16)_1$ can also be thought of as a relative CFT living at the boundary of 3D $\mathbb{Z}_2$ gauge theory. Furthermore, the conformal embedding 
\begin{align}
    L_{\sfrac12}\otimes \mathfrak{spin}(15)_1 \subset \mathfrak{spin}(16)_1,
\end{align}
induces an action of the category $\mathcal{C}$ from Claim \ref{cl:claim:Z2evenIsing} on $\mathfrak{spin}(16)_1$, since this conformal embedding behaves identically to the embedding $L_{\sfrac12}\otimes \overline{L}_{\sfrac12}\subset\mathcal{A}$ at the level of category theory/TQFT.

Denote the anyons of $\mathbb{Z}_2$ gauge theory (which we henceforth label $\mathrm{TC}$) as $\mathds{1},e,m,f$. Let $L_{\sfrac12}(h)$ be the representation of the $c=\sfrac12$ Virasoro algebra with conformal dimension $h$, and abbreviate $[h_1,h_2]:=L_{\sfrac12}(h_1)\otimes \overline{L_{\sfrac12}(h_2)}$. We label the $9$ bulk anyons of the $\mathrm{Ising}\boxtimes\overline{\mathrm{Ising}}$ modular tensor category as $(a,b)$, with $a,b\in \{1,\epsilon,\sigma\}$. This modular tensor category supports the relative conformal field theory  $L_{\sfrac12}\otimes \overline{L_{\sfrac12}}$ on its boundary. 

The full Hilbert space $\mathcal{H}$ of the Ising CFT decomposes into representations of the Virasoro algebras as 
\begin{align}
    \mathcal{H}=[0,0]\oplus [\sfrac12,\sfrac12]\oplus[\sfrac1{16},\sfrac1{16}].
\end{align}
The $\mathbb{Z}_2$-odd operators of the Ising CFT are those living in the $[\sfrac1{16},\sfrac1{16}]$ summand, and so the Hilbert space of local operators of the $\mathbb{Z}_2$-even sector decomposes into Virasoro representations as
\begin{align}
    \mathcal{A} = [0,0]\oplus [\sfrac12,\sfrac12].
\end{align}
The $\mathbb{Z}_2$-odd operators of the full Ising CFT manifest in the theory $\mathcal{A}$ as boundary local operators living at the end of the bulk $e$ line of $\mathbb{Z}_2$ gauge theory 
\begin{align}\label{eqn:edecomposition}
    \mathcal{A}_e=[\sfrac1{16},\sfrac{1}{16}].
\end{align}
Meanwhile, the even and odd $\mathbb{Z}_2$-twisted operators of the full Ising CFT become the Hilbert spaces of boundary local operators living at the end of the bulk $m$ and $f$ lines, respectively. These are known to decompose as 
\begin{align}
    \mathcal{A}_m=[\sfrac1{16},\sfrac1{16}]', \ \ \ \ \mathcal{A}_f = [0,\sfrac12]\oplus [\sfrac12,0],
\end{align}
where the prime $'$ indicates that, although $\mathcal{A}_e$ and $\mathcal{A}_m$ are identical as $L_{\sfrac12}\otimes \overline{L}_{\sfrac12}$-modules, they possess different structures as $\mathcal{A}$-modules. 

Before computing $\mathcal{C}$, let us first compute the effective hypergroup which grades it. First, note that the effective hypergroups induced by the embeddings $\mathcal{A}\subset \mathcal{H}$ and $L_{\sfrac12}\otimes\overline{L}_{\sfrac12}\subset\mathcal{H}$ are $\mathbb{Z}_2$ and $K_{R}$, respectively, where $K_R$ is the hypergroup induced by the fusion ring $R$ of the $\mathbb{Z}_2$ Tambara-Yamagami category $\mathrm{TY}_+(\mathbb{Z}_2)$ (see Appendix \ref{app:hypergroups}). Indeed, this follows simply because $\mathcal{A}$ is the $\mathbb{Z}_2$ even sector of $\mathcal{H}$, and $L_{\sfrac12}\otimes \overline{L}_{\sfrac12}$ is the symmetric sector of $\mathcal{H}$ with respect to the full $\mathrm{TY}_+(\mathbb{Z}_2)$ Verlinde symmetry. Thus, it follows from comments in Section \ref{subsec:hypergroup} that the effective hypergroup induced by the embedding $L_{\sfrac12}\otimes \overline{L}_{\sfrac12}\subset\mathcal{A}$ is $K_R\sslash \mathbb{Z}_2$. It is an elementary exercise to show that this quotient is $\mathbb{Z}_2$ again, so that $\mathcal{C}$ is a $\mathbb{Z}_2$-crossed braided extension of $\mathcal{B}(\mathcal{A})\cong \mathrm{TC}$, 
\begin{align}
   \mathcal{C}\cong \bigoplus_{g\in\mathbb{Z}_2}\mathcal{B}(\mathcal{A})_g, \ \ \ \ \ \ \mathcal{B}(\mathcal{A})_0=\mathcal{B}(\mathcal{A}),
\end{align}
with the identity graded component being given by the toric code.

The decompositions of the sectors $\mathcal{A}_a$ into representations of $L_{\sfrac12}\otimes \overline{L}_{\sfrac12}$ allow us to compute, via Equation \eqref{eqn:Brank}, that the rank of $\mathcal{C}$ is $6$. Since the identity component is the rank 4 toric code, the rank of the non-identity component must be 
\begin{align}
    \mathrm{rk}(\mathcal{B}(\mathcal{A})_1)=2.
\end{align}
Let us call the two topological lines which span the non-identity graded component $\mathcal{N}_\pm$.

To discover the properties of these lines, we compute the induction functor 
\begin{align}
    I:\mathrm{Ising}\boxtimes\overline{\mathrm{Ising}}\to \mathcal{C}
\end{align} 
from Equation \eqref{eqn:induction}, i.e.\ the bulk-to-wall functor from the $\mathrm{Ising}\boxtimes \overline{\mathrm{Ising}}$ TQFT to the topological interface $\mathcal{I}_A$ obtained by 1-form gauging in half of spacetime using the condensable algebra $A=(1,1)\oplus (\epsilon,\epsilon)$. Let us in particular consider the two (potentially non-simple) topological lines $I(\sigma,1)$ and $I(1,\sigma)$ on the wall. Observing that
\begin{align}\label{eqn:preinductionIsingex}
    A\otimes (\sigma,1)=(\sigma,1)\oplus (\sigma,\epsilon), \ \ \ \ A\otimes(1,\sigma)=(1,\sigma)\oplus(\epsilon,\sigma),
\end{align}
Frobenius reciprocity, Equation \eqref{eqn:Frobeniusreciprocity}, allows us to conclude that these two lines are simple, because
\begin{align}
    \mathrm{Hom}_{\mathcal{C}}(I(\sigma,1),I(\sigma,1))=\mathrm{Hom}_{\mathrm{Ising}\boxtimes\overline{\mathrm{Ising}}}((\sigma,1),(\sigma,1)\oplus(\sigma,\epsilon))=1,
\end{align}
and similarly for $I(1,\sigma)$. On the other hand, by applying Frobenius reciprocity again, we find that these two lines do not admit a topological junction between them, 
\begin{align}
    \mathrm{Hom}_{\mathcal{C}}(I(\sigma,1),I(1,\sigma))=\mathrm{Hom}_{\mathrm{Ising}\boxtimes\overline{\mathrm{Ising}}}((\sigma,1),(1,\sigma)\oplus(\epsilon,\sigma))=0,
\end{align}
and hence are non-isomorphic. Further applications of Frobenius reciprocity lead us to find that they are not isomorphic to any of the lines in the toric code subcategory, e.g.\ 
\begin{align}
    \mathrm{Hom}_{\mathcal{C}}(I(\sigma,1),e)=\mathrm{Hom}_{\mathrm{Ising}\boxtimes\overline{\mathrm{Ising}}}((\sigma,1),(\sigma,\sigma))=1,
\end{align}
where we have used Equation \eqref{eqn:edecomposition} to determine the restriction of the $e$ line. Thus, the two lines in the non-identity graded component of the Verlinde category are 
\begin{align}
    \mathcal{N}_+ = I(\sigma,1), \ \ \ \ \ \mathcal{N}_-=I(1,\sigma)
\end{align}
and, from Equation \eqref{eqn:preinductionIsingex}, the corresponding Hilbert spaces of local operators at their endpoints decompose into $L_{\sfrac12}\otimes\overline{L}_{\sfrac12}$-modules as
\begin{align}
    \mathcal{A}_{\mathcal{N}_+}\cong [\sfrac1{16},0]\oplus [\sfrac1{16},\sfrac12], \ \ \ \ \ \  \mathcal{A}_{\mathcal{N}_-}\cong [0,\sfrac1{16}]\oplus [\sfrac12,\sfrac{1}{16}].
\end{align}
For completeness, we calculate the bulk-to-wall map on the rest of the simples of $\mathrm{Ising}\boxtimes\overline{\mathrm{Ising}}$, which can be achieved by repeated applications of Frobenius reciprocity:
\begin{align}
\begin{split}
    &\hspace{.85in}I(1,1)=I(\epsilon,\epsilon)=\mathds{1}, \ \ \ \ \ I(\epsilon,1)=I(1,\epsilon) = f, \\
    &I(\sigma,1)=I(\sigma,\epsilon)=\mathcal{N}_+, \ \ \ \ \ I(1,\sigma)=I(\epsilon,\sigma)=\mathcal{N}_-, \ \ \ \ \ I(\sigma,\sigma)=e\oplus m.
\end{split}
\end{align}

Using that induction is a tensor functor, Equation \eqref{eqn:inductiontensorfunctor}, we can straightforwardly compute the fusion rules of these lines. For example,
\begin{align}
\begin{split}
    &\mathcal{N}_+\otimes\mathcal{N}_+=I((\sigma,1)\otimes(\sigma,1))=I((1,1)+(\epsilon,1))=\mathds{1}\oplus f, \\
    &\mathcal{N}_-\otimes \mathcal{N}_-=I((1,\sigma)\otimes (1,\sigma))=I((1,1)\oplus (1,\epsilon))=\mathds{1}\oplus f, \\
    &\hspace{.2in}\mathcal{N}_+\otimes \mathcal{N}_-\cong I((\sigma,1)\otimes (1,\sigma))=I(\sigma,\sigma)=e\oplus m .
\end{split}
\end{align}
Continuing in this way, one recovers the rest of the fusion rules of the $\mathbb{Z}_2$-crossed braided extension of the toric code described in Section I.2 of \cite{Barkeshli:2014cna}, 
\begin{align}
\begin{split}
    &\hspace{1in}f\otimes \mathcal{N}_\pm\cong \mathcal{N}_\pm \otimes f \cong \mathcal{N}_\pm, \\
    &e\otimes \mathcal{N}_\pm \cong \mathcal{N}_\pm \otimes e \cong \mathcal{N}_\mp, \ \ \ m\otimes \mathcal{N}_\pm \cong \mathcal{N}_\pm \otimes m \cong \mathcal{N}_\mp.
\end{split}
\end{align}
Now, $\mathcal{B}(\mathcal{A})_1$ is a rank-2 module category of the toric code. In particular, when we inflate the lines $\mathcal{N}_\pm$ into strips as in Figure \ref{fig:pinch}, we will not find that they are attached to the identity surface (corresponding to the regular module category of rank 4), but rather a non-trivial bulk surface. In this case, the only option with the correct $\mathbb{Z}_2$ fusion rule is the electric-magnetic duality surface, called $S_\psi$ in \cite{Roumpedakis:2022aik}. See Figure \ref{fig:Np} for a depiction.

\begin{figure}
    \centering
    \input{Figures/Np}
    \caption{The topological line $\mathcal{N}_+$ (middle) can be obtained by pushing the line $(\sigma,1)$ in $\mathrm{Ising}\boxtimes\overline{\mathrm{Ising}}$ (left) onto the interface $\mathcal{I}_A$. Pulling $\mathcal{N}_+$ further into the bulk $\mathbb{Z}_2$ gauge theory $\mathrm{TC}$ leads to a line $\underline{\mathcal{N}_+}$ attached to a  $\mathbb{Z}_2$ surface which implements electric-magnetic duality (right).}\label{fig:Np}
\end{figure}

We can also determine the action of the lines $\mathcal{N}_\pm$ on the Hilbert spaces $\mathcal{A}_{\mathds{1}}$, $\mathcal{A}_e$, $\mathcal{A}_m$, $\mathcal{A}_f$. This can be done using Equation \eqref{eqn:inducedaction}, where we find 
\begin{align}
    \widehat{\mathcal{N}}_\pm \cdot \mathcal{O} = \begin{cases}
        \mathcal{O}, & \mathcal{O}\in [0,0]\subset\mathcal{A}, \\
        -\mathcal{O},&\mathcal{O}\in [\sfrac12,\sfrac12]\subset\mathcal{A}, \\
        \pm \mathcal{O}, & \mathcal{O}\in[0,\sfrac12]\subset \mathcal{A}_f \\
        \mp \mathcal{O}, & \mathcal{O}\in [\sfrac12,0]\subset\mathcal{A}_f,\\ 
        0, & \mathcal{O}\in \mathcal{A}_e,\mathcal{A}_m.
    \end{cases}
\end{align}
The action of $\widehat{\mathcal{N}}_\pm$ can also be viewed as the action of a dome of EM surface on (non-genuine) boundary local operators, as in Figure \ref{fig:autV}.
The involution induced on $\mathcal{A}$ by this action is essentially the Miyamoto involution ``of $\sigma$-type'' studied in \cite{miyamoto1996griess}.

We conclude by noting that, by Claim \ref{cl:claim:0gauging}, since $\mathcal{A}$ consists of states invariant under the action of $\mathbb{Z}_2$ on the Ising CFT $\mathcal{H}$, the category $\mathcal{C}$ could have also been computed as a $\mathbb{Z}_2$-equivariantization, 
\begin{align}
    \mathcal{C}\cong \mathrm{TY}_+(\mathbb{Z}_2)^{\mathbb{Z}_2}, 
\end{align}
where, at the level of permuting anyons, the $\mathbb{Z}_2$ acts trivially on $\mathrm{TY}_+(\mathbb{Z}_2)$ (although it acts non-trivially in the more subtle data of the action described in Section \ref{subsec:topman}). We will describe an equivariantization calculation in detail in the next section.

\subsection{A noninvertible hypergroup: Haagerup categories and chiral CFTs}\label{subsec:haagerup}

We turn to our first example of a noninvertible hypergroup action, which is closely related to the Haagerup fusion category. We begin with some motivation.

There are two closely related open questions in 2D CFT and the theory of vertex operator algebras.
\begin{enumerate}[label=\arabic*)]
\item Does every unitary fusion category $\CC$ arise as a symmetry category of a unitary 2D conformal field theory with a unique vacuum?
\item Does every modular tensor category $\mathcal{B}$ arise as the representation category of a strongly rational vertex operator algebra?
\end{enumerate}
The common belief is that both questions possess a positive answer,\footnote{In fact, the answer to the second question being yes whenever $\mathcal{B}=Z(\mathcal{C})$ implies that the answer to the first question is yes, as we will illustrate below in Claim \ref{cl:cla:haagerupchiralCFT} below.} though it appears quite difficult to obtain a proof with present methods. The state of the art has been to try to obtain evidence in favor of this common belief by demonstrating that it is true when $\mathcal{C}$ and $\mathcal{B}$ are taken to be ``exotic'' categories. 

The prototypical example of an ``exotic'' fusion category is the Haagerup fusion category $\textsl{Hg}$ \cite{Haagerup1994,AsaedaHaagerup1999,Izumi:2001mi}, whose simple objects are 
\begin{align}\label{eqn:Haagerupsimples}
    \mathds{1},~~\alpha,~~\alpha^2,~~\rho,~~\alpha\rho,~~\alpha^2\rho,
\end{align}
and whose fusion rules are 
\begin{align}\label{eqn:hgfusion}
    \alpha^3=\mathds{1}, ~~~\alpha\rho=\rho\alpha^2,~~~\rho^2=\mathds{1}\oplus \rho\oplus\alpha\rho\oplus\alpha^2\rho.
\end{align}
See e.g.\ \cite{Izumi:2001mi,Evans:2010yr,grossman2012quantum,Evans:2015zga,Huang:2020lox} for the $F$-symbols.\footnote{There are 4 fusion categories with fusion rules given by Equation \eqref{eqn:hgfusion}. One is the $\textsl{Hg}$ fusion category. One is the Grossman-Snyder category, which is Morita equivalent to $\textsl{Hg}$ and can be thought of as the dual category obtained when one gauges the $\mathbb{Z}_3$ symmetry in $\textsl{Hg}$. The remaining two are non-unitary, and can be obtained by applying Galois operations to the two unitary categories.} One is therefore led to study the following two problems, inspired by the more general questions posed above. 
\begin{enumerate}[label=\arabic*)]
    \item Discover a unitary 2D conformal field theory with a unique vacuum which possesses a faithful action of $\textsl{Hg}$ by symmetries (see e.g.\ \cite{Huang:2021ytb,Huang:2021nvb,Vanhove:2021zop,Bottini:2025hri,Gang:2023ggt,Hung:2025gcp} for results in this direction).
    \item Find a rational chiral algebra/strongly rational vertex operator algebra $V^{\textsl{Hg}}$ whose category of representations is the Drinfeld center of the Haagerup fusion category, $\Rep(V^{\textsl{Hg}})\cong Z(\textsl{Hg})$ (see e.g.\ \cite{Evans:2010yr,gannon2019reconstruction}). Equivalently, find a gapless chiral boundary condition for a TQFT of the form $(\mathcal{B},c)=(Z(\textsl{Hg}),8n)$.
\end{enumerate}  
We will not attempt to solve these problems here. However, we will explain how our theory of generalized symmetries for relative CFTs leads to some reformulations.

Our first comment is that these two problems are related by symmetry/subalgebra duality, Claim \ref{cl:claim:sym/subduality}. To state the relation, recall that a purely chiral 2D CFT is a 2D CFT with only left-movers which lives at the boundary of an invertible TQFT. Equivalently, it is a rational chiral algebra whose only irreducible representation is itself (a.k.a. a holomorphic vertex operator algebra). Examples include the monster CFT $V^\natural$ and the chiral $(E_8)_1$ Wess-Zumino-Witten model.

\begin{claim}{}{cla:haagerupchiralCFT}
    There exists a purely chiral 2D CFT $V$ with central charge $c=8n$ and  $\mathcal{C}$ symmetry if and only if there exists a rational chiral algebra $V^{\mathcal{C}}$ with central charge $c=8n$ and  $\Rep(V^{\mathcal{C}})=Z(\mathcal{C})$.
\end{claim}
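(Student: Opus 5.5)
Both directions will be proved with the SymTFT picture of a conformal embedding (Figure \ref{fig:SymTFTconformalembedding}) together with symmetry/subalgebra duality (Claim \ref{cl:claim:sym/subduality}). The key point is that when $V$ is holomorphic, $\mathcal{B}(V)=\mathrm{Vec}$. In that case the Müger-centralizer formula for the SymTFT collapses to the full Drinfeld center.

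\emph{Forward direction.} Let $V$ be holomorphic of central charge $c=8n$ and let $\mathcal{C}$ act faithfully on it by topological lines. The hypothesis $\mathcal{C}\supset\mathcal{B}(V)=\mathrm{Vec}$ holds automatically, so Claim \ref{cl:claim:sym/subduality} supplies the conformal subalgebra $W=V^{\mathcal{C}}$, with $\Ver(V/W)\cong\mathcal{C}$. I will apply the claim identifying the SymTFT as $Z^{\overline{\mathcal{B}(V)}}(\mathcal{C})\cong\Rep(V^{\mathcal{C}})$. Since $\overline{\mathcal{B}(V)}=\mathrm{Vec}$, its Müger centralizer in $Z(\mathcal{C})$ is all of $Z(\mathcal{C})$, and therefore $\Rep(V^{\mathcal{C}})\cong Z(\mathcal{C})$. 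As a consistency check, Equation \eqref{eqn:Drinfeldcenterverlinde} gives $Z(\Ver(V/W))\cong\mathcal{B}(W)\boxtimes\overline{\mathrm{Vec}}$, and Equation \eqref{eqn:dimver} gives $\dim\Rep(W)=\dim(\mathcal{C})^2$. Because the embedding is conformal, the central charge is still $8n$. The remaining issue in this direction is strong rationality of $V^{\mathcal{C}}$. I will take this as part of the finite-index symmetry/subalgebra duality already assumed in Claim \ref{cl:claim:sym/subduality}; the analogue for groups is Conjecture \ref{conj:Grationality}.

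\emph{Reverse direction.} Let $U=V^{\mathcal{C}}$ be strongly rational with $\Rep(U)\cong Z(\mathcal{C})$. The forgetful functor $Z(\mathcal{C})\to\mathcal{C}$ has a right adjoint $I(\mathds{1})$, and this is the canonical Lagrangian algebra $L$. It is connected, commutative and étale, i.e.\ condensable, and $\dim L=\dim\mathcal{C}$. Condensing $L$ in the bulk gives $Z(\mathcal{C})^{\mathrm{loc}}_L\cong\mathrm{Vec}$. By the slogan relating conformal extensions to condensable algebras, which rests on the theorem of \cite{Huang:2014ixa,Creutzig:2017anl} that condensable algebras in $\Rep(U)$ correspond to conformal extensions of $U$, we obtain a conformal extension $V\supset U$ with $\Rep(V)\cong\mathrm{Vec}$. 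Thus $V$ is holomorphic, and its central charge is still $8n$. By Claim \ref{cl:claim:verdef}, $\Ver(V/U)\cong Z(\mathcal{C})_L$. The standard fact $Z(\mathcal{C})_{I(\mathds{1})}\cong\mathcal{C}$ (see \cite{davydov2013witt}) then shows that $\mathcal{C}$ acts on $V$ by topological lines. Faithfulness follows from the Galois correspondence: the transparent subalgebra $V^{\Ver(V/U)}$ equals $U$, which is a proper subalgebra whenever $\mathcal{C}\neq\mathrm{Vec}$.

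\emph{Main obstacle.} I expect the hardest step to be the genuinely mathematical input. In the reverse direction it is the correspondence between condensable algebras and VOA extensions, together with the identification of $Z(\mathcal{C})_L$ with $\mathcal{C}$ as a \emph{symmetry category acting on $V$}, rather than merely as an abstract fusion category; for that I will rely on Claim \ref{cl:claim:verdef}. In the forward direction it is the rationality of the fixed-point subalgebra. Given the claims stated earlier, both are available, and what remains is bookkeeping on the trivial Müger centralizer.
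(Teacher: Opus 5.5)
Your proposal is correct and follows the paper's proof. In one direction you pass to the transparent subalgebra $V^{\mathcal{C}}$, whose representation category is the SymTFT $Z^{\overline{\mathcal{B}(V)}}(\mathcal{C})=Z(\mathcal{C})$ because $\mathcal{B}(V)=\mathrm{Vec}$; in the other you condense the canonical Lagrangian algebra $L$ of $Z(\mathcal{C})$, using $Z(\mathcal{C})_L\cong\mathcal{C}$ and $Z(\mathcal{C})_L^{\mathrm{loc}}\cong\mathrm{Vec}$ to obtain a holomorphic extension on which $\mathcal{C}$ acts. One small slip: the right adjoint of the forgetful functor is $I$, and the Lagrangian algebra is $L=I(\mathds{1})$. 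Your remarks on rationality of $V^{\mathcal{C}}$ and on faithfulness go beyond what the paper's proof says, but they are consistent with its framework.
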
 

\noindent See \cite{Burbano:2021loy,Rayhaun:2023pgc,Moller:2024xtt} for related discussions. We remark that there could of course exist CFTs which carry a $\mathcal{C}$ symmetry while not being purely chiral, but we will not discuss this possibility here. 

\begin{figure}
    \begin{center}
        \input{Figures/Haagerupsandwich}
    \end{center}
    \caption{A rational chiral algebra $V^{\textsl{Hg}}$ with $\Rep(V^{\textsl{Hg}})=Z(\textsl{Hg})$ exists if and only if there is a purely chiral 2D CFT $V$ with $\textsl{Hg}$ symmetry.}\label{fig:haagerupsandwich}
\end{figure}

\begin{proof}Suppose one is given $V^{\CC}$. The modular tensor category $\Rep(V^{\CC})\cong Z(\CC)$ admits a canonical Lagrangian algebra $L$ (corresponding to the ``Dirichlet'' boundary condition) satisfying
\begin{align}
    Z(\CC)_L\cong \CC, \ \ \ \ Z(\CC)_L^{\mathrm{loc}}\cong \textsl{Vec}.
\end{align} 
Thus, following the discussion in Section \ref{subsec:boundarylines}, if one uses $V^{\CC}$ as the ``physical boundary condition'' for a SymTFT construction as in Figure \ref{fig:haagerupsandwich}, then one obtains a conformal extension $V^{\CC}\subset V$ to a purely chiral 2D CFT $V$ which is acted on by $\CC$. 

In the reverse direction, if one is handed a chiral 2D CFT $V$ with a $\CC$ action, then one may pass to the conformal subalgebra $V^{\CC}$ of operators in $V$ which are transparent to the topological lines in $\CC$ (cf.\ Figure \ref{fig:VCdef}), which is guaranteed to satisfy $\Rep(V^{\CC})\cong Z(\CC)$.
\end{proof}

The minimal possibility is that $V^{\textsl{Hg}}$ exists at $c=8$. In \cite{Evans:2010yr}, it was shown that, if $V^{\textsl{Hg}}$ indeed exists at $c=8$, then there are two possibilities for its graded-dimension, 
\begin{align}
    \mathrm{Tr}_{V^{\textsl{Hg}}}q^{L_0-\sfrac13}=q^{-\sfrac13}(1+(6+13\gamma)q+(120+78\gamma)q^2+(956+351\gamma)q^3+\cdots),
\end{align}
with $\gamma=0$ or $\gamma=1$. Furthermore, Claim \ref{cl:cla:haagerupchiralCFT} would predict that the chiral $(E_8)_1$ Wess-Zumino-Witten model, by virtue of being the unique purely chiral CFT with $c=8$, possesses a $\textsl{Hg}$ symmetry \cite{Moller:2024xtt} (again, assuming that $V^{\textsl{Hg}}$ actually exists). One could then use the results of \cite{Choi:2024tri} (e.g.\ Equation (6.9) of op.\ cit.), or the generalizations presented in Section \ref{subsec:symmetryresolvedpf}, to write down the $\textsl{Hg}$-twisted genus-1 partition functions of chiral $(E_8)_1$ as linear combinations of the characters computed in \cite{Evans:2010yr}, with coefficients given in terms of half-braiding data associated with $\textsl{Hg}$. Some of this analysis was done in \cite{Albert:2025umy}.

Now, $\textsl{Hg}$ possesses a $\textsl{Vec}_{\mathbb{Z}_3}$ subcategory. Given a candidate purely chiral CFT $V$ which is believed to possess a $\textsl{Hg}$ symmetry, there are only finitely many ways, up to conjugation, that the $\mathbb{Z}_3$ subcategory could act. Finding a $\textsl{Hg}$ action on $V$ with the $\mathbb{Z}_3$ subcategory acting in a chosen way is equivalent to finding a particular noninvertible symmetry of $V^{\mathbb{Z}_3}$, the subalgebra of $V$ consisting of the $\mathbb{Z}_3$-invariant operators. In more detail, we have the following claim.

\begin{figure}
    \begin{center}
        \input{Figures/VZ3sandwich}
        \caption{The existence of a Haagerup chiral algebra $V^{\textsl{Hg}}$ implies a noninvertible symmetry of $V^{\mathbb{Z}_3}$, where $V$ is a purely chiral CFT and $V^{\mathbb{Z}_3}$ is the subalgebra of neutral operators in $V$ with respect to a $\mathbb{Z}_3$ symmetry.}\label{fig:VZ3sandwich}
    \end{center}
\end{figure}

\begin{claim}{}{cla:Hg3}
    Assume a chiral CFT $V$ admits an action by $\textsl{Hg}$, and consider the conformal embedding $V^{\textsl{Hg}}\subset V^{\mathbb{Z}_3}$ of the $\textsl{Hg}$-invariant subalgebra into the $\mathbb{Z}_3$-invariant subalgebra. Then $V^{\textsl{Hg}}$ can be obtained from $V^{\mathbb{Z}_3}$ by passing to the states which are neutral under an action of the hypergroup 
    \begin{align}
        K_{\textsl{Hg}}\sslash\mathbb{Z}_3 = K_{\Ver(V^{\mathbb{Z}_3}/V^{\textsl{Hg}})}\sslash K_{Z(\textsl{Vec}_{\mathbb{Z}_3})}  = \{\mathds{1},a\}
    \end{align}
    whose multiplication rule is given in Equation \eqref{eqn:haghyp}. Furthermore, $V^{\mathbb{Z}_3}$ defines a gapless chiral boundary condition of 3D $\mathbb{Z}_3$ gauge theory, and there are $10$ simple topological line operators supported on this boundary which commute with the $V^{\textsl{Hg}}$ subalgebra,
    \begin{align}
    \begin{split}
        &\hspace{.3in}\Ver\left(V^{\mathbb{Z}_3}/V^{\textsl{Hg}}\right) = \mathcal{B}(V^{\mathbb{Z}_3})\oplus \mathcal{B}(V^{\mathbb{Z}_3})_a, \\
       & \mathcal{B}(V^{\mathbb{Z}_3})\cong Z(\textsl{Vec}_{\mathbb{Z}_3}), \ \ \ \ \ \mathrm{Irr}(\mathcal{B}(V^{\mathbb{Z}_3})_a)=\{\mathcal{L}\},
    \end{split}
    \end{align}
    $9$ of which are lines coming from anyons in the bulk.
    Letting $g$ denote an arbitrary element of $Z(\textsl{Vec}_{\mathbb{Z}_3})$, the fusion rules of $\Ver(V^{\mathbb{Z}_3}/V^{\textsl{Hg}})$, in addition to those of $\mathbb{Z}_3$ gauge theory, are
    \begin{align}
        g\otimes \mathcal{L} \cong \mathcal{L}\otimes g \cong \mathcal{L} , \ \ \ \ \ \ \ \ \mathcal{L}^2\cong 9\cdot \mathcal{L}\oplus \bigoplus_{g\in \mathrm{Irr}(Z(\textsl{Vec}_{\mathbb{Z}_3}))} g\, .
    \end{align}
    In particular, $\Ver(V^{\mathbb{Z}_3}/V^{\textsl{Hg}})$ is a near-group category of type $\mathbb{Z}_3\times\mathbb{Z}_3+9$ (see e.g.\ \cite{evans2014near} or \cite[Example 12.13]{izumi2015cuntz}).
\end{claim}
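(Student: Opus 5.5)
The plan is to read everything off the nested SymTFT picture of Figure \ref{fig:VZ3sandwich}, using the consecutive-embedding formalism $V^{\textsl{Hg}}\subset V^{\mathbb{Z}_3}\subset V$ of Equation \eqref{eqn:consecutiveembeddings}. Since $V$ is holomorphic, the embedding $V^{\textsl{Hg}}\subset V$ has $\Ver(V/V^{\textsl{Hg}})=\textsl{Hg}$ with hypergroup $K_{\textsl{Hg}}$. The embedding $V^{\mathbb{Z}_3}\subset V$ has hypergroup $\mathbb{Z}_3$. Hence $V^{\textsl{Hg}}\subset V^{\mathbb{Z}_3}$ carries the double coset hypergroup $K_{\textsl{Hg}}\sslash \mathbb{Z}_3$. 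Here the $\mathbb{Z}_3$ is the $\textsl{Vec}_{\mathbb{Z}_3}$ subcategory generated by $\alpha$, whose $V$-fixed points give $V^{\mathbb{Z}_3}$.

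First I compute this hypergroup directly. The double cosets $\mathbb{Z}_3 X \mathbb{Z}_3$ of simples are $\{\mathds{1},\alpha,\alpha^2\}$ and $\{\rho,\alpha\rho,\alpha^2\rho\}$, using $\alpha\rho=\rho\alpha^2$. So $K=\{\mathds{1},a\}$. The product $a\star a$ comes from the stochastically normalized product $\rho^2/d_\rho^2$ with $d_\rho=(3+\sqrt{13})/2$. Its weight on the coset $[\mathds{1}]$ is $1/d_\rho^2$, and the rest sits on $[\rho]$. This gives Equation \eqref{eqn:haghyp}. The identification with $K_{\Ver(V^{\mathbb{Z}_3}/V^{\textsl{Hg}})}\sslash K_{Z(\textsl{Vec}_{\mathbb{Z}_3})}$ follows once $\mathcal{B}(V^{\mathbb{Z}_3})$ is known. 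That identification is Equation \eqref{eqn:Rep(VH)}: $V$ is holomorphic and $H^3$-trivial data are fixed by the action, so $\mathcal{B}(V^{\mathbb{Z}_3})\cong\textsl{Vec}_{\mathbb{Z}_3}^{\omega,\mathbb{Z}_3}$. I then need $\omega$ trivial, so that this is $Z(\textsl{Vec}_{\mathbb{Z}_3})$. That should follow because $\textsl{Vec}_{\mathbb{Z}_3}\subset \textsl{Hg}$ has trivial associator; this is known from the $\textsl{Hg}$ $F$-symbols, and otherwise is forced since it is gaugeable, as the Grossman–Snyder dual exists.

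Next I determine $\Ver(V^{\mathbb{Z}_3}/V^{\textsl{Hg}})$. By Claim \ref{cl:cla:hypergroupgrading} it equals $\mathcal{B}(V^{\mathbb{Z}_3})\oplus\mathcal{B}(V^{\mathbb{Z}_3})_a$. Its global dimension comes from Equation \eqref{eqn:dimver}:
\[
\dim^2=\dim Z(\textsl{Hg})\cdot\dim Z(\textsl{Vec}_{\mathbb{Z}_3})=\dim(\textsl{Hg})^2\cdot 9 .
\]
So the dimension is $3\dim\textsl{Hg}=9(3+d_\rho^2)\cdot\ldots$, and the $a$-component has dimension $3\dim\textsl{Hg}-9=9d_\rho^2\cdot 3/3$. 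Concretely, $\dim\textsl{Hg}=3+3d_\rho^2$, so the $a$-component has total dimension $9d_\rho^2$.

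For the rank, I would instead use the equivalent description of this category as the dual of $\textsl{Hg}$ relative to $\textsl{Vec}_{\mathbb{Z}_3}$. Equivalently, it is the category of lines on the interface $\mathcal{I}$. By the pinching claim of Section \ref{subsec:surfaces}, applied with Figure \ref{fig:VZ3sandwich}, this is obtained from $\textsl{Hg}$ by equivariantizing/de-equivariantizing along $\mathbb{Z}_3$. Alternatively, view it as ${}_{\mathbb{C}[\mathbb{Z}_3]}\textsl{Hg}_{\mathbb{C}[\mathbb{Z}_3]}$-type bimodules relative to the bulk. Counting $\mathbb{Z}_3$-$\mathbb{Z}_3$ bimodule simples supported on the coset $\{\rho,\alpha\rho,\alpha^2\rho\}$ gives a single free orbit, so exactly one simple $\mathcal{L}$ with $d_\mathcal{L}=3d_\rho$. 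This is consistent with the $a$-component having dimension $9d_\rho^2$.

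Fusion then follows quickly. Every bulk line has dimension $1$ and the $a$-component has rank one, so $g\otimes\mathcal{L}\cong\mathcal{L}\otimes g\cong\mathcal{L}$. Frobenius reciprocity \eqref{eqn:Frobrep2} gives each $g$ multiplicity $1$ in $\mathcal{L}^2$. The hypergroup grading \eqref{eqn:hypergroupgrading} fixes the $\mathcal{L}$-multiplicity $N$ through
\[
9+3Nd_\rho=9d_\rho^2 .
\]
With $d_\rho^2=3d_\rho+1$ this yields $N=9$, giving the near-group type $\mathbb{Z}_3\times\mathbb{Z}_3+9$.

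The main obstacle is the rank count, i.e. showing the $a$-component is exactly one simple rather than several with the same total dimension. If the bimodule argument is shaky, I would fall back on Equation \eqref{eqn:Brank}. That requires the branching $B_{a\mu}$ of the nine $V^{\mathbb{Z}_3}$-modules into $Z(\textsl{Hg})$-simples, obtained from the known induction $Z(\textsl{Hg})\to\textsl{Hg}$ restricted along $\mathbb{Z}_3$. This yields the rank $10$ as $9+1$.
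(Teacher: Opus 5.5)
Your overall structure matches the paper's. The hypergroup comes from the consecutive embeddings $V^{\textsl{Hg}}\subset V^{\mathbb{Z}_3}\subset V$, and your direct double-coset computation is right, since $w=d_\rho^2$. Rank $10$ and $\dim\mathcal{L}=3d_\rho$ then come from a $\mathbb{Z}_3$-orbit count. The fusion rules follow from the grading, from Frobenius reciprocity (using that $\mathcal{L}$ is self-dual, being the unique simple in the component $a=a^\ast$), and from $d_{\mathcal{L}}^2=9+N d_{\mathcal{L}}$. This is the paper's proof of the general-$n$ version.

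For $n=3$ the paper instead works in $Z(\textsl{Hg})_A$ with $A=\mathds{1}\oplus X_1$. It fixes the branching on local modules by modular intertwining and gets the rank from Equation \eqref{eqn:Brank}, which is your fallback. It then identifies $\mathcal{L}=I(X_{11})$ and computes $\mathcal{L}^2=I(X_{11}\otimes X_{11})$. That route additionally yields $R(\mathcal{L})=\bigoplus_{k\neq 0}X_k$ and hence the twisted partition function.

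The step you flag as the main obstacle is justified incorrectly. $\Ver(V^{\mathbb{Z}_3}/V^{\textsl{Hg}})$ is not a category of ${}_{A'}\textsl{Hg}_{A'}$-bimodules with $A'=\mathds{1}\oplus\alpha\oplus\alpha^2$; that bimodule category is the Grossman--Snyder category.
\begin{itemize}
\item Its global dimension is $\dim\textsl{Hg}$, not the $3\dim\textsl{Hg}$ you derived from Equation \eqref{eqn:dimver}.
\item In it, $\mathrm{Hom}_{\textsl{Hg}}(\rho,A'\rho A')$ is $3$-dimensional. So the $\rho$ double coset carries three simple bimodules of dimension $d_\rho$, not one of dimension $3d_\rho$, and the bimodule count would give the wrong rank.
\end{itemize}
The correct input is Claim \ref{cl:claim:0gauging}: $\Ver(V^{\mathbb{Z}_3}/V^{\textsl{Hg}})\cong\textsl{Hg}^{\mathbb{Z}_3}$, with $\mathbb{Z}_3$ acting by conjugation by $\alpha$. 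Equivalently, it consists of objects of $\textsl{Hg}$ equipped with half-braidings against $\Vec_{\mathbb{Z}_3}$. Since $\alpha\,\alpha^i\rho\,\alpha^{-1}\cong\alpha^{i+2}\rho$, the three $\alpha^i\rho$ form one free orbit and merge into a single simple of dimension $3d_\rho$. Each $\alpha^i$ is fixed and splits into three invertibles, and these nine must be the bulk $Z(\Vec_{\mathbb{Z}_3})$, giving $9+1$.

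This is equivariantization, not de-equivariantization, and it comes from the zero-form gauging claim rather than the pinching construction. With this replacement, or with your Equation \eqref{eqn:Brank} fallback actually carried out via modular intertwining as the paper does, the proof is complete.
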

\noindent For example, take $V$ to be the chiral $(E_8)_1$ WZW model. It possesses a unique non-anomalous $\mathbb{Z}_3$ symmetry up to conjugation (see e.g.\ \cite{Burbano:2021loy}). If it further possessed a $\textsl{Hg}$ symmetry, then the $\mathbb{Z}_3$ subcategory would have to act according to this unique $\mathbb{Z}_3$, and Claim \ref{cl:cla:Hg3} would predict that $V^{\mathbb{Z}_3}\cong \widehat{\mathfrak{su}}(3)_1\otimes \widehat{(\mathfrak{e}_6)}_1$ admits an action by the hypergroup $K_{\textsl{Hg}}\sslash\mathbb{Z}_3$ such that the neutral operators describe a chiral algebra whose representation category is $Z(\textsl{Hg})$. Further, $\widehat{\mathfrak{su}}(3)_1\otimes\widehat{(\mathfrak{e}_6)}_1$ would admit a topological line operator $\mathcal{L}$ which, together with the lines coming from the bulk $\mathbb{Z}_3$ gauge theory, generates a near-group fusion category $\mathbb{Z}_3\times\mathbb{Z}_3+9$.

The rest of this subsection is dedicated to deriving this claim, and explaining related calculations.

Let us begin by determining the hypergroup. This can be easily achieved by applying the discussion around Equation \eqref{eqn:consecutiveembeddings} to the consecutive conformal embeddings $V^{\textsl{Hg}}\subset V^{\mathbb{Z}_3}\subset V$. Since $V^{\textsl{Hg}}$ and $V^{\mathbb{Z}_3}$ are obtained from $V$ as the fixed points of $\textsl{Hg}$ and $\mathbb{Z}_3$, respectively, it follows that $V^{\textsl{Hg}}$ can be obtained from $V^{\mathbb{Z}_3}$ as the fixed points of an action of the double coset hypergroup $K_{\textsl{Hg}}\sslash \mathbb{Z}_3$. This hypergroup was actually already computed in \cite[Lemma 4.26]{Bischoff:2016jmy}, where it was found that $K_{\textsl{Hg}}\sslash\mathbb{Z}_3=\{\mathds{1},a\}$ with
\begin{align}\label{eqn:haghyp}
   a\star a = \frac{1}{w}\cdot \mathds{1} + \frac{w-1}{w} \cdot a , \ \ \ \ \ \ w= \frac{11+3\sqrt{13}}{2}.
\end{align}
Thus, the problem of finding a Haagerup chiral algebra reduces to finding a suitable action of the hypergroup in Equation \eqref{eqn:haghyp} on a chiral algebra $V^{\mathbb{Z}_3}$ on the boundary of 3D $\mathbb{Z}_3$ gauge theory.

Let us next calculate the full category $\mathrm{Ver}\left( V^{\mathbb{Z}_3}/V^{\textsl{Hg}}\right)$ of topological line operators of $V^{\mathbb{Z}_3}$ which preserve the conformal subalgebra $V^{\textsl{Hg}}$. To this end, we note that the modular data of $Z(\textsl{Hg})$, and therefore also the fusion rules by Verlinde's formula, are given in \cite{Evans:2010yr}. We label the simple objects of $Z(\textsl{Hg})$ as $\mathds{1},X_1,\dots,X_{11}$, following the basis of op.\ cit. In this basis, the conformal extension $V^{\textsl{Hg}}\subset V^{\mathbb{Z}_3}$ is mediated by the condensable algebra
\begin{align}
    A=\mathds{1}\oplus X_1,
\end{align} 
in $Z(\textsl{Hg})$. Thus, by Claim \ref{cl:claim:verdef}, we have that 
\begin{align}
    \mathrm{Ver}\left( V^{\mathbb{Z}_3}/V^{\textsl{Hg}}\right)\cong Z(\textsl{Hg})_A,
\end{align}
where $Z(\textsl{Hg})_A$ is the category of $A$-modules.

We already have access to a very large subcategory of $Z(\textsl{Hg})_A$, namely the subcategory of \emph{local} $A$-modules,
\begin{align}
    Z(\textsl{Hg})_A^{\mathrm{loc}}\cong \Rep(V^{\mathbb{Z}_3})\cong Z(\textsl{Vec}_{\mathbb{Z}_3})\cong (A_2,1)\boxtimes \overline{(A_2,1)},
\end{align} 
where $(A_2,1)\equiv \Rep(SU(3)_1)$ is described in \cite[Section 5.3.3]{Rowell:2007dge}. Let us parametrize the simple objects of $Z(\textsl{Hg})_A^{\mathrm{loc}}$, which have $\mathbb{Z}_3\times\mathbb{Z}_3$ fusion rules,  as $(\eta^i,\eta^j)$.

It will be useful in what follows to compute the restriction functor, 
\begin{align}
    R:Z(\textsl{Hg})_A\to Z(\textsl{Hg}),
\end{align}
and its adjoint, the induction functor
\begin{align}
    I:Z(\textsl{Hg})\to Z(\textsl{Hg})_A,
\end{align}
given in Equation \eqref{eqn:restriction} and Equation \eqref{eqn:induction}, respectively. (See also Figure \ref{fig:physicalrestriction} and Figure \ref{fig:induction} for the physical interpretations of restriction and induction.) Let us parametrize how $R$ acts on the local modules in $Z(\textsl{Hg})_A^{\mathrm{loc}}\subset Z(\textsl{Hg})_A$ as
\begin{align}
    R(\eta^i,\eta^j) = \bigoplus_{k=0}^{11} B_{(i,j),k}X_k.
\end{align}
In the language of chiral algebras, the $B_{(i,j),k}$ are the multiplicities describing how irreducible representations of $V^{\mathbb{Z}_3}$ decompose into irreducible representations of $V^{\textsl{Hg}}$. 
By imposing that the $B_{(i,j),k}$ must intertwine modular data, Equation \eqref{eqn:intertwinemodular}, and using the known modular data of $Z(\textsl{Vec}_{\mathbb{Z}_3})\cong Z(\textsl{Hg})_A^{\mathrm{loc}}$ and $Z(\textsl{Hg})$, we can completely fix the action of restriction on local modules, $R\vert_{Z(\textsl{Hg})_A^{\mathrm{loc}}}$. In particular, one finds that 
\begin{align}\label{eqn:haagR}
\begin{split}
    &R(\mathds{1},\mathds{1}) = \mathds{1}\oplus X_1, \ \ \ R(\mathds{1},\eta)=R(\mathds{1},\eta^2) = X_5, \ \ \ R(\eta,\mathds{1})=R(\eta^2,\mathds{1})=X_4 \\
    &\hspace{.6in} R(\eta,\eta)=R(\eta^2,\eta^2) = X_2, \ \ \ R(\eta,\eta^2)=R(\eta^2,\eta) = X_3.
\end{split}
\end{align}
If we feed Equation \eqref{eqn:haagR} into Equation \eqref{eqn:Brank}, we can determine the rank of symmetry category we are after to be 
\begin{align}
    \mathrm{rk}\Ver\left( V^{\mathbb{Z}_3}/V^{\textsl{Hg}}\right) = \sum_{i,j=0}^2\sum_{k=0}^{11} (B_{(i,j),k})^2 = 10.
\end{align}
That is, in addition to the 9 simple lines in $Z(\textsl{Hg})_A^{\mathrm{loc}}$ which can be pulled into the bulk, there is one simple topological line operator which is trapped on the gapless chiral boundary defined by $V^{\mathbb{Z}_3}$, and cannot be pulled into the bulk $\mathbb{Z}_3$ gauge theory. Let us call this line $\mathcal{L}$. 

To gain more information about $\mathcal{L}$, let us finish the calculation of the restriction functor  by calculating $R(\mathcal{L})$. As an intermediate step, it will be helpful to compute the image of $X_{11}$ under the bulk-to-wall functor $I$. By using Frobenius reciprocity, Equation \eqref{eqn:Frobeniusreciprocity}, we observe that 
\begin{align}
    \mathrm{Hom}(I(X_{11}),(\eta^i,\eta^j))=\mathrm{Hom}(X_{11},R(\eta^i,\eta^j))=0,
\end{align}
for all $i,j$.
In particular, one finds that no simple line of $Z(\textsl{Hg})_A^{\mathrm{loc}}$ appears in the decomposition of $I(X_{11})$ into simple lines of $Z(\textsl{Hg})_A$, and therefore $I(X_{11})=n\cdot \mathcal{L}$. To compute the multiplicity $n$, we simply apply Frobenius reciprocity again, 
\begin{align}
    \mathrm{Hom}(I(X_{11}),I(X_{11}))&= \mathrm{Hom}(X_{11},R(I(X_{11})))=\mathrm{Hom}(X_{11},A\otimes X_{11})=1,
\end{align}
where we have used Equation \eqref{eqn:ResInd} for the second equality, and the fusion rules of $Z(\textsl{Hg})$ to deduce that $A\otimes X_{11}\cong \bigoplus_{k=1}^{11} X_k$ and obtain the third equality. This implies that $I(X_{11})$ is simple and therefore that $n=1$. From this, it follows that
\begin{align}
    \mathrm{Hom}(X_k,R(\mathcal{L})) = \mathrm{Hom}(X_k,R(I(X_{11})))=\mathrm{Hom}(X_k,A\otimes X_{11})=1-\delta_{k,0},
\end{align}
i.e.\ $R(\mathcal{L})\cong \bigoplus_{k\neq 0}X_k$. This completes the calculation of the restriction functor. 

We note in passing that, since induction preserves quantum dimensions, 
\begin{align}
    \dim(\mathcal{L})=\dim(X_{11})=\frac32(3+\sqrt{13}),
\end{align}
where we have read off the quantum dimension of $X_{11}$ from the modular S matrix of $Z(\textsl{Hg})$. 

One immediate physical application of the calculation of $R(\mathcal{L})$ is that we can compute the spectrum of the twisted-sector Hilbert space $V^{\mathbb{Z}_3}_{\mathcal{L}}$. Indeed, using Equation \eqref{eqn:twistedpfB}, we can express the twisted partition function $Z_{\mathcal{L}}(\tau)$ in terms of the characters of $V^{\textsl{Hg}}$, 
\begin{align}
    \mathrm{Tr}_{V^{\mathbb{Z}_3}_{\mathcal{L}}}q^{L_0-1} \equiv Z_{\mathcal{L}}(\tau) = \sum_{k=1}^{11} \mathrm{Tr}_{V^{\textsl{Hg}}_{X_k}}q^{L_0-1},
\end{align}
where we have written $V^{\textsl{Hg}}_{X_k}$ for the module of $V^{\textsl{Hg}}$ realized by boundary local operators on which the bulk line $X_k$ can end, cf.\ Figure \ref{fig:gaplesschiralboundary}. 

Next, let us compute the fusion rules in $\Ver(V^{\mathbb{Z}_3}/V^{\textsl{Hg}})$. First of all, we note that it is a $K_{\textsl{Hg}}\sslash \mathbb{Z}_3$-graded extension of $Z(\textsl{Vec}_{\mathbb{Z}_3})$, i.e.\ 
\begin{align}
    \mathrm{Ver}(V^{\mathbb{Z}_3}/V^{\textsl{Hg}}) \cong \mathcal{B}(V^{\mathbb{Z}_3})\oplus \mathcal{B}(V^{\mathbb{Z}_3})_a
\end{align}
where 
\begin{align}
    \mathcal{B}(V^{\mathbb{Z}_3})\cong Z(\textsl{Vec}_{\mathbb{Z}_3}), \ \ \ \ \  \mathrm{Irr}(\mathcal{B}(V^{\mathbb{Z}_3})_a)=\{\mathcal{L}\}.
\end{align}
From Equation \eqref{eqn:hypergroupgrading}, it immediately follows that 
\begin{align}
    (\eta^i,\eta^j)\otimes\mathcal{L}\cong \mathcal{L}\otimes (\eta^i,\eta^j) \cong \mathcal{L}.
\end{align}
On the other hand, we can compute the remaining fusion rule of $\mathcal{L}$ with itself using that induction is a tensor functor, Equation \eqref{eqn:inductiontensorfunctor},
\begin{align}
    \mathcal{L}\otimes \mathcal{L} \cong I(X_{11})\otimes I(X_{11})\cong I(X_{11}\otimes X_{11}) \cong \bigoplus_{k\neq 1,6} I(X_k) \cong 9\cdot \mathcal{L}\oplus \bigoplus_{i,j=0}^2 (\eta^i,\eta^j),
\end{align}
where, in order to determine $I(X_k)$ for all $k$, we have used that induction is the transpose of restriction, which was computed earlier. One can confirm that this fusion rule respects the grading by $K_{\textsl{Hg}}\sslash \mathbb{Z}_3$ in the sense of Equation \eqref{eqn:hypergroupgrading}. Indeed if one defines $\pi_{\mathds{1}}$ and $\pi_a$ to be the projections onto the two graded components, one recovers the structure constants of the hypergroup, 
\begin{align}
    \frac{\dim(\pi_{\mathds{1}}(\mathcal{L}^2))}{\dim(\mathcal{L}^2)} = \frac{1}{w}, \ \ \ \ \ \frac{\dim(\pi_{a}(\mathcal{L}^2))}{\dim(\mathcal{L}^2)} = \frac{w-1}{w},
\end{align}
where $w$ was defined in Equation \eqref{eqn:haghyp}. 

We note that the action of $\Ver(V^{\mathbb{Z}_3}/Z^{\textsl{Hg}})$ on the modules of $V^{\mathbb{Z}_3}$ can be computed easily. For example, the action of the $\mathbb{Z}_3$ toric code on the irreducible modules $V^{\mathbb{Z}_3}_{(\eta^i,\eta^j)}$ of $V^{\mathbb{Z}_3}$ can be obtained by the standard Verlinde-like formula in Equation \eqref{eqn:actionpullingbulk} (see also Figure \ref{fig:encirclingaction}) which is calculated by dragging $(\eta^i,\eta^j)$ into the bulk $\mathbb{Z}_3$ toric code, 
\begin{align}
    \widehat{(\eta^i,\eta^j)}\cdot \mathcal{O} = \frac{S^{Z(\textsl{Vec}_{\mathbb{Z}_3})}_{(i,j)(i',j')}}{S^{Z(\textsl{Vec}_{\mathbb{Z}_3})}_{(0,0)(i',j')}}\mathcal{O}= \omega^{ii'-jj'}\mathcal{O}, \ \ \ \ \ \ \text{if } \mathcal{O}\in V^{\mathbb{Z}_3}_{(\eta^{i'},\eta^{j'})},
\end{align}
where $\omega=\exp(2\pi \sqrt{-1}/3)$. Similarly, the action of $\mathcal{L}=I(X_{11})$ can be determined by dragging it into the bulk \emph{in the other direction} (i.e.\ into the bulk $Z(\textsl{Hg})$) and using Equation \eqref{eqn:inducedaction}, 
\begin{align}
    \widehat{\mathcal{L}}\cdot \mathcal{O} = \frac{S^{Z(\textsl{Hg})}_{11,k}}{S_{0,k}^{Z(\textsl{Hg})}}\mathcal{O}, \ \ \ \ \ \ \text{if } \mathcal{O}\in B_{(i,j)k}V^{\textsl{Hg}}_k \subset V^{\mathbb{Z}_3}_{(i,j)}.
\end{align}
For example, unpacking the action of $\mathcal{L}$ on the vacuum module $V^{\mathbb{Z}_3}\cong V^{\textsl{Hg}}\oplus V^{\textsl{Hg}}_{X_1}$ gives 
\begin{align}\label{eqn:Haagaction}
    \widehat{\mathcal{L}}\cdot \mathcal{O} = \begin{cases}
        \frac32(3+\sqrt{13})\mathcal{O}, &\text{ if }\mathcal{O}\in V^{\textsl{Hg}}\subset V^{\mathbb{Z}_3} \\
        \frac32(3-\sqrt{13})\mathcal{O}, & \text{ if }\mathcal{O}\in V^{\textsl{Hg}}_{X_1}\subset V^{\mathbb{Z}_3}.
    \end{cases}
\end{align}
Using the dome perspective of Section \ref{subsec:surfaces}, the action in \eqref{eqn:Haagaction} can equivalently be represented as in the left of Figure \ref{fig:domehypergroup} via a surface operator of $\mathbb{Z}_3$ gauge theory obtained by higher-gauging $\mathbb{Z}_3\times\mathbb{Z}_3$ with some choice of discrete torsion $\psi \in H^2(\mathbb{Z}_3\times\mathbb{Z}_3,U(1))$ \cite{Roumpedakis:2022aik}. It would be interesting to determine the correct choice of $\psi$.

Before moving on, let us comment on the generalization of Claim \ref{cl:cla:Hg3} to untwisted $\mathbb{Z}_{n}$ Haagerup-Izumi categories $\textsl{Hg}_{n}$, with $n$ odd. These have simple objects 
\begin{align}
    \mathrm{Irr}(\textsl{Hg}_{n}) \cong \{\alpha^i\mid i=0,\dots,n-1\}\cup \{\alpha^i\rho\mid i=0,\dots,n-1\}, 
\end{align}
with fusion rules 
\begin{align}
    \alpha^i\otimes \alpha^j \cong \alpha^{i+j},  \ \ \ \ \  \alpha^i\otimes \rho \cong \rho\otimes \alpha^{-i} , \ \ \ \ \ \rho^2\cong \mathds{1}\oplus \bigoplus_{k=0}^{n-1}\alpha^k\rho,
\end{align}
where we identify $\alpha^i\cong \alpha^{i+n}$. The category $\textsl{Hg}$ corresponds to the case $n=3$. These fusion categories are known to exist for $n\leq 29$ \cite{Izumi:2001mi,Evans:2010yr,Huang:2020lox,budinski2021exotic}, and conjectured to exist for all odd $n$. We then have the following.
\begin{claim}{}{}
    Assume a chiral CFT $V$ admits an action by $\textsl{Hg}_{n}$, with $n$ an odd integer. Then $V^{\textsl{Hg}_{n}}$ can be obtained from $V^{\mathbb{Z}_{n}}$ by passing to the states which are neutral under an action of the hypergroup 
    \begin{align}
        K_{\textsl{Hg}_{n}}\sslash \mathbb{Z}_{n} = K_{\mathrm{Ver}(V^{\mathbb{Z}_{n}}/V^{\textsl{Hg}_{n}})} \sslash K_{Z(\textsl{Vec}_{\mathbb{Z}_{n}})}=\{\mathds{1},a\}
    \end{align}
    with multiplication rule given by
    \begin{align}
        a\star a = \frac1w \mathds{1} + \frac{w-1}{w} a, \ \ \ \ w = \frac{2+n^2+n\sqrt{n^2+4}}{2}.
    \end{align}
    Furthermore, $V^{\mathbb{Z}_{n}}$ defines a gapless chiral boundary condition of $\mathbb{Z}_n$ gauge theory, and there are $n^2+1$ simple topological line operators supported on this boundary which commute with the $V^{\textsl{Hg}_{n}}$ subalgebra, 
    \begin{align}
    \begin{split}
        &\hspace{.3in}\Ver(V^{\mathbb{Z}_{n}}/V^{\textsl{Hg}_{n}})=\mathcal{B}(V^{\mathbb{Z}_n})\oplus \mathcal{B}(V^{\mathbb{Z}_n})_a, \\
        &\mathcal{B}(V^{\mathbb{Z}_n})\cong Z(\textsl{Vec}_{\mathbb{Z}_{n}}), \ \ \ \ \ \mathrm{Irr}(\mathcal{B}(V^{\mathbb{Z}_n})_a)= \{\mathcal{L}\},
    \end{split}
    \end{align}
    $n^2$ of which are lines coming from anyons in the bulk. Letting $g$ denote an arbitrary element of $Z(\textsl{Vec}_{\mathbb{Z}_{n}})$, the fusion rules of $\Ver(V^{\mathbb{Z}_{n}}/V^{\textsl{Hg}_{n}})$, in addition to those of the $\mathbb{Z}_{n}$ toric code, are 
    \begin{align}
        g\otimes \mathcal{L}\cong \mathcal{L}\otimes g \cong \mathcal{L}, \ \ \ \ \ \mathcal{L}^2\cong n^2\cdot \mathcal{L}\oplus \bigoplus_{g\in Z(\textsl{Vec}_{\mathbb{Z}_{n}})}g.
    \end{align}
    That is, $\Ver(V^{\mathbb{Z}_{n}}/V^{\textsl{Hg}_{n}})$ is a near-group fusion category of the form $\mathbb{Z}_{n}\times\mathbb{Z}_{n}+n^2$.
\end{claim}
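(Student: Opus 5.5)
The plan is to run the same argument as for Claim \ref{cl:cla:Hg3}, replacing $\mathbb{Z}_3$ by $\mathbb{Z}_n$ throughout, but organizing it so that the $Z(\textsl{Hg}_n)$ modular data is needed as little as possible. First, the hypergroup. Apply the consecutive-embedding principle of Equation \eqref{eqn:consecutiveembeddings} to $V^{\textsl{Hg}_n}\subset V^{\mathbb{Z}_n}\subset V$, with $K=K_{\textsl{Hg}_n}$ and $K'=\mathbb{Z}_n$. This gives the effective hypergroup $K_{\textsl{Hg}_n}\sslash\mathbb{Z}_n$, and we compute it directly from the fusion rules.

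The double cosets are $\{\alpha^i\}$ and $\{\alpha^i\rho\}$. The second coset is a single class because $\alpha^i\rho\alpha^j=\alpha^{i-j}\rho$. For $a=[\rho]$, the stochastically normalized product $\rho\otimes\rho/\dim(\rho)^2$ places weight $1/d^2$ on $\mathds{1}$, where $d=\dim\rho$. The relation $d^2=1+nd$ gives $d=(n+\sqrt{n^2+4})/2$, so $w=d^2=(2+n^2+n\sqrt{n^2+4})/2$. Averaging over $\omega_{\mathbb{Z}_n}$ preserves these weights, which yields $a\star a=\frac1w\mathds{1}+\frac{w-1}{w}a$. Since $V$ is holomorphic and $\mathbb{Z}_n$ is non-anomalous, $\mathcal{B}(V^{\mathbb{Z}_n})\cong Z(\textsl{Vec}_{\mathbb{Z}_n})$ by Claim \ref{cl:cla:haagerupchiralCFT}. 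Claim \ref{cl:cla:hypergroupgrading} then gives the two-component decomposition $\mathcal{B}(V^{\mathbb{Z}_n})\oplus\mathcal{B}(V^{\mathbb{Z}_n})_a$.

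Second, determine the $a$-component. By Equation \eqref{eqn:dimver},
$$\dim\Ver(V^{\mathbb{Z}_n}/V^{\textsl{Hg}_n})^2=\dim Z(\textsl{Hg}_n)\,\dim Z(\textsl{Vec}_{\mathbb{Z}_n})=\dim(\textsl{Hg}_n)^2n^2.$$
Hence the global dimension is $n\dim(\textsl{Hg}_n)=n(n+nd^2)=n^2(1+d^2)$. Subtracting the $n^2$ from the invertible component leaves $\dim\mathcal{B}_a=n^2d^2$.

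The grading condition \eqref{eqn:hypergroupgrading}, applied with $g\in\mathcal{B}(V^{\mathbb{Z}_n})$, shows that the anyons permute the simples of $\mathcal{B}_a$. To show that $\mathcal{B}_a$ has a single simple $\mathcal{L}$ of dimension $nd$, I would compute the rank through Equation \eqref{eqn:Brank}. This needs the branching $B_{(i,j),k}$ of $Z(\textsl{Vec}_{\mathbb{Z}_n})$-simples into $Z(\textsl{Hg}_n)$-simples, which the intertwining relations \eqref{eqn:intertwinemodular} fix from the Evans--Gannon modular data of $Z(\textsl{Hg}_n)$.

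A cleaner route avoids the modular data. The condensable algebra $A$ is the restriction of the vacuum, namely the image of the canonical Lagrangian $L$ of $Z(\textsl{Hg}_n)$ after condensing $\mathbb{Z}_n$. Its restriction to $\textsl{Hg}_n$ is $\bigoplus_i\alpha^i$, and $\Ver\cong Z(\textsl{Hg}_n)_A$. I expect this category to be Morita-dual to $\textsl{Hg}_n\boxtimes_{\textsl{Vec}_{\mathbb{Z}_n}}\ldots$, and more concretely equivalent to ${}_{B}(\textsl{Hg}_n)_{B}$-type data with $B=\bigoplus\alpha^i$. The rank of $Z(\textsl{Hg}_n)_A$ can then be computed as $\sum_{a,\mu}B_{a\mu}^2$, using that the local modules restrict to distinct simples except for the vacuum, which restricts to $\mathds{1}\oplus X_1$. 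This gives rank $n^2+1$.

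Third, the fusion rules. Once $\mathcal{B}_a=\{\mathcal{L}\}$ is known, the grading forces $g\otimes\mathcal{L}\cong\mathcal{L}\otimes g\cong\mathcal{L}$. Frobenius reciprocity gives $\mathrm{Hom}(\mathcal{L}\otimes\mathcal{L},g)=\mathrm{Hom}(\mathcal{L},\mathcal{L})=1$ for each of the $n^2$ anyons, using self-duality of $\mathcal{L}$, which holds because it is the unique simple in its component. So $\mathcal{L}^2=\bigoplus_g g\oplus m\mathcal{L}$. The dimension count $n^2d^2=n^2+m\,nd$ together with $d^2-1=nd$ then gives $m=n^2$, which matches the grading weights $1/w$ and $(w-1)/w$. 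This is the near-group $\mathbb{Z}_n\times\mathbb{Z}_n+n^2$.

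The main obstacle is the rank statement, namely that $\mathcal{B}_a$ has exactly one simple. The dimension count alone cannot rule out several simples of smaller dimension. I would first try the branching computation using the known $Z(\textsl{Hg}_n)$ modular data for small $n$, and then argue uniformly via the $\mathbb{Z}_n$-orbit structure of the $\textsl{Hg}_n$ half-braidings. Specifically, exactly one local $A$-module (the vacuum) has a two-term restriction, and every other local module restricts to a single simple, so the rank is $(n^2-1)+4=n^2+3$ minus corrections. Since that naive count is off, the correct bookkeeping must come from identifying which $X_k$ each $(\eta^i,\eta^j)$ restricts to, and I expect this identification to be where the real work lies.
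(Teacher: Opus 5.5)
The gap is the step you yourself flag as the main obstacle. You never show that the $a$-graded component $\mathcal{B}(V^{\mathbb{Z}_n})_a$ contains a single simple object. Everything downstream rests on it: $\dim\mathcal{L}=nd$, $g\otimes\mathcal{L}\cong\mathcal{L}$, $m_g=1$, $m_{\mathcal{L}}=n^2$, and the rank $n^2+1$. Your global-dimension count $\dim\mathcal{B}(V^{\mathbb{Z}_n})_a=n^2d^2$ is correct, but as you note it cannot tell one simple from several, and neither proposed route closes the gap.

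\emph{The branching route.} It needs the restrictions of the local modules $(\eta^i,\eta^j)$ to $Z(\textsl{Hg}_n)$ for general $n$, which you do not supply. Your closing count also contains an arithmetic slip. In Equation \eqref{eqn:Brank} the vacuum contributes $1^2+1^2=2$, not $4$. So \emph{if} every non-vacuum local module restricted to a single simple with multiplicity one, you would get exactly $2+(n^2-1)=n^2+1$. Distinctness of those simples is irrelevant, and in fact fails at $n=3$ (cf.\ Equation \eqref{eqn:haagR}). But that hypothesis is precisely the modular-data input you are missing.

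\emph{The ``cleaner route''.} It does not survive dimension checks. First, $\dim A=[V^{\mathbb{Z}_n}:V^{\textsl{Hg}_n}]=1+d^2\neq n$, so the image of $A$ in $\textsl{Hg}_n$ cannot be $\bigoplus_i\alpha^i$. Second, ${}_B(\textsl{Hg}_n)_B$ is Morita equivalent to $\textsl{Hg}_n$, so it has global dimension $n(1+d^2)$ rather than the $n^2(1+d^2)$ you computed. A telling symptom is that nothing in your argument uses that $n$ is odd.

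The missing idea is to obtain the whole category at once from Claim \ref{cl:claim:0gauging}. Since $V$ is holomorphic, $\Ver(V/V^{\textsl{Hg}_n})\cong\textsl{Hg}_n$. Then $\Ver(V^{\mathbb{Z}_n}/V^{\textsl{Hg}_n})\cong\textsl{Hg}_n^{\mathbb{Z}_n}$, the equivariantization for the action $X\mapsto\alpha\otimes X\otimes\alpha^{-1}$, and simple objects are read off from orbits and stabilizers:
\begin{enumerate}[label=(\roman*)]
\item Each $\alpha^i$ is fixed and splits into $n$ invertible lines. These $n^2$ lines must be the $Z(\textsl{Vec}_{\mathbb{Z}_n})$ subcategory.
\item We have $\alpha\,(\alpha^i\rho)\,\alpha^{-1}\cong\alpha^{i+2}\rho$. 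For odd $n$, and only then, this permutes the $n$ objects $\alpha^i\rho$ in a single free orbit, giving exactly one further simple $\mathcal{L}$.
\item The dimension-preserving induction of Equation \eqref{eqn:inductionGequiv} sends $\mathcal{L}\mapsto\bigoplus_k\alpha^k\rho$, so $\dim\mathcal{L}=nd$.
\end{enumerate}
With this in hand, your remaining steps go through and match the paper: the grading, Frobenius reciprocity with self-duality for $m_g=1$, and the dimension (equivalently, grading-weight) count for $m_{\mathcal{L}}=n^2$. Your explicit double-coset computation of $a\star a$ is also correct; it makes explicit what the paper imports from \cite[Lemma 4.26]{Bischoff:2016jmy}.
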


\begin{proof}
    The structure of the hypergroup which acts on $V^{\mathbb{Z}_{n}}$ follows by combining the discussion around Equation \eqref{eqn:consecutiveembeddings} with \cite[Lemma 4.26]{Bischoff:2016jmy}, as we did for the special case $n=3$ previously. By Claim \ref{cl:cla:hypergroupgrading}, the hypergroup grades the Verlinde category, 
    \begin{align}\label{eqn:haggrading}
        \Ver(V^{\mathbb{Z}_n}/V^{\textsl{Hg}_n}) = \mathcal{B}(V^{\mathbb{Z}_n})\oplus \mathcal{B}(V^{\mathbb{Z}_n})_a,
    \end{align}
    with $\mathcal{B}(V^{\mathbb{Z}_n})= Z(\textsl{Vec}_{\mathbb{Z}_n})$.

    In the special case that $n=3$, we constructed the Verlinde category by direct calculation. Let us provide a slicker way of computing $\Ver(V^{\mathbb{Z}_{n}}/V^{\textsl{Hg}_{n}})$ for general $n$. We use the fact (cf.\ Section \ref{subsec:topman}) that $\Ver(V^{\mathbb{Z}_n}/V^{\textsl{Hg}_n})$ can be obtained as the $\mathbb{Z}_n$-equivariantization of $\textsl{Hg}_n$, where the generator of $\mathbb{Z}_n$ acts by conjugation by $\alpha$, i.e.\ $X\mapsto \alpha \otimes X \otimes \alpha^{-1}$. 

    Under this action of $\mathbb{Z}_n$, the invertible objects of $\textsl{Hg}_n$ are invariant, while the $\rho_i$ are all mapped into each other. Therefore, in the equivariantization, the invertibles each split into $n$ lines, 
    \begin{align}
        (\alpha^i,j), \ \ \ \ i,j=0,\dots,n-1,
    \end{align}
    where $j$ indexes the different choices of equivariant structures. On the other hand, the $\alpha^i\rho$ all combine into a single object 
    \begin{align}
        \mathcal{L}\equiv \left(\bigoplus_{k=0}^{n-1}\alpha^k\rho,u \right).
    \end{align}
    In particular, there are $n^2+1$ simple lines in $\Ver(V^{\mathbb{Z}_n}/V^{\textsl{Hg}_n})$. 
    
    The quantum dimensions follow from the fact that the induction functor Equation \eqref{eqn:inductionGequiv}, 
    \begin{align}
        I:\Ver(V^{\mathbb{Z}_n}/V^{\textsl{Hg}_n})\to \Ver(V/V^{\textsl{Hg}_n})\cong \textsl{Hg}_n,
    \end{align}
    which sends $(\alpha^i,j)\mapsto \alpha^i$ and $\mathcal{L}\mapsto \bigoplus_{k=0}^{n-1}\alpha^k\rho$, preserves quantum dimensions. In particular, we can conclude that the lines $(\alpha^i,j)$ all have quantum dimension $1$ and hence are invertible, while the line $\mathcal{L}$ has quantum dimension
    \begin{align}
       \dim(\mathcal{L})= \sum_{k=0}^{n-1} \dim(\alpha^k\rho) = \frac12 (n^2+n\sqrt{4+n^2}).
    \end{align}
At the same time, we know from Equation \eqref{eqn:haggrading} that $\Ver(V^{\mathbb{Z}_n}/V^{\textsl{Hg}_n})$ contains $Z(\textsl{Vec}_{\mathbb{Z}_n})$ as a fusion subcategory. All the lines in $Z(\textsl{Vec}_{\mathbb{Z}_n})$ are invertible and have $\mathbb{Z}_n\times \mathbb{Z}_n$ fusion rules, so the lines $(\alpha^i,j)$ must correspond to this subcategory.
The grading by the hypergroup then forces the fusion rules $(\alpha^i,j)\otimes \mathcal{L} \cong \mathcal{L}\otimes (\alpha^i,j)\cong \mathcal{L}$.

All that is left is to compute the fusion rule of $\mathcal{L}$ with itself. We can parametrize its general form as 
\begin{align}
    \mathcal{L}^2 \cong m_{\mathcal{L}}\cdot \mathcal{L} \oplus \bigoplus_{g\in Z(\textsl{Vec}_{\mathbb{Z}_n})}m_g\cdot g.
\end{align}
The fusion coefficient $m_{\mathcal{L}}$ follows from the hypergroup grading. Indeed, we have 
\begin{align}
    \frac{w-1}{w} = \frac{\dim(\pi_a(\mathcal{L}^2))}{\dim(\mathcal{L}^2)} = \frac{m_{\mathcal{L}}}{\dim(\mathcal{L})} \implies m_{\mathcal{L}}=n^2.
\end{align}
We also learn from the Frobenius reciprocity equation $N_{ab}^c=N_{c\bar b}^a$ (cf.\ Equation \eqref{eqn:Frobrep2}), using the self-duality of $\mathcal{L}$, that
\begin{align}
    m_g = N_{\mathcal{L}\mathcal{L}}^g = N_{g\mathcal{L}}^{\mathcal{L}}=1.
\end{align}
This completes the calculation of $\Ver(V^{\mathbb{Z}_n}/V^{\textsl{Hg}_n})$.
\end{proof}

\subsection{Boundaries from folding: \texorpdfstring{$(G_2)_1$}{(G2)1} WZW model}\label{subsec:G21}

In this section, we will illustrate how boundary conditions of the full $G_{2,1}$ WZW model  can be constructed by folding along topological line operators of its chiral algebra, $\widehat{\mathfrak{g}}_{2,1}$. In the process, we will be led to study a hypergroup with ``non-integerizable'' structure constants. More specifically, we will establish the following.

\begin{claim}{}{}
 (a) The category $\Ver\left(\widehat{\mathfrak{g}}_{2,1}/\widehat{\mathfrak{su}}(2)_{28}\right)$ of topological line operators supported on the $\widehat{\mathfrak{g}}_{2,1}$ chiral algebra which preserve its $\widehat{\mathfrak{su}}(2)_{28}$ subalgebra is the $E_8$ fusion category $\mathcal{C}_{E_8}$, whose fusion rules are recorded in Table \ref{tab:E8}. \\\vspace{-.15in}
 
 (b) The effective hypergroup which acts faithfully on $\widehat{\mathfrak{g}}_{2,1}$ has rank-4 and its noninvertible multiplication rule is recorded in Table \ref{tab:E8hyp} (see also \cite{Rie22}). \\\vspace{-.15in}
 
(c) There is a one-to-one correspondence between topological lines in $\mathcal{C}_{E_8}$ and boundary conditions of the full $G_{2,1}$ WZW model which preserve the $\widehat{\mathfrak{su}}(2)_{28}$ chiral subalgebra. We give a recipe for calculating the annulus partition functions in the presence of these boundary conditions. 

\end{claim}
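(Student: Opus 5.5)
The conformal embedding $\widehat{\mathfrak{su}}(2)_{28}\subset\widehat{\mathfrak{g}}_{2,1}$ is the exceptional $E_8$-type extension. At the level of $\Rep(\widehat{\mathfrak{su}}(2)_{28})$ it is mediated by the condensable algebra
\begin{align}
A=\mathds{1}\oplus(10)\oplus(18)\oplus(28),
\end{align}
where $(\lambda)$ labels the spin-$\lambda/2$ representation. This is read off from the $E_8$ modular invariant of Cappelli--Itzykson--Zuber, whose first block reads $|\chi_0+\chi_{10}+\chi_{18}+\chi_{28}|^2$. The vacuum module of $\widehat{\mathfrak{g}}_{2,1}$ restricts to $W_0\oplus W_{10}\oplus W_{18}\oplus W_{28}$, and the unique nontrivial module (the Fibonacci anyon $\tau$) restricts to $W_6\oplus W_{12}\oplus W_{16}\oplus W_{22}$. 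This fixes the matrix $B_{a\mu}$ with $a\in\{\mathds{1},\tau\}$.

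For part (a) I would invoke Claim~\ref{cl:claim:verdef}, so that $\Ver(\widehat{\mathfrak{g}}_{2,1}/\widehat{\mathfrak{su}}(2)_{28})\cong\Rep(\widehat{\mathfrak{su}}(2)_{28})_A$. By Kirillov--Ostrik, the category of $A$-modules in $\mathcal{B}(\widehat{\mathfrak{su}}(2)_{28})$ for this algebra has simple objects labeled by the vertices of the $E_8$ Dynkin diagram. The fusion graph of $I(1)$ is the $E_8$ diagram, and the fusion ring is the $E_8$ fusion category $\mathcal{C}_{E_8}$.

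As independent checks I would use the rank formula \eqref{eqn:Brank}. Since the two rows of $B$ each contain four entries equal to $1$, the rank is $4+4=8$, which matches $E_8$. I would also check the dimension formula \eqref{eqn:dimver}. To recover the full fusion table, I would compute the induction functor $I:\Rep(\widehat{\mathfrak{su}}(2)_{28})\to\mathcal{C}_{E_8}$ on the low-lying objects $(0),(1),(2),\dots$ using Frobenius reciprocity \eqref{eqn:Frobeniusreciprocity} together with $R(I(\mu))=\mu\otimes A$ from \eqref{eqn:ResInd}. Peeling off simples one at a time identifies the eight simples, and the fusion rules then follow from the fact that $I$ is a tensor functor \eqref{eqn:inductiontensorfunctor}. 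This peeling bookkeeping is the main obstacle. In particular, $I(\mu)$ splits for several $\mu$, and one must identify which summands coincide across different $\mu$ by computing Hom spaces via reciprocity, organized by the $E_8$ graph.

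For part (b) I would apply Claim~\ref{cl:cla:hypergroupgrading}. The hypergroup is $K=K_{\mathcal{C}_{E_8}}\sslash K_{\mathrm{Fib}}$, where $\mathrm{Fib}=\{\mathds{1},\tau\}$ is central. The double cosets are the orbits of the eight simples under fusion with $\tau$, each weighted by the $\mathrm{Fib}$-module structure. Because $\mathcal{B}(V)$ is central and fully faithful, each orbit has the form $\{X,\tau X\}$, so there are $8/2=4$ classes. The structure constants are then computed from \eqref{eqn:hypergroupgrading} as ratios of quantum dimensions of graded pieces. The quantum dimensions are the Perron--Frobenius vector of the $E_8$ graph, expressed through $\sin(k\pi/30)$. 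Non-integerizability can then be checked by observing that no rescaling $\mathsf{d}(r_i)$ clears the irrational ratios, as in Table~\ref{tab:E8hyp}.

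For part (c) I would apply Claim~\ref{cl:claim:VerS(VW)boundaries} with $\mathcal{S}=\mathds{1}$, since the $G_{2,1}$ model is the diagonal theory. The folding functor $\partial$ gives a bijection between $\mathcal{C}_{E_8}$ and the boundaries preserving $\widehat{\mathfrak{su}}(2)_{28}$. The annulus partition functions are then given by Claim~\ref{cl:claim:annulus}: $\sum N_{XX'}^{X''}Z_{X''}$, where $Z_{X''}=\sum_\mu B_{X''\mu}\chi^{(28)}_\mu$ and the $B_{X''\mu}$ are read off from the induction computation in part (a).
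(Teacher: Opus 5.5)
Your proposal is correct. For (a) and (c) it is essentially the paper's argument. The paper cites Izumi and Edie-Michell for $\mathcal{B}(\widehat{\mathfrak{su}}(2)_{28})_A\cong\mathcal{C}_{E_8}$ and takes the restriction functor from Kirillov--Ostrik; your $A=(0)\oplus(10)\oplus(18)\oplus(28)$ and the decomposition $(6)\oplus(12)\oplus(16)\oplus(22)$ of the Fibonacci module match its $R(\mathds{1})$ and $R(W)$. It then applies the folding claim with $\mathcal{S}=\mathds{1}$ and the annulus formula. Your induction/peeling computation is an optional re-derivation of Table~\ref{tab:E8}, and it does close up. The $E_8$ adjacency matrix has simple spectrum, so $I(1)=\mathds{1}_a$ generates the complexified fusion ring, and the $E_8$ graph fixes all the fusion rules.

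For (b) your route differs from the paper's. The paper argues geometrically:
\begin{itemize}
\item the Fibonacci TQFT has no nontrivial surfaces, so $\mathcal{S}=\mathcal{I}_A^\ast\otimes\mathcal{I}_A$ is four copies of the identity surface, with $L_k=\mathds{1}_k$;
\item the $T$-junction lines ${}_{k''}t_{kk'}$ are read off from $\mathds{1}_k\otimes\mathds{1}_{k'}$;
\item the rescaled constants are $\mathbf{P}_{ij}^k=\dim({}_kt_{ij})$, which gives Table~\ref{tab:E8hyp} directly.
\end{itemize}
You instead compute the double-coset hypergroup algebraically, obtaining the stochastic constants $P_{ij}^k=\dim(\pi_k(X\otimes Y))/\dim(X\otimes Y)$. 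These reproduce the table after setting $\mathbf{P}_{ij}^k=\frac{d_id_j}{d_k}P_{ij}^k$ with $d_k=\dim(\mathds{1}_k)$. That quantity is exactly the quantum dimension of the Fibonacci multiplicity of $\mathds{1}_k$ in $\mathds{1}_i\otimes\mathds{1}_j$; for example, it is $\varphi$ coming from $W_c\subset\mathds{1}_a\otimes\mathds{1}_b$. Your version is more elementary and matches the definition in the appendix and \cite{Rie22}. The paper's version explains why the table's normalization is the natural one, namely domes of identity surfaces.

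One justification in your (b) needs repair. Centrality only tells you that double cosets coincide with one-sided $\mathrm{Fib}$-orbits. Full faithfulness of $\mathcal{B}(V)\to\mathcal{C}$ says nothing about whether $\tau\otimes X$ is simple and distinct from $X$ when $X\notin\mathcal{B}(V)$. What actually gives four classes of exactly two simples each is that every indecomposable $\mathrm{Fib}$-module category is the regular rank-2 one, which is the paper's ``no nontrivial surfaces'' observation. Alternatively, read $W\otimes\mathds{1}_k=W_k$ and $W\otimes W_k=\mathds{1}_k\oplus W_k$ directly off Table~\ref{tab:E8}.
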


Our starting point is the well-known conformal embedding 
\begin{align}\label{eqn:su(2)28confembed}
    \widehat{\mathfrak{su}}(2)_{28}\subset \widehat{\mathfrak{g}}_{2,1}.
\end{align}
A closely related fact is that, in the ADE classification of CFTs built on top of the $\widehat{\mathfrak{su}}(2)_k$ chiral algebras \cite{Cappelli:1987xt}, the exceptional theory corresponding to $E_8$ is actually the $G_{2,1}$ WZW model. We will study the symmetries of the $\widehat{\mathfrak{g}}_{2,1}$ chiral algebra induced by the conformal embedding in Equation \eqref{eqn:su(2)28confembed}, revisiting some of the calculations carried out in \cite{Rie22} with our more physical perspective in hand.

Let $A$ be the condensable algebra in $\mathcal{B}(\widehat{\mathfrak{su}}(2)_{28})$ (i.e.\ in $SU(2)_{28}$ Chern-Simons theory) which mediates the conformal extension in Equation \eqref{eqn:su(2)28confembed}. It is known that the category of $A$-modules in $\mathcal{B}(\widehat{\mathfrak{su}}(2)_{28})$ is given by the commutative $E_8$-fusion category \cite{izumi1991application} (see also \cite{edie2020classifying}), 
\begin{align}
    \Ver\left(\widehat{\mathfrak{g}}_{2,1}/\widehat{\mathfrak{su}}(2)_{28}\right):= \mathcal{B}(\widehat{\mathfrak{su}}(2)_{28})_A\cong \mathcal{C}_{E_8},
\end{align}
whose simple objects we parametrize as 
\begin{align}
    \mathrm{Irr}(\mathcal{C}_{E_8})=\{\mathds{1},W,\mathds{1}_a,W_a,\mathds{1}_b,W_b,\mathds{1}_c,W_c\},
\end{align}
and whose fusion rules are given in Table \ref{tab:E8}.
Our interpretation is that $\mathcal{C}_{E_8}$ is a category of topological line operators supported on the $\widehat{\mathfrak{g}}_{2,1}$-boundary of $G_{2,1}$ Chern-Simons TQFT.

Let us develop the alternative perspective on $\Ver\left(\widehat{\mathfrak{g}}_{2,1}/\widehat{\mathfrak{su}}(2)_{28}\right)\cong \mathcal{C}_{E_8}$ afforded by Section \ref{subsec:hypergroup} and Section \ref{subsec:surfaces}. That is, we wish to view $\mathcal{C}_{E_8}$ instead as a category of line operators which can appear at the boundary of the surface $\mathcal{S} = \mathcal{I}_A^\ast\otimes \mathcal{I}_A$.\footnote{Similar calculations appear in \cite{Mulevicius:2020tgg} for the conformal embedding of $\widehat{\mathfrak{su}}(2)_{10}\subset \widehat{\mathfrak{spin}}(5)_1$, see also \cite{Kirillov:2001ti} for prior related work.} Without any work, we can immediately conclude that the surface in question must be a direct sum of four copies of the identity surface, 
\begin{align}
    \mathcal{S} \cong \mathcal{S}_{1} \oplus \mathcal{S}_a\oplus \mathcal{S}_b\oplus \mathcal{S}_c, \ \ \ \ \ \mathcal{S}_k \cong  \mathds{1} \text{ for }k\in\{1,a,b,c\}.
\end{align}
Indeed, $G_{2,1}$ Chern-Simons theory (a.k.a.\ the Fibonacci TQFT) has no non-trivial surface operators, so every simple surface appearing in the decomposition of $\mathcal{S}$ must be the identity surface. On the other hand, we know that precisely four copies of the identity surface operator must appear because $\mathcal{C}_{E_8}$ has rank $8$, while each copy of the identity surface supports $2$ lines at its boundary (corresponding to the fact that the Fibonacci TQFT has 2 anyons).

We recall that the category $\mathcal{B}(V)$ on general grounds appears as a subcategory of $\Ver\left(V/W\right)$ for any conformal embedding $W\subset V$. In $\mathcal{C}_{E_8}$, we can readily identify $\{\mathds{1},W\}$ as precisely the lines which generate the bulk $\mathcal{B}(\widehat{\mathfrak{g}}_{2,1})$ subcategory. From the fusion rule 
\begin{align}
    W \otimes \mathds{1}_k \cong W_k, \ \ \ \ W\otimes W_k \cong \mathds{1}_k\oplus W_k, \ \ \ \ \ k\in \{1,a,b,c\},
\end{align}
we see that each pair $\{\mathds{1}_k,W_k\}$ transforms in the regular module category with respect to $\mathcal{B}(\widehat{\mathfrak{g}}_{2,1}).$ Thus, we can identify $\{\underline{\mathds{1}_k},\underline{W_k}\}$ as precisely the lines which should be thought of as appearing on the boundary of the surface $\mathcal{S}_k$, where we recall that we use the notation $\underline{X}$ to denote the image of $X$ under the pinching functor described in Figure \ref{fig:pinch}.

We can think of the $T$-junction of Figure \ref{fig:Tjunction} as decomposing as
\begin{align}
    T\cong \bigoplus_{k,k',k''\in \{1,a,b,c\}} {_{k''}}t_{kk'}, \ \ \ \ \ \ \ \ {_{k''}}t_{kk'}\in\mathcal{B}(\widehat{\mathfrak{g}}_{2,1}),
\end{align}
where ${_{k''}}t_{kk'}$ is the bulk topological line which appears at the junction of the three surfaces $\mathcal{S}_k$, $\mathcal{S}_{k'}$, $\mathcal{S}_{k''}$. We can determine the ${_{k''}}t_{kk'}$ by imposing that the $T$-junction reproduces the fusion rules of the $\mathcal{C}_{E_8}$ category under the correspondence of Figure \ref{fig:Tfusion}. Indeed, we require that the fusion rule on boundary lines of the $\mathcal{S}$-surface defined by the $T$-junction,
\begin{align}
    \underline{X_k}\otimes_T\underline{Y_{k'}}:=\bigoplus_{k''\in \{1,a,b,c\}} \underline{(X\otimes {_{k''}}t_{kk'}\otimes Y )_{k''}}, \ \ \ \ \ X,Y\in \{\mathds{1},W\},
\end{align}
agrees with the fusion rule on $\mathcal{C}_{E_8}$ after using the pinching functor of Figure \ref{fig:pinch}. This is most straight-forwardly achieved by taking $X,Y=\mathds{1}$, in which case we learn that ${_{k''}}t_{kk'}$ is the unique collection of lines making the following equation true:
\begin{align}\label{Tdef}
    \mathds{1}_k \otimes \mathds{1}_{k'} \cong \bigoplus_{k''}({_{k''}}t_{kk'})_{k''}.
\end{align}
For example, we read off from the fusion rule $\mathds{1}_a\otimes \mathds{1}_b\cong \mathds{1}_a\oplus W_c$ that 
\begin{align}\label{eqn:exTjunc}
{_{a}}t_{ab}=\mathds{1}, \ \ \ \  {_{c}}t_{ab}=W, \ \ \ \ {_{1}}t_{ab}={_{b}}t_{ab}=0.
\end{align}
The rest of the components of $T$ can be read off straightforwardly from Table \ref{tab:E8}.

We can also determine the $L$ junction defined implicitly in Figure \ref{fig:pinch}. Again, we can decompose this as 
\begin{align}
    L\cong \bigoplus_{k\in \{1,a,b,c\}} L_k
\end{align}
and determine the components $L_k$.  Each $L_k$ is a choice of topological line operator on the interface $\mathcal{I}_A$ (i.e.\ in the $\mathcal{C}_{E_8}$ fusion category) on which the bulk surface $\mathcal{S}_k$ terminates, which we can fix by imposing the equation in Figure \ref{fig:Tmove}, which says that 
\begin{align}
    L_k\otimes L_{k'} \cong \bigoplus_{k''\in\{1,a,b,c\}} {_{k''}}t_{kk'}\otimes L_{k''}.
\end{align}
Reinterpreting Equation \eqref{Tdef} slightly, 
\begin{align}
    \mathds{1}_k\otimes \mathds{1}_{k'}\cong  \bigoplus_{k''} ({_{k''}}t_{kk'})_{k''}\cong \bigoplus_{k''} {_{k''}}t_{kk'}\otimes \mathds{1}_{k''}
\end{align}
we learn that we can take $L_k=\mathds{1}_k$.

We can compute the rescaled hypergroup structure constants $\mathbf{P}_{ij}^k$ from Figure \ref{fig:domefusion}. Indeed, Figure \ref{fig:domefusion} simply identifies them, in the special case that all the $\mathcal{S}_k$ are identity surfaces, with the quantum dimensions of the $T$-junction lines, 
\begin{align}
    \mathbf{P}_{ij}^k = \dim({_{k}}t_{ij}).
\end{align}
For example, Equation \eqref{eqn:exTjunc} allows us to conclude that 
\begin{align}
    \mathbf{P}_{ab}^a=1, \ \ \ \ \  \mathbf{P}_{ab}^c=\frac{1+\sqrt{5}}{2}, \ \ \  \ \ \mathbf{P}_{ab}^1=\mathbf{P}_{ab}^b=0.
\end{align}
The rest of the hypergroup structure constants are reported in Table \ref{tab:E8hyp}. One can straightforwardly verify that the hypergroup basis can be rescaled (cf.\ Equations \eqref{eqn:rescaledhypergroup} and \eqref{eqn:rescaledhypergrouprelation}) so as to obey the stochastic normalization $\sum_{k}P_{ij}^k = 1$ and reproduce the structure constants computed in Table 1 of \cite{Rie22}. 

We note that this hypergroup has ``non-integerizable'' structure constants. It suffices to show this for the subhypergroup spanned by $\{\mathbf{r}_1,\mathbf{r}_b\}$.  We ask if there is a rescaling $r'_b=\alpha \mathbf{r}_b$ for some $\alpha$ such that the structure constants 
\begin{align}
    r'_b\star r'_b = \alpha^2\cdot \mathbf{r}_1+\alpha(1+\varphi)\cdot r'_b
\end{align}
can be made integers. It is straightforward to see that this would force $(1+\varphi)^2 = \frac72+\frac32 \sqrt{5}$ to be a rational number, which is clearly false.

Finally, we note that, by the folding trick of Figure \ref{fig:lines2boundaries}, every line in the $\mathcal{C}_{E_8}$ fusion category can be thought of as furnishing a conformal boundary condition of the full $G_{2,1}$ Wess-Zumino-Witten CFT. The boundaries obtained from the lines $\mathds{1}$ and $W$ are the standard Cardy boundary conditions of the theory. The conformal embedding Equation \eqref{eqn:su(2)28confembed} induces 6 additional topological lines which can be folded to yield 6 boundary conditions which break the full $\widehat{\mathfrak{g}}_{2,1}$ chiral algebra down to its $\widehat{\mathfrak{su}}(2)_{28}$ subalgebra. Furthermore, Claim \ref{cl:claim:annulus} relates annulus partition functions in the presence of these boundaries to twisted partition functions of the chiral algebra (see Figure \ref{fig:annulus}), which in turn can be expressed in terms of characters of $\widehat{\mathfrak{su}}(2)_{28}$. 

Let us give an example. We consider the boundary conditions $\partial\mathds{1}_a$ and $\partial \mathds{1}_b$ obtained by folding the topological lines $\mathds{1}_a$ and $\mathds{1}_b$ on the $\widehat{\mathfrak{g}}_{2,1}$ chiral algebra. From Table \ref{tab:E8} we see that $\mathds{1}_a\otimes \mathds{1}_b\cong \mathds{1}_a\oplus W_c$, so Claim \ref{cl:claim:annulus} says that the annulus partition function of the full $G_{2,1}$ WZW model with $\partial \mathds{1}_a$ imposed at one end of the interval and $\partial\mathds{1}_b$ imposed at the other is given by
\begin{align}
    {^{S^1\times I}}Z_{\partial\mathds{1}_a,\partial \mathds{1}_b}(\delta) = Z_{\mathds{1}_a}(\tau)+Z_{W_c}(\tau),
\end{align}
with $\tau=i\frac{\delta}{2}$ and $Z_X(\tau)$ the $X$-twisted partition function of the $\widehat{\mathfrak{g}}_{2,1}$ chiral algebra. Equation \eqref{eqn:twistedpfB} then allows us to further reduce this to an expression involving characters of $\widehat{\mathfrak{su}}(2)_{28}$. Indeed, from \cite{Kirillov:2001ti}, we can extract the restriction functor $R:\mathcal{C}_{E_8}\to \mathcal{B}(\widehat{\mathfrak{su}}(2)_{28})$, which is given by
\begin{align}
\begin{split}
R(\mathds{1})&=0\oplus 10\oplus 18 \oplus 28 \\
R(W)&=6\oplus 12 \oplus 16\oplus 22 \\
R(\mathds{1}_a) &= 1\oplus 9 \oplus 11 \oplus 17 \oplus 19 \oplus 27 \\
R(W_a) &= 5\oplus 7\oplus 11\oplus 13 \oplus 15\oplus 17\oplus 21\oplus 23 \\
R(\mathds{1}_b) &= 2\oplus 8 \oplus 10 \oplus 12 \oplus 16\oplus 18 \oplus 20 \oplus 26 \\
R(W_b) &= 4\oplus 6\oplus 8\oplus 10 \oplus 12 \oplus 2\cdot 14 \oplus 16\oplus 18 \oplus 20 \oplus 22\oplus 24 \\
R(\mathds{1}_c) &=5\oplus 9\oplus 13 \oplus 15\oplus 19\oplus 23 \\
R(W_c) &= 3\oplus 7\oplus 9\oplus 11\oplus 13\oplus 15\oplus 17\oplus 19\oplus 21\oplus 25.
\end{split}
\end{align}
Plugging this in, we then find that 
\begin{align}
\begin{split}
   & {^{S^1\times I}}Z_{\partial\mathds{1}_a,\partial \mathds{1}_b}(\delta)= \\
    &\hspace{.3in}(\chi_1+\chi_3+\chi_7+2\chi_9+2\chi_{11}+\chi_{13}+\chi_{15}+2\chi_{17}+2\chi_{19}+\chi_{21}+\chi_{25}+\chi_{27})(\tau),
\end{split}
\end{align}
where $\chi_j(\tau)$ is a character of $\widehat{\mathfrak{su}}(2)_{28}$ with $j=0,\dots,28$, 
see e.g.\ \cite{Eberhardt:2019WZW} for explicit expressions.

\subsection{An absolute example: symmetries of \texorpdfstring{$SU(3)_1$}{SU(3)1} WZW model from gluing}\label{subsec:SU(3)1}

In this subsection, we glue together certain symmetries of the $\widehat{\mathfrak{su}}(3)_1$ chiral algebra (those that preserve its $\widehat{\mathfrak{su}}(2)_4$ chiral subalgebra) to produce topological line defects of the full $SU(3)_1$ WZW model beyond its Verlinde lines. 

\begin{claim}{}{}
The full $SU(3)_1$ WZW model admits a fusion category $\mathcal{C}$ of topological line operators,
\begin{align}
    \mathrm{Irr}(\mathcal{C}) = \{1,r,r^2,s_0,s_1,s_2,\mathcal{D}_L,\mathcal{D}_R\},
\end{align}
which preserve its left- and right-moving $\widehat{\mathfrak{su}}(2)_4$ chiral subalgebras.
The lines $1$, $r$, $r^2$, $s_0$, $s_1$, $s_2$ have the fusion rules of the symmetric group $S_3$, and the remaining lines satisfy
\begin{align}
\begin{split}
    &\hspace{.7in}\mathcal{D}_{L,R}^2\cong 1\oplus r\oplus r^2, \ \ \ \mathcal{D}_L\otimes \mathcal{D}_R\cong s_0\oplus s_1\oplus s_2 \\
    &r^i\otimes \mathcal{D}_{L,R}\cong \mathcal{D}_{L,R}\otimes r^i \cong\mathcal{D}_{L,R}, \ \ \ \ s_i\otimes \mathcal{D}_{L,R} \cong \mathcal{D}_{L,R} \otimes s_i \cong \mathcal{D}_{R,L}.
\end{split}
\end{align}
\end{claim}

Let us start with the famous conformal embedding 
\begin{align}
    \widehat{\mathfrak{su}}(2)_4 \subset \widehat{\mathfrak{su}}(3)_1.
\end{align}
This is a $\mathbb{Z}_2$ simple current extension, which means in particular that $SU(3)_1$ Chern-Simons theory can be obtained from $SU(2)_4$ Chern-Simons theory by gauging a $\mathbb{Z}_2^{(1)}$ one-form symmetry. Thus, we obtain a half-space gauging interface $\mathcal{I}$ between these two TQFTs. 

From the SymTFT picture of conformal embeddings, Figure \ref{fig:SymTFTconformalembedding}, we see that the category of topological line defects on $\mathcal{I}$ can be interpreted as a category $\Ver(\widehat{\mathfrak{su}}(3)_1/\widehat{\mathfrak{su}}(2)_4)$ of topological line defects supported on the $\widehat{\mathfrak{su}}(3)_1$ chiral algebra boundary of $SU(3)_1$ Chern-Simons theory. It is known \cite{Huston:2022utd} that this is a Tambara-Yamagami category \cite{Tambara:1998vmj}, 
\begin{align}
   \Ver(\widehat{\mathfrak{su}}(3)_1/\widehat{\mathfrak{su}}(2)_4)\cong \mathrm{TY}_-(\mathbb{Z}_3),
\end{align}
whose simples we denote $\{1,\eta,\eta^2,\mathcal{D}\}$ and whose fusion rules are 
\begin{align}
    \eta^i\otimes \eta^j = \eta^{i+j}, \ \ \ \ \ \eta^i\otimes \mathcal{D} \cong \mathcal{D} \otimes \eta^i \cong \mathcal{D}, \ \ \ \ \ \mathcal{D}\otimes \mathcal{D}\cong 1\oplus \eta\oplus \eta^2.
\end{align}
The $-$ indicates that the Frobenius-Schur indicator of the duality line $\mathcal{D}$ is $-1$. 

\begin{figure}
    \begin{center}
        \input{Figures/SU_3_1CC}
        \caption{The duality line $\mathcal{D}$ supported on the $\widehat{\mathfrak{su}}(3)_1$ chiral algebra is trailed by a charge conjugation surface $C$ when it is dragged into the bulk.}\label{fig:SU(3)1CC}
    \end{center}
\end{figure}

The lines $\eta^i$ can be freely dragged into the bulk without leaving behind any topological surface, as they are simply bulk Wilson lines of $SU(3)_1$ Chern-Simons theory. (Equivalently, we can say that they are attached to the identity surface $\mathds{1}$.) On the other hand, if we pull $\mathcal{D}$ into the bulk, as in Figure \ref{fig:SU(3)1CC}, then we find that it is attached to a $\mathbb{Z}_2$ charge conjugation topological surface $C$ (cf.\ Figure \ref{fig:pinch}). Indeed, $SU(3)_1$ Chern-Simons theory has two topological surfaces, corresponding to the fact that there are two $\Vec_{\mathbb{Z}_3}$-module categories, and $\mathcal{D}$ transforms in the non-trivial $\Vec_{\mathbb{Z}_3}$ module category, so it leaves behind a non-trivial surface operator when it is dragged into the bulk. Putting these facts together, we find that the surface $\mathcal{S}=\mathcal{I}^\ast\otimes \mathcal{I}$ from Figure \ref{fig:pinch} is $\mathcal{S}=\mathds{1}\oplus C$.

\begin{figure}
    \begin{center}
        \input{Figures/SU_3_1lines}
        \caption{The topological lines $Z$ of the $SU(3)_1$ WZW model which preserve the left- and right-moving $\widehat{\mathfrak{su}}(2)_4$ subalgebras can be identified with topological lines on the surface $\mathcal{S}=\mathds{1}\oplus C$.}\label{fig:SU(3)1lines}
    \end{center}
\end{figure}

Now, Equation \eqref{eqn:descrip1} asserts that the category of topological line defects of the full $SU(3)_1$ WZW model which preserve its left- and right-moving $\widehat{\mathfrak{su}}(2)_4$ chiral subalgebras is given by the relative Deligne product,
\begin{align}\label{eqn:SU(3)1Ver}
    \mathcal{C}\equiv \Ver\left(SU(3)_1/\widehat{\mathfrak{su}}(2)_4\otimes \overline{\widehat{\mathfrak{su}}(2)}_4\right)\cong \mathrm{TY}_-(\mathbb{Z}_3)\boxtimes_{\Vec_{\mathbb{Z}_3}}\mathrm{TY}_-(\mathbb{Z}_3)^{\vee}.
\end{align}
Let us start by calculating how many simple objects there are in this category. We do this using the physical interpretation of the relative Deligne product afforded by Figure \ref{fig:IJsandwich}, which is reproduced in our special case in Figure \ref{fig:SU(3)1lines}. It says that the lines $Z$ in the category \eqref{eqn:SU(3)1Ver} can be identified with lines on the surface $\mathcal{S}=\mathds{1}\oplus C$. There are four sectors: lines from $\mathds{1}\to \mathds{1}$, lines from $C\to C$, line interfaces from $\mathds{1}\to C$, and line interfaces from $C\to\mathds{1}$. Clearly the lines from $\mathds{1}\to \mathds{1}$ are simply the $\mathbb{Z}_3$ worth of bulk lines, which we denote $1, r, r^2$. Similarly, by folding, lines on the surface $C$ are the same as lines which bound the surface $C\otimes C^\ast = \mathds{1}$, so there are again 3 simple lines, which we call $s_0,s_1,s_2$. Finally, there is a unique line $\mathcal{D}_L$ from $C\to 1$, and similarly a unique line $\mathcal{D}_R$ from $1\to C$. (This is because $C$, when viewed as a $\Vec_{\mathbb{Z}_3}$-module category, has rank-1.) In total, we learn that
\begin{align}
    \mathrm{Irr}(\mathcal{C})=\{1,r,r^2,s_0,s_1,s_2,\mathcal{D}_L,\mathcal{D}_R\}.
\end{align}

Let us determine the fusion rules. We do this using the fact that the balancing functor from Figure \ref{fig:B in Fun description},
\begin{align}
    B:\mathrm{TY}_-(\mathbb{Z}_3)\boxtimes \mathrm{TY}_-(\mathbb{Z}_3)^{\vee} \to \mathrm{TY}_-(\mathbb{Z}_3)\boxtimes_{\Vec_{\mathbb{Z}_3}} \mathrm{TY}_-(\mathbb{Z}_3)^{\vee} ,
\end{align}
is a tensor functor. It clearly maps 
\begin{align}
\begin{split}
    &B(\eta^i,\eta^j) = r^{i+j}, \ \ \ \ B(\eta^i,\mathcal{D})=\mathcal{D}_R, \ \ \ \ B(\mathcal{D},\eta^j)=\mathcal{D}_L \\
    &\hspace{1in}B(\mathcal{D},\mathcal{D}) = s_0\oplus s_1\oplus s_2.
\end{split}
\end{align}
The first three of these identifications are clear. For the last one, note that $B(\mathcal{D},\mathcal{D})$ must be a line of quantum dimension $\dim(\mathcal{D})^2=3$ on the charge conjugation surface $C$. Furthermore, because $B$ is a tensor functor, we have 
\begin{align}
    B(\eta^i,1)\otimes B(\mathcal{D},\mathcal{D}) \cong B(\eta^i\otimes\mathcal{D},\mathcal{D})\cong B(\mathcal{D},\mathcal{D}),
\end{align}
so it must be invariant under fusion by $r$, which generates the bulk $\mathbb{Z}_3$. In particular, this leaves $B(\mathcal{D},\mathcal{D})\cong s_0\oplus s_1\oplus s_2$ as the only option. 

Now, using the fact that $B$ is a tensor functor, we derive the fusion rules. The following are a few simple examples:
\begin{align}
\begin{split}
    \mathcal{D}_L^2\cong B(\mathcal{D},1)^2\cong B(1\oplus \eta\oplus \eta^2,1) \cong 1\oplus r\oplus r^2 \\
    \mathcal{D}_L\otimes \mathcal{D}_R \cong B(\mathcal{D},1)\otimes B(1,\mathcal{D})\cong B(\mathcal{D},\mathcal{D})\cong s_0\oplus s_1\oplus s_2, \\
    r^i\otimes \mathcal{D}_L \cong B(\eta^i,1)\otimes B(\mathcal{D},1)\cong B(\eta^i\otimes \mathcal{D},1) \cong B(\mathcal{D},1) \cong \mathcal{D}_L,
    \end{split}
\end{align}
with similar derivations when one replaces $\mathcal{D}_L$ with $\mathcal{D}_R$. The fusions involving the $s_i$ and the duality lines are only slightly more indirect. For example, one calculates that
\begin{align}
\begin{split}
&(s_0\oplus s_1\oplus s_2)\otimes \mathcal{D}_L \cong B(\mathcal{D},\mathcal{D})\otimes B(\mathcal{D},1) \cong \bigoplus_{i=0}^2B(\eta^i,\mathcal{D})\cong 3\cdot \mathcal{D}_R \\
& \hspace{1in}\implies s_i\otimes \mathcal{D}_L\cong \mathcal{D}_R.
\end{split}
\end{align}

The last fusion rules one must fix are those between the $r_i$ and the $s_j$. These lines are all invertible and  generate a group of order $6$, i.e.\ either $\mathbb{Z}_6$ or $S_3$. A quick way to distinguish between these two possibilities is to note that $\mathcal{C}$ must have non-commutative fusion rules, and hence in fact $r_i$ and $s_j$ must generate $S_3$. Indeed, this follows from Equation \eqref{eqn:matrixalgdecomp}. Thinking of the $SU(3)_1$ WZW model as a non-diagonal CFT with respect to the chiral subalgebra $W=\widehat{\mathfrak{su}}(2)_4$,  it is known \cite{Cappelli:1987xt} that the modular invariant pairing matrix $M_{\mu\nu}$ (cf.\ Equation \eqref{eqn:fullCFTHilbertspace}) satisfies $M_{11}=2$, where we are labeling representations of $\widehat{\mathfrak{su}}(2)_4$ by their spin $\mu=0,\sfrac12,\dots,2$. Thus,  Equation \eqref{eqn:matrixalgdecomp} says that the complexified Grothendieck ring $\mathcal{K}_0(\mathcal{C})\otimes \mathbb{C}$ contains the noncommutative algebra $ \mathrm{Mat}_{\mathbb{C}}(2)$ of $2\times 2$ matrices as a subalgebra, and hence $\mathcal{C}$ must have non-commutative fusion rules. 

\subsection{An infinite example: all symmetries of the \texorpdfstring{$\widehat{\mathfrak{u}}(1)$}{u(1)} Kac-Moody algebra}\label{subsec:Heisenberg}

Most of the machinery and examples discussed so far have been about rational chiral algebras and their finite symmetries. We conclude by determining the full infinite category of topological lines of one of the simplest irrational theories: the $\widehat{\mathfrak{u}}(1)$ Kac-Moody algebra, a.k.a.\ the Heisenberg VOA, which can be thought of as a gapless chiral boundary condition of $\mathbb{R}$ Chern-Simons theory. 

The mathematical underpinnings of infinite symmetries are still the subject of ongoing research (see e.g.\ \cite{Freed:2009qp,Marin-Salvador:2025stc,Delmastro:2025ksn,Stockall:2025ngz,Jia:2025vrj,Jia:2026vcr,Jia:2026tsl,Jia:2026bbo} for recent papers), and the results we present here are obtained by formally applying various theorems (which, strictly speaking, only hold in the rational/finite setting) outside their naive regime of validity. Nevertheless, we feel that the basic picture we paint here is correct and should be amenable to eventual mathematical rigor.

Our approach to the symmetries of the Heisenberg VOA relies on two observations. 
\begin{enumerate}[label=\arabic*)]
    \item The full category $\mathrm{Sym}^\dagger(\widehat{\mathfrak{su}}(2)_1)$ of unitary topological line operators of the $\widehat{\mathfrak{su}}(2)_1$ chiral algebra has elementary objects in one-to-one correspondence with elements of $SU(2)$, 
    \begin{align}
        \mathrm{Irr}(\mathrm{Sym}^\dagger(\widehat{\mathfrak{su}}(2)_1)) = SU(2).
    \end{align}
    \item The $\widehat{\mathfrak{u}}(1)$ Kac-Moody algebra can be obtained from $\widehat{\mathfrak{su}}(2)_1$ by passing to the fixed points of the $SO(2)$ Cartan of its unitary automorphism group $\mathrm{Aut}^\dagger(\widehat{\mathfrak{su}}(2)_1)\cong SO(3)$, 
    \begin{align}
        \widehat{\mathfrak{u}}(1) \cong \widehat{\mathfrak{su}}(2)_1^{SO(2)}.
    \end{align}
    In other words, $\widehat{\mathfrak{u}}(1)$ can be obtained by starting with $SU(2)_1$ Chern-Simons theory in the presence of its WZW boundary condition and performing a ``flat'' $SO(2)$ gauging of the bulk/boundary system.
\end{enumerate}

Regarding the first observation, $\widehat{\mathfrak{su}}(2)_1$ of course possesses \emph{at least} an $SU(2)$ worth of lines obtained by exponentiating its spin-1 currents. It will be argued for more fully elsewhere that these in fact are all of its topological lines (see \cite{Fuchs:2007tx,Bischoff:2022fxf,Moller:2024xtt} for similar statements). For now, we note that symmetry/subalgebra duality, Claim \ref{cl:claim:sym/subduality}, already gives reasonable intuition for why this should be true: under the correspondence between symmetry categories and conformal subalgebras, the full category of unitary topological line operators is expected to correspond to the Virasoro subalgebra, and it is not difficult to see that
\begin{align}
    \widehat{\mathfrak{su}}(2)_1^{SU(2)}\cong \mathrm{Vir}_{c=1}.
\end{align}
That is, the subalgebra of invariant operators in $\widehat{\mathfrak{su}}(2)_1$ with respect to $SU(2)$ is precisely the Virasoro subalgebra, a property which is expected only to be true of the maximal category of unitary lines.

The importance of the second observation is that it allows us to use the general yoga of Section \ref{subsec:topman} to determine how the lines of $\widehat{\mathfrak{su}}(2)_1$ mutate into lines of $\widehat{\mathfrak{u}}(1)$. Indeed, formally, we expect to be able to obtain the lines of $\widehat{\mathfrak{u}}(1)$ from those of $\widehat{\mathfrak{su}}(2)_1$ via $SO(2)$ equivariantization, 
\begin{align}\label{eqn:U(1)equiv}
    \mathrm{Sym}^\dagger(\widehat{\mathfrak{u}}(1))\cong \mathrm{Sym}^\dagger(\widehat{\mathfrak{su}}(2)_1)^{SO(2)},
\end{align}
where the $SO(2)\subset SO(3)$ acts via conjugation on $SU(2)$. In particular, we can formally apply \cite[Theorem 4.1]{burciu2013fusion}, which explains how to compute the $G$-equivariantization of a pointed fusion category. This leads to the following.

\begin{claim}{}{}
    The full category of unitary topological line operators of the Abelian current algebra $\widehat{\mathfrak{u}}(1)$, which lives at the boundary of $\mathbb{R}$ Chern-Simons theory, has the following collection of simple objects, 
    \begin{align}
        \mathrm{Irr}(\mathrm{Sym}^\dagger(\widehat{\mathfrak{u}}(1)) = \{L_a\mid a\in\mathbb{C}, ~ |a|<1\}\cup \{M_\lambda \mid \lambda \in \mathbb{R}\}
    \end{align}
    where $\dim(M_\lambda)=1$ and $\dim(L_a)=\infty$. These lines obey the fusion rules 
    \begin{align}\label{eqn:diskfusion}
    \begin{split}
       & M_\lambda\otimes M_{\lambda'}\cong M_{\lambda+\lambda'}, \ \ \ M_\lambda \otimes L_a \cong L_a\otimes M_\lambda \cong L_{ae^{ 2\pi i \lambda}}, \\
        &\hspace{.6in}  L_a\otimes L_b \cong \int_{\theta\in[0,2\pi)}^\oplus[d\theta]~ L_{ab-r(a,b)e^{i\theta}},
    \end{split}
   \end{align}
    where $r(a,b)=\sqrt{(1-|a|^2)(1-|b|^2)}$, and we define 
    \begin{align}
        L_{e^{2\pi i \lambda}}\cong \bigoplus_{\lambda' \in \lambda +\mathbb{Z}}M_{\lambda'}.
    \end{align}
    The duals are given by 
    \begin{align}
        L_a^\ast = L_{\bar a}, \ \ \ \ \ \ M_{\lambda}^\ast = M_{-\lambda}.
    \end{align}The lines $M_\lambda$ can be pulled into the bulk and identified with the Wilson lines of $\mathbb{R}$ Chern-Simons theory. The conformal dimension of the lightest boundary local operator on which $M_\lambda$ terminates is $h=\lambda^2$. The lines $L_a$ are trapped on the boundary and cannot be pulled into the bulk. 

    \hspace{.3in}The effective hypergroup is 
    \begin{align}
        SO(3)\sslash SO(2) \cong [-1,1]
    \end{align}
    with multiplication described by Equation \eqref{eqn:continuousmult}. The category $\mathrm{Sym}^\dagger(\widehat{\mathfrak{u}}(1))$ is graded by this hypergroup, and  the simple lines in the graded-component $\mathcal{B}(\widehat{\mathfrak{u}}(1))_t$ are given by
    \begin{align}
        \mathrm{Irr}(\mathcal{B}(\widehat{\mathfrak{u}}(1))_t) = \begin{cases} \{L_a \mid |a|^2=\frac{t+1}{2} \}, & \text{ if }t<1 \\
        \{M_\lambda \mid \lambda\in \mathbb{R}\} , & \text{ if }t=1.
        \end{cases}
    \end{align}
\end{claim}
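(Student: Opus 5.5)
The plan is to compute the $SO(2)$-equivariantization in Equation \eqref{eqn:U(1)equiv} directly, applying \cite[Theorem 4.1]{burciu2013fusion} formally to the pointed category $\mathrm{Sym}^\dagger(\widehat{\mathfrak{su}}(2)_1)$ with simples labelled by $g\in SU(2)$. That theorem says simples of the equivariantization of a pointed category are labelled by pairs (an $SO(2)$-orbit $\mathcal{O}$ in $SU(2)$, an irreducible projective representation of the stabilizer of a point of $\mathcal{O}$). The quantum dimension of such a simple is $|\mathcal{O}|\cdot\dim(\text{rep})$.

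First I classify the orbits. Write $SO(2)$ as conjugation by $e^{i\phi\sigma_3/2}$ and parametrize $g=\begin{pmatrix} a & -\bar b\\ b & \bar a\end{pmatrix}$ with $|a|^2+|b|^2=1$. Conjugation fixes $a$ and rotates the phase of $b$, so there are two kinds of orbits.
\begin{itemize}
\item If $b\neq 0$ (equivalently $|a|<1$), the orbit is a circle, the stabilizer is trivial (or $\pm1$, which acts trivially), and there is a unique equivariant structure. This gives the simples $L_a$, of infinite dimension since the orbit is continuous.
\item If $b=0$, then $g=\mathrm{diag}(e^{2\pi i\lambda_0},e^{-2\pi i\lambda_0})$ is fixed and the stabilizer is all of $SO(2)$. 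Its characters are labelled by an integer, so the simples are indexed by $\lambda\in\lambda_0+\mathbb{Z}$, and taking all $\lambda_0$ together gives $M_\lambda$ with $\lambda\in\mathbb{R}$, each of dimension $1$.
\end{itemize}
This also justifies the convention $L_{e^{2\pi i\lambda}}\cong\bigoplus_{\lambda'\in\lambda+\mathbb{Z}}M_{\lambda'}$: the forgetful/induction functor of Equation \eqref{eqn:inductionGequiv} sends both sides to the same diagonal element. The $M_\lambda$ form the invertible part, matching $\mathcal{B}(\widehat{\mathfrak{u}}(1))$, the Wilson lines of $\mathbb{R}$ Chern--Simons theory. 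The identification $h=\lambda^2$ comes from the $\widehat{\mathfrak{u}}(1)$ Fock module of momentum $\sqrt2\lambda$ in the standard normalization, which is the character attached to that representation of the stabilizer.

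Second, I derive the fusion rules from the tensor structure of equivariantization. The product of orbits in $SU(2)$ is pushed forward to orbits. For $M_\lambda\otimes L_a$, the diagonal element multiplies $a$ by the phase $e^{2\pi i\lambda}$. For $L_a\otimes L_b$, I compute the top-left entry of $g_ag'_b$ with relative rotation angle $\theta$, which is $ab-\bar b'b\,e^{i\theta}$ with $|b||b'|=r(a,b)$. Integrating over $\theta$ with the Haar measure gives the direct integral in Equation \eqref{eqn:diskfusion}. Duals follow from $g\mapsto g^{-1}$, which sends $a\mapsto\bar a$.

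Third, for the hypergroup, I apply the discussion around Equation \eqref{eqn:consecutiveembeddings} to $\mathrm{Vir}\subset\widehat{\mathfrak{u}}(1)\subset\widehat{\mathfrak{su}}(2)_1$. The relevant hypergroups are $SO(3)$ acting on $\widehat{\mathfrak{su}}(2)_1$ (via $SU(2)/\mathbb{Z}_2$) and $SO(2)$. The double coset space $SO(3)\sslash SO(2)$ is parametrized by the cosine $t$ of the angle between the rotated and fixed axes, giving $[-1,1]$. The grading follows by matching the $SU(2)$ label $a$ to its $SO(3)$ image: the $(3,3)$ matrix entry is $t=|a|^2-|b|^2=2|a|^2-1$, so $|a|^2=\frac{t+1}{2}$. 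The $M_\lambda$ sit at $t=1$.

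The main obstacle is rigor in the continuous setting. Burciu--Natale is stated for finite groups, so I will treat the direct integral fusion and the projective-representation analysis formally, in the spirit the paper announces. I would check consistency of the $L_a\otimes L_b$ measure against the hypergroup structure constants of Equation \eqref{eqn:continuousmult}.
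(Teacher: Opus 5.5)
Your proposal is correct and follows essentially the same route as the paper: a formal application of Burciu--Natale to the $SO(2)$-equivariantization of the $SU(2)$ worth of lines of $\widehat{\mathfrak{su}}(2)_1$. In both, free circle orbits give $L_a$, fixed diagonal elements together with their $SO(2)$ characters give $M_\lambda$, dimensions are read off through induction, and the hypergroup $SO(3)\sslash SO(2)\cong[-1,1]$ comes from the consecutive-embedding argument via $R\mapsto R_{33}$. Your explicit matrix-product computations of $L_a\otimes L_b$ and of $t=2|a|^2-1$ fill in steps the paper only calls straightforward; in the former, rename the off-diagonal entry so it does not clash with the label $b$.
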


\noindent We are unsure whether the $\int[d\theta]$ in Equation \eqref{eqn:diskfusion} should be thought of as a direct integral using the Haar measure on $U(1)$ (and if so, with what normalization), or as an infinite direct sum using the discrete topology. We leave this interesting question to the future.

Let us start by computing the topological lines of the $\widehat{\mathfrak{u}}(1)$ chiral algebra. Applying the prescription of \cite[Theorem 4.1]{burciu2013fusion} to Equation \eqref{eqn:U(1)equiv}, we learn that the topological lines should be labeled by pairs $([g],\rho)$, where $[g]$ is an orbit of the $SO(2)$ action on $SU(2)$, and $\rho$ is a representation of the stabilizer $G_{[g]}$.\footnote{In general we would need to consider projective representations of $G_{[g]}$, but since $G_{[g]}\subset SO(2)$ in this example, all representations of $G_{[g]}$ are in fact linear.} 

To state what the orbits are, let 
\begin{align}
    g(a,\psi) = \left(\begin{array}{cc} a & \sqrt{1-|a|^2}e^{i\psi} \\ -\sqrt{1-|a|^2}e^{-i\psi} & \bar a\end{array}\right)
\end{align}
be an arbitrary element of $SU(2)$. It is clear that the action of $e^{i\zeta} \in SO(2)$ on an element of $SU(2)$ is 
\begin{align}
    g(a,\psi) \xmapsto{e^{i\zeta}} g(a,\psi+\zeta).
\end{align}
It follows that there are two kinds of $SO(2)$ orbits:
\begin{align}
    [g(a,\psi)]= \begin{cases}
        \{g(a,\chi)\mid \chi\in[0,2\pi)\}, & \text{ if }|a|<1 , \\
        \{g(a,0)\}, & \text{ if }|a|=1,
    \end{cases}
\end{align}
and the corresponding stabilizer groups are 
\begin{align}
    G_{[g(a,\psi)]} = \begin{cases}
        \{1\}, & \text{ if }|a|<1 , \\
        SO(2), & \text{ if }|a|=1.
    \end{cases}
\end{align}
Thus, elements $g(a,\psi)$ of $SU(2)$ with a fixed $a$ all merge into a single line in the $SO(2)$ equivariantization if $|a|<1$, but split into $\mathbb{Z}$ many lines if $|a|=1$. In particular, we have the following spectrum of simple lines, 
\begin{align}
    \mathrm{Irr}(\mathrm{Sym}^\dagger(\widehat{\mathfrak{su}}(2)_1)^{SO(2)}) = \{ L_a \mid a\in \mathbb{C}, |a|<1\} \cup \{X_{a,n}\mid |a|=1, n\in\mathbb{Z}\},
\end{align}
where, in the parametrization of simple objects as pairs $([g],\rho)$ with $\rho$ an irreducible representation of $G_{[g]}$, we have
\begin{align}\label{eqn:pairsgrho}
\begin{split}
    L_a&\sim ([g(a,0)],\rho=1),  \ \ \ \ \ a\in \mathbb{C}, \ |a|<1,\\
    X_{a,n} &\sim ([g(a,0)],\rho=n), \ \ \ \ \ a\in\mathbb{C}, \ |a|=1, \ n\in\mathbb{Z},
\end{split}
\end{align}
where $\rho=n$ denotes the $SO(2)$ representation with charge $n$. It will be useful in what follows to repackage 
\begin{align}
    M_\lambda \equiv X_{e^{2\pi i \lambda},\lfloor\lambda \rfloor}, \ \ \ \ \lambda \in \mathbb{R}.
\end{align}
It is clear that the $M_\lambda$ are obtained by pushing the bulk Wilson lines of $\mathbb{R}$ Chern-Simons theory onto the chiral boundary, whereas the $L_a$ are topological lines which are genuinely trapped on the boundary.

Next, we study the twisted sector Hilbert spaces. To this end, it is useful to compute the induction and restriction functors. Induction follows from Equation \eqref{eqn:inductionGequiv}, which in the present situation says that\footnote{Again, there is the question of whether the $\int [d\psi]$ should be taken to be a direct integral using the Haar measure on $U(1)$, or an infinite direct sum using the discrete topology. We remain agnostic about the precise prescription.} 
\begin{align}
\begin{split}
    &I:\mathrm{Sym}^\dagger(\widehat{\mathfrak{u}}(1)) \cong \mathrm{Sym}^\dagger(\widehat{\mathfrak{su}}(2)_1)^{SO(2)} \to \mathrm{Sym}^\dagger(\widehat{\mathfrak{su}}(2)_1), \\
    &\hspace{.7in}L_a\mapsto \int_{\oplus} [d\psi] g(a,\psi), \ \ \ \ M_\lambda \mapsto g(e^{2\pi i \lambda},0).
\end{split}
\end{align}
Since induction preserves quantum dimension, this immediately implies that $\dim(M_\lambda)=1$ and $\dim(L_a)=\infty$.
Restriction is obtained by taking the transpose, so 
\begin{align}\label{eqn:resU1}
\begin{split}
    &R:\mathrm{Sym}^\dagger(\widehat{\mathfrak{su}}(2)_1)\to \mathrm{Sym}^\dagger(\widehat{\mathfrak{u}}(1)) \cong \mathrm{Sym}^\dagger(\widehat{\mathfrak{su}}(2)_1)^{SO(2)}  \\
    &\hspace{.4in}g(a,\psi) \mapsto L_a \text{ for }|a|<1, \ \ \ \ g(e^{2\pi i \lambda},0)\mapsto \bigoplus_{n\in\mathbb{Z}} M_{\lambda+n}.
\end{split}
\end{align}

Restriction determines how (twisted) representations of $\widehat{\mathfrak{su}}(2)_1$ decompose into (twisted) representations of $\widehat{\mathfrak{u}}(1)$. For instance, Equation \eqref{eqn:resU1} tells us that the Hilbert space $\widehat{\mathfrak{u}}(1)_{L_a}$ of local operators on the $\widehat{\mathfrak{u}}(1)$ boundary of $\mathbb{R}$ Chern-Simons theory on which the boundary topological line $L_a$ can terminate is isomorphic (as a twisted $\widehat{\mathfrak{u}}(1)$-module) to 
\begin{align}
    \widehat{\mathfrak{u}}(1)_{L_a} \cong (\widehat{\mathfrak{su}}(2)_1)_{g(a,\psi)}, \ \ \ \text{ for any }\psi.
\end{align}
Here, $(\widehat{\mathfrak{su}}(2)_1)_{g(a,\psi)}$ is the Hilbert space of local operators on the WZW boundary of $SU(2)_1$ Chern-Simons theory on which the boundary topological line $g(a,\psi)$ can terminate.
In particular, using Appendix A of \cite{Gaberdiel:2011aa}, we can write down the twisted partition function (cf.\ Example  \ref{ex:twisteddim}),
\begin{align}\label{eqn:twistedcharacterheisenbergvoa}
   Z_{L_a}(\tau)\equiv \mathrm{Tr}_{\widehat{\mathfrak{u}}(1)_{L_a}} q^{L_0-1/24} = \mathrm{Tr}_{(\widehat{\mathfrak{su}}(2)_1)_{g(a,\psi)}}q^{L_0-1/24} = \frac{1}{\eta(\tau)}\sum_{n\in\mathbb{Z}}q^{(-\chi+2n)^2/4}
\end{align}
where $\chi = \frac 1\pi\arccos \Re(a)$ and $\eta(\tau)=q^{1/24}\prod_{n=1}^\infty (1-q^n)$ is the Dedekind-eta function. In the special case that $a=0$, the twisted-module $\widehat{\mathfrak{u}}(1)_{L_0}$ coincides with the canonical $\mathbb{Z}_2$-charge conjugation twisted module of the Heisenberg VOA.

On the other hand, the Hilbert space $\widehat{\mathfrak{u}}(1)_{M_\lambda}$ of boundary local operators on which the bulk Wilson line $M_\lambda$ of $\mathbb{R}$ Chern-Simons theory can terminate is just a standard module of the $\widehat{\mathfrak{u}}(1)$ chiral algebra, with conformal dimension $h(\widehat{\mathfrak{u}}(1)_{M_\lambda})=\lambda^2$ and graded-dimension given by 
\begin{align}
    Z_{M_\lambda}(\tau)\equiv \mathrm{Tr}_{\widehat{\mathfrak{u}}(1)_{M_\lambda}}q^{L_0-1/24} = \frac{q^{\lambda^2}}{\eta(\tau)}.
\end{align}
The restriction functor predicts that 
\begin{align}
    (\widehat{\mathfrak{su}}(2)_1)_{g(e^{2\pi i\lambda},0)}\cong \bigoplus_{n\in\mathbb{Z}} \widehat{\mathfrak{u}}(1)_{M_{\lambda+n}}
\end{align}
as $\widehat{\mathfrak{u}}(1)$ representations. One can interpret the summand $\widehat{\mathfrak{u}}(1)_{M_{\lambda+n}}$ in the decomposition as the subspace of $(\widehat{\mathfrak{su}}(2)_1)_{g(e^{2\pi i \lambda},0)}$ with $SO(2)$-charge $\lambda+n$.

One can also determine the fusion rules of the lines. Plugging in the identifications of Equation \eqref{eqn:pairsgrho} into the formulae of \cite[Theorem 4.1]{burciu2013fusion} leads to a straight-forward derivation of the fusion rules presented in Equation \eqref{eqn:diskfusion}.

Finally, let us discuss the effective hypergroup. (Similar calculations were carried out in \cite{Bischoff:2020iac}.) By the general logic around Equation \eqref{eqn:consecutiveembeddings}, it follows that it is 
\begin{align}
K\cong SO(3) \sslash SO(2).
\end{align}
These double cosets can be identified with the interval $K=[-1,1]$ with quotient map given by 
\begin{align}
    \begin{split}
        SO(3)&\to SO(3)\sslash SO(2) \\
        R&\mapsto R_{33}
    \end{split}
\end{align}
where $R_{33}$ is the $(3,3)$ component in the $3\times 3$ matrix $R\in SO(3)$. By the standard formula \eqref{eqn:gpdoublecoset} for the multiplication, one obtains that 
\begin{align}\label{eqn:continuousmult}
    \delta_{t}\star \delta_{s} = \frac{1}{2\pi} \int_0^{2\pi} \delta_{{ts-\sqrt{(1-s^2)(1-t^2)}\cos\theta}}d\theta,
\end{align}
where $t,s\in[-1,1]\cong K$. This hypergroup has a $\mathbb{Z}_2=\{\pm 1\}\subset K$ subgroup which acts via charge conjugation on $\widehat{\mathfrak{u}}(1)$.

The category $\mathrm{Sym}^\dagger(\widehat{\mathfrak{u}}(1))$ is graded by $K$. Calling $\mathcal{B}(\widehat{\mathfrak{u}}(1))_t$ the graded component labeled by $t\in [-1,1]\cong K$, one has that the simple lines organize into the graded components according to
\begin{align}
    L_a\in \mathcal{B}(\widehat{\mathfrak{u}}(1))_{2|a|^2-1}, \ \ \ \ M_\lambda \in \mathcal{B}(\widehat{\mathfrak{u}}(1))_1.
\end{align}
For example, we see that $L_0$ is the unique topological line in the graded-component $-1\in K$ corresponding to charge conjugation, which reproduces our earlier assertion that $\widehat{\mathfrak{u}}(1)_{L_0}$ is the unique $\mathbb{Z}_2$ charge conjugation twisted module.

Finally, let us comment that, using the folding trick from Figure \ref{fig:lines2boundaries}, one can convert the boundary topological lines $L_a$ and $M_\lambda$ to boundary conditions $\partial L_a$ and $\partial M_\lambda$ of the diagonal CFT built on top of the $\widehat{\mathfrak{u}}(1)$ chiral algebra, which is none other than the $c=1$ free noncompact boson. The  $\partial M_\lambda$ can be identified with Dirichlet boundary conditions, $\partial L_0$ corresponds to the Neumann boundary condition, and the $\partial L_a$ with $a\neq 0$ can be thought of as analogs of Janik's exceptional boundaries \cite{Janik:2001hb} for the noncompact boson. To the best of our knowledge, this latter class of boundary conditions has not been considered in the literature previously. It would be interesting to study them in more detail.

\section{Future Directions}\label{sec:future}

We conclude by highlighting some possible future directions.

\begin{enumerate}[label=\arabic*)]
    \item The main concrete method for constructing a noninvertible symmetry of a chiral algebra $V$ is to find a conformal subalgebra and apply Claim \ref{cl:claim:verdef}. It would be interesting to rigorously define a notion of noninvertible symmetry which does not make any reference to conformal subalgebras. This would allow one to use symmetries to construct new chiral algebras via the orbifold procedure. One promising approach would be to translate what is known about this problem for conformal nets \cite{Bischoff:2016jmy,Bischoff:2022fxf} to the setting of vertex operator algebras, using the expanding dictionary between the two \cite{Carpi:2015fga,Carpi:2023onx,henriques2025every}. See \cite{gannonriesen} for work in this direction.
    \item There should exist a notion of ``hypergroup-crossed braided tensor category'' which generalizes the notion of $G$-crossed braided tensor category, see e.g.\ \cite{Rie22} for some discussion. Such a construction would describe topological phases enriched by noninvertible symmetry, and also the algebraic structure formed by categories of twisted-modules of a chiral algebra. Making this mathematically precise is work in progress \cite{mrrw}.
\item It would be desirable to drop various adjectives we have assumed in this paper, like rational, unitary, finite, bosonic, etc. See e.g.\ Section \ref{subsec:Heisenberg} for an example in this direction.
\end{enumerate}

\section*{Acknowledgements}
We would like to sincerely thank Anirudh Deb for his participation in earlier stages of this project. BR is grateful to Sven Möller, Ingo Runkel, and Yifan Wang, as well as Yichul Choi and Ho Tat Lam, for inspiring collaborations on related projects. We also thank Meng Cheng, Matthias Gaberdiel, Andrew Riesen, Nathan Seiberg, Sahand Seifnashri, Nikita Sopenko, and Roberto Volpato for helpful discussions. B.R.~acknowledges support from the Leinweber Foundation, the Sivian Fund, the U.S.\ National Science Foundation (NSF) under grant PHY-2210533, and the Department of Energy (DOE) under grant DE-SC0009988. BR also thanks the Yang Institute for Theoretical Physics and the Simons Center for Geometry and Physics where much of this work was carried out. The research of TG was partially supported by NSERC (Canada).

\appendix

\section{Hypergroups}\label{app:hypergroups}

In this appendix, we summarize some basic facts about hypergroups which are used throughout the main text. For more details see  \cite{Bischoff:2016jmy,Rie22,hempel2023hypergroups}.

Hypergroups can roughly be thought of as generalizations of fusion rings obtained by relaxing  integrality of the structure constants. For a finite set $K=\{r_i\}_{i\in I}$, let $\mathrm{Conv}(K)$ be the set of formal linear combinations $\sum_{i\in I} C_i r_i$ with the $C_i$ given by real, non-negative numbers. A \emph{hypergroup} is a finite set $K$ equipped with a multiplication $\cdot:K\times K\to\mathrm{Conv}(K)$, 
\begin{align}
    r_i\cdot r_j = \sum_{k\in I} P_{ij}^k r_k, \ \ \ \ P_{ij}^k\geq 0, \ \ \ \ \sum_{k\in I} P_{ij}^k =1,
\end{align}
which makes $\mathbb{R}[K]$ an associative algebra with unit given by a distinguished element $r_0\in K$. Further, a hypergroup should have an involution $i\mapsto i^\ast$, required to define a unital antiautomorphism of $\mathbb{R}[K]$, satisfying that $P_{ij}^0=P_{ji}^0$ and $P_{ij}^0>0$ if and only if $j=i^\ast$. Think of $P_{ij}^k$ as the fraction of $r_i\cdot r_j$ that lies in the channel $r_k$.

The combination 
\begin{align}
    w_i=\frac{1}{P_{ii^\ast}^0}
\end{align}
is called the \emph{weight} of $r_i$, and the weight of the hypergroup is defined as the sum of the weights of all of its elements, 
\begin{align}\label{eqn:weighthypergroup}
    D(K) = \sum_{i\in I}w_i.
\end{align}
We will also have use for a distinguished element of $\mathrm{Conv}(K)$ known as the \emph{Haar element}, 
\begin{align}\label{eqn:Haarelement}
    \omega_K = \frac{1}{D(K)}\sum_{i\in I}w_i k_i,
\end{align}
which has the defining properties that $\omega_K^\ast =\omega_K = \omega_K^2$ and $r_i\cdot \omega_K = \omega_K\cdot r_i=\omega_K$ for all $i\in I$.

The easiest example of a hypergroup is a finite group $G$, for which $\cdot$ is given by the group multiplication, $\ast$ is given by the inverse map, and $r_0$ is given by the identity element. 

More generally, any fusion ring $R$ always defines an associated hypergroup $K_R$, though not every hypergroup arises in this way.
Recall that a free, unital, associative $\mathbb{Z}$-algebra $R$ is a \emph{fusion ring} if there exists a finite basis $\{x_i\}_{i\in I}$ containing the unit such that 
\begin{align}
    x_i\cdot x_j = \sum_{k\in I} N_{ij}^k x_k, \ \ \ \ N_{ij}^k\in \mathbb{Z}_{\geq 0}.
\end{align}
There should further exist an involution $i\mapsto i^\ast$ inducing a unital anti-automorphism of $R$ and satisfying $N_{ij}^0=N_{ji}^0=\delta_{ij^\ast}$.

To see how a fusion ring induces a hypergroup, recall that the Frobenius-Perron dimension $\mathsf{d}_{x_i}$ of $x_i$ is by definition the largest real eigenvalue of the matrix $(N_{ij}^k)_{j,k\in I}$. These positive numbers satisfy the following:
\begin{align}
    \mathsf{d}_{x_i\cdot x_j}=\mathsf{d}_{x_i} \mathsf{d}_{x_j}, \ \ \ \ \  \mathsf{d}_{x_i}=\mathsf{d}_{x_{i^\ast}}, \ \ \ \ \  \mathsf{d}_{x_0}=1.
\end{align} The hypergroup induced by $R$ is obtained by setting 
\begin{align}
    K_R := \left\{ r_i:=\frac{x_i}{\mathsf{d}_{x_i}}\right\}_{i\in I},
\end{align}
letting the multiplication be the one coming from $R$,  
\begin{align}
    r_i\cdot r_j := \sum_{k\in I}P_{ij}^k r_k, \ \ \ \ \ \ P_{ij}^k=\frac{\mathsf{d}_{x_k}}{\mathsf{d}_{x_i}\mathsf{d}_{x_j}}N_{ij}^k,
\end{align}
and defining $\ast$ in the obvious way. Then $P_{ij}^k$ denotes the weighted fraction of $x_i\cdot x_j$ that lies in the $x_k$ channel, where the weight is given by the Frobenius-Perron dimension.

Note that, for a hypergroup induced by a fusion ring, the Frobenius-Perron dimensions can be detected as $P_{ii^\ast}^0=\mathsf{d}_{x_i}^{-2}$. Thus, we see that a hypergroup is induced from a fusion ring if and only if 
\begin{align}
    P_{ij}^k \sqrt{\frac{P_{kk^\ast}^0}{P_{ii^\ast}^0P_{jj^\ast}^0}}\in\mathbb{Z}_{\geq 0},
\end{align}
for all $i,j,k\in I$.

It turns out that all hypergroups which act on vertex operator algebras arise as quotients of hypergroups induced by fusion rings.  Let $H$ be a subhypergroup of $K$, defined to be a subset of $K$ which contains the unit $r_0$ of $K$ and is closed under addition, multiplication $\cdot$ and the involution $\ast$. We define the double coset represented by $r_i$ to be
\begin{align}
    Hr_iH=\{ r_j\in K \mid r_j \in h\cdot r_i \cdot h' \text{ for some }h,h'\in H\},
\end{align}
where, for $r_i\in K$ and $m=\sum_{j} C_{j}r_{j}\in\mathrm{Conv}(K)$, we say that $r_i\in m$ if $C_i>0$. We write $K\sslash H$ for the set of all double $H$-cosets in $K$.

Define $r_i\sim_H r_j$ when $r_j\in Hr_iH$. This is an equivalence relation with equivalence classes $[r_i]:=Hr_iH$. One can show that $r_i\sim_H r_j$ if and only if $\omega_H\cdot r_i \cdot \omega_H = \omega_H \cdot r_j\cdot  \omega_H$. The involution on $K\sslash H$ is $[r_i]^\ast=[r_{i^\ast}]$ and structure constants are given by 
\begin{align}
   \overline{P}_{[i][j]}^{[k]}=\sum_{\substack{k'\in I \\ r_{k'}\sim_Hr_k}}P_{ij}^{k'}\,,
\end{align}
the fraction of $r_i\cdot r_j$ lying in any channel equivalent to $r_k$. Another characterization of the structure constants of $K\sslash H$ can be obtained using the Haar element,
\begin{align}
    [r_i]\cdot [r_j] = q_H(r_i\cdot \omega_H\cdot r_j)
\end{align}
where $q_H$ is defined by extending the assignment $r_i\mapsto [r_i]$ linearly to a map $\mathbb{R}[K]\to \mathbb{R}[K\sslash H]$. This map satisfies
\begin{align}
    \mathbb{R}[\{r_0\}]\to \mathbb{R}[H]\hookrightarrow \mathbb{R}[K]\twoheadrightarrow \mathbb{R}[K \sslash H] \to\mathbb{R}[\{[r_0]\}]\,.
\end{align}

An important example is the double cosets of one finite group inside another (cf.\ Example \ref{ex:cosine}). When $H$ is normal, the quotient construction $G\sslash H$ simply recovers the standard quotient group $G/H$. However, when $H$ is not normal, one obtains a non-trivial hypergroup with multiplication given by 
\begin{align}\label{eqn:gpdoublecoset}
    [g]\cdot [g'] = \frac{1}{|H|}\sum_{h\in H} [ghg'],
\end{align}
and with involution given by $[g]^\ast = [g^{-1}].$

More generally, one may apply the quotient construction to hypergroups $K_S\subset K_R$ induced by fusion rings $S\subset R$. In that case the structure constant $\overline{P}_{[i][j]}^{[k]}$ is the weighted fraction of $x_i\cdot x_j$ lying in any channel $x_{k'}$ equivalent to $x_{k}$. This is the hypergroup grading discussed in e.g.\ Claim \ref{cl:cla:hypergroupgrading}. 

Suppose that $R$ is represented on a vector space $V$, i.e.\  there is a unital algebra homomorphism $\rho:\mathbb{C}[K_R]\to \mathrm{End}_{\mathbb{C}}(V)$. If $R=\{x_i\}_{i\in I}$ has  kernel given by $S=\{x_i\}_{i\in J}$ in the sense that $\rho(x_i)=\mathsf{d}_{x_i} \mathrm{Id}_V$ for all $i\in J$, then the action of $K_R$ descends to an action $\tilde{\rho}$ of the corresponding quotient hypergroup,
\begin{align}
\begin{split}
    \tilde{\rho}:K_R\sslash K_S&\to \mathrm{End}_{\mathbb{C}}(V) \\
    [r_i]&\mapsto \rho(r_i).
\end{split}
\end{align}
When $R$ and $S$ arise from the Grothendieck rings of a fusion category $\mathcal{C}$ and one of its fusion subcategories $\mathcal{D}$, respectively, we use $K_{\mathcal{C}}\sslash K_{\mathcal{D}}$ to denote the quotient hypergroup $K_R\sslash K_S$.

\begin{landscape}
\begin{table}[p]
\section{The \texorpdfstring{$E_8$}{E8} fusion ring and associated hypergroup}\label{app:E8}
\vspace{.2in}
\centering
\footnotesize
\[
\begin{array}{c|cccccccc}
\otimes & \mathds{1} & W & \mathds{1}_{a} & W_{a} & \mathds{1}_{b} & W_{b} & \mathds{1}_{c} & W_{c}\\
\midrule
\mathds{1} & \mathds{1} & W & \mathds{1}_{a} & W_{a} & \mathds{1}_{b} & W_{b} & \mathds{1}_{c} & W_{c}\\
W & & \mathds{1} \oplus W & W_{a} & \mathds{1}_{a} \oplus W_{a} & W_{b} & \mathds{1}_{b} \oplus W_{b} & W_{c} & \mathds{1}_{c} \oplus W_{c}\\
\mathds{1}_{a} & & & \mathds{1} \oplus \mathds{1}_{b} & W \oplus W_{b} & \mathds{1}_{a} \oplus W_{c} & W_{a} \oplus \mathds{1}_{c} \oplus W_{c} & W_{b} & \mathds{1}_{b} \oplus W_{b}\\
W_{a} & & & & \mathds{1} \oplus W \oplus \mathds{1}_{b} \oplus W_{b} & W_{a} \oplus \mathds{1}_{c} \oplus W_{c} & \mathds{1}_{a} \oplus W_{a} \oplus \mathds{1}_{c} \oplus 2W_{c} & \mathds{1}_{b} \oplus W_{b} & \mathds{1}_{b} \oplus 2W_{b}\\
\mathds{1}_{b} & & & & & \mathds{1} \oplus \mathds{1}_{b} \oplus W_{b} & W \oplus \mathds{1}_{b} \oplus 2W_{b} & W_{a} \oplus W_{c} & \mathds{1}_{a} \oplus W_{a} \oplus \mathds{1}_{c} \oplus W_{c}\\
W_{b} & & & & & & \mathds{1} \oplus W \oplus 2\mathds{1}_{b} \oplus 3W_{b} & \mathds{1}_{a} \oplus W_{a} \oplus \mathds{1}_{c} \oplus W_{c} & \mathds{1}_{a} \oplus 2W_{a} \oplus \mathds{1}_{c} \oplus 2W_{c}\\
\mathds{1}_{c} & & & & & & & \mathds{1} \oplus W_{b} & W \oplus \mathds{1}_{b} \oplus W_{b}\\
W_{c} & & & & & & & & \mathds{1} \oplus W \oplus \mathds{1}_{b} \oplus 2W_{b}
\end{array}
\]
\caption{The fusion rules of the $E_8$ fusion ring.}\label{tab:E8}
\end{table}

\begin{table}[ht]
\centering
\begin{tabular}{c|cccc}
$\star$ & $\tilde{1}$ & $\tilde{a}$ & $\tilde{b}$ & $\tilde{c}$ \\
\midrule
$\tilde{1}$ & $\tilde{1}$ & $\tilde{a}$ & $\tilde{b}$ & $\tilde{c}$ \\
$\tilde{a}$ & $\tilde{a}$ & $\tilde{1}+\tilde{b}$ & $\tilde{a}+\varphi \tilde{c}$ & $\varphi \tilde{b}$ \\
$\tilde{b}$ & $\tilde{b}$ & $\tilde{a}+\varphi \tilde{c}$ & $\tilde{1}+(1+\varphi)\tilde{b}$ & $\varphi \tilde{a}+\varphi \tilde{c}$ \\
$\tilde{c}$ & $\tilde{c}$ & $\varphi \tilde{b}$ & $\varphi \tilde{a}+\varphi \tilde{c}$ & $\tilde{1}+\varphi \tilde{b}$
\end{tabular}
\caption{The rescaled multiplication of the hypergroup $K_{\mathcal{C}_{E_8}}\sslash K_{\mathrm{Fib}}$ with $\varphi=\frac{1+\sqrt{5}}{2}$.}
\label{tab:E8hyp}
\end{table}
\end{landscape}

\bibliographystyle{ytphys}
\bibliography{main}

\end{document}